\DeclareSymbolFont{mathx}{U}{mathx}{m}{n}
\DeclareMathSymbol{\bigtimes}{1}{mathx}{"91}
\apptocmd\appendix{\pretocmd\section{\clearpage}{}{}}{}{}
\definecolor{DarkRed}{rgb}{0.5,0.1,0.1}
\definecolor{DarkBlue}{rgb}{0.1,0.1,0.5}
\definecolor{ForestGreen}{rgb}{0.1333,0.5451,0.1333}
\definecolor{Red}{rgb}{0.9,0,0}
\crefname{property}{property}{Property}
\crefname{equation}{Equation}{Equation}
\def\BState{\State\hskip-\ALG@thistlm}
\DeclareMathOperator*{\argmax}{\arg\!\max}
\DeclareMathAlphabet\mathbfcal{OMS}{cmsy}{b}{n}
\def\th@plain{%
  \thm@notefont{}
  \itshape 
}
\def\th@definition{%
  \thm@notefont{}
  \normalfont 
}
\newtheorem{theorem}{Theorem}
\newtheorem{lemma}{Lemma}[section]
\newtheorem{proposition}[lemma]{Proposition}
\newtheorem{corollary}[lemma]{Corollary}
\newtheorem{claim}[lemma]{Claim}
\newtheorem{fact}[lemma]{Fact}
\newtheorem{definition}[lemma]{Definition}
\newtheorem{question}{Question}
\newtheorem*{claim*}{Claim}
\newtheorem*{proposition*}{Proposition}
\newtheorem*{lemma*}{Lemma}
\newtheorem*{problem*}{Problem}
\crefname{lemma}{Lemma}{Lemmas}
\crefname{claim}{Claim}{Claims}
\newtheorem{remark}[lemma]{Remark}
\newtheoremstyle{restate}{}{}{\itshape}{}{\bfseries}{~(restated).}{.5em}{\thmnote{#3}}
\theoremstyle{restate}
\theoremstyle{definition}
\newtheorem{mdalg}{Algorithm}
\newenvironment{Algorithm}{\begin{tbox}\begin{mdalg}}{\end{mdalg}\end{tbox}}
\newtheorem{mdimp}{Implementation}
\renewcommand{\qed}{\nobreak \ifvmode \relax \else
	\ifdim\lastskip<1.5em \hskip-\lastskip
	\hskip1.5em plus0em minus0.5em \fi \nobreak
	\vrule height0.75em width0.5em depth0.25em\fi}
\newcommand{\Ot}{\ensuremath{\widetilde{O}}}
\newcommand{\eps}{\ensuremath{\varepsilon}}
\newcommand{\R}{\ensuremath{\mathbb{R}}}
\newcommand{\norm}[1]{\ensuremath{\|#1\|}}
\newcommand{\ceil}[1]{{\left\lceil{#1}\right\rceil}}
\newcommand{\floor}[1]{{\left\lfloor{#1}\right\rfloor}}
\DeclareMathOperator*{\var}{\mathrm{Var}}
\newcommand{\poly}{\mbox{\rm poly}}
\DeclareMathOperator*{\Prob}{\ensuremath{\textnormal{Pr}}}
\renewcommand{\Pr}{\Prob}
\newenvironment{tbox}{\begin{tcolorbox}[
		enlarge top by=5pt,
		enlarge bottom by=5pt,
		breakable,
		boxsep=0pt,
		left=4pt,
		right=4pt,
		top=10pt,
		arc=0pt,
		boxrule=1pt,toprule=1pt,
		colback=white
		]
	}
	{\end{tcolorbox}}
\newcommand{\II}{\ensuremath{\mathbb{I}}}
\newcommand{\mireal}[1][]{
	\ifx\relax#1\relax%
	\II(\mione \,; \mitwo)%
	\else%
	\II(\mione \,; \mitwo\mid #1)%
	\fi
}
\newcommand{\E}{\mathop{\mathbb{E}}}
\newcommand{\dist}{\textnormal{dist}}
\newcommand{\vecy}{\ensuremath{\mathbf{y}}}
\newcommand{\cost}{\textnormal{cost}}
\newcommand{\opt}{\ensuremath{\mathsf{OPT}}}
\newcommand{\costom}{\cost_\Omega} 
\newcommand{\ccint}{N} 
\newcommand{\uvec}[1]{u^{#1}} 
\newcommand{\uq}[2]{u^{#1}_{#2}} 
\newcommand{\init}{\ensuremath{\textnormal{\texttt{Init}}}}
\newcommand{\fin}{\ensuremath{\textnormal{\texttt{Fin}}}}
\newcommand{\notfar}[1]{{P_C}(#1)}
\newcommand{\allcenters}{\mathcal{S}}
\newcommand{\csize}{m}
\newcommand{\wi}[1]{w_{q_{#1}}}
\newcommand{\err}{\textnormal{err}}
\newcommand{\inSR}{\in \mathcal{S}(r)}
\newcommand{\hmax}{h_\ensuremath{\textnormal{\texttt{max}}}}
\newcommand{\rmax}{r_\ensuremath{\textnormal{\texttt{max}}}}
\newcommand{\bmax}{b_\ensuremath{\textnormal{\texttt{max}}}}
\newcommand{\tmax}{t_\ensuremath{\textnormal{\texttt{max}}}}
\newcommand{\lmax}{l_\ensuremath{\textnormal{\texttt{max}}}}
\newcommand{\lowc}[1]{J_{\text{L}}(#1)}
\newcommand{\highc}[1]{J_{\text{H}}(#1)}
\newcommand{\calE}{\mathcal E}
\newcommand{\sign}{\text{Sign}}
\newcommandx{\unsure}[2][1=]{\todo[linecolor=red,backgroundcolor=red!25,bordercolor=red,#1]{#2}}
\newcommandx{\change}[2][1=]{\todo[linecolor=blue,backgroundcolor=blue!25,bordercolor=blue,#1]{#2}}
\newcommandx{\info}[2][1=]{\todo[linecolor=OliveGreen,backgroundcolor=OliveGreen!25,bordercolor=OliveGreen,#1]{#2}}
\newcommandx{\improvement}[2][1=]{\todo[linecolor=Plum,backgroundcolor=Plum!25,bordercolor=Plum,#1]{#2}}
\newcommandx{\thiswillnotshow}[2][1=]{\todo[disable,#1]{#2}}
\title{
Sensitivity Sampling for $k$-Means:\\
Worst Case and Stability Optimal Coreset Bounds
}
\author{ 
Nikhil Bansal\thanks{University of Michigan, \url{bansal@gmail.com }} \and 
Vincent Cohen-Addad\thanks{Google Research, France,  \url{cohenaddad@google.com}} \and 
Milind Prabhu \thanks{University of Michigan, \url{milindpr@umich.edu}}  \and
David Saulpic \thanks{IST Austria, \url{david.saulpic@lip6.fr}} \and
Chris Schwiegelshohn \thanks{Aarhus University, \url{cschwiegelshohn@gmail.com}}
}
\date{}
\begin{document}

 \maketitle

    \smallskip
	
	\setcounter{tocdepth}{3}
	\begin{abstract}
    Coresets are arguably the most popular compression paradigm for center-based clustering objectives such as $k$-means. Given a point set $P$, a coreset $\Omega$ is a small, weighted summary that preserves the cost of all candidate solutions $S$ up to a $(1\pm \varepsilon)$ factor. For $k$-means in $d$-dimensional Euclidean space the cost for solution $S$ is $\sum_{p\in P}\min_{s\in S}\|p-s\|^2$. 
    
A very popular method for coreset construction, both in theory and practice, is Sensitivity Sampling, where
points are sampled in proportion to their importance. 
We show that Sensitivity Sampling yields optimal coresets of size $\Ot(k/\varepsilon^2\min(\sqrt{k},\varepsilon^{-2}))$ for worst-case instances. 
Uniquely among all known coreset algorithms, for well-clusterable data sets with $\Omega(1)$ cost stability, 
Sensitivity Sampling gives coresets 
of size $\Ot(k/\varepsilon^2)$, improving over the worst-case lower bound.
Notably, Sensitivity Sampling does not have to know the cost stability in order to exploit it: It is appropriately sensitive to the clusterability of the data set while being oblivious to it. 

We also show that any coreset for stable instances consisting of only input points must have size $\Omega(k/\varepsilon^2)$. Our results
for Sensitivity Sampling also extend to the $k$-median problem, and more general metric spaces. 
\end{abstract}

        \thispagestyle{empty}
        \clearpage
        \setcounter{page}{1}
        \allowdisplaybreaks
        \section{Introduction}
The ability to process and analyze large quantities of data is the driving force for both theoretical and applied machine learning research.
When analyzing large data sets, one typically makes them more manageable via distributed computation or outright compression using data reduction/summarization.
A natural approach here is using coresets or sparsifiers where the input data is replaced by a small set of weighted input points that provably approximate the loss function for a given set of queries.
Most coreset constructions are also highly parallelizable, as a union of coresets for distributed data is also a coreset of the union.

A bit more formally, given a large set of points $P$ and a set of queries $\mathcal{S}$ with associated loss function $f:P\times \mathcal{S}\rightarrow \mathbb{R}_{\geq 0}$, a coreset $\Omega$ for $P$ satisfies $\sum_{p\in \Omega} w_p f(p,S)\approx \sum_{p\in P}f(p,S)$ for all $S\in \mathcal{S}$, where the $w_p$ are reweighting factors. 
The problem of designing small coresets and understanding the tradeoff between size and error has been extensively studied in various contexts ranging from graph sparsification \cite{SpielmanS11, BatsonSS12}, linear regression \cite{DasguptaDHKM09}, subspace approximation \cite{WoodruffY23}, computational geometry \cite{AHV04}, clustering \cite{HaM04}, classification \cite{MaiMR21}, learning mixture models \cite{LucicFKF17} etc., and has been highly influential in fast algorithm design and  machine learning.

A very popular method for coreset construction, both in theory and practice, is sampling points in proportion to their importance. E.g., by effective resistances in graph sparsification \cite{SpielmanS11}, leverage scores in matrix approximation \cite{DrineasMMW12}, and so on.
More generally, Langberg and Schulman \cite{LS10} introduced the 
sensitivity sampling framework for coreset construction which was subsequently codified by Feldman and Langberg \cite{FL11}. This is highly effective for a wide range of problems, including several complicated clustering objectives.

In a nutshell, sensitivity sampling picks a sufficient number of points, where each point $p\in P$ is sampled with probability proportional to 
\[\sigma_p := \sup_{S} \frac{f(\{p\},S)}{f(P,S)},\]
and weighted inversely by its sampling probability. Intuitively, the sensitivity score captures the importance a point can have in any given solution and how ``indispensable" it is to include it in the sample.
In some cases, such as graph sparsification or leverage scores, the sensitivities may be computed up to arbitrary precision, though obtaining fast approximate algorithms still has seen significant research. For $k$-clustering, computing the exact scores is usually infeasbile, but there is extensive work on efficiently approximating them by various proxies \cite{BachemLL18,FL11,FeldmanSS20}.

\paragraph{Euclidean $k$-means.} Arguably, the most widely studied and important clustering objective in this line of research is the Euclidean $k$-means problem. Here we are given a data set $P$ consisting of points in $d$-dimensional Euclidean space and the task is to find $k$ centers $S$ such that \[
  \text{cost}(P,S):=\sum_{p\in P} \min_{c\in S}\|p-c\|^2  \]
is minimized, where $\|x\|=(\sum_{i=1}^d x_i^2)^{1/2}$ denotes the length of a vector. 
A weighted subset $\Omega$ is an $\varepsilon$-coreset of $P$ if for \emph{all} candidate solutions $S$, one has the property
\[|\text{cost}(P,S)-\text{cost}_\Omega(P,S)|\leq \varepsilon\cdot \text{cost}(P,S).\]
The typical parameter of interest for $\Omega$ is the number of distinct points $m$, the size of the coreset. For coresets constructed by sampling, the cost estimator is the weighted average
\begin{equation}
\label{eq:estimator}
 \text{cost}_\Omega(P,S)  = \frac{1}{m}\sum_{p\in \Omega} \min_{c\in S}\|p-c\|^2 \cdot \frac{1}{\mathbb{P}[p]}.
\end{equation}

\paragraph{Optimal Coreset bounds.} There has been a long line of work \cite{Chen09,FL11,LS10,FeldmanSS20,BecchettiBC0S19,huang2020coresets} culminating recently \cite{CSS21,Cohen-AddadLSSS22} in  $\tilde{O}(k\varepsilon^{-2}\cdot \min(\sqrt{k},\varepsilon^{-2}))$ size coreset bounds.\footnote{here and throughout the paper $\tilde{O}(\cdot), \tilde{\Omega}(\cdot)$ hides polylogarithmic factors in $k,\eps^{-1}$.} We discuss some of the ideas in \Cref{sec:overview}, see also 
 \Cref{table:core}, as they will be relevant for us. 

In a surprising recent result,  Huang, Li, and Wu \cite{HLW23}  showed that this unusual looking bound is actually the best possible. To do this, they construct an ingenious worst-case instance and show that any $\varepsilon$-coreset for it must have size  $\tilde{\Omega}(k\varepsilon^{-2}\cdot \min(\sqrt{k},\varepsilon^{-2}))$.

Despite this impressive recent progress, two natural questions remain.
\begin{question}
    Can the optimal $k$-means coreset be obtained by Sensitivity Sampling?
\end{question}

These optimal coreset bounds are not based on sensitivity sampling, but instead use the group sampling algorithm (GS) introduced by \cite{Chen09}, and refined further by \cite{CSS21} to yield its current analysis. A major drawback of group sampling is that it requires substantial preprocessing of the input into several groups, and moreover, the algorithm itself is tailored to these groups.

On the other hand, sensitivity sampling is extremely simple to implement (see \Cref{alg: sensitivity-sampling} below), 

and is significantly preferred to group sampling for several reasons: (i) it is extremely fast and more accurate in practice,\footnote{In the experiments of \cite{SchwiegelshohnS22}
it is consistently 10\%-400\% faster than group sampling on standard datasets.} (ii)
many downstream applications can use sensitivity sampling as a black box \cite{BravermanFLR19,BravermanHMSSZ21,Cohen-AddadWZ23,WoodruffZZ23}, and (iii)  
it can be applied to any problem whereas group sampling is limited to center-based clustering.

\begin{Algorithm}\label{alg: sensitivity-sampling}
	Sensitivity Sampling.
	
	\medskip

    \textbf{Input: } A set of $n$ points $P\subset \R^d$, integer $k$.
	\begin{enumerate}
        \item Compute a $O(1)$-approximate $k$-means solution $A = \{a_1, a_2, \ldots,a_k\}$ for $P$. Let $C_j \subset P$ be the cluster centered at $a_j$. For a point $p$ in $C_j$, let $\Delta_p:= \cost(C_j,A)/|C_j|$ denote the average cost of $C_j$. 
     
        \item Let $\mu: P \rightarrow \R^+$ be the following probability distribution. For a point $p \in C_j$, 
        \begin{align}
        \label{defn:mu}
            \mu(p) := \frac 14 \cdot\left(\frac{1}{k|C_j|} +\frac{\cost(p,A)}{k\, \cost(C_j,A)} +\frac{\cost(p,A)}{\cost(P,A)} + \frac{\Delta_p}{\cost(P,A)}\right).
        \end{align}
        \item For  $i$ from $1$ to $\csize $:
            \begin{itemize}
                \item Sample point $q_i$ independently from the distribution $\mu$. 
                \item Add $q_i$ to $\Omega$ with weight $\wi{i} :=  1/(\csize \cdot \mu(q_i))$.
            \end{itemize}
	\end{enumerate}
  \textbf{Output: } The set of points $\Omega = \{q_1, \ldots, q_m\}$ and the weights $\{w_{q_1}, \ldots, w_{q_m}\}$.
\end{Algorithm}

Perhaps the biggest advantage of sensitivity sampling is that the analysis can take a holistic, unified view of the data. 
This is especially relevant for clustering as the instances arising in practice usually have some global structure that makes them well-clusterable (to paraphrase \cite{DLS12}, if the data had no structure, we would not care about clustering it anyways). 
Identifying and analyzing such properties is a part of the beyond-worst-case-analysis framework, which has seen several successes in recent years \cite{Roughgarden19}.

So, it is natural to ask whether sensitivity sampling can exploit data-specific properties. 

\begin{question}
    Do compression tasks become provably easier if the data set is well-clusterable? 
\end{question}

Despite extensive work on designing good clustering algorithms under beyond-worst-case assumptions \cite{AngelidakisMM17,BalcanBG13,Cohen-AddadS17,KuK10,ORSS12}, we are not aware of any such work on coresets. 
Finally, we remark that group sampling based methods seem inherently incapable of addressing such questions as they take a very limited and local view of the data.

\subsection{Our Results}
We answer both of these questions. 
Our first main result is that sensitivity sampling gives optimal coresets for $k$-means in Euclidean spaces.
\begin{theorem}\label{thm: worst-case-coreset-thm}
Sensitivity Sampling yields a $k$-means coreset of size $\Tilde{O}(k/\varepsilon^2\cdot \min(\sqrt{k},\varepsilon^{-2}))$ in Euclidean space with constant probability.
\end{theorem}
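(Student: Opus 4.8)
The plan is to follow the by-now standard reduction of coreset analysis to a chaining/union-bound argument over the solution space, but carefully tracking two regimes of ``heavy'' and ``light'' clusters so that the $\min(\sqrt{k},\varepsilon^{-2})$ term emerges. First I would recall the structural decomposition underlying Sensitivity Sampling: the sampling distribution $\mu$ in \eqref{defn:mu} dominates, up to constants, the sensitivity of each point with respect to the reference solution $A$, so by the Feldman--Langberg framework it suffices to show that with $m = \tilde O(k/\varepsilon^2 \cdot \min(\sqrt k, \varepsilon^{-2}))$ samples, the estimator $\cost_\Omega(P,S)$ in \eqref{eq:estimator} concentrates to within $\varepsilon\cdot\cost(P,S)$ simultaneously for all $S$. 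For a \emph{fixed} $S$, Bernstein's inequality gives concentration with failure probability $\exp(-\Omega(\varepsilon^2 m / T))$ where $T$ is a ``total sensitivity''-type normalization; the whole game is to pay for a union bound over an $\varepsilon$-net of candidate solutions whose log-cardinality must be beaten by $\varepsilon^2 m / T$.

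The key step is to split the cost of any query $S$ into contributions from clusters $C_j$ according to whether $\cost(C_j, S)$ is ``large'' or ``small'' relative to $\cost(C_j, A)$, and to handle the variance of the estimator cluster-by-cluster. For the light/typical clusters one uses that the additive error from the net can be charged against $\varepsilon\cdot\cost(C_j,A)$, and there are only $k$ clusters so the total additive slack is $\varepsilon\cdot\cost(P,A) = O(\varepsilon)\cdot\opt$; this gives the base $\tilde O(k/\varepsilon^2)$ term. For heavy clusters, one cannot afford the naive net, and here I would invoke the refined analysis of \cite{CSS21,Cohen-AddadLSSS22}: either (a) bound the number of ``effective'' centers interacting with a cluster and build a dimension-reduced net via terminal embeddings / Johnson--Lindenstrauss so that the net size is $\poly(k/\varepsilon)$ per relevant scale, or (b) exploit that a heavy cluster being badly served by $S$ forces $\cost(P,S)$ itself to be large, so the \emph{relative} error demanded is weaker. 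Balancing these two bounds on the heavy part yields the $\min(\sqrt k, \varepsilon^{-2})$ factor: the $\sqrt k$ branch comes from a group/chaining argument over at most $k$ clusters each contributing variance $\tilde O(1/\varepsilon^2)$ and taking a worst-case $\ell_1$-to-$\ell_2$ loss of $\sqrt k$, and the $\varepsilon^{-4}$ branch comes from the fact that once $\cost(P,S)$ exceeds $\varepsilon^{-2}\cdot \opt$ the problem self-regularizes.

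The main obstacle, I expect, is making the heavy-cluster bound work \emph{for Sensitivity Sampling specifically} rather than for Group Sampling. Group Sampling physically partitions points into groups of comparable cost and samples uniformly within each group, so the variance control is baked into the construction; with Sensitivity Sampling we only have the global distribution $\mu$, so one must show after the fact that $\mu$ restricted to any relevant subset of points already behaves like the ``right'' group-sampling distribution up to constants --- i.e., that the four terms in \eqref{defn:mu} (the two ``$1/k$'' terms and the two global terms) are exactly the combination needed to simulate the per-group guarantees. Concretely, the $\Delta_p/\cost(P,A)$ term should handle tiny-cost points near their center (which group sampling would put in a ``close'' ring), the $\cost(p,A)/\cost(P,A)$ term the far points, and the two $1/(k\cdot)$ terms ensure every cluster is sampled enough to estimate its own cost; verifying that this suffices in all the regimes of the \cite{Cohen-AddadLSSS22} analysis simultaneously is the technical heart.

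Finally, I would assemble the pieces: fix the net granularity, apply Bernstein per cluster and per net point, union-bound over the (carefully counted) net, and optimize the free parameter (the threshold between heavy and light, or equivalently the net resolution) to get $m = \tilde O(k\varepsilon^{-2}\min(\sqrt k,\varepsilon^{-2}))$, then boost to constant success probability. A secondary technical point worth isolating is dealing with the reweighting: since weights $1/(m\mu(q_i))$ can be as large as $O(k\cdot\cost(P,A)/\Delta_p)$, one must confirm the variance bounds absorb these --- but this is exactly what bounding $f(\{p\},S)/\mu(p)$ by a sensitivity-times-total-sensitivity quantity achieves, which is routine once the structural claims above are in place.
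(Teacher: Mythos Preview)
Your plan has the right high-level pieces (decompose by cluster behavior, variance vs.\ net tradeoff, verify $\mu$ dominates the right quantities), but the core machinery you describe is the pre-chaining ``Bernstein per net point, then union bound'' template, and that template cannot reach the optimal bound. The paper is explicit about this: single-scale variance bounds are $\min(k,\varepsilon^{-2})$ and single-scale log-net-sizes are $k\min(d,\varepsilon^{-2})$; multiplying these gives $k\varepsilon^{-4}\min(k,\varepsilon^{-2})$, not $k\varepsilon^{-2}\min(\sqrt{k},\varepsilon^{-2})$. The improvement requires a genuine multi-scale chaining argument with \emph{scale-dependent} variance and net bounds, and the specific tradeoff that produces $\min(\sqrt{k},\varepsilon^{-2})$ does not arise from an ``$\ell_1$-to-$\ell_2$ loss of $\sqrt{k}$'' but from a finer structural decomposition you do not mention.

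Concretely, what the paper does and what your plan is missing: after symmetrizing to a Gaussian process, clusters are classified by \emph{type} $t$, meaning $\cost(a_j,S)\approx 2^t\Delta_j$ (the factor by which $S$ is worse than $A$ on that cluster), and center sets $S$ are classified by an \emph{interaction number} $N(S)$ counting, roughly, how many cluster--center pairs are in a nontrivial geometric relation. For each type $t$ and each chaining scale $h$ one proves a variance bound $\var[X^{S,h}]\lesssim 2^{-2h-t}k/(m\cdot k_{B(S)})$ (\Cref{lem: worst-case-var}) and a net-size bound $\log|M_{2^{-h}}|\lesssim \min(2^r,k2^t)\cdot 2^{2h}$ (\Cref{lem: net-lemma}), where $2^r\approx N(S)$. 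Combining these gives an error of order $\sqrt{\min(k2^t,\,k^2 2^{-t})/m}$; since $\min(k2^t,k^2 2^{-t})\le k^{3/2}$ for all $t$, and since types are capped at $2^t\le \varepsilon^{-2}$ (that is where the far/close split enters), the $\min(\sqrt{k},\varepsilon^{-2})$ factor falls out. Your heavy/light split is closer to the far/close or high-cost/low-cost preprocessing than to this type decomposition, and without the type-$t$ variance bound and the interaction-based net bound the tradeoff simply does not close.

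You correctly identify the real difficulty --- showing that Sensitivity Sampling, unlike Group Sampling, gets these per-type variance bounds without having partitioned points into groups --- but you do not say how to resolve it. The paper's answer is: condition on a high-probability event $\mathcal{E}$ (cluster sizes, ring sizes, and cluster costs all preserved by $\Omega$; see \Cref{def: event-E}), and then exploit the four terms of $\mu$ via \Cref{lem: weight-bound} inside the variance computation to show that $\sum_{q\in B(S)\cap\Omega} w_q^2\,\err(q,S)^2$ is controlled as if group sampling had been used. That conditioning-and-unwinding step is the technical heart; ``verifying that this suffices in all the regimes of the \cite{Cohen-AddadLSSS22} analysis simultaneously'' is precisely the work you would still have to do, and it does not reduce to a Bernstein-plus-union-bound computation.
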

As mentioned above, this bound is optimal, matching both the upper bound of group sampling \cite{CSS21,Cohen-AddadLSSS22} and the general lower bound of \cite{HLW23}.
Previously, the state-of-the-art $k$-means coreset bounds for Sensitivity Sampling were $\tilde{O}(k\cdot \varepsilon^{-4}\min(k,\varepsilon^{-2}))$ \cite{huang2020coresets}.

Note that in any analysis of sampling-based coresets, one needs to control the error of the unbiased estimator \eqref{eq:estimator}, which depends on (i) loss due to variance and (ii) loss due to the union bound over all possible clustering $S$ (up to some discretization).
The previous analyses for sensitivity sampling used coarse variance bounds of $k$ or $\eps^{-2}$, and $\exp(kd)$ on the number of clusterings.
However, these bounds are tight by themselves and cannot give any further improvements.

Our first technical contribution is a chaining based analysis of sensitivity sampling that carefully trades off the variance with the size of the clusterings at each distance scale.
Previously, all applications of chaining required group sampling due to its various useful properties. Instead, we give both tight bounds on the variance and the union bound by leveraging some of group sampling's structural properties, without having to compute them.

\paragraph{Coresets for well-clusterable inputs.}
For beyond-worst-case analysis, we consider the cost-stable clusterability criterion of Ostrovsky, Rabani, Schulman, and Swamy \cite{ORSS12}, which is arguably the oldest and most well understood stability criterion for clustering. A data set is called $\beta$-stable if the ratio between the cost of an optimal $k$-clustering $\text{OPT}_k$ and the cost of an optimal $k-1$ clustering $\text{OPT}_{k-1}$ satisfies $ \text{OPT}_{k-1}/{\text{OPT}_k} \geq 1+\beta$. The most important and widely studied setting is when $\beta= \Omega(1)$; see the related work  in Section \ref{sec:related-work} for more details. 

Our main result is as follows.
\begin{theorem}\label{thm: main-coreset-thm}
For cost-stable data sets in Euclidean space with $\beta= \Omega(1)$, Sensitivity Sampling yields a $k$-means coreset of size $\Tilde{O}(k/\varepsilon^2)$ with constant probability.
\end{theorem}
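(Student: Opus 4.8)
The goal is to improve the generic variance and union-bound estimates in the chaining analysis of Sensitivity Sampling (the one underlying Theorem~\ref{thm: worst-case-coreset-thm}) by exploiting $\beta=\Omega(1)$ cost-stability. The key structural fact about $\beta$-stable instances is that an $O(1)$-approximate solution $A=\{a_1,\dots,a_k\}$ cannot be ``cheated'': moving a point from its cluster to a distant center is expensive, and more importantly, merging two clusters of $A$ costs $\Omega(\opt)$. I would first make this quantitative: in a $\beta$-stable instance, for the approximation $A$ used in Algorithm~\ref{alg: sensitivity-sampling}, any two cluster centers $a_i,a_j$ are ``well-separated'' relative to the average radii $\Delta_{a_i},\Delta_{a_j}$ of their clusters (a standard ORSS-type consequence; see \cite{ORSS12}). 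Consequently, in \emph{any} candidate solution $S$, each cluster $C_j$ is either served essentially by a single center of $S$ (``cheap'' cluster, the cost on $C_j$ is within $O(1)$ of $\Delta$-mass) or $S$ pays the full merging cost $\Omega(\cost(C_j,A))$ to split it. This dichotomy is what replaces the crude $k$ / $\eps^{-2}$ variance bounds.

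**Main steps.** (1) \emph{Stability $\Rightarrow$ bounded ``number of relevant center configurations per cluster.''} Fix a distance scale in the chaining net. For a cluster $C_j$, the contribution of $C_j$ to the estimator depends, up to the relevant precision, only on which center of $S$ is closest to the ``core'' of $C_j$; stability forces the core to behave like a single point at scale $\Delta_{a_j}$, so the effective number of distinct behaviors of $S$ restricted to $C_j$ is $\poly(k/\eps)$ rather than $\exp(\cdot)$. Aggregating over the $k$ clusters gives a net of size $k^{O(1)}\cdot (1/\eps)^{O(1)}$ at each scale — a $\log$ in the exponent, hence only polylog overhead after the union bound, as opposed to the $\sqrt{k}$ or $\eps^{-2}$ loss in the worst case. (2) \emph{Variance bound.} Using the sensitivity function $\mu$ from \eqref{defn:mu}, decompose $\cost(p,S)$ via triangle inequality into a term comparable to $\cost(p,A)$, a term comparable to $\Delta_p$, and a ``center displacement'' term $\|a_j - s\|^2$; the four summands of $\mu(p)$ are exactly calibrated so that each of these is absorbed. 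The point is that stability caps the center-displacement term: a center $s\in S$ that is far from $a_j$ contributes to $C_j$'s cost only if $S$ is paying the merging cost, so that event is rare and its variance contribution is controlled. This yields a per-scale variance of $\Ot(1)$ times $\cost(P,S)^2$ with sample size $\Ot(k/\eps^2)$, instead of an extra $\min(\sqrt k,\eps^{-2})$. (3) \emph{Chaining.} Plug the improved net size and variance into Dudley's entropy integral / the generic chaining bound exactly as in the proof of Theorem~\ref{thm: worst-case-coreset-thm}; the $\min(\sqrt{k},\eps^{-2})$ factor there came precisely from the tension between net size and variance that stability now removes, so it collapses to $\polylog(k/\eps)$.

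**Obliviousness.** I would emphasize that Algorithm~\ref{alg: sensitivity-sampling} does not use $\beta$: the distribution $\mu$ and the sample size $m=\Ot(k/\eps^2\cdot\min(\sqrt k,\eps^{-2}))$ (say) are fixed. The claim is that on a $\beta$-stable instance, the \emph{same} sample of the smaller size $m'=\Ot(k/\eps^2)$ already succeeds — i.e., one runs the worst-case analysis but the stability hypothesis tightens every estimate. So the theorem is really: "conditioned on the input being $\Omega(1)$-stable, the first $\Ot(k/\eps^2)$ samples form a coreset with constant probability."

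**Main obstacle.** The delicate point is step~(1): controlling the net size requires that stability ``propagates'' from the global $\opt_{k-1}/\opt_k$ ratio down to a statement about individual clusters of the constant-factor approximation $A$ at \emph{every} distance scale simultaneously — including very small scales where a cluster $C_j$ might be split by $S$ into pieces each cheaper than $\Delta_{a_j}$ but whose union still matters. Handling these ``fine'' scales, and making sure the union bound over scales does not reintroduce a $\poly(k)$ factor, is where the real work lies; I expect this to need a careful layered argument that treats clusters as "active" only in the $O(\log(1/\eps))$ scales near their own radius $\Delta_{a_j}$ and charges everything else to the already-controlled $\Delta$-terms in $\mu$.
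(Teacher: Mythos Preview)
Your Step~(1) contains a genuine gap: cost-stability does \emph{not} shrink the net to polynomial size. Even for a single cluster $C_j$ with a well-defined ``core,'' the cost $\cost(C_j,S)$ depends on \emph{where} the nearest center of $S$ lies, not merely on which center it is; discretizing that location in (effective) dimension $\tilde O(\eps^{-2})$ already forces $\log|N_{2^{-h}}|=\Omega(k\cdot 2^{2h})$, regardless of $\beta$. Your claim that ``the effective number of distinct behaviors of $S$ restricted to $C_j$ is $\poly(k/\eps)$'' is false at fine scales, and this is precisely the regime that dominates the chaining integral. So the mechanism you propose for killing the $\min(\sqrt{k},\eps^{-2})$ factor---collapsing the union bound---does not work.

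The paper's route is the opposite: it keeps the \emph{same} nets as in the worst-case analysis and uses stability solely to sharpen the \emph{variance} side, in a way that cancels against the net size. The key parameter is the \emph{interaction number} $N(S)$: the total number of (cluster, center) incidences where a center $x\in S$ is an approximate nearest center to some cluster center $a_j$ yet lies well outside $C_j$'s average-cost ball. The net bound (\Cref{lem: net-lemma}) gives $\log|M_{2^{-h}}|\lesssim (N(S)+k\,2^{2h})\log(k/\eps)$. Stability enters via a cost lower bound: if a single center $x\in S$ interacts with $r$ clusters, then by the center-separation inequality $\cost(a_i,a_j)\ge \beta\,\opt_k/(2\min(|C_i|,|C_j|))$ for approximate optima on $\beta$-stable instances (\Cref{lem: approxstability}), one gets $\cost(P,S)\gtrsim (r-1)\beta\,\opt_k$, hence $\cost(P,S)\gtrsim (N(S)/k)\,\beta\,\opt_k$ (\Cref{lem: cost-increase}, \Cref{lem: cost-increase-2}). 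Plugging this into the generic variance bound $\var[X^{S,h}]\lesssim 2^{-2h}\cost(P,A)/(m\,\cost(P,S))$ yields $\var[X^{S,h}]\lesssim 2^{-2h}k/(m\,N(S)\,\beta)$. Now the $N(S)$ in the net exponent and the $1/N(S)$ in the variance cancel, leaving $\sqrt{k/(m\beta)}$ per scale. For the small-interaction regime $N(S)\le 2k$ the trivial variance bound $2^{-2h}/m$ already suffices against $\log|M_{2^{-h}}|\lesssim k\,2^{2h}$. Your Step~(2) gestures at the right phenomenon (``$S$ is paying the merging cost'') but misses that the cancellation must be organized through $N(S)$, and your ``main obstacle'' is aimed at the wrong step.
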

We emphasize that we use the same algorithm in both cases and that stability is only used in the analysis. Our second (key) technical contribution is to show more refined bounds between variance and the number of clusterings at various scales by exploiting the stability property.
We also note that achieving similar results for group sampling is significantly more difficult, as the groups do \emph{not} satisfy the aforementioned stability criterion, even if the entire data set does.
In a nutshell, our results show that Sensitivity Sampling is appropriately sensitive to stability while being oblivious to it.

One may wonder whether the $\tilde{O}(k/\varepsilon^2)$ bound is optimal for stable instances?

We show that this is indeed the case for coresets, which consist only of input points.
This assumption holds for most known coreset algorithms.

\begin{theorem}
For cost-stable data sets with any $\beta$ in Euclidean space, any coreset using non-negatively weighted input points must have size $\Omega(k/\varepsilon^2)$.
\end{theorem}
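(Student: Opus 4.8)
The plan is to exhibit, for every target $\beta$ and every $\varepsilon$ small enough, a cost-stable instance $P$ in Euclidean space for which any $\varepsilon$-coreset made of non-negatively weighted input points needs $\Omega(k/\varepsilon^2)$ distinct points. The natural construction is $k$ well-separated ``gadgets,'' each of which individually forces $\Omega(1/\varepsilon^2)$ input points in the coreset; placing the gadgets at mutually huge distances makes the total cost dominated by whichever gadget is ``split'' by a query, so (i) the coreset is forced to be accurate on each gadget simultaneously, giving the $k$-fold blow-up, and (ii) moving any single center between gadgets is catastrophically expensive, which is exactly what drives cost-stability ($\text{OPT}_{k-1}/\text{OPT}_k$ is enormous, so $\beta$ can be made as large as one likes, and in particular $\geq$ any prescribed value). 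The single-gadget lower bound is the known $\Omega(1/\varepsilon^2)$ bound for $1$-means coresets restricted to input points: take a cluster of $n$ identical-weight points arranged so that the optimal center is their centroid, and observe that a weighted subset of the points must reproduce, up to $(1\pm\varepsilon)$, the second moment $\sum_p \|p - c\|^2$ for all $c$; expanding, this means the subset must reproduce both the total weight and the centroid to relative error $\varepsilon$ (the latter in the scale set by the cluster radius), and a dimension/packing argument (e.g. points near-orthogonal on a sphere, as in the standard $1$-mean coreset lower bounds of Cohen-Addad–Larsen–Saulpic–Schwiegelshohn and predecessors) forces $\Omega(1/\varepsilon^2)$ support points.

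Concretely, first I would set up one gadget: $n$ points in $\mathbb{R}^{d}$ with $d = \Theta(1/\varepsilon^2)$, each of unit weight, sitting at scale $1$, such that for the family of queries ``move the single center to $c$'' the cost is $n + \|c - \mathrm{centroid}\|^2 n$ (after centering), and prove that any non-negatively weighted sub-multiset $\Omega$ with $|\Omega| = o(1/\varepsilon^2)$ fails some query — this is where I would either cite or reprove the input-point $1$-means coreset lower bound; the non-negativity of weights is used here to prevent cancellation tricks. Second, I would take $k$ disjoint copies of this gadget, translate the $j$-th copy by a vector $R\cdot e_j'$ with $R$ astronomically large (say $R \geq \beta \cdot n \cdot k$, or larger, in fresh coordinates so the copies are pairwise at distance $\geq R$), and let $P$ be their union. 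For queries that place one center inside each gadget, the cost decomposes (up to lower-order terms) as a sum over gadgets of the within-gadget cost, so an $\varepsilon$-coreset of $P$ restricts to an $\varepsilon'$-coreset of each gadget with $\varepsilon' = O(\varepsilon)$ — here I need the standard observation that the ``cross'' contributions are negligible because $R$ is huge and all weights are bounded — hence $|\Omega| \geq \sum_j \Omega(1/\varepsilon^2) = \Omega(k/\varepsilon^2)$. Third, I would verify cost-stability: $\text{OPT}_k = \Theta(nk)$ (one center per gadget), while any $(k-1)$-clustering must leave some gadget with no center nearby, incurring cost $\geq \Omega(R^2)$, so $\text{OPT}_{k-1}/\text{OPT}_k = \Omega(R^2/(nk)) \geq 1+\beta$ for our choice of $R$; this makes $P$ $\beta$-stable for the desired $\beta$ (and in fact for arbitrarily large $\beta$).

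The main obstacle I anticipate is the reduction in the second step: making rigorous that an $\varepsilon$-coreset of the whole instance must act as a good coreset on each gadget. The subtlety is that a coreset query can place \emph{more than one} center inside a single gadget (and correspondingly fewer than one, i.e. zero, centers in another gadget), so the clean decomposition of cost is not automatic; I would handle this by restricting attention to the sub-family of queries that use exactly one center per gadget — the coreset guarantee must hold for \emph{all} queries, in particular these — and arguing that within this sub-family the cost is a separable sum up to additive error $o(\varepsilon \cdot \text{cost})$ coming from cross terms, which is controlled because any coreset point has weight $O(n)$ (since the coreset must also get the total cost of the trivial single-center-per-gadget query right, capping the total weight) and the cross distances are $\Theta(R^2)$ with $R$ chosen large enough that these terms are dwarfed. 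A second, minor technical point is to make sure the per-gadget lower bound is robust to the $o(\varepsilon)$ slack and to the fact that weights in $\Omega$ need not match the uniform weights of the gadget; the non-negativity hypothesis and a standard moment-matching argument should suffice, but I would state the single-gadget claim as a self-contained lemma and prove it carefully, as everything else is bookkeeping.
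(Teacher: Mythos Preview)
Your overall architecture (a single-gadget $1$-mean lower bound of $\Omega(1/\varepsilon^2)$, then $k$ well-separated copies, with stability following from the separation) matches the paper exactly, and your choice of gadget (near-orthogonal points, i.e.\ the simplex in dimension $\Theta(1/\varepsilon^2)$) is precisely what the paper uses. The place where your proposal diverges, and where there is a genuine gap, is the reduction step.

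You claim that an $\varepsilon$-coreset of $P$ ``restricts to an $\varepsilon'$-coreset of each gadget with $\varepsilon'=O(\varepsilon)$,'' and the only obstacle you flag is cross-term contamination from the huge separation $R$. But cross terms are not the issue; the \emph{shared error budget} is. If you place one center at the centroid of every gadget except the $j$-th, and vary only the $j$-th center, the total cost is roughly $(k-1)\cdot n + \cost(\text{gadget }j)$. The coreset guarantee allows error $\varepsilon$ times this total, i.e.\ error $\Theta(\varepsilon k n)$, which swamps the $\Theta(n)$ cost of gadget $j$. So the restriction of $\Omega$ to gadget $j$ is only a $\Theta(k\varepsilon)$-coreset for $1$-mean on that gadget, and the per-gadget lower bound then gives you only $\Omega(1/(k\varepsilon)^2)$ points per gadget --- far too weak. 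Your argument that ``$|\Omega| \geq \sum_j \Omega(1/\varepsilon^2)$'' does not follow.

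The paper sidesteps this by an averaging/aggregation argument rather than a per-gadget restriction. Assuming $|\Omega| \le c\cdot k/\varepsilon^2$ for small $c$, pigeonhole gives that at least $9k/10$ gadgets receive fewer than $0.61/(10\varepsilon)^2$ coreset points. For each such underserved gadget $i$, the single-gadget calculation shows that the specific center $s_i$ (the normalized sum of the coreset points in that gadget) has its cost \emph{underestimated} by $\Omega(\varepsilon n)$. The paper then builds \emph{one} query $S$ that places $s_i$ in every underserved gadget and the centroid in every well-served gadget. All the underestimation errors point in the same direction and add up to $\Omega(k\varepsilon n)$, while the total cost of $S$ is $O(kn)$; hence the relative error exceeds $\varepsilon$. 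The key idea you are missing is to exploit many underserved gadgets \emph{simultaneously} in a single query so that their signed errors aggregate, rather than trying to isolate one gadget at a time.
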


In \Cref{sec: perturbation}, we also remark that for another popular notion of well-clusterability known as perturbation resilience, one cannot obtain coresets that beat the worst-case bound. 

As sensitivity sampling outputs only the input points, it can also be applied to non-Euclidean metrics. Indeed, as a simple, straightforward consequence of our work, we also show that Sensitivity Sampling achieves optimal bounds for $k$-means in doubling metrics and finite metrics\footnote{
Finite $n$-point metrics are a special case of doubling metrics with doubling constant $\log n$.}.
\begin{theorem}\label{thm:doubling}
In doubling metrics with bounded doubling dimension $D$, Sensitivity Sampling computes a $k$-means coreset of size $\tilde{O}(kD/\varepsilon^2)$.
\end{theorem}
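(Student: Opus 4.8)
The plan is to derive Theorem~\ref{thm:doubling} as a corollary of the worst-case analysis behind Theorem~\ref{thm: worst-case-coreset-thm}, tracking where Euclidean structure was actually used and replacing it with the doubling hypothesis. First I would observe that Sensitivity Sampling is a metric-oblivious algorithm: Step~1 only needs a constant-factor $k$-means solution, which exists (and is computable) in any metric, and the sampling distribution $\mu$ in \eqref{defn:mu} is defined purely in terms of $\cost(\cdot,\cdot)$; so the algorithm runs verbatim. The estimator \eqref{eq:estimator} is unbiased in any metric, so the whole question is again a two-part bound: (i) the variance of the estimator for a fixed candidate solution $S$, and (ii) a union bound / chaining argument over a suitable net of candidate solutions at each distance scale. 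Part (i) is metric-agnostic — the variance bounds come from the structure of $\mu$ relative to the approximate clustering $A$, not from Euclidean geometry — so the only place Euclidean-ness enters is the net-size estimate in the chaining step.

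The key step is therefore to re-examine the chaining argument and quantify the size of an $\eps$-net of candidate center configurations at scale $r$. In Euclidean $\R^d$, one uses that a ball of radius $R$ admits an $\eps$-net of size $(R/\eps)^{O(d)}$, and one exploits that the terminal embedding / dimension-reduction machinery lets one take $d = \tilde O(\eps^{-2}\log k)$; this is exactly what produces the $\min(\sqrt k,\eps^{-2})$ factor. In a doubling metric of doubling dimension $D$, the analogous statement is that any subset of diameter $R$ has an $\eps R$-net of size $(1/\eps)^{O(D)}$ — this is the defining packing property of doubling metrics. Plugging $(1/\eps)^{O(D)}$ in place of $(1/\eps)^{O(d)}$ into each scale of the chaining sum, and using the same variance bounds as in the proof of Theorem~\ref{thm: worst-case-coreset-thm} but with the cruder, ``$D$-dimensional'' counting, one gets that a sample of size $\tilde O(kD/\eps^2)$ suffices: the $\min(\sqrt k,\eps^{-2})$ gain disappears because in a general doubling metric there is no dimension-reduction step to push $D$ below $\tilde O(\eps^{-2})$, and one simply carries $D$ through. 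The finite-metric case follows by the standard fact that an $n$-point metric has doubling dimension $O(\log n)$, giving size $\tilde O(k\log n/\eps^2)$.

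Concretely, the steps I would carry out are: (1) state that Sensitivity Sampling and the estimator \eqref{eq:estimator} are well-defined and unbiased in an arbitrary metric space, and that Step~1 only needs an $O(1)$-approximation (cite the relevant black box); (2) isolate the variance lemma from the proof of Theorem~\ref{thm: worst-case-coreset-thm} and check its proof uses only $\cost(\cdot,\cdot)$, triangle inequality, and properties of $A$ and $\mu$ — no inner products — so it transfers unchanged; (3) replace the Euclidean net bound in the chaining lemma with the doubling packing bound: for each scale, the number of relevant center configurations to union-bound over is at most $\big(k \cdot (R/(\eps r))^{O(D)}\big)^k$ or the appropriate analogue, giving a $\log$-term of $O(kD\log(1/\eps))$ up to lower-order factors; (4) recompute the chaining sum with this substitution to get the sample-size requirement $m = \tilde O(kD/\eps^2)$; (5) derive the finite-metric corollary from doubling dimension $O(\log n)$.

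The main obstacle I anticipate is step~(3)–(4): making sure that the chaining decomposition used for the Euclidean bound does not secretly rely on a property stronger than doubling — for instance, on movement/interpolation of centers along geodesics, on the dimension-reduction that lets one assume $d$ small, or on a specific coupling of scales that used the Euclidean net structure. I would need to verify that the ``terminal'' reductions in the Euclidean proof are used only to control $d$ and not in an essential geometric way, and that the per-scale error terms can be bounded using nets of the point set $P$ itself (which is always doubling of dimension $\le D$) rather than nets of $\R^d$. If the Euclidean proof's chaining is, as I expect, organized as a sum over distance scales of (variance bound) $\times$ (log of net size), then the substitution is mechanical and the theorem follows; the delicate part is confirming that organization and that the constants absorbed in $\tilde O(\cdot)$ genuinely only pick up a single factor of $D$ rather than $D$ in an exponent. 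Since the paper states this as a ``simple, straightforward consequence,'' I expect the Euclidean argument was written with exactly this modularity in mind, and the proof amounts to pointing out which constant becomes $D$.
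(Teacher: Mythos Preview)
Your proposal is correct and matches the paper's own (very terse) argument in \Cref{app: doubling}: the paper likewise observes that the variance bounds (in particular \Cref{lem: easy-var}) are metric-agnostic and that only the cost-vector net sizes need to be replaced, then invokes the doubling packing bound (\Cref{prop:doub:net}) in place of the Euclidean subspace construction of \Cref{app: nets}. Your concern that the Euclidean net construction ``secretly'' uses more than doubling is well-founded---\Cref{lem: subspace} genuinely uses inner products and orthogonal projections---but this is not an obstacle: for doubling metrics one \emph{bypasses} that machinery entirely and builds the net directly from a $\gamma$-net of the metric balls around the $a_j$, yielding $\log|M_{2^{-h}}| = O(kDh\log(k\eps^{-1}))$, which combined with $\var[X^{S,h}]\lesssim 2^{-2h}/m$ gives the claimed $\tilde O(kD/\eps^2)$ after summing the chaining series.
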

This improves over the $\tilde{O}(k^3D \varepsilon^{-2})$ analysis for Sensitivity Sampling by \cite{HuangJLW18} and matches the lower bound by \cite{CGSS22} (which was already attained by Group Sampling). See \Cref{app: doubling} for the details.

Finally, we remark that these results also carry over to the $k$-median problem.  
Specifically, we achieve a $\Tilde{O}(k/\varepsilon^{2}\min(\sqrt[3]{k},\varepsilon^{-1}))$ worst case bound for $k$-median and a $\tilde{O}(k/\varepsilon^2)$ bound for stable instances (see \Cref{sec: kmedian-coreset}). Given these results, our work confirms the common belief that Sensitivity Sampling is {\em the} right coreset algorithm for clustering.

\subsection{Related Work and Technical Overview}
\label{sec:overview}
We now give a more detailed overview of our ideas and place our work in the proper context. This requires understanding some previous work, which we now describe at a high level.

Clearly, the analysis of any sampling-based coreset algorithm has two components. Bounding the variance of estimator \eqref{eq:estimator}, and the loss due to union bound over the candidate solutions $S$. 
Consider a fixed solution $S$, and let $X(S) =  \text{cost}_\Omega(P,S)$ denote the estimator in \eqref{eq:estimator}.
The relative error for $S$ with $m$ samples is about $\sigma_S/m^{1/2}$, where $\sigma_S^2 = \E[X(S)^2]/\E[X(S)]^2$ is the (normalized) variance, and thus setting $\sigma_S/m^{1/2} = \eps$, it follows that $m=  \eps^{-2}\sigma_S^2$ samples suffice.
To control the error for a set $N$ of solutions $S$, by standard concentration and union bounds $m = \eps^{-2} \sigma^2 \log |N|$ samples suffice, where $\sigma$ is some upper bound on the variance for different $S$.   

Thus, all previous analyses focus on bounding $\log |N|$ and $\sigma^2$.
These analyses fall into two generations: those that don't use chaining and those that do (starting from \cite{CGSS22}).
See \Cref{table:core}.

\begin{table*}[ht!]
\begin{center}
\begin{tabular}{r|c|c|c|c}
Reference & Variance & Log of Net Size & Size &Algorithm
\\
\hline
\hline
\multicolumn{4}{l}{{\bf Upper Bounds for Euclidean $k$-means}}\\
\hline
\hline
\multirow{1}{*}{\cite{Chen09} (Sicomp'09)}& \multirow{1}{*}{$O(1) $} & \multirow{1}{*}{$kd $} & \multirow{1}{*}{$k^2d \varepsilon^{-2}\log n$} & GS* 
\\\hline
\cite{LS10} (SODA'10) & $k$ & $kd^2$ & $k^2d^2 \varepsilon^{-2}$ & SS
\\\hline
\cite{FL11} (STOC'11)& $\eps^{-2}$ & $kd$ & $kd \varepsilon^{-4}$ & SS
\\\hline
\cite{FeldmanSS20} (Sicomp'20)& $k$ & $k^2\varepsilon^{-2}$& $k^3 \varepsilon^{-4}$  & SS 
\\\hline
\cite{BecchettiBC0S19} (STOC'19) & $\eps^{-2}$ & $k\varepsilon^{-4}$ & $k \varepsilon^{-8}$ & SS 
\\
\hline
\cite{huang2020coresets} (STOC'20)&  $\eps^{-2}$ & $k\varepsilon^{-2}$ & $k \varepsilon^{-6}$ & SS
\\
\hline
\cite{BravermanJKW21} (SODA'21)& $k$ & $k\varepsilon^{-2}$ & $k^2 \varepsilon^{-4}$ & SS
\\
\hline
\multirow{2}{*}{\cite{CSS21} (STOC'22)} & $k$ & $O(1)$, $\varepsilon>2^{-t/2}$  &  \multirow{2}{*}{$k \varepsilon^{-4}$} & \multirow{2}{*}{GS}\\
& $O(1)$ & $k \varepsilon^{-2}$, $\varepsilon < 2^{-t/2}$ & & 
\\
\hline
\multirow{2}{*}{\cite{CGSS22} (STOC'23)} & $2^{-2h}k$ & \multirow{2}{*}{$k 2^{2h}$}  & \multirow{1}{*}{$k^2 \varepsilon^{-2}$}  & \multirow{2}{*}{GS}\\
& $2^{-2h}2^{t}$ & & \multirow{1}{*}{$k \varepsilon^{-4}$}  &\\
\hline
\multirow{3}{*}{\cite{Cohen-AddadLSSS22} (NeurIPS'23)$^\dagger$}  & $k$ & $O(1)$\,\, {\small ($2^h \leq 2^{t/2}$)}  & \multirow{3}{*}{$k \varepsilon^{-2}\min(\sqrt{k},\varepsilon^{-2}) $} & \multirow{2}{*}{GS}  \\
& \multirow{1}{*}{$ 2^{-2h}\min(\frac{k}{k_t\cdot 2^t},1)$} & $k\cdot \min(k_t,2^t)\cdot 2^{2h}$ &   & 
\\
&  &{\small $(2^h > 2^{t/2})$}  &    & \\
\hline
\multirow{2}{*}{(here) Theorem 1} & \multirow{2}{*}{$ 2^{-2h}\min(\frac{k}{k_t\cdot 2^t},1)$} & $k\cdot k_t\cdot 2^{2h}$ &\multirow{2}{*}{$k\varepsilon^{-2}\min(\sqrt{k},\varepsilon^{-2})$} & \multirow{2}{*}{SS} \\
&  & $k\cdot 2^t\cdot 2^{2h}$ & &
\\
\hline
\multirow{2}{*}{(here) Theorem 2} &\multirow{2}{*}{$ 2^{-2h}/\beta\cdot k_t$} & $k\cdot k_t\cdot 2^{2h}$ &\multirow{3}{*}{$k\varepsilon^{-2}$}& \multirow{3}{*}{SS} \\
&   & {\small ($2^{t}>k$, $2^h>2^{t/2}$)} &   & \\
\multirow{1}{*} {$(\beta \in \Omega(1))$}
& $O(2^{-2h}k)$ & $O(1)$ \,\,{\small $(2^t>k, 2^h \leq 2^{t/2})$} & & \\
\hline\hline
\multicolumn{4}{l}{{\bf Lower Bounds for Euclidean $k$-means}}\\
\hline\hline
\cite{CGSS22} (STOC'23) & \multicolumn{4}{c}{$k\varepsilon^{-2}$}\\
\hline
\cite{HLW23} (STOC'24) & \multicolumn{4}{c}{$k\varepsilon^{-2}\cdot \min(\sqrt{k},\varepsilon^{-2})$}\\
\hline
(here) Theorem 3 & \multicolumn{4}{c}{$k\varepsilon^{-2}$ for any constant $\beta$ and assuming $\Omega\subseteq P$}\\
\end{tabular}
\end{center}
\caption{Variance/Net size tradeoffs and resulting for previous coreset constructions for Euclidean $k$-means. All polylogarithmic factors are suppressed. GS denotes group sampling and SS denotes sensitivity sampling. Net sizes in terms of $t$ and the corresponding variance can be used in a chaining analysis. $^*$: Chen's paper \cite{Chen09} was an early version of group sampling with $k\log n$ groups, rather than the proposed $\log^2 \varepsilon^{-1}$ groups by \cite{CSS21}. $^\dagger$: The optimal analysis for group sampling was later independently reproved by \cite{HLW23}.}
\label{table:core}
\end{table*}

\paragraph{Generation 1.} These analyses gave generic upper bounds on $N$ and $\sigma^2$. The set $N$ consisted of an $\eps$-net of all possible solutions $S$, obtained by selecting $k$ points as centers from a sufficiently fine discretization of the unit Euclidean sphere of size $(1/\eps)^{d}$, as initially proposed by both Chen \cite{Chen09} and Feldman and Langberg \cite{FL11}.
Black box applications of dimension reduction techniques subsequently replaced $d$ with $\varepsilon^{-2}$, ignoring logarithmic factors, see \cite{BecchettiBC0S19,huang2020coresets}.
As a result, we have $\log |N| \approx k\min(d,\varepsilon^{-2})$, which remains the state of the art to this day.

For variance, a simple bound, given initially by \cite{LS10}, shows that $\sigma^2 \approx k$. A more sophisticated analysis by \cite{FL11} and \cite{CSS21} showed $\sigma^2 \approx \varepsilon^{-2}$. These bounds are based on considering how much each cluster in some approximately optimum clustering can cost in a non-trivial candidate solution $S$.
However, the bound $\sigma^2 = \min(k,\eps^{-2})$ is tight,
and thus $m = \eps^{-2} \sigma^2 \log |N| \approx  k\eps^{-4} \min(k,\eps^{-2})$ is the limit of these methods.
So any further improvement needs to come from more sophisticated ways of doing union bounds.

\paragraph{Generation 2.} This is where chaining enters the picture. Chaining is a very powerful approach, but often technically challenging, to optimally bound the suprema of Gaussian processes (the suprema over $S$ of the estimator \eqref{eq:estimator} can be written in this form via a standard symmetrization technique). The idea in chaining is to write the estimator as a telescoping sum of other estimators at increasingly finer distance scales and bound the variances and nets at each such scale. Some excellent references are \cite{vershynin_2018,talagrand2014upper}.

 \cite{CGSS22} were the first to introduce chaining to coreset analyses. Roughly, if we parameterize the distance scales by $h$ (scale $h$ has precision $2^{-h}$), the variance of the estimator $\sigma_h$ at scale $h$ decays exponentially in $h$, while the nets $N_{2^{-h}}$ get larger as they get increasingly finer. 
 The overall error rate then is the normalized sum (or integral) over all distances $2^{-h}$ ranging from $0$ to $\varepsilon^{-2}$,
  \[\sum_{h \leq 2 \log 1/\eps} \sqrt{\frac{\sigma_h^2}{m} \cdot \log |N_{2^{-h}}|}.\]
  Generally speaking, the improvement in chaining comes from carefully trading off the variance with the logarithm of the net size at each scale. 
Roughly, the naive union bound in generation $1$ results uses the worst net size and the worst variance bound over all scales, which could be very wasteful.

 Combining the older variance bounds of $k$ (times $2^{-2h}$ at distance scale $h$) from \cite{FeldmanSS20,LS10}, \cite{CGSS22} used the chaining framework to obtain an error rate of
 \[\sum_{h\in \mathbb{N}}  \sqrt{\frac{k2^{-2h}}{m}  \cdot k2^{2h}} \approx \sqrt{\frac{k^2}{m}},\]
 leading to $m = \eps^{-2} k^2$ coreset size bound.

To obtain the optimal bounds, \cite{Cohen-AddadLSSS22} gave a significantly more sophisticated trade-off between variance and net size. We will describe this next as we will need it.
Starting with a constant factor approximation $A$ to the optimal $k$-means clustering, 
the algorithm classifies each cluster of $A$ 
according to its costs in $S \in \mathcal{S}$, and obtains the variance and net size tradeoffs in terms of these parameters (see \Cref{table:core}).
Here, a cluster of $A$ has type $t$ if it costs $\approx 2^{t}$ times more in the solution $S$.

Let $k_t$ denote the number of clusters of type $t$.

Specifically, 
they give two different net constructions, one which is better for small $k_t$ and other if $k_t$ is large. In particular, they show $\log |N_{2^{-h}}| = O(k\min(k_t,2^{t})2^{2h})$, and variance bounds of $\min( k/(k_t2^{t}) , 1)$ (times $2^{-2h}$ at distance scale $h$), which gives an error of
\[\sqrt{ \frac{\min( k/(k_t2^{t}) , 1) 2^{-2h}}{m} \cdot  k\min(k_t,2^{t}) 2^{2h}}\approx \sqrt{\frac{k}{m}\cdot \min\left(\frac{k}{k_t},k_t,2^{t}\right)}\leq \sqrt{\frac{k}{m}\cdot \min(\sqrt{k},2^{t})}. \]
Since, we can assume that $2^t \leq \eps^{-2}$ this leads to the optimal  coreset of size ${m = \tilde{O}(k \eps^{-2} \cdot \min(\sqrt{k}, \eps^{-2}))}$.

\paragraph{Our work on Sensitivity Sampling.}
There are two key difficulties in analyzing sensitivity sampling. First, to extend the error rate of \cite{Cohen-AddadLSSS22} for group sampling, without being able to rely on the carefully chosen structural properties afforded to the analysis by group sampling. Our main contribution is to show that the existence of these structural properties suffices to obtain the desired net/variance tradeoffs without requiring an algorithm to compute and enforce them. 

Second, and our main novel contribution, is a similar refined variance/net size tradeoff for stable instances. 
The key property that stable instances satisfy is that if two clusters 
{\em interact} in a solution, at least one of them must pay at least $\beta\cdot k$ times more than their cost in $A$\footnote{This bound makes several simplifying assumptions that do not hold in general for stable instances, but they capture the essence of the analysis in a faithful way.}. 
This property already ensures that for most ranges of $2^{t}$, the previous variance/net tradeoffs from \cite{CSS21,CGSS22,Cohen-AddadLSSS22} cannot occur unless $2^{t}$ is very large. Specifically, two clusters can only interact if $2^{t} \geq \beta\cdot k$ for at least one of them.
This property is crucially \emph{not} satisfied for the groups used by group sampling. In particular, those groups can be worst-case instances that would, by themselves, require a coreset of size $k^{1.5}/\varepsilon^2$, even if the entire data set is stable. 

Consequently, one would hope that this increase in cost by a factor of at least $k$ should decrease the variance of the estimator by that factor. Unfortunately, this only holds for sufficiently large choices of $h$, in particular for $h>(\log k)/2$. To get around this, we use a different bound on the variance of the coreset estimator for smaller values of $h$, originally proposed by \cite{CSS21} for group sampling and extended here to sensitivity sampling. This gives an error rate of
\[\sum_{h\leq (\log k)/2} 2^{-h}\sqrt{\frac{1}{m}  \cdot k} + \sum_{h>  (\log k)/2}2^{-h}\sqrt{\frac{1}{m\beta\cdot k_t}  \cdot k\cdot k_t\cdot 2^{2h}}  \approx \sqrt{\frac{k}{\beta\cdot m}},\]
giving the optimal coreset bound of $\tilde{O}(k \eps^{-2})$ for $\Omega(1)$ stable instances. 

To the best of our knowledge, this is the first sensitivity sampling analysis that uses different variance bounds for the estimator at different precision scales, even when considering non-clustering objectives.
Finally, as mentioned earlier, the improved error rate does not require a change in the sensitivity sampling algorithm or any advanced knowledge of $\beta$. 
Thus, sensitivity sampling is appropriately sensitive to the stability of a data set while being oblivious to it.

\subsection{Notation and Definitions}
We now describe the notation and definitions used in the paper. 

For two expressions $A,B$ we use the notation $A \lesssim B$ to denote $A  = O(B)$.  We use  $\mathcal{S}$ to denote the set of all possible ordered $k$-tuples of centers in $\R^d$. For two points $p,q \in \R^d$, we let $\cost(p,q):= \norm{p -q}_2^2$ denote the squared distance between the points. Given a set of $k$ centers $S\in \mathcal{S}$ we let $\cost(p,S):= \min_{c \in S} \cost(p, c)$ to be the cost of $p$ to the nearest center in $S$.
For a set of of points  $P' \subset \R^d$,  we denote $\cost(P',S) := \sum_{p \in P'} \cost(p,S)$ to be the $k$-means cost of $P'$ wrt $S$. Similarly for a set of points $\Omega = \{q_1, \ldots, q_m\}$ with weights $\{w_{q_1}, \ldots, w_{q_m}\}$, we let $\costom(P',S) = \sum_{q_i \in P' \cap \Omega} w_{q_{i}} \cost(q_i, S)$.

With the notation in place, we formally define an  $\eps$-coreset.

\begin{definition}[$\eps$-Coreset]
    Given a set $P \subset \R^d$ of $n$ points and $\eps \in (0,1)$, an $\eps$-coreset is a set $\Omega = \{q_1, \ldots, q_m\} \subset \R^d$ of points with weights $\{w_{q_1}, \ldots, w_{q_m}\}$, that for any set of $k$ centers $S\in \mathcal{S}$ approximately preserves the $k$-means objective of $P$ with respect to $S$, i.e., 
    \begin{align*}
        \costom(P,S) \in  (1\pm \eps) \cost(P,S).
    \end{align*}
The number of points $m$ in the coreset is called its \emph{size}.
\end{definition}

We now define the notion of well-clusterability that we will use.
 \begin{definition}[$\beta$-Stability \cite{ORSS12}]\label{def: stability}
    For $\beta > 0$, a set $P \subset \R^d$ is $\beta$-stable if its optimal $k$-means cost $\mathrm{OPT}_k$ and optimal $(k-1)$-means cost $\mathrm{OPT}_{k-1}$  satisfy $\mathrm{OPT}_k\cdot (1+\beta) \leq \mathrm{OPT}_{k-1}$.
\end{definition}

\subsection{Further Related Work}
\label{sec:related-work}
\paragraph{Coresets.}
Coresets for clustering have received substantial attention over the years. 
Besides Euclidean metrics, they have also been studied for many other metrics, such as doubling metrics \cite{HuangJLW18,Cohen-AddadSS21} and shortest path metrics in graphs \cite{BakerBHJK020,BravermanJKW21}.
There have also been several works on extending coresets to constrained clustering objectives including clustering with cardinality constraints \cite{Cohen-AddadL19,HuangJV19,SSS19, BFS21,BravermanCJKST022}.

Further work on corsets considers objects other than points as centers \cite{FeldmanMSW10,HuangSV21,BJKW21} or other objectives altogether 
\cite{BoutsidisDM13,MK18,MunteanuSSW18,KarninL19,PhillipsT20,
TukanMF20,HuangSV20,MaiMR21}.
For further reading, we refer the interested reader to surveys \cite{Feldman20,MunteanuS18}.

\paragraph{Stability.} There has been a lot of work on understanding why clustering algorithms work unusually well in practice. Inspired by the elbow method\footnote{The elbow method is a heuristic used to compute the number of clusters in a data set. This involves plotting the cost of clustering 
 as a function of the number of clusters and picking the number of clusters to be the elbow of this curve.  Aside from being a popular heuristic in practice, there exists some theoretical justification for this approach, see \cite{BhattacharyyaKK22}.}, \cite{ORSS12} introduced cost stability to model the well-clusterability of real-life data.  Over the years, several works have shown that popular algorithms such as local search \cite{Cohen-AddadS17}, $k$-means++ \cite{JaG12, AgarwalJP15}, and even Lloyd's algorithm \cite{KuK10, AwS12} perform better for $\beta$-cost stable instances with $\beta= \Omega(1)$. Additionally, 
\cite{ABS10} also gave a PTAS for such inputs, which is unlikely to exist for worst-case instances. Clustering of cost-stable inputs has also been studied in various constrained models such as streaming and privacy \cite{ShechnerSS20}. Notably, better streaming algorithms for stable instances were known \cite{BMORST11} before they were achievable using coresets \cite{BravermanFLR19,CWZ23}. 

Various other models for the beyond-worst-case analysis of clustering, most notably perturbation resilience\footnote{Bilu-Linial stability, also known as perturbation resilience, is the other important stability notion that has received the most analysis for various problems. It is comparatively unlikely to hold for real-world instances as it is not robust to any form of noise, but it allows for solving clustering instances optimally. Unfortunately, assuming this stability notion does not yield any improvement for coresets, which we demonstrate in \Cref{sec:lowerbound}.}\cite{BiL12}, have been proposed.  These assumptions allow the recovery of good clusterings for various objectives  \cite{ABS12,BalcanHW20,BalcanL16,MakarychevMV14}, including $k$-means \cite{AngelidakisMM17}.
Lastly, there exist numerous assumptions for recovering mixtures, all of which bear a strong similarity to stability criteria \cite{Das99,ArK01,VeW04,DaS07,BrV08,KSV08,AwS12}.

\tikzstyle{block} = [rectangle, draw, fill=green!20, 
    text width=25em, text centered, rounded corners, minimum height=7em, font=\huge]
\tikzset{
    arrow/.style={
        draw,
        -stealth,
        line width=1.5pt, 
        line cap=round, 
        line join=round 
    }
}
\tikzstyle{blockred} = [rectangle, draw, fill=red!20, 
    text width=25em, text centered, rounded corners, minimum height=7em, font=\huge]
\tikzset{
    arrow/.style={
        draw,
        -stealth,
        line width=2pt, 
        line cap=round, 
        line join=round 
    }
}
\tikzstyle{blockblue} = [rectangle, draw, fill=blue!20, 
    text width=25em, text centered, rounded corners, minimum height=7em, font=\huge]
\tikzset{
    arrow/.style={
        draw,
        -stealth,
        line width=1.5pt, 
        line cap=round, 
        line join=round 
    }
}

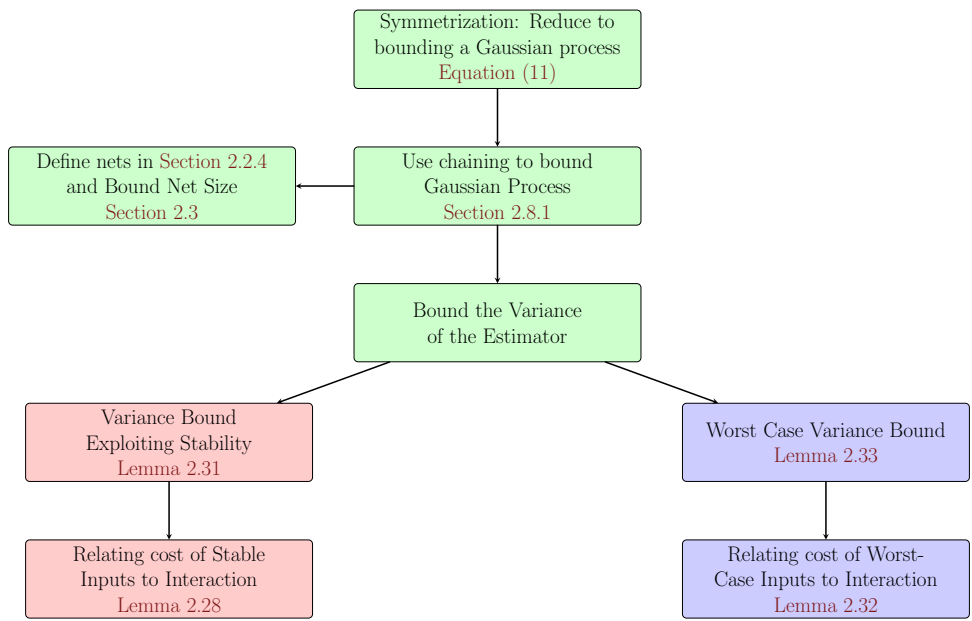
\begin{figure}[hbt!]

\centering
\resizebox{0.78\textwidth}{!}{
\begin{tikzpicture}[node distance=2cm]

\node[block](lemma6){Symmetrization: Reduce to bounding a Gaussian process \\
\Cref{eq:sym}};

\node[block, below=of lemma6](lemma8){Use chaining to bound Gaussian Process\\ \Cref{sec: completing-the-proof-kmeans}};

\node[block, left= of lemma8] (lemma9) {Define nets in \Cref{sec: chaining-nets} and Bound  Net Size\\\Cref{sec: cost-vector-net}};

\node[block, below = of lemma8](lemma10){Bound the \mbox{Variance} of the Estimator}; 

\node[blockred, below left = of lemma10](lemma11){Variance Bound \mbox{Exploiting} Stability\\\Cref{lem: stable-var}};

\node[blockred,below= of lemma11](lemma13){Relating cost of \mbox{Stable} Inputs to Interaction\\ \Cref{lem: cost-increase}};

\node[blockblue, below right = of lemma10](lemma12){Worst Case Variance Bound\\\Cref{lem: worst-case-var}};

\node[blockblue,below= of lemma12](lemma14){Relating cost of Worst-Case Inputs to Interaction\\ \Cref{lem: structured-cost}};

\draw [arrow] (lemma6) -- (lemma8);
\draw [arrow] (lemma8) -- (lemma9);
\draw [arrow] (lemma8) -- (lemma10);

\draw [arrow] (lemma10) -- (lemma11);
\draw [arrow] (lemma10) -- (lemma12);
\draw [arrow] (lemma11) -- (lemma13);
\draw [arrow] (lemma12) -- (lemma14);
\end{tikzpicture}
}
\caption{A roadmap of the key steps of the analysis. The steps in the green boxes are common to the proof for stable inputs and worst-case inputs. The key difference is in analyzing the variance of the estimator, and these steps appear in the red and blue boxes, respectively.}
\label{fig: roadmap}
\end{figure}

\section{The Analysis of Sensitivity Sampling}\label{sec: analysis}
\subsection{Roadmap of the Analysis}

We now prove  \Cref{thm: worst-case-coreset-thm} and \Cref{thm: main-coreset-thm}, which bound the coreset sizes for worst-case inputs and stable inputs, respectively.  To show that the output $\Omega$ of 
Sensitivity Sampling (\Cref{alg: sensitivity-sampling}) is an $\eps$-coreset, we need to show that the maximum relative error of $\Omega$ over all possible placements of centers $S \in \mathcal{S}$ is at most $\eps$. This is equivalent to showing the following: 
\begin{align}\label{eq: goal}
	\begin{split}
		\E_{ \Omega} \sup_{S \in \mathcal{S}} \left\lvert \frac{\cost(P,S) - \costom(P,S)}{\cost(P,S) } \right\rvert \leq \eps.
	\end{split}
\end{align}
\Cref{thm: worst-case-coreset-thm,thm: main-coreset-thm} then immediately follow from \eqref{eq: goal} via Markov's inequality.

Showing \eqref{eq: goal} involves several steps which we now sketch (also see \Cref{fig: roadmap} for an illustration).

\paragraph{Symmetrization.} The first key step is to apply a symmetrization argument (in \Cref{sec: symmetrizaion}) to reduce this task to bounding the supremum of a Gaussian process.  
Specifically, \eqref{eq: goal} reduces to showing the following:
\begin{align}\label{eq: roadmap-sym}
\E_{ \Omega} \E_{g} \sup_{S \in \mathcal{S}} \left\lvert \frac{\sum_{q_i \in \Omega}g_i \cdot w_{q_i} \cdot \cost(q_i,S) }{\cost(P,S)} \right\rvert  \leq \eps,
\end{align}
where the $g_i$ are $m$ independent standard Gaussian random variables. Symmetrization allows us to fix $\Omega$ and bound the inner expectation in \eqref{eq: roadmap-sym}, which is only over the randomness of the Gaussians. A key advantage of fixing $\Omega$ is that the nets for the chaining argument can now depend on $\Omega$.

\paragraph{Constructing Nets. } The supremum in \eqref{eq: roadmap-sym} is over the infinitely many centers in $\mathcal{S}$. To handle this, the next important step is to construct nets approximating $\mathcal{S}$  at various distance scales. This is done in \Cref{sec: cost-vector-net}. At a high level, a net of $\mathcal{S}$ has the following property:  For any $S \in \mathcal{S}$ the net contains an $S'$  such that for each point  $q \in \Omega$, we have $\cost(q,S) \approx \cost(q,S')$.  

\paragraph{Chaining. } Next, we use a chaining argument  (in \Cref{sec: chaining}) to bound the Gaussian process \eqref{eq: roadmap-sym}. To do this, we decompose it into a sum of multiple Gaussian processes at different distance scales,
using the nets of $\mathcal{S}$ at these scales. The main technical challenge then is to bound the variance of these various Gaussian processes and trade it off with the corresponding net size.

To bound the variance, we crucially exploit that points are sampled roughly proportional to their sensitivity.
To obtain the trade-off, we identify a key parameter called the ``interaction number" (in \Cref{sec: itneraction-of-centers}) that quantifies the complexity of interactions between solutions and the points. The key idea is to show that while centers with high interaction numbers require large nets, they have a large cost and their (normalized) variance becomes low.

Bounding the variance requires different arguments for worst-case inputs and stable inputs.
 
The variance bound for worst-case inputs is described in \Cref{sec:worst-case-variance} and the trade-off with the net sizes is described in \Cref{sec: completing-the-proof-kmeans}.
For stable inputs, we show stronger guarantees on the variance in terms of the stability parameter $\beta$. This is done in \Cref{sec: variance-stable}. The trade-off with the net sizes to obtain the final coreset bound for stable inputs is described in \Cref{sec: trading-net-variance-stable}.

The above roadmap simplifies some of the details of the analysis. Before performing the key steps described above, in the next section, we decompose the problem of proving \eqref{eq: goal} into various structured cases that help us perform the analysis cleanly.

\subsection{Classifying Clusters and Centers}\label{sec: classifying}
Consider the $O(1)$-approximate clustering $ C_1, \ldots, C_k$, with centers  $A = \{ a_1, \ldots, a_k \}$, computed by \Cref{alg: sensitivity-sampling}. To bound the coreset error, we partition these clusters into groups of similar clusters and separately bound the error of the coreset on each group.  We also partition the centers in $\mathcal{S}$  based on how they ``interact'' with these clusters.   Such a classification of clusters and centers is useful in defining the nets and controlling their sizes. It also helps us control the variance of the estimator more cleanly.

\subsubsection{Partitioning Clusters into  Far and Close Clusters}\label{sec: far-close-points}
For a given $S \in \mathcal{S}$, we first partition the clusters $C_i$ into \textit{far} and \textit{close} clusters depending on the distance of their centroids $a_i$ from $S$. The high-level idea is that one can show via standard concentration inequalities that the size of each cluster $C_i$ is (approximately) preserved by the coreset. This is useful because if a cluster $C_i$ is far from $S$, the coreset automatically preserves the cost of $C_i$ to $S$ as all points in this cluster have roughly the same cost to $S$. Thus, far clusters are easily dealt with. This reduces our task to analyzing close clusters. We formalize these ideas below.

Let $\Delta_j = \cost(C_j, A)/|C_j|$ denote the average cost of a point in cluster $C_j$. We say that cluster $C_j$ is \textit{far} from $S$ if $\cost(a_j, S) > \Delta_j \eps^{-2}$; otherwise we say $C_j$ is \textit{close} to $S$. A point $p \in P$ that lies in a far cluster is called a far point (with respect to $S$); otherwise, it is called a close point. Let $P_F(S)$ and $\notfar{S}$ denote the set of far and close points with respect to $S$. 

\paragraph{Bounding the costs separately.} To prove 
\Cref{thm: main-coreset-thm}, we separately bound the contributions to the error by far and close points. The following lemmas summarize these results.

\begin{lemma}[Cost Preservation of Close Clusters]\label{lem: goal-close}
	If the coreset size is ${\tilde{\Omega}(k \eps^{-2} \cdot \min (\sqrt{k}, \eps^{-2}))}$ for worst-case inputs or $ \tilde{\Omega}(k \eps^{-2})$ for $\beta$-stable inputs then we have,
	\begin{align*}
		\begin{split}
			\E_{ \Omega} \sup_{S \in \mathcal{S}} \left\lvert \frac{\cost(\notfar{S},S) - \costom(\notfar{S},S)}{\cost(P,S)} \right\rvert \leq \eps/2.
		\end{split}
	\end{align*}
\end{lemma}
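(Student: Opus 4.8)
The plan is to set up the chaining framework sketched in the roadmap and push it through for the close points. First I would apply the symmetrization argument of \Cref{sec: symmetrizaion} to reduce the claim to bounding
\[
\E_{\Omega}\E_g \sup_{S\in\mathcal S}\left\lvert \frac{\sum_{q_i\in\Omega\cap\notfar{S}} g_i\, w_{q_i}\,\cost(q_i,S)}{\cost(P,S)}\right\rvert,
\]
a Gaussian process indexed by $S\in\mathcal S$ (restricted to close points). Here the normalization by $\cost(P,S)$ rather than $\cost(\notfar{S},S)$ is harmless since $\cost(\notfar{S},S)\le\cost(P,S)$. Having fixed $\Omega$, I would invoke the net construction of \Cref{sec: cost-vector-net}: for each precision scale $h$ (corresponding to additive precision $2^{-h}$ on the normalized cost vectors $\cost(q_i,S)/\cost(P,S)$), there is a net $N_h$ approximating all $S\in\mathcal S$, and I would write the process as a telescoping chain $S = S_{h_0}, S_{h_0+1},\dots$ with $S_h\in N_h$. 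The contribution of each link of the chain is a Gaussian process whose increments have variance governed by the per-scale variance bounds, and whose cardinality is $|N_h|$. Standard maximal inequalities for Gaussians then give
\[
\E_{\Omega}\E_g \sup_{S}\;\lesssim\;\sum_{h}\sqrt{\frac{\sigma_h^2}{m}\cdot \log|N_h|},
\]
where $\sigma_h^2$ is the (normalized) variance of the estimator at scale $h$.

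Next I would feed in the two parameter regimes. For the worst-case target size $m=\tilde\Omega(k\eps^{-2}\min(\sqrt k,\eps^{-2}))$, I would split the clusters of $A$ by type $t$ (a close cluster $C_j$ has type $t$ w.r.t.\ $S$ if $\cost(C_j,S)\approx 2^t\cdot\cost(C_j,A)$, with $2^t\le\eps^{-2}$ by closeness), let $k_t$ be the number of type-$t$ clusters, and use the variance bound $\sigma_h^2 \lesssim 2^{-2h}\min\!\big(k/(k_t 2^t),1\big)$ from \Cref{lem: worst-case-var} together with the two net bounds $\log|N_h|\lesssim k\min(k_t,2^t)2^{2h}$ from \Cref{sec: cost-vector-net}; summing the geometric-in-$h$ series and then maximizing the resulting $\sqrt{(k/m)\min(k/k_t,k_t,2^t)}\le\sqrt{(k/m)\min(\sqrt k,\eps^{-2})}$ over $t$ yields error $\le\eps/2$ for the stated $m$. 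For the $\beta$-stable case with target $m=\tilde\Omega(k\eps^{-2})$, I would instead use the interaction-based structure: \Cref{lem: cost-increase} forces that whenever clusters interact in $S$, at least one pays $\gtrsim\beta k$ times its cost in $A$, so the relevant types satisfy $2^t\gtrsim\beta k$; for $h>(\log k)/2$ I use the sharpened variance $\sigma_h^2\lesssim 2^{-2h}/(\beta k_t)$ from \Cref{lem: stable-var} against $\log|N_h|\lesssim k k_t 2^{2h}$, and for $h\le(\log k)/2$ I use the coarser variance $\sigma_h^2\lesssim 2^{-2h}k$ with an $O(1)$ net bound (valid since the net barely refines at these coarse scales). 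Both pieces of the chaining sum are geometric and total $\lesssim\sqrt{k/(\beta m)}$, which is $\le\eps/2$ for $m=\tilde\Omega(k\eps^{-2})$ when $\beta=\Omega(1)$.

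Finally I would have to handle the discretization/truncation issues that the roadmap glosses over: the chain must start at a coarsest scale $h_0$ where a single net element (or the trivial estimator) already controls the process, and terminate at the finest scale $h\sim\log(1/\eps)$ or $h\sim\log(\text{poly})$ beyond which the residual is negligible because weights $w_{q_i}$ and costs are polynomially bounded after the preprocessing in \Cref{alg: sensitivity-sampling}; both ends contribute only lower-order terms. I also need that restricting the sum to $\notfar{S}$ is compatible with the nets — i.e.\ the membership $q_i\in\notfar{S}$ is (approximately) determined at a coarse scale — which follows because far/close status depends only on $\cost(a_j,S)$ versus $\Delta_j\eps^{-2}$, a quantity the net already approximates well; a small slack in the far/close threshold absorbs the net error. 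The main obstacle, and the heart of the argument, is establishing the per-scale variance bounds $\sigma_h^2$ — especially the stability-exploiting bound of \Cref{lem: stable-var} and matching it to the correct net at each scale — since this is where sensitivity sampling must reproduce, without computing them, the structural guarantees that group sampling builds in by construction; everything else is assembling known chaining machinery around these bounds.
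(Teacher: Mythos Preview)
Your proposal follows the paper's overall strategy (symmetrize, chain, trade variance against net size), and the per-scale bounds you quote are the right ones. However, several structural steps that the paper relies on are missing from your plan, and at least two of them are genuine gaps rather than cosmetic omissions.

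First, you never condition on the event $\mathcal{E}$ (\Cref{def: event-E}). The variance bounds you invoke --- \Cref{lem: worst-case-var}, \Cref{lem: stable-var}, and even the basic \Cref{lem: easy-var} --- all go through the inequality $\sum_{q\in\Omega}w_q(\cost(q,A)+\Delta_q)\lesssim\cost(P,A)$, which is exactly properties $P_1$ and $P_3$ of $\mathcal{E}$. Without first fixing an $\Omega$ satisfying $\mathcal{E}$, these bounds are simply false; the paper handles this by splitting into $\mathcal{E}$ and $\overline{\mathcal{E}}$, using the crude bound of \Cref{lem: worst-case-omega} on the complement, and paying for it with the $\eps^3/k^3$ failure probability of \Cref{lem:E}. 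Your plan needs this step.

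Second, you partition only by type $t$, but the paper also partitions by \emph{band} $b$ (cost of the cluster in $A$) and, before that, separates out the low-cost clusters in $\lowc{S}$ (\Cref{lem: tiny-cluster}). The band partition is not bookkeeping: the worst-case variance bound of \Cref{lem: worst-case-var} needs every cluster in $B_{b,t}(S)$ to have $\cost(C_j,A)\approx 2^bT$ so that $w_q\cost(q,A)\lesssim k\cdot 2^bT/m$ uniformly. Without the bands, the constant in front blows up with the ratio of cluster costs, and the clean $2^{-2h-t}k/(m\,k_{B(S)})$ bound is lost.

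Finally, a smaller point: your stable-case split by the threshold $h\gtrless(\log k)/2$ is the heuristic from the overview, but the actual proof (\Cref{sec: trading-net-variance-stable}) splits instead by the interaction number $2^r\gtrless 2k$ over center classes $\mathcal{S}(r)$. The $X^{S,\init}$ term (\Cref{lem:Xinit}) plays the role of your ``$O(1)$ net at coarse $h$'' step. The two pictures are morally equivalent, but the center-class decomposition is what makes the net bound of \Cref{lem: net-lemma} and the stability variance of \Cref{lem: stable-var} line up, since both are parameterized by $N(S)\approx 2^r$ rather than by $k_t$ directly.
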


\begin{lemma}[Cost Preservation of Far Clusters]\label{lem: goal-far}
	If the coreset size is  $\tilde{\Omega}( k \eps^{-2})$ then we have,
	\begin{align*}
		\begin{split}
			\E_{ \Omega} \sup_{S \in \mathcal{S}} \left\lvert \frac{\cost(P_F(S),S) - \costom(P_F(S),S)}{\cost(P,S)} \right\rvert \leq \eps/2.
		\end{split}
	\end{align*}
\end{lemma}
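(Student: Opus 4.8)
The plan is to treat far clusters essentially ``by hand'': the coreset handles them for free because (i) it approximately preserves the mass of every cluster of $A$, and (ii) inside a cluster that is far from $S$, all points --- input points and coreset points alike --- have essentially the same cost to $S$. No chaining or nets are needed here; the whole argument is a relaxed triangle inequality together with a union bound over the $k$ clusters of $A$ computed by \Cref{alg: sensitivity-sampling}.

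First I would fix a cluster $C_j$ and record two concentration facts. Writing $W_j := \sum_{q_i \in C_j \cap \Omega} w_{q_i}$, both $\E[W_j] = |C_j|$ and $\E[\costom(C_j\cap\Omega, A)] = \cost(C_j,A)$, since the sensitivity estimator is unbiased. The role of the four terms defining $\mu$ in \eqref{defn:mu} is that for $p \in C_j$ one has $\mu(p) \ge \tfrac{1}{4k|C_j|}$ and $\mu(p) \ge \tfrac{\cost(p,a_j)}{4k\,\cost(C_j,A)}$, so every summand of $W_j$ is at most $\tfrac{4k|C_j|}{m}$ and every summand of $\costom(C_j\cap\Omega,A)$ is at most $\tfrac{4k\,\cost(C_j,A)}{m}$. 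A multiplicative Chernoff bound then shows that for $m = \tilde\Omega(k\eps^{-2})$, with probability at least $1 - k^{-c}$ (any constant $c$, absorbed into the polylog hidden in $m$) the event $\mathcal{G}$ holds: for every $j$ simultaneously $|W_j - |C_j|| \le \eps|C_j|$ and $\costom(C_j\cap\Omega,A) \le 2\,\cost(C_j,A)$; a union bound over the $k$ clusters costs only a factor of $k$.

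Next, for a fixed $S$ and a cluster $C_j$ that is far from $S$ (so $\cost(a_j,S) > \Delta_j\eps^{-2}$ with $\Delta_j = \cost(C_j,A)/|C_j|$), I would apply the relaxed triangle inequality $\|x-z\|^2 \le (1+\eps)\|x-y\|^2 + (1+\eps^{-1})\|y-z\|^2$ with $y = a_j$. Since $a_j$ is the $A$-center of every point of $C_j$, this gives, for every $p \in C_j$ --- in particular every coreset point $q_i \in C_j$ ---
\[
|\cost(p,S) - \cost(a_j,S)| \le \eps\,\cost(a_j,S) + \tfrac{2}{\eps}\,\cost(p,a_j).
\]
Summing this with uniform weights over $C_j$, and with the coreset weights over $C_j\cap\Omega$, and using the far condition $\cost(C_j,A) = |C_j|\Delta_j < \eps^2|C_j|\cost(a_j,S)$, I obtain $|\cost(C_j,S) - |C_j|\cost(a_j,S)| = O(\eps)\,|C_j|\cost(a_j,S)$ (hence also $\cost(C_j,S) \ge \tfrac12|C_j|\cost(a_j,S)$) and $|\costom(C_j\cap\Omega,S) - W_j\cost(a_j,S)| \le \eps W_j\cost(a_j,S) + \tfrac{2}{\eps}\costom(C_j\cap\Omega,A)$. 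On $\mathcal{G}$, feeding in $|W_j - |C_j|| \le \eps|C_j|$, $W_j \le 2|C_j|$ and $\costom(C_j\cap\Omega,A) \le 2\cost(C_j,A) < 2\eps^2|C_j|\cost(a_j,S)$ collapses everything, via the triangle split $|\cost(C_j,S)-\costom(C_j\cap\Omega,S)| \le |\cost(C_j,S)-|C_j|\cost(a_j,S)| + |C_j-W_j|\,\cost(a_j,S) + |W_j\cost(a_j,S)-\costom(C_j\cap\Omega,S)|$, to $|\cost(C_j,S) - \costom(C_j\cap\Omega,S)| = O(\eps)\cost(C_j,S)$. Summing over the far clusters of $S$ gives $|\cost(P_F(S),S) - \costom(P_F(S),S)| = O(\eps)\cost(P_F(S),S) \le O(\eps)\cost(P,S)$, uniformly over $S$; rescaling $\eps$ by a constant turns this into the claimed $\eps/2$.

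The hard part --- really the only subtle point --- is that the statement is about $\E_\Omega \sup_{S}|\cdot|$, so I must also control the supremum on the failure event $\overline{\mathcal{G}}$, where individual coreset terms $w_{q_i}\cost(q_i,S)$ can a priori be huge relative to $\cost(P,S)$. The resolution is that the per-summand bounds from the first step hold \emph{deterministically}, so with at most $m$ summands one always has $W_j \le 4k|C_j|$ and $\costom(C_j\cap\Omega,A) \le 4k\,\cost(C_j,A)$; rerunning the previous paragraph with these crude bounds in place of the $\mathcal{G}$-bounds (and still using the far condition, which now only costs a factor $k$) yields $\sup_S |\cdot| \le O(k)$ deterministically. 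Hence $\E_\Omega \sup_S|\cdot| \le \eps/4 + \Pr[\overline{\mathcal{G}}]\cdot O(k) \le \eps/4 + k^{-c}\cdot O(k) \le \eps/2$ once $c$ is a large enough constant. Everything else is routine; the one thing to get right is precisely this interaction between the $\E\sup$ formulation and the unboundedness of individual terms, which is exactly why the far/close split is defined through $\cost(a_j,S)$ versus $\Delta_j$ rather than through an absolute scale.
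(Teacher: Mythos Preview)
Your proposal is correct and follows the same overall architecture as the paper's proof in \Cref{sec:far-points}: condition on a high-probability ``good'' event for $\Omega$ (the paper calls it $\mathcal{E}$, you call it $\mathcal{G}$), show that on this event every far cluster $C_j$ satisfies $|\cost(C_j,S)-\costom(C_j,S)|\lesssim \eps\,\cost(C_j,S)$ deterministically in $S$, and then handle the failure event with a crude $O(k)$ worst-case bound coming from the per-term weight bounds of \Cref{lem: weight-bound}.

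The one genuine difference is in how the per-cluster bound is obtained. The paper uses the sharper inequality $|\cost(p,S)-\cost(a_j,S)|\le 2\sqrt{\cost(p,a_j)\cost(a_j,S)}+\cost(p,a_j)$ (item~\ref{item: tri-2} of \Cref{fact: triangle}), which forces a ring decomposition of $C_j$ in order to sum: points in ring $R_j(\ell)$ contribute $\lesssim 2^{\ell/2}\eps\,\cost(a_j,S)$ each, and one needs property $P_2$ of $\mathcal{E}$ (ring size preservation in the coreset) to control the coreset side. You instead use item~\ref{item:weaktri}, giving $|\cost(p,S)-\cost(a_j,S)|\le \eps\,\cost(a_j,S)+O(\eps^{-1})\cost(p,a_j)$; summing this directly collapses to $\eps|C_j|\cost(a_j,S)+O(\eps^{-1})\cost(C_j,A)$ (resp.\ $\costom(C_j,A)$), and the far condition $\cost(C_j,A)<\eps^2|C_j|\cost(a_j,S)$ kills the second term with no ring bookkeeping. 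Consequently your good event only needs properties $P_1$ and $P_3$ of $\mathcal{E}$, not $P_2$. This is a cleaner route to the same bound; the paper's ring decomposition is not actually needed for far clusters, though the rings are defined anyway since they appear elsewhere in the analysis.
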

 The proof of \Cref{lem: goal-far} will be relatively simple and is presented in \Cref{sec:far-points}. Most of the work will be in proving \Cref{lem: goal-close}, which requires careful use of symmetrization and chaining. This will be done in the following sections.

        \subsubsection{Partitioning Close Clusters Based on Cost}
\label{subsec: partitioning-close-points}
We now perform a clean-up step by further classifying close clusters into high-cost and low-cost clusters. The key idea is that we only need to bound the error for high-cost clusters, as ignoring the low-cost clusters only has a tiny effect on the total cost. We now formally define this classification.

 \begin{definition}[Low-Cost and High-Cost Close Clusters]
       Let $T := \eps^3 \cdot \cost(P,A)/k$.
 Let $\lowc{S}$ denote the set of (low-cost) clusters $C_j$ that are close to $S$ and satisfy $\cost(C_j, A) < T$. Similarly, let $\highc{S}$ denote the set of (high-cost) clusters that are close to $S$ and satisfy $\cost(C_j, A) \geq T$. 
 \end{definition}
 Abusing the notation slightly, we also use $\lowc{S}$ (resp. $\highc{S}$) to denote points in these clusters.

The following lemmas show that the coreset preserves the cost of low-cost and high-cost clusters, respectively. Since the low-cost and high-cost clusters partition the set of close clusters,  these lemmas imply \Cref{lem: goal-close} after rescaling $\eps$.

\begin{lemma}[Handling High Cost Clusters]\label{lem: high-cluster}
If the coreset size is ${\tilde{\Omega}(k \eps^{-2} \cdot \min (\sqrt{k}, \eps^{-2}))}$ for worst-case inputs or $ \tilde{\Omega}(k \eps^{-2})$ for $\beta$-stable inputs then we have,
\begin{align}\label{eq: high-cost-cluster}
    \E_{ \Omega} \sup_{S \in \mathcal{S}} \left\lvert \frac{\cost(\highc{S}, S) - \costom(\highc{S},S)}{\cost(P,S) } \right\rvert \leq \eps.
\end{align}
\end{lemma}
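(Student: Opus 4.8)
The plan is to run the roadmap of \Cref{fig: roadmap} with the point set $\highc{S}$ in place of $P$. First I would apply the symmetrization argument of \Cref{sec: symmetrizaion} to reduce \eqref{eq: high-cost-cluster} to proving that
\[
\E_{\Omega}\,\E_{g}\,\sup_{S\in\mathcal{S}}\left\lvert\frac{\sum_{q_i\in\Omega\cap\highc{S}}g_i\,w_{q_i}\,\cost(q_i,S)}{\cost(P,S)}\right\rvert\;\lesssim\;\eps,
\]
where the $g_i$ are i.i.d.\ standard Gaussians; once $\Omega$ is fixed the inner supremum is a Gaussian process and the nets used in chaining may depend on $\Omega$. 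Since $A$ is an $O(1)$-approximation, $\cost(P,S)\geq\mathrm{OPT}_k\gtrsim\cost(P,A)$, so the denominator may be replaced by $\cost(P,A)$ up to constants, leaving a Gaussian process indexed by (a discretization of) $\mathcal{S}$.

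Next I would group the high-cost close clusters by \emph{type}: a close cluster $C_j$ has type $t$ with respect to $S$ if its cost in $S$ is $\Theta(2^t)$ times its cost in $A$, and since $C_j$ is close we have $2^t=O(\eps^{-2})$. Writing $k_t$ for the number of type-$t$ clusters, it suffices --- after a union bound over the $O(\log\eps^{-1})$ relevant values of $t$ and a rescaling of $\eps$ --- to bound the contribution of a single type. For each $t$ I would construct the cost-vector nets of \Cref{sec: chaining-nets,sec: cost-vector-net} at all precision scales $2^{-h}$, using the interaction-number notion of \Cref{sec: itneraction-of-centers}. Two net estimates are available: in the worst case, $\log|N_{2^{-h}}|=O(k\min(k_t,2^t)2^{2h})$; for $\beta$-stable inputs, $\log|N_{2^{-h}}|=O(k k_t 2^{2h})$, together with the crucial observation that the scale-$h$ process is essentially trivial ($\log|N_{2^{-h}}|=O(1)$) once $2^h\leq 2^{t/2}$ and $2^t>k$, since stability forces $2^t\gtrsim\beta k$ whenever two clusters genuinely interact in $S$ (\Cref{lem: cost-increase}).

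The chaining step of \Cref{sec: chaining} writes the Gaussian process as a telescoping sum over scales $h$ (up to $h\approx 2\log\eps^{-1}$), with error $\sum_h\sqrt{(\sigma_h^2/m)\log|N_{2^{-h}}|}$, where $\sigma_h^2$ is the normalized variance of the scale-$h$ estimator; here the two cases diverge. For worst-case inputs I would prove the variance bound $\sigma_h^2\lesssim 2^{-2h}\min(k/(k_t2^t),1)$ (\Cref{lem: worst-case-var}, built on the structural statement \Cref{lem: structured-cost}), and combined with the net sizes above the chaining sum evaluates to
\[
\sqrt{\frac{\min(k/(k_t2^t),1)\cdot 2^{-2h}}{m}\cdot k\min(k_t,2^t)2^{2h}}\;\approx\;\sqrt{\frac{k}{m}\cdot\min\!\left(\frac{k}{k_t},k_t,2^t\right)}\;\leq\;\sqrt{\frac{k}{m}\cdot\min(\sqrt{k},2^t)},
\]
which is $\leq\eps$ once $m=\tilde{\Omega}(k\eps^{-2}\min(\sqrt{k},\eps^{-2}))$, using $2^t\leq\eps^{-2}$. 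For $\beta$-stable inputs I would instead use the refined variance bound of \Cref{lem: stable-var} (derived from \Cref{lem: cost-increase}), valid for $h>(\log k)/2$, together with the coarser bound $\sigma_h^2\lesssim 2^{-2h}k$ of \cite{CSS21} at the small scales $h\leq(\log k)/2$; splitting the chaining sum at $(\log k)/2$ then gives
\[
\sum_{h\leq(\log k)/2}2^{-h}\sqrt{\frac{k}{m}}\;+\;\sum_{h>(\log k)/2}2^{-h}\sqrt{\frac{1}{m\,\beta\,k_t}\cdot k\,k_t\,2^{2h}}\;\approx\;\sqrt{\frac{k}{\beta\,m}},
\]
which is $\leq\eps$ for $\beta=\Omega(1)$ and $m=\tilde{\Omega}(k\eps^{-2})$.

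The hard part is the variance analysis, and it differs essentially from prior work in both cases. In the worst case the obstacle is to reproduce the variance/net-size tradeoff of \cite{Cohen-AddadLSSS22} for sensitivity sampling, which does not build the group structure their argument relies on; the resolution is to show that the mere \emph{existence} of such a structure, guaranteed by \Cref{lem: structured-cost}, already controls the variance of the sensitivity-sampling estimator, with no algorithm computing or enforcing it. For stable inputs the obstacle is twofold: first, turning the stability hypothesis into the usable statement of \Cref{lem: cost-increase}, namely that whenever two clusters interact in $S$ at least one pays a factor $\gtrsim\beta k$ over its cost in $A$; and second, coping with the fact that this cost blow-up only suppresses the variance for $h>(\log k)/2$, which forces the use of two different variance bounds at different precision scales --- to our knowledge the first sensitivity-sampling analysis requiring this.
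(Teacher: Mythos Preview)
Your plan tracks the paper's route, but two concrete steps would fail as written.

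First, you propose replacing the denominator $\cost(P,S)$ by $\cost(P,A)$, using $\cost(P,S)\gtrsim\cost(P,A)$. This is a valid upper bound on the ratio, but it discards exactly the leverage the variance bounds need: both the worst-case bound $\sigma_h^2\lesssim 2^{-2h}\min(k/(k_t2^t),1)$ you cite (\Cref{lem: worst-case-var}) and the stable bound (\Cref{lem: stable-var}) come from the denominator $\cost(P,S)$ being potentially much \emph{larger} than $\cost(P,A)$ --- by a factor $\gtrsim k_{B(S)}2^t$ via \Cref{lem: structured-cost}, or $\gtrsim N(S)\beta/k$ via \Cref{lem: cost-increase-2}. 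With $\cost(P,A)$ in the denominator only the trivial bound of \Cref{lem: easy-var} survives, and neither target coreset size follows. The denominator $\cost(P,S)$ must be kept throughout the chaining analysis.

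Second, you decompose the high-cost close clusters only by type $t$; the paper additionally groups them by \emph{band} $b$ (geometric buckets of $\cost(C_j,A)$, \Cref{sec:further-classifying}) and groups solutions by center class $r$ (geometric buckets of the interaction number, \Cref{sec: itneraction-of-centers}), taking a union bound over $O(\log^2(k\eps^{-1}))\cdot O(\log k)$ triples $(b,t,r)$ rather than $O(\log\eps^{-1})$ types. The band grouping is not cosmetic: the proof of \Cref{lem: worst-case-var} needs $\cost(C_j,A)\approx 2^bT$ uniformly over $C_j\in B_{b,t}(S)$ so that the numerator $\sum_q w_q^2\err(q,S)^2\lesssim(k2^{b+t}T/m)\cdot k_{B(S)}2^bT$ and the denominator $\cost(P,S)^2\gtrsim(k_{B(S)}2^{b+t}T)^2$ are on the same scale and the $2^b$'s cancel. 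Without bands the ratio $\sum_j\cost(C_j,A)^2/(\sum_j\cost(C_j,A))^2$ need not be $\approx 1/k_t$, and the clean $k/(k_t2^t)$ variance factor is lost. The center-class decomposition is likewise what allows pairing the net size $\log|M_{2^{-h}}|\lesssim(2^r+k2^{2h})$ of \Cref{lem: net-lemma} with the $N(S)$-dependent variance bounds. Once both extra decompositions are in place and the denominator is left as $\cost(P,S)$, your chaining sums are exactly those of \Cref{sec: trading-net-variance-stable} and \Cref{sec: completing-the-proof-kmeans}.
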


\begin{lemma}[Handling Low Cost Clusters]\label{lem: tiny-cluster}
If the size of the coreset is  $\tilde{\Omega}( k \eps^{-2})$ then
\begin{align} 
    \E_{ \Omega} \sup_{S \in \mathcal{S}} \left\lvert \frac{\cost(\lowc{S}, S) - \costom(\lowc{S},S)}{\cost(P,S) } \right\rvert \leq \eps.
\end{align}
\end{lemma}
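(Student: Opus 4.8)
The plan is to show that the total cost of all low-cost close clusters is so small compared to $\cost(P,S)$ that we can afford an extremely crude estimate of the coreset error on them. The starting observation is purely deterministic: since each low-cost cluster $C_j \in \lowc{S}$ satisfies $\cost(C_j,A) < T = \eps^3\cost(P,A)/k$, and there are at most $k$ such clusters, the aggregate original cost satisfies $\cost(\lowc{S},A) \leq k\cdot T = \eps^3\cost(P,A)$. Moreover, because these clusters are \emph{close} to $S$, we have $\cost(a_j,S)\leq \Delta_j\eps^{-2}$ for each such $C_j$, and a standard application of the (approximate) triangle inequality for squared distances — i.e. $\cost(p,S)\leq 2\cost(p,a_j)+2\cost(a_j,S)$ — lets us bound $\cost(\lowc{S},S) \lesssim \cost(\lowc{S},A) + \sum_j |C_j|\cdot\Delta_j\eps^{-2} = \cost(\lowc{S},A)(1+\eps^{-2}) \lesssim \eps\cdot\cost(P,A) \lesssim \eps\cdot\cost(P,S)$, using that $A$ is an $O(1)$-approximation and hence $\cost(P,A)\lesssim \cost(P,S)$ is false in general but $\cost(P,A)\le O(1)\cdot\opt \le O(1)\cdot\cost(P,S)$ \emph{is} valid. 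Thus the true contribution of low-cost clusters is already within an $O(\eps)$ fraction of $\cost(P,S)$.

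The second half is to argue the coreset's estimate $\costom(\lowc{S},S)$ is also small, in expectation over $\Omega$ and uniformly over $S$. Here I would \emph{not} try to run a chaining argument; instead I would bound $\E_\Omega\sup_S \costom(\lowc{S},S)$ by something like $\E_\Omega \costom(P,A)\cdot\poly(\eps^{-1})$-type crude quantity — actually the cleanest route is: for a \emph{fixed} $S$, $\E_\Omega[\costom(\lowc{S},S)] = \cost(\lowc{S},S)$ since the estimator is unbiased and each sampled point lands in a low-cost-for-$S$ cluster with the appropriate reweighting; but the supremum over $S$ cannot be handled by unbiasedness alone. The robust fix is to observe that whether a cluster is "low-cost and close" depends on $S$, so instead I bound, for \emph{every} $S$ simultaneously, $\costom(\lowc{S},S) \le \costom(P, S)$ restricted to clusters of small original cost — and relate this back to $\costom(P,A)$. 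Since $\costom(P,A)$ concentrates around $\cost(P,A)$ (each far/near cluster size and cost is preserved by the sampling distribution $\mu$, which contains the $\Delta_p/\cost(P,A)$ and $\cost(p,A)/\cost(P,A)$ terms precisely for this), and using $\cost(q_i,S)\lesssim \cost(q_i,A) + \cost(a_{j(i)},S)$ with $\cost(a_{j(i)},S)\le \Delta_{j(i)}\eps^{-2}$ for the low-cost-close clusters, we get $\costom(\lowc{S},S) \lesssim \eps^{-2}\costom(\{\text{low-cost clusters}\},A)$, and the right side is, with high probability over $\Omega$, at most $O(\eps^{-2})\cdot\eps^3\cost(P,A) = O(\eps)\cost(P,A) = O(\eps)\cost(P,S)$.

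Putting these together, $|\cost(\lowc{S},S) - \costom(\lowc{S},S)| \le \cost(\lowc{S},S) + \costom(\lowc{S},S) = O(\eps)\cost(P,S)$ uniformly in $S$, which after rescaling $\eps$ by a constant gives the claimed bound. The size hypothesis $m = \tilde\Omega(k\eps^{-2})$ enters only to ensure the concentration of $\costom(P,A)$ and of the individual cluster sizes/costs around their expectations — a routine Bernstein/Chernoff computation given the form of $\mu$ in \eqref{defn:mu}, whose four terms are designed exactly so that $|C_j|$, $\cost(C_j,A)$, and $\Delta_j$-weighted sums are all faithfully sampled.

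I expect the main obstacle to be the \emph{uniformity over $S$}: the partition of clusters into low-cost-close versus the rest is $S$-dependent, so one cannot simply fix $S$ and use unbiasedness. The clean way around this is the bound $\costom(\lowc{S},S)\le \eps^{-2}\cdot\costom(P_{\mathrm{small}},A)$ where $P_{\mathrm{small}}$ is the ($S$-independent!) union of clusters with $\cost(C_j,A)<T$ — this decouples the randomness of $\Omega$ from the choice of $S$, reducing everything to a single high-probability event about how well $\Omega$ samples the small clusters, with no net or chaining required. I would double-check that the constant loss in the squared-triangle inequality and the rescaling of $\eps$ are absorbed correctly, but no step here is technically deep.
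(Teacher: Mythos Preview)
Your approach is essentially the paper's: bound $\cost(\lowc{S},S)$ and $\costom(\lowc{S},S)$ each by $O(\eps)\cost(P,S)$ via the close-cluster triangle inequality and the smallness of $\cost(C_j,A)<T$, then use $|\cost-\costom|\le \cost+\costom$. The paper packages your concentration step as an explicit event $\mathcal{E}$ (cluster sizes and cluster costs preserved by $\Omega$, which holds with probability $\ge 1-\eps^3/k^3$ when $m=\tilde\Omega(k\eps^{-2})$), and under $\mathcal{E}$ obtains exactly the per-cluster bound $\costom(C_j,S)\lesssim \eps^{-2}\cost(C_j,A)\le (\eps/k)\cost(P,A)$ you describe.

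There is one gap to close. The lemma asks for a bound on $\E_\Omega\sup_S[\cdots]$, not a high-probability bound; your sketch stops at ``with high probability over $\Omega$''. To pass to the expectation you must also control $\sup_S \costom(\lowc{S},S)/\cost(P,S)$ on the complement of the good event. The paper handles this with a crude \emph{deterministic} bound: using the weight inequalities $w_q\lesssim \cost(P,A)/(m\cost(q,A))$ and $w_q\lesssim k|C_j|/m$ directly (no concentration), one gets $\costom(C_j,S)\lesssim \cost(P,A)$ for every low-cost close cluster $C_j$, hence $\sup_S \costom(\lowc{S},S)\lesssim k\cdot\cost(P,A)$ always. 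Since the failure probability is at most $\eps/k$, the Law of Total Expectation yields $\E_\Omega\sup_S[\cdots]\lesssim \eps\cdot 1 + k\cdot(\eps/k)=O(\eps)$. This is routine once noticed, but without it your argument is incomplete.
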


The proof of \Cref{lem: tiny-cluster} is straightforward and is provided in \Cref{app: tiny-cluster}. Proving \Cref{lem: high-cluster} is much more interesting, and we now focus on this.

\subsubsection{Further Classifying Clusters Based on Cost: Bands and Types}\label{sec:further-classifying}

Let $A$ be the approximate optimal solution computed by the algorithm, and let $S$ be a fixed solution in $\mathcal{S}$. We now partition the high-cost clusters such that two clusters $C_i, C_j$ in the same group satisfy  $\cost(C_i, A) \approx \cost(C_j, A)$ and $\cost(C_i, S) \approx \cost(C_j, S)$.   Such a grouping allows for an easier analysis of the variance of the estimator and net sizes and is crucial to trade-off the two quantities. 

We first group clusters with a similar cost with respect to the (approximately optimal) solution $A$ and call each such a group a \textit{band} as defined formally below.

\begin{definition}[Bands]
    For each integer $b$ satisfying $0 \leq b \leq \bmax := \floor{\log_2(k \eps^{-3})}$, let Band-$b$ be the set of clusters $C_j$ with  $\cost(C_j, A) \in [2^{b} T, 2^{b+1} T)$ where $T = \eps^3k^{-1} \cdot \cost(P,A)$. 
\end{definition}
Since each high-cost cluster $C_j$ has a cost in the range $[T,\,  k \eps^{-3}  T]$, the $(\bmax+1)$ bands defined above form a partition of high-cost clusters.

While clusters in the same band have similar costs in $A$, their cost in an arbitrary set of centers $S$ can be very different. 
Next, we group clusters into \textit{Types} based on their cost in $S$. 
Note that, unlike bands, this grouping is a function of the set $S$. 
\begin{definition}[Types]
Let $S$ be a set of centers. For an integer $t$ satisfying $1 \leq t \leq \tmax := \ceil{\log_2(\eps^{-2})}$, a cluster $C_j$ is of Type-$t$ for $S$ if (i) $C_j$ is close to $S$ and (ii) $\cost(a_j, S) \in [2^{t-1} \Delta_j, 2^{t} \Delta_j)$ where $\Delta_j = \cost(C_j, A) / |C_j|$ is the average cost of cluster $C_j$.  
On the other hand, if $C_j$ is close to $S$ and  $\cost(a_j, S) \in [0, \Delta_j)$, it is of Type-$0$.
\end{definition}
Since any close cluster $C_j$ satisfies $\cost(a_j, S) \leq \Delta_j \eps^{-2}$, the $(\tmax+1)$ types defined above also partition the close clusters. 

 We now group clusters with both the same band and the same type. For a set of centers $S$, let $B_{b,t}(S)$ denote the clusters from Band-$b$ with Type-$t$ for $S$. All clusters in $B_{b,t}(S)$ contain clusters with a similar cost in the approximately optimal solution and wrt to $S$. We shall abuse notation slightly and use $B_{b,t}(S)$ to also refer to the \textit{points} in these clusters. As there are $O(\log(k\eps^{-1}))$ bands and $O(\log ( \eps^{-1}))$ types, there are $O(\log^2(k \eps^{-1}))$ sets $B_{b,t}(S)$.

As the sets $B_{b,t}(S)$ partition points in $\highc{S}$, \Cref{lem: high-cluster} follows if we show the following: For each pair $(b,t) \in [\bmax] \times [\tmax]$ we have,
\begin{align}\label{eq: bvw}
     \E_{\Omega} \sup_{S \in \mathcal{S}}\left\rvert\frac{\cost(B_{b,t}(S), S) - \costom(B_{b,t}(S), S)}{\cost(P,S)} \right\rvert \lesssim \frac{\eps}{\log^2(k \eps^{-1})}.
\end{align}
 Henceforth, we fix $b \in [\bmax]$, $t \in [\tmax]$ and show that \eqref{eq: bvw} holds for this choice of $b, t$. When clear from context, we also suppress the subscripts and refer to the set $B_{b,t}(S)$ as $B(S)$. We use the notation $k_{B(S)}$ to denote the number of clusters in $B(S)$.

\subsubsection{Defining Cost Vectors and Cost Vector Nets}\label{sec: chaining-nets}
This section defines nets to approximate the infinitely many solutions in $\mathcal{S}$. Instead of directly discretizing the space of centers in $\mathcal{S}$, we discretize the set of \textit{cost vectors} induced by $S \in \mathcal{S} $. For a fixed coreset $\Omega$, the cost vector induced by $S \in \mathcal{S}$ is a vector in $\R^m$ whose entries are costs of points in $\Omega \cap B(S)$ with respect to $S$. 
\begin{definition}[Cost Vectors] Let $\Omega = \{q_1, \ldots, q_m\}$ be the set of $m$ points sampled by \Cref{alg: sensitivity-sampling} and $S$ be a set of centers in $\mathcal{S}$. The cost vector of $\Omega$ induced by $S$ is a vector $u^S(\Omega) \in \R^m$ defined as follows:
    \begin{align}\label{eq: us}
    \text{For each $i \in[m]$}  \text{ we have } u^S_{i}(\Omega) = \mathbbm{1}[q_i \in  B(S)] \cdot \cost(q_i,S).
    \end{align}
\end{definition}
 Note that $u^S(\Omega)$ is a random vector (where the randomness comes from the choice of $\Omega$), and its coordinates are mutually independent since \Cref{alg: sensitivity-sampling} obtains samples each $q_i$ independently.

For a fixed $\Omega$ and a subset $\mathcal{T}$ of $\mathcal{S}$,  let  $M(\Omega, \mathcal{T}) = \{u^S(\Omega) | \, S \in \mathcal{T}\}$ be the set of cost vectors induced by centers in $\mathcal{T}$. If $\mathcal{T}$  has infinitely many centers, the set $M(\Omega, \mathcal{T})$ may have infinitely many cost vectors. We now define \textit{cost vector nets}, which are discretizations of $M(\Omega, \mathcal{T})$.  A cost vector net $M_\alpha(\Omega, \mathcal{T})$ at scale $\alpha$ is a finite set of representative vectors such that any $u^S(\Omega)$ in $M(\Omega, \mathcal{T})$, is  ``$\alpha$-approximated '' by some vector in $M_\alpha(\Omega, \mathcal{T})$. We formalize this notion below.

\begin{definition}[Cost Vector Nets] \label{def: clustering-nets} Consider a fixed set $\Omega$ and let $\mathcal{T}$ be a subset of $\mathcal{S}$ (recall that $\mathcal{S}$ denotes the set of all possible $k$ centers in $\R^d$). For a real $\alpha \in (0,1/2]$, an $(\alpha, \mathcal{T})$ cost vector net, denoted by  $M_\alpha(\Omega, \mathcal{T})$, is a finite subset of $\R^m$ with the following properties. For any set of centers $S \in \mathcal{T}$ there exists some $v \in M_\alpha(\Omega, \mathcal{T})$ which $\alpha$-approximates the cost vector $u^S(\Omega)$ in the following sense: for each $i \in [m]$,
\begin{enumerate}
    \item If $q_i \in  B(S)$ then \[|v_i - u^S_i(\Omega)| = |v_i - \cost(q_i, S)|   \leq \alpha \cdot \err(q_i,S),\] where  for any $p \in P$ and $S \in \mathcal{S}$,
        \[\err(p,S) := \left( \sqrt{\cost(p,S)\cost(p,A)} + \sqrt{\cost(p,S) \Delta_{p} } + \cost(p,A) + \Delta_{p}\right).\]
    \item If $q_i \notin B(S)$ then $v_i = 0$ . Note that $u^S_i(\Omega) = 0$ by definition (see \eqref{eq: us}).
\end{enumerate}
\end{definition}

\subsubsection{Grouping Centers Based on Interaction}\label{sec: itneraction-of-centers}
Our next goal is to bound the size of the cost vector nets defined above. To do this, we group sets of centers similar to each other and construct nets for each group. Specifically, for each set of centers $S \in \mathcal{S}$, we define a parameter called the \emph{interaction number} of $S$ that captures how much it ``interacts'' (formally defined below) with the set of clusters $B(S)$. This parameter will be crucial in quantifying the size of the nets that we construct. We then group centers with similar interaction numbers and then construct cost vector nets for each group. 

Roughly, we say that a center $x \in S$ interacts with a cluster $C_j$ if it is significantly far away from its center $a_j$ while still being approximately the nearest center in $S$ to it. The precise definition is the following: 

\begin{definition}[Cluster-Center Interaction] \label{def: interaction} Let $S \in \mathcal{S}$ be a set of $k$ centers. For a center $x \in S$, we say that a cluster $C_j$ in $B(S)$ \textbf{\emph{interacts}} with $x$ if both the following conditions hold:

\begin{description}
	\item{\textbf{$P_1$}: \textbf{(Point $x$ is outside average cost ball of $C_j$).}} We have  $\cost(a_j,x) \geq 32 \Delta_j$.
	\item{\textbf{$P_2$}: \textbf{(Point $x$ is an approximate nearest center to $a_j$).}} We have $\cost(a_j,x) \leq 16 \cost(a_j,S)$.
\end{description}
\end{definition}

Fix a set of centers $S \in \mathcal{S}$ and a center $x \in S$. We use $I(x)$ to denote the set of clusters $C_j \in B(S)$ that interact with center $x \in S$. We aggregate the sizes of the sets $I(x)$ for the various centers in $S = \{x_1, \ldots x_k\}$ to obtain the \textit{signature} of $S$ defined as follows: \[\sign(S) := (|I(x_1)|, \ldots, |I(x_k)|).\]

\begin{definition}[Interaction Number]\label{def: interaction-number}
For each $S \in \mathcal{S}$, with $S = \{x_1, \ldots, x_k \}$,  we define its interaction number, denoted by $N(S)$, to be $N(S) = \sum_{i = 1}^k |I(x_i)|$.
\end{definition}
 As each center of $S$ can interact with at most $|B(S)| = k_{B(S)}$ clusters, 
for any set of centers $S$, we have  $N(S) \leq k \cdot k_{B(S)}\leq k^2$.

\noindent We now group centers with similar interaction numbers and call these groups center classes.
\begin{definition}[Center Classes]
For an integer $r$ satisfying $0 \leq r \leq \rmax:= \ceil{\log_2(k^2)}$, the center class $r$ is the collection of all sets of $k$ centers $S$ that satisfy $N(S) \in [2^r, 2^{r+1})$. We use  $\mathcal{S}(r)$ to denote the center class $r$.
\end{definition}

\begin{remark}
Notice that the interaction number $N(S)$ is defined with respect to a specific set $B_{b,t}(S)$ of clusters. As $b$ and $t$ vary, for any $S$, its interaction number $N(S)$ and thus its center class can change.
\end{remark}

Note that the center classes partition the collection $\mathcal{S}$ of all possible sets $S$. Therefore, we can bound the supremum over $\mathcal{S}$ in \eqref{eq: bvw} by a sum of suprema over each center class; it suffices to prove the following: for any $r \in [\rmax]$,
\begin{align}\label{eq: bvwr}
     \E_{\Omega}\, \sup_{S \in \mathcal{S} (r)}\left\rvert\frac{\cost(B(S), S) - \costom(B(S), S)}{\cost(P,S)} \right\rvert  \lesssim \frac{\eps}{ \log^3(k \eps^{-1})}. 
\end{align}
\Cref{eq: bvw} then follows since $\rmax = O(\log(k \eps^{-1}))$.
  Henceforth, we also fix a center class  $r$ and prove \eqref{eq: bvwr} for this $r$. As the supremum will always be over $\mathcal{S}(r)$ from this point on, we remove it from the subscript (i.e., we write $\sup_S$ instead of $\sup_{S \in \mathcal{S}(r)})$.

In the next section, we give a bound on cost vector nets for centers in center class $\mathcal{S}(r)$. This will be a crucial step towards proving \eqref{eq: bvwr}.

\subsection{Bounding the Cost Vector Net Size.}\label{sec: cost-vector-net}
Recall that we are currently bounding the error of the coreset for the clusters in $B(S)$, namely those clusters of band $b$ and type $t$. We now give a bound on the size of the cost vector net (\Cref{def: clustering-nets}) approximating cost vectors of $B(S) \cap \Omega$ induced by the centers in center class $\mathcal{S}(r)$.  
\begin{lemma}\label{lem: net-lemma}
 For any $\alpha \in (0,1/2)$, there is an $(\alpha, \mathcal{S}(r))$-net $M_\alpha$ with cardinality \[|M_\alpha| = \exp(O(\min(2^r+ k\alpha^{-2}, 2^t k \alpha^{-2})\cdot \log(k \alpha^{-1}{\eps^{-1}}))).\]
\end{lemma}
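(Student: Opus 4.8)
The plan is to produce two separate nets and take the one that is smaller for the given parameters $(r,t)$; the final bound is then $\exp(O(\min(2^r+k\alpha^{-2},\,2^tk\alpha^{-2})\log(k\alpha^{-1}\eps^{-1})))$. In both constructions, the key is that to $\alpha$-approximate the cost vector $u^S(\Omega)$ for a point $q_i\in C_j\cap B(S)$, we only need to know $\cost(q_i,S)$ up to additive error $\alpha\cdot\err(q_i,S)$, and since $\err(q_i,S)$ contains the terms $\cost(q_i,A)$ and $\Delta_j$, we have a lot of slack: it suffices to pin down, for each relevant cluster $C_j$, \emph{which center} of $S$ is (approximately) nearest to $a_j$ and \emph{roughly how far} it is, on a logarithmic grid of $O(\log(k\alpha^{-1}\eps^{-1}))$ scales. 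The standard move (as in \cite{Cohen-AddadLSSS22,CGSS22}) is: a point $q_i\in C_j$ has $\cost(q_i,S)$ determined up to the required precision by (a) the identity of the approximately-nearest center $x\in S$ to $a_j$, (b) the scale of $\cost(a_j,x)$, and (c) the ``local'' position of $q_i$ relative to $x$ within its cluster, which can be encoded by a bounded-size net on a ball because $q_i$ is a close point in a fixed band. So the counting reduces to: how many distinct assignments (cluster $\to$ approximately-nearest center of $S$, plus scale) can arise for $S\in\mathcal{S}(r)$?

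\textbf{First net (the $2^r+k\alpha^{-2}$ bound).} Here I exploit the interaction number directly. Split the clusters of $B(S)$ into those that \emph{interact} with their approximately-nearest center in $S$ and those that do not. For a non-interacting cluster $C_j$, by negating $P_1$ or $P_2$ in \Cref{def: interaction}, either $\cost(a_j,x)<32\Delta_j$ (so the center is inside the average-cost ball, and $\cost(q_i,S)$ is controlled by $\Delta_j$ and a bounded local net — only $O(1)$ scales and a fixed $\exp(O(k\alpha^{-2}))$-size local net suffice across all such clusters jointly, using dimension reduction to replace $d$ by $\alpha^{-2}$), or $\cost(a_j,x)>16\cost(a_j,S)$, which combined with $C_j$ being close to $S$ forces $\cost(a_j,S)$ itself to be small relative to $\Delta_j$ and again gives $O(1)$ scales. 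For the at most $N(S)<2^{r+1}$ interacting cluster-center incidences, we pay: choosing which cluster interacts with which center and at which of $O(\log(k\alpha^{-1}\eps^{-1}))$ scales costs $\exp(O(2^r\log(k\alpha^{-1}\eps^{-1})))$, and the local position net within each relevant ball costs an additional $\exp(O(k\alpha^{-2}\log(k\alpha^{-1}\eps^{-1})))$ shared across clusters (after dimension reduction). Multiplying gives $|M_\alpha|=\exp(O((2^r+k\alpha^{-2})\log(k\alpha^{-1}\eps^{-1})))$.

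\textbf{Second net (the $2^tk\alpha^{-2}$ bound).} When $2^r$ is large, the above is wasteful; instead use that every cluster in $B(S)$ has type exactly $t$, so $\cost(a_j,S)\in[2^{t-1}\Delta_j,2^t\Delta_j)$ — the scale is already fixed! Now I do \emph{not} try to name the nearest center per cluster; instead I build one global net by covering the set of possible ``center configurations restricted to the $k_{B(S)}\le k$ relevant clusters'' directly. Since each of the $\le k$ clusters sees its nearest center at a known scale ($\Theta(2^t\Delta_j)$), the relevant portion of $S$ lives in a union of $\le k$ balls of radii $\Theta(\sqrt{2^t\Delta_j})$, and an $\alpha$-net (in the cost-vector sense) of the points of $B(S)\cap\Omega$ against such configurations has size $\exp(O(2^tk\alpha^{-2}\log(k\alpha^{-1}\eps^{-1})))$ — the $2^t$ enters because the radius of each ball, relative to the per-point precision $\alpha\sqrt{\Delta_j}$-type quantity in $\err$, is a $\Theta(\sqrt{2^t})$ factor larger, so each coordinate needs $O(\log(2^t/\alpha))$ bits, but packed over $\le k$ clusters and $\alpha^{-2}$ effective dimensions this multiplies out, and crucially the number of \emph{which-center} choices is absorbed since we fix the scale. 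Again dimension reduction replaces $d$ by $\tilde O(\alpha^{-2})$ throughout. Taking the minimum of the two net sizes gives the lemma.

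\textbf{Main obstacle.} The delicate point is making the ``local position'' nets genuinely \emph{independent of $d$} and \emph{shared across all clusters simultaneously} while still being compatible with the symmetrization step (the net must depend only on $\Omega$, not on any particular $S$), and making sure the additive error budget $\alpha\cdot\err(q_i,S)$ — which is a \emph{per-point, per-$S$} quantity mixing $\cost(q_i,S)$, $\cost(q_i,A)$ and $\Delta_j$ — actually absorbs all rounding errors at every scale. Concretely, I expect the hard part to be the second net: arguing that fixing the type $t$ really does let us forget the center identities at only a $2^t$-factor cost per coordinate, rather than an extra $k^{k}$-type blow-up, which requires carefully re-using the group-sampling-style net construction of \cite{Cohen-AddadLSSS22} but now built on top of the sensitivity sample $\Omega$ rather than the groups. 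Verifying the reduction to bounded effective dimension (via a Johnson–Lindenstrauss-type argument applied to $\Omega\cup A\cup S$-configurations) is routine but must be threaded through both constructions consistently.
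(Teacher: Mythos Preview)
Your high-level plan—build two nets (one driven by the interaction count, one by the type $t$) and take the smaller—matches the paper. Your intuitions for the first net are also close: the paper, too, handles interacting and non-interacting clusters differently, and for interacting ones it pays to record the center–cluster incidence. But the technical mechanism the paper uses is quite different from what you sketch, and at least one of your accounting steps is off.

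\textbf{Per-center decomposition, not a global net.} The paper does \emph{not} build one global net over all of $S$. Instead it first fixes a signature $(N_1,\ldots,N_k)$ (there are $\exp(O(k\log k))$ of them), builds for each coordinate $j\in[k]$ a \emph{single-center net} $M_{\alpha,j}$ that approximates only the costs $\cost(q_i,x_j)$ to the $j$-th center, and then takes coordinate-wise $\min$ over $(v^{(1)},\ldots,v^{(k)})\in M_{\alpha,1}\times\cdots\times M_{\alpha,k}$ (plus a $2^k$ factor for guessing which clusters lie in $B(S)$). The point is that the single-center net has size $\exp(O(\min(2^t\alpha^{-2},\,N_j+\alpha^{-2})\log(k\alpha^{-1}\eps^{-1})))$, and the product over $j$ turns $\sum_j N_j=N\approx 2^r$ and $\sum_j\alpha^{-2}=k\alpha^{-2}$ into exactly the bound you want. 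Your ``shared local position net of size $\exp(O(k\alpha^{-2}))$'' is standing in for this product, but without the per-center structure it is not clear how you would actually obtain it.

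\textbf{Subspace construction, not JL.} The dimension reduction is \emph{not} Johnson–Lindenstrauss. For a single center $x$, the paper builds a subspace $U$ spanned by (i) a carefully chosen $O(\alpha^{-2})$-size subset of $\Omega$ (an ``$\alpha$-good set'' guaranteeing the cross term $\langle q-\Pi q,\,x\rangle$ is small), and for the interaction-based bound additionally (ii) the centers $a_i$ of the clusters interacting with $x$. One then writes $\cost(q,x)=\|\Pi(q-x)\|^2+\|(I-\Pi)q\|^2+\|(I-\Pi)x\|^2-2\langle q-\Pi q,x\rangle$ and nets the first three terms explicitly; the fourth is shown to be at most $\alpha\cdot\err(q,S)$ (using either the $\alpha 2^{-t/2}$-goodness or the fact that $a_i\in U$). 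A direct JL would give multiplicative error $\alpha\cdot\cost(q,S)$, which for a type-$t$ point is $\Theta(\alpha 2^t\Delta_q)$—too large relative to $\err(q,S)\asymp 2^{t/2}\Delta_q$ by a $2^{t/2}$ factor. This is precisely why the paper tightens the goodness parameter to $\alpha 2^{-t/2}$, forcing the subspace dimension to $O(2^t\alpha^{-2})$.

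\textbf{Your second-net accounting.} This last point is also where your explanation of the $2^t$ factor goes wrong. It is \emph{not} that ``each coordinate needs $O(\log(2^t/\alpha))$ bits'' in a space of $\alpha^{-2}$ effective dimensions (that would give only $k\alpha^{-2}\log 2^t$, a much better bound than claimed). Rather, the required relative precision is $\alpha 2^{-t/2}$, so the effective dimension itself is $\Theta(2^t\alpha^{-2})$, and then netting one center in that dimension costs $\exp(O(2^t\alpha^{-2}\log(k\alpha^{-1}\eps^{-1})))$; the $k$ comes from the product over centers.

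In short: the skeleton of your argument is right, but to make it go through you need the single-center decomposition and the subspace-plus-Pythagoras dimension reduction in place of a global JL step.
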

\noindent Since this lemma is based on ideas from previous work \cite{Cohen-AddadSS21}, we move its proof to \Cref{app: nets}.

\subsection{Applying a  Symmetrization Argument}\label{sec: symmetrizaion}
We now have bounds on the cost vector nets that will be useful in performing a union bound to control the expected supremum of the random process given by \eqref{eq: bvwr}. The next step is to bound the variance of the estimator. It turns out that if the coreset $\Omega$ satisfies some ``good'' properties, which we shall soon describe, we can show better bounds on the variance. Therefore, we first apply a standard symmetrization argument (see, for ex., Section 6.4 of \cite{vershynin2018high}), which allows us to fix the randomness of $\Omega$ and reduce proving \eqref{eq: bvwr} to a Gaussian process.

We first introduce some useful notation. For $\csize$  independent Gaussians $(g_i)_{i\in [m]}$, where $g_i\sim N(0,1)$, define the random variable
\begin{align}
\label{def:X}
    X^S(\Omega, g) = \sum_{i \in [m]} \frac{g_i w_{q_i} u_i^S(\Omega)}{\cost(P,S)}.
\end{align}
Applying the symmetrization technique in \Cref{lem: symmetrization}, by \Cref{lem: sym-apply} we have that,
\begin{align}\label{eq:sym}
	\E_{\Omega} \sup_{S \inSR}\left\rvert\frac{\cost(B(S), S) - \costom(B(S), S)}{\cost(P,S)} \right\rvert\leq \sqrt{2\pi} \E_{ \Omega} \E_{g} \sup_{S \inSR} \left\lvert X^S(\Omega,g)\right\rvert.
\end{align}
\noindent The remaining section is devoted to showing the following lemma:
\begin{lemma}\label{eq: inter-xs}
If the coreset size is ${\tilde{\Omega}(k \eps^{-2} \cdot \min (\sqrt{k}, \eps^{-2}))}$ for worst-case inputs or $ \tilde{\Omega}(k \eps^{-2})$ for $\beta$-stable inputs then we have,
\begin{align*}
    \E_{\Omega} \E_g \sup_{S \inSR} \left\lvert X^S(\Omega,g)\right\rvert \leq \eps.
\end{align*}
\end{lemma}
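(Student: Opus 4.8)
The plan is to bound the Gaussian process $\E_g \sup_{S} |X^S(\Omega, g)|$ via a chaining argument over the cost vector nets of \Cref{def: clustering-nets}, and to control the variance terms at each scale using the sensitivity sampling distribution $\mu$. First I would restrict attention to a ``good event'' on the draw of $\Omega$: with constant probability the samples $q_i$ behave as expected, namely (i) the size $|C_j \cap \Omega|$ of each cluster is approximately $m \cdot \mu$-mass of $C_j$, (ii) the weighted cost $\costom(C_j, A)$ is close to $\cost(C_j, A)$, and (iii) analogous statements for the relevant bands $B_{b,t}$. Conditioned on this event (which absorbs a constant-factor loss after Markov), the nets $M_\alpha(\Omega, \mathcal S(r))$ of \Cref{lem: net-lemma} are well-defined, and it remains to bound $\E_g \sup_S |X^S(\Omega,g)|$ for the fixed good $\Omega$.

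Next I would set up the dyadic chain. For $h = 0, 1, 2, \ldots, h_{\max} \approx 2\log(1/\eps)$, let $\alpha_h = 2^{-h}$ and pick a net $M_{\alpha_h}$ from \Cref{lem: net-lemma}; for each $S$ let $v^{(h)}(S) \in M_{\alpha_h}$ be an $\alpha_h$-approximator of the cost vector $u^S(\Omega)$. Writing $X^S = \sum_h X^{v^{(h)}(S)} - X^{v^{(h-1)}(S)} + (\text{tail})$ telescopes the estimator into increments supported on pairs of net points, plus a negligible tail at scale $h_{\max}$ (handled because $\alpha_{h_{\max}} \cdot \err(q_i,S)$ is so small that the leftover cost is dominated by $\eps \cdot \cost(P,S)$). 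Each increment is a Gaussian random variable whose variance I would bound by $\sum_i w_{q_i}^2 (\alpha_{h-1}\err(q_i,S))^2 / \cost(P,S)^2$, and crucially this is where the sensitivity weights enter: since $\mu(p) \gtrsim \tfrac14(\cost(p,A)/\cost(P,A) + \Delta_p/\cost(P,A) + \ldots)$, each $w_{q_i} \cdot \err(q_i,S)^2 / \cost(P,S)$ is controlled, and summing over $i$ with a Bernstein/maximal bound yields $\E_g \sup |X^S| \lesssim \sum_h \alpha_h \sqrt{\tfrac{\sigma_h^2}{m} \log |M_{\alpha_h}|}$, where $\sigma_h^2$ is the per-scale normalized variance. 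The two branches of $\log|M_{\alpha_h}| = O(\min(2^r + k\alpha_h^{-2},\ 2^t k \alpha_h^{-2})\log(\cdots))$ from \Cref{lem: net-lemma} are exactly the net-size bounds in the Theorem~1/Theorem~2 rows of \Cref{table:core}, so it remains to plug in the variance bounds $\sigma_h^2$.

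This is where the worst-case and stable cases diverge, and I would invoke the variance lemmas referenced in the roadmap: for worst-case inputs, $\sigma_h^2 \lesssim 2^{-2h}\min(k/(k_{B(S)}2^t),\, 1)$ (the bound forthcoming in \Cref{sec:worst-case-variance}), which combined with $\log|M_{\alpha_h}| \lesssim k\min(k_{B(S)}, 2^t)2^{2h}$ gives, as in the technical overview, $\sum_h \sqrt{\tfrac{1}{m}\cdot k\min(k/k_{B(S)}, k_{B(S)}, 2^t)} \lesssim \sqrt{\tfrac{k}{m}\min(\sqrt{k}, \eps^{-2})}$; setting $m = \tilde\Theta(k\eps^{-2}\min(\sqrt k, \eps^{-2}))$ makes this $\lesssim \eps/\polylog$, which survives the earlier union bounds over $(b,t,r)$. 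For $\beta$-stable inputs, I would use instead the improved bound $\sigma_h^2 \lesssim 2^{-2h}/(\beta k_{B(S)})$ valid for $h > (\log k)/2$ (from \Cref{lem: stable-var}, which rests on the cost-increase fact \Cref{lem: cost-increase} that interacting clusters pay $\gtrsim \beta k$ extra) and the coarser $\sigma_h^2 \lesssim 2^{-2h} k$ of \cite{CSS21} for $h \le (\log k)/2$ where the net is $O(1)$-sized; splitting the chain at $(\log k)/2$ yields $\sum_{h\le (\log k)/2} 2^{-h}\sqrt{k/m} + \sum_{h > (\log k)/2} 2^{-h}\sqrt{k \cdot 2^{2h}/(m\beta k_{B(S)}) \cdot k_{B(S)}} \lesssim \sqrt{k/(\beta m)}$, so $m = \tilde\Theta(k\eps^{-2})$ suffices when $\beta = \Omega(1)$. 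Summing the two displayed bounds over all $O(\polylog(k\eps^{-1}))$ choices of $(b,t,r)$ and rescaling $\eps$ proves \Cref{eq: inter-xs}.

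The main obstacle I anticipate is twofold: first, making the ``good event'' on $\Omega$ precise enough that the net constructions of \Cref{lem: net-lemma} and, more importantly, the per-scale variance bounds go through deterministically on that event — this requires carefully listing which empirical quantities ($|C_j\cap\Omega|$, $\costom$ of each band, contributions of interacting vs.\ non-interacting centers) must concentrate, and verifying each via Bernstein with the explicit $\mu$; and second, proving the variance bounds themselves (\Cref{lem: worst-case-var,lem: stable-var}), since unlike group sampling, sensitivity sampling does not hand us the structural partition of clusters for free — we must instead argue that the sensitivity weights $1/(m\mu(q_i))$ already ``see'' the type-$t$ structure through the terms $\cost(p,A)/(k\cost(C_j,A))$ and $\Delta_p/\cost(P,A)$ in $\mu$, so that the holistic sampling reproduces the group-sampling variance trade-offs without the preprocessing. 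Everything else is bookkeeping over the dyadic scales and the $O(\polylog)$ union bounds.
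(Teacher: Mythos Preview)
Your proposal follows essentially the same architecture as the paper: condition on a good event for $\Omega$, then run a dyadic chaining argument over the cost-vector nets and trade off variance against net size, with separate variance lemmas for worst-case and stable instances. The variance/net trade-off computations you sketch match the paper's (Sections~\ref{sec: trading-net-variance-stable} and~\ref{sec: completing-the-proof-kmeans}), and the obstacles you anticipate are the right ones.

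There is one genuine gap in how you handle the outer expectation over $\Omega$. You write that the good event holds ``with constant probability'' and that conditioning on it ``absorbs a constant-factor loss after Markov.'' This does not close the argument: the statement of \Cref{eq: inter-xs} is an \emph{expectation} bound, so you must control the contribution of the bad event to that expectation. The paper does this in two pieces (\Cref{lem: worst-case-omega} and \Cref{lem: omega-event-e}): first, event $\mathcal{E}$ holds with probability at least $1-\eps^3/k^3$ (not merely constant probability); second, for \emph{every} fixed $\Omega$ one has the crude bound $\E_g\sup_S|X^S(\Omega,g)|\lesssim \eps^{-2}$, so that the bad event contributes only $\eps^{-2}\cdot\eps^3/k^3\lesssim\eps$ to the total expectation. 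Without both the $\eps^3$-small failure probability and the unconditional $\eps^{-2}$ bound, the law of total expectation does not give $\lesssim\eps$. (Markov's inequality is used only \emph{after} \Cref{eq: inter-xs}, to pass from the expectation bound to the constant-probability statements of Theorems~\ref{thm: worst-case-coreset-thm} and~\ref{thm: main-coreset-thm}.)

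Two smaller points. First, your telescope has no explicit base term: the paper starts the chain not from a net vector but from the specific coarse vector $u^{S,0}$ with entries $\cost(a_j,S)$ for $q_i\in C_j\cap B(S)$, and the resulting $X^{S,\init}$ requires its own argument (\Cref{lem:Xinit}) that does not fall out of the net machinery. Second, for the stable case you split the chain at $h=(\log k)/2$, following the heuristic overview in the introduction; the actual proof (\Cref{sec: trading-net-variance-stable}) instead splits on whether the interaction number satisfies $2^r\ge 2k$ or $2^r<2k$, using the variance bound $\var[X^{S,h}]\lesssim 2^{-2h}k/(mN(S)\beta)$ of \Cref{lem: stable-var} in the first case and the generic $2^{-2h}/m$ bound of \Cref{lem: easy-var} in the second. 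Both routes lead to the same $\tilde O(\sqrt{k/m})$ error rate, but the paper's version avoids having to justify an $O(1)$-sized net for small $h$.
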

The above lemma implies \eqref{eq: bvwr} upon rescaling $\eps$ by $\Theta(\log^3(k\eps^{-1}))$ factors.
\subsubsection{\texorpdfstring{Good Properties of the Coreset: Event $\mathcal{E}$}{Good Event E}}
We now take a slight detour and prove that the output $\Omega$ of \Cref{alg: sensitivity-sampling} satisfies some nice properties with high probability. In particular, we show that $\Omega$ (approximately) preserves the number of points in each cluster $C_i$ as well as the cost of $C_i$ (where $C_1, \ldots, C_k$ is the clustering computed by \Cref{alg: sensitivity-sampling} in the first step). Moreover, $\Omega$ does not over-sample high-cost points from any clusters. These properties are summarized by an event $\mathcal{E}$ defined below.

Before we formalize these properties, it will be convenient to first partition points in a cluster into rings according to their cost from the center. We define the notation $\Delta_j = \cost(C_j,A)/|C_j|$ to be the average cost of cluster $C_j$.

\paragraph{Partitioning Clusters into Rings.} We begin by partitioning each cluster $C_j$ into rings centered around its center $a_j$; for $\ell$ satisfying $1 \leq \ell \leq \ell_{max} =  \floor{\log_2(1/\eps)}$, we define the \textit{ring} $R_j(\ell)\subset C_j$ to be the set of points $p \in C_j$ with $\cost(p,a_j) \in [2^{\ell} \Delta_j, 2^{\ell+1} \Delta_j)$. We also let $R_j(0)$ be the points $p \in C_j$ with $\cost(p,a_j) < 2 \Delta_j$ and $R_j(\ell_{max}+1)$ to be the points $p$ satisfying $\cost(p,a_j) \geq 2^{\lmax + 1}\Delta_j$. Clearly, the sets $R_j(0), \ldots, R_j(\ell_{max} +1)$ partition $C_j$.  

We now define the event $\mathcal{E}$.
\begin{definition}[Event $\mathcal E$]
\label{def: event-E}
	The event $\mathcal E$ occurs iff $\Omega$ satisfies the following properties:
	\begin{description}
		\item[$P_1$: (Cluster Size Preservation)] \label{item: event-e-1}   For each cluster $C_j$, we have \[\sum_{q\in \Omega \cap C_j}w_q \in  [(1- \eps)|C_j|,(1+\eps)|C_j|].\]
	\item[$P_2$: (Ring Size Preservation)] \label{item: event-e-2}  For each $j \in [k]$ and $0 \leq \ell \leq \ell_{max}+1$ the set $R_j(\ell)$ satisfies, \[\sum_{q\in \Omega \cap R_j(\ell)}w_q \leq |C_j|/2^{\ell-1}.\]
    \item[$P_3$: (Cluster Cost Preservation)] For each cluster $C_j$, $\costom(C_j, A) = (1 \pm \eps) \cost(C_j, A)$.
	
	\end{description}
\end{definition}

Notice crucially that $\mathcal{E}$ only depends on the sample $\Omega$ (and in particular does \textit{not} place any restriction on $S$).
The following lemma shows that $\mathcal{E}$ holds with high probability. 
\begin{lemma}
\label{lem:E}
    If $m= \Omega(k\eps^{-2}\log (k \eps^{-1}))$ event $\mathcal{E}$ holds with probability at least $1-\eps^3/k^3$.
\end{lemma}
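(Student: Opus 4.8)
\textbf{Proof proposal for \Cref{lem:E}.}

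The plan is to prove each of the three properties $P_1, P_2, P_3$ holds with probability at least $1 - \eps^3/(3k^3)$ (say), and then conclude by a union bound. For each property, the idea is to view the relevant weighted quantity $\sum_{q \in \Omega \cap Z} w_q f(q)$ (for an appropriate set $Z$ and function $f$) as a sum of $m$ i.i.d.\ random variables, one per sample $q_i$, with expectation equal to the target quantity, and then apply a Chernoff/Bernstein-type concentration bound. The key point throughout is that the sampling distribution $\mu$ from \eqref{defn:mu} is a sum of four nonnegative terms, so $\mu(p) \geq \frac14 \cdot (\text{any one of the terms})$; this lower bound on $\mu(p)$ controls the weight $w_{q_i} = 1/(m\mu(q_i))$ and hence the variance and the per-sample maximum, which are exactly what Bernstein's inequality needs.

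First, for $P_1$ (cluster size preservation), fix a cluster $C_j$ and define $Y_i = w_{q_i}\cdot \mathbbm{1}[q_i \in C_j] = \frac{1}{m\mu(q_i)}\mathbbm{1}[q_i \in C_j]$. Then $\E[\sum_i Y_i] = |C_j|$, and using $\mu(p) \geq \frac{1}{4k|C_j|}$ for $p \in C_j$ (the first term of \eqref{defn:mu}), each $Y_i$ is bounded by $4k|C_j|/m$ and has variance $O(k|C_j|^2/m)$. Bernstein's inequality then gives deviation $\eps |C_j|$ with probability $1 - \exp(-\Omega(\eps^2 m/k))$, which is at most $\eps^3/(3k^4)$ once $m = \Omega(k\eps^{-2}\log(k\eps^{-1}))$ with a large enough constant; a union bound over the $k$ clusters handles all of them. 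For $P_3$ (cluster cost preservation), the argument is identical with $Y_i = w_{q_i}\cost(q_i,A)\mathbbm{1}[q_i\in C_j]$: now $\E[\sum_i Y_i] = \cost(C_j,A)$, and the relevant lower bound on $\mu$ is the \emph{second} term, $\mu(p)\geq \frac{\cost(p,A)}{4k\,\cost(C_j,A)}$, which bounds $w_{q_i}\cost(q_i,A) \leq 4k\,\cost(C_j,A)/m$; Bernstein again gives the $(1\pm\eps)$ guarantee with the same probability, and we union-bound over the $k$ clusters.

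For $P_2$ (ring size preservation), fix $j$ and $\ell$ and set $Y_i = w_{q_i}\mathbbm{1}[q_i \in R_j(\ell)]$, so $\E[\sum_i Y_i] = |R_j(\ell)|$. The subtlety is that we want a \emph{one-sided} upper bound of $|C_j|/2^{\ell-1}$ rather than a $(1\pm\eps)$-multiplicative bound, because $|R_j(\ell)|$ may be tiny (even zero). Here we use that every point $p \in R_j(\ell)$ has $\cost(p,a_j) \geq 2^\ell \Delta_j$, so by an averaging/Markov argument $|R_j(\ell)| \leq \cost(C_j,A)/(2^\ell\Delta_j) = |C_j|/2^\ell$; thus the expectation is already at most $|C_j|/2^\ell$, and we only need to show the sum does not exceed twice its mean (plus a small additive slack). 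Using the lower bound $\mu(p) \geq \frac{\cost(p,A)}{4k\,\cost(C_j,A)} \geq \frac{2^\ell \Delta_j}{4k\,\cost(C_j,A)} = \frac{2^\ell}{4k|C_j|}$ for $p \in R_j(\ell)$, each $Y_i \leq 4k|C_j|/(2^\ell m)$, and a Chernoff bound (in its form for sums of bounded nonnegative variables, handling the case of small or zero mean via the additive term) shows $\sum_i Y_i \leq |C_j|/2^{\ell-1}$ except with probability $\exp(-\Omega(m/k))$, which is again at most $\eps^3/(3k^4\log(k\eps^{-1}))$ for $m = \Omega(k\eps^{-2}\log(k\eps^{-1}))$; a union bound over the $k(\ell_{max}+2) = O(k\log(1/\eps))$ pairs $(j,\ell)$ finishes it.

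The main obstacle is the one-sided bound in $P_2$: when $|R_j(\ell)|$ is extremely small, multiplicative concentration is vacuous, so one must phrase the tail bound with an additive slack term (the Chernoff bound $\Pr[\sum Y_i > a] \leq \exp(-a/(C\cdot\max_i Y_i))$ once $a \geq 2\E[\sum Y_i]$, or equivalently a careful application of Bennett's inequality) and verify that the target $|C_j|/2^{\ell-1}$ is large enough relative to the per-sample bound $4k|C_j|/(2^\ell m)$ — this is where the factor of $m = \Omega(k\eps^{-2}\log(k\eps^{-1}))$ (in fact $m = \Omega(k\log(k\eps^{-1}))$ suffices for $P_2$ alone) is used. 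Everything else is a routine instantiation of Bernstein's inequality together with the observation that $\mu$ dominates each of its four summands up to the factor $\tfrac14$, so I would not belabor those calculations.
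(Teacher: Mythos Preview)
Your proposal is correct and follows the same high-level plan as the paper: prove each of $P_1,P_2,P_3$ separately via Bernstein-type concentration (using the lower bounds on $\mu$ implicit in \eqref{defn:mu}, equivalently the upper bounds on $w_q$ in \Cref{lem: weight-bound}), then union-bound. For $P_1$ and $P_3$ your argument is essentially identical to the paper's.

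For $P_2$ your treatment differs from, and is in fact more careful than, the paper's. The paper simply asserts that ``following the exact same steps as in the proof of property $P_1$'' one obtains $\sum_{q\in\Omega\cap R_j(\ell)} w_q = (1\pm\eps)|R_j(\ell)|$, and then uses $|R_j(\ell)|\leq |C_j|/2^\ell$ (Markov) together with $\eps=1$. Taken literally, this multiplicative statement is problematic when $|R_j(\ell)|$ is very small relative to $|C_j|$, since the Bernstein bound with the coarse estimate $w_q\leq 4k|C_j|/m$ then degrades. You sidestep this cleanly by exploiting the \emph{second} term of $\mu$: for $q\in R_j(\ell)$ one has $\cost(q,a_j)\geq 2^\ell\Delta_j$, hence $w_q\leq 4k|C_j|/(2^\ell m)$, which makes the one-sided tail bound for the target $|C_j|/2^{\ell-1}$ go through uniformly in $|R_j(\ell)|$ (including the zero case). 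This is exactly the right fix, and your observation that only $m=\Omega(k\log(k\eps^{-1}))$ is needed for $P_2$ alone is also correct.
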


The event $\mathcal{E}$ directly implies that the cost of $\Omega$ with respect to any set of centers $S$ is bounded up to a constant factor by the true cost. We record this observation below (and provide a proof in \Cref{sec: properties-of-sample}).

\begin{lemma}\label{lem: event-e-coreset-cost}
    If event $\mathcal{E}$ holds, then for any set of centers $S$, we have $\costom(P,S) \lesssim \cost(P,S)$.
\end{lemma}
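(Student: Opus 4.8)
The plan is to decompose $\costom(P,S) = \sum_j \costom(C_j, S)$ over the clusters $C_1,\dots,C_k$ of the approximate solution $A$, and bound each term $\costom(C_j,S)$ by $O(\cost(C_j,S) + \text{something small})$, then sum. For a single cluster $C_j$ with center $a_j$, the triangle inequality for squared Euclidean distances (in the relaxed form $\cost(p,S) \le 2\cost(a_j,S) + 2\cost(p,a_j)$, and symmetrically $\cost(a_j,S)\le 2\cost(p,S)+2\cost(p,a_j)$) lets me compare the cost of a sampled point $q$ to the cost of $a_j$. Specifically I would write, for each $q \in \Omega \cap C_j$, $\cost(q,S) \le 2\cost(a_j,S) + 2\cost(q,a_j)$, so that
\[
\costom(C_j,S) = \sum_{q \in \Omega \cap C_j} w_q \cost(q,S) \le 2\cost(a_j,S)\sum_{q\in\Omega\cap C_j} w_q + 2\sum_{q\in\Omega\cap C_j} w_q \cost(q,a_j).
\]
Now I invoke event $\mathcal{E}$: property $P_1$ gives $\sum_{q\in\Omega\cap C_j} w_q \le (1+\eps)|C_j| \le 2|C_j|$, and property $P_3$ gives $\sum_{q\in\Omega\cap C_j} w_q \cost(q,a_j) = \costom(C_j,A) \le (1+\eps)\cost(C_j,A) \le 2\cost(C_j,A)$. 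Hence $\costom(C_j,S) \lesssim |C_j|\cost(a_j,S) + \cost(C_j,A)$.

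Next I would lower-bound the true cost $\cost(C_j,S) = \sum_{p \in C_j}\cost(p,S)$ from below by the same two quantities, up to constants. Using $\cost(a_j,S) \le 2\cost(p,S) + 2\cost(p,a_j)$ and summing over $p \in C_j$ gives $|C_j|\cost(a_j,S) \le 2\cost(C_j,S) + 2\cost(C_j,A)$, so $|C_j|\cost(a_j,S) \lesssim \cost(C_j,S) + \cost(C_j,A)$. Combining, $\costom(C_j,S) \lesssim \cost(C_j,S) + \cost(C_j,A)$. Summing over all $j$ yields $\costom(P,S) \lesssim \cost(P,S) + \cost(P,A)$. To finish, I note that $A$ is an $O(1)$-approximate $k$-means solution while $S$ is an arbitrary set of $k$ centers, so $\cost(P,A) \le O(1)\cdot\opt \le O(1)\cdot\cost(P,S)$. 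Therefore $\costom(P,S) \lesssim \cost(P,S)$, as claimed.

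I expect the only mild subtlety — not really an obstacle — to be bookkeeping the constant-factor losses from the relaxed triangle inequality and from the $(1\pm\eps)$ slacks in $\mathcal{E}$; none of these individually is hard, but one must be careful that they are all absorbed into the implicit constant in $\lesssim$. One could alternatively route the comparison through the rings $R_j(\ell)$ and property $P_2$ to get a cleaner handle on the heavy-tail contribution $\sum w_q \cost(q,a_j)$, but property $P_3$ already packages exactly that sum, so the direct route above is cleanest. The fact that $S$ has at most $k$ centers is used only at the last step, via $\cost(P,A) = O(\opt) = O(\cost(P,S))$.
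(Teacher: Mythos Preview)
Your proposal is correct and follows essentially the same argument as the paper's proof: apply the relaxed triangle inequality $\cost(q,S)\le 2\cost(a_j,S)+2\cost(q,a_j)$, use properties $P_1$ and $P_3$ of $\mathcal{E}$ to bound the two resulting sums, then invoke $|C_j|\cost(a_j,S)\le 2\cost(C_j,S)+2\cost(C_j,A)$ and the fact that $A$ is an $O(1)$-approximation. Even your aside that $P_2$ and the ring decomposition are unnecessary here matches the paper's choice.
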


With the good properties of $\Omega$ now laid out, in the next section, we show that it suffices to bound the Gaussian process for a fixed $\Omega$ that satisfies event $\mathcal{E}$.

\subsubsection{Fixing the randomness of \texorpdfstring{$\Omega$}{Omega}}\label{subsec:fixing-omega}

The randomness of $X^S(\Omega,g)$ arises from both the choice of $\Omega$, $g$ and the expectation in \Cref{eq: inter-xs} is over both $\Omega$ and $g$. We proceed as follows to prove \Cref{eq: inter-xs}. First, \Cref{lem: worst-case-omega} shows that for any fixed $\Omega$, the expected supremum $\E_g \sup_{S} \left\lvert X^S(\Omega,g)\right\rvert $ (with expectation only over $g$) is $O(\eps^{-2})$; furthermore, \Cref{lem: omega-event-e} shows that if $\Omega$ satisfies the nice properties guaranteed by event $\mathcal{E}$, then for any $\Omega$ this expected supremum is $O(\eps)$. These facts and the fact that $\mathcal{E}$ occurs with high probability imply \Cref{eq: inter-xs}.

\begin{lemma}[Worst Case Bound]\label{lem: worst-case-omega}
For any fixed $\Omega$, we have  $\E_{g} \sup_{S \inSR} \left\lvert X^S(\Omega, g) \right\rvert \lesssim \eps^{-2} $.
\end{lemma}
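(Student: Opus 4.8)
The plan is to bound $\E_g \sup_{S \in \mathcal{S}(r)} |X^S(\Omega,g)|$ via the standard chaining machinery (Dudley's entropy integral), using the cost‑vector nets from \Cref{lem: net-lemma} together with crude but universally valid bounds on the relevant variances that hold for \emph{any} fixed $\Omega$ (i.e.\ not assuming event $\mathcal{E}$). First I would observe that for a fixed $\Omega$, the process $S \mapsto X^S(\Omega,g) = \sum_i g_i w_{q_i} u_i^S(\Omega)/\cost(P,S)$ is a Gaussian process indexed by $S \in \mathcal{S}(r)$, and its increments have sub‑Gaussian norm governed by the Euclidean distance between the (normalized) cost vectors $\frac{1}{\cost(P,S)}(w_{q_i} u_i^S(\Omega))_i$. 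Hence by the generic chaining / Dudley bound it suffices to control $\int_0^{\infty} \sqrt{\log |M_\alpha(\Omega,\mathcal{S}(r))|}\, d(\text{scale})$, where $M_\alpha$ is the $\alpha$‑net from \Cref{def: clustering-nets}, and to relate the chaining scale $\alpha$ to the Euclidean radius of the net via the per‑coordinate error $\alpha \cdot \err(q_i,S)$.

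The key quantity to bound is therefore, at each scale $2^{-h}$ (equivalently $\alpha \approx 2^{-h}$), the "variance" term $\sigma_h^2 := \sup_S \sum_i \big(\tfrac{w_{q_i}\,\err(q_i,S)}{\cost(P,S)}\big)^2 \cdot 2^{-2h}$, multiplied by $\log|M_{2^{-h}}|$. For this worst‑case (no‑$\mathcal{E}$) version I would use only the trivial facts that the sample has $m$ points, that each weight satisfies $w_{q_i} = 1/(m\mu(q_i))$ with $\mu(q_i)$ at least a constant times each of the four terms in \eqref{defn:mu}, and hence $w_{q_i}\cost(q_i,S) \lesssim \cost(P,S)$ coordinatewise after using the sensitivity structure (each of the four sensitivity terms individually upper‑bounds a corresponding piece of $\err(q_i,S)/\cost(P,S)$). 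This yields a bound of the shape $\sigma_h^2 \lesssim 2^{-2h}\cdot \poly(k,\eps^{-1})$ — crucially $\sigma_h^2$ decays like $2^{-2h}$ — while $\log|M_{2^{-h}}| \lesssim \min(2^r + k 2^{2h}, 2^t k 2^{2h})\cdot\log(k 2^h \eps^{-1}) \lesssim k 2^{2h}\cdot \polylog$. Plugging into the chaining sum,
\[
\E_g \sup_S |X^S(\Omega,g)| \;\lesssim\; \sum_{h \geq 0} \sqrt{\sigma_h^2 \cdot \log|M_{2^{-h}}|} \;\lesssim\; \sum_{h\geq 0} \sqrt{2^{-2h}\poly \cdot k 2^{2h}\polylog} \;=\; \sum_{h\geq 0}\sqrt{\poly},
\]
and since $h$ ranges only up to $h_{\max} = O(\log \eps^{-1})$ (beyond that scale the nets coincide with the exact cost vectors because $\err$ is comparable to $\cost$), the whole sum is $\poly(k,\eps^{-1})$. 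One then has to check that the crude polynomial is in fact $O(\eps^{-2})$: the factors $2^t \le \eps^{-2}$, $2^r \le k^2$, $2^b \le k\eps^{-3}$ are all polynomially bounded, and a careful but routine accounting (tracking how the $\cost(P,S)$ normalization absorbs the band/type scaling, exactly as in the net‑size derivation) collapses this to $\eps^{-2}$ up to $\polylog$ factors, which is then rescaled away as in the rest of \Cref{sec: analysis}.

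The main obstacle, I expect, is getting the per‑scale variance bound $\sigma_h^2$ in a form clean enough that, after multiplying by $\log|M_{2^{-h}}| \approx k 2^{2h}$ and summing over $h \le h_{\max}$, the total is genuinely $O(\eps^{-2})$ rather than some larger polynomial such as $k\eps^{-2}$ or $k^2\eps^{-2}$. This forces one to use the sensitivity lower bounds on $\mu(q_i)$ \emph{termwise} against the four summands of $\err(q_i,S)$ — in particular the $\Delta_p/\cost(P,A)$ and $\cost(p,A)/\cost(P,A)$ terms of $\mu$ are what tame the additive pieces of $\err$ — and to invoke $\costom(P,S) \lesssim \cost(P,S)$ only in the weakened form that holds unconditionally (every sampled point contributes at most $O(1)$ times its sensitivity‑reweighted cost). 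A secondary technical point is justifying termination of the chaining sum at $h_{\max}=O(\log\eps^{-1})$: for $\alpha$ below this threshold, $\alpha\cdot\err(q_i,S) \le \alpha\cdot O(\cost(q_i,S) + \Delta_{q_i}+\cost(q_i,A))$ is already smaller than the intrinsic discretization, so $M_\alpha$ can be taken to contain the exact vectors $u^S(\Omega)$ and contributes nothing further. Once these two points are in place, the lemma follows by plugging \Cref{lem: net-lemma} into Dudley's inequality and summing the geometric‑type series.
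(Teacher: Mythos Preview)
Your proposal takes a genuinely different---and far more elaborate---route than the paper. The paper does \emph{not} chain here at all: it uses a single Cauchy--Schwarz step. Writing $z^S\in\R^m$ with $z^S_i = w_{q_i} u_i^S(\Omega)/\cost(P,S)$, one has $X^S=\langle g,z^S\rangle$ and hence $|X^S|\le \|g\|\cdot\|z^S\|$. Since each $q\in B(S)$ lies in a close cluster, $\cost(q,S)\lesssim \cost(q,A)+\Delta_q\eps^{-2}$; combining this with the weight bounds $w_q\cost(q,A)\lesssim \cost(P,A)/m$ and $w_q\Delta_q\lesssim \cost(P,A)/m$ (\Cref{lem: weight-bound}) gives the uniform pointwise bound $w_q\cost(q,S)\lesssim \cost(P,A)/(m\eps^2)$. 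Summing over $m$ coordinates and using $\cost(P,A)\lesssim\cost(P,S)$ yields $\|z^S\|^2\lesssim 1/(m\eps^4)$ for \emph{every} $S$, so
\[
\E_g\sup_S|X^S|\;\le\;\Big(\sup_S\|z^S\|\Big)\,\E_g\|g\|\;\lesssim\;\frac{1}{\eps^2\sqrt{m}}\cdot\sqrt{m}\;=\;\eps^{-2}.
\]
No nets, no entropy integral, no scale decomposition.

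Your chaining plan could in principle be pushed through---your per-scale variance estimate $\sigma_h^2\lesssim 2^{-2h}\eps^{-4}/m$ (which is exactly the $\|z^S\|$ bound above at scale $h$) combined with the net bound $\log|M_{2^{-h}}|\lesssim 2^t k\,2^{2h}\log(\cdot)$ and $2^t\le\eps^{-2}$, $m\gtrsim k\eps^{-2}$ does sum to $\tilde O(\eps^{-2})$---but this is a lot of machinery for a deliberately crude lemma. The sole purpose of this bound is to cover the event $\overline{\mathcal E}$, which occurs with probability at most $\eps^3/k^3$; any polynomial bound in $(k,\eps^{-1})$ would do. The ``careful but routine accounting'' you flag as the main obstacle is precisely what the direct argument sidesteps: once you observe that $\sup_S\|z^S\|$ is small, the supremum over $S$ costs nothing, and the chaining apparatus (nets, $X^{S,\init}$, $X^{S,\fin}$, termination at $h_{\max}$) becomes dead weight.
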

The proof of \Cref{lem: worst-case-omega} is relatively straightforward and appears in \Cref{app: proof-worst-case}.
\begin{lemma}[Bound Conditioned on $\mathcal{E}$]\label{lem: omega-event-e}
For any fixed $\Omega$ satisfying event $\mathcal{E}$ (\Cref{def: event-E}) we have $ \E_{g} \sup_{S \inSR} \left\lvert X^S(\Omega,g)\right\rvert \lesssim \eps $.
\end{lemma}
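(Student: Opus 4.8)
The plan is to bound the Gaussian process $\sup_{S \in \mathcal{S}(r)} |X^S(\Omega, g)|$ for a fixed $\Omega$ satisfying event $\mathcal{E}$ via a chaining argument over the cost vector nets $M_\alpha(\Omega, \mathcal{S}(r))$ from \Cref{lem: net-lemma}, carefully trading the variance of the estimator at each distance scale against the logarithm of the net size. Concretely, write $\alpha_h = 2^{-h}$ for $h$ ranging from $0$ up to roughly $2\log(1/\eps)$ (the scale at which the net becomes exact because $2^t \leq \eps^{-2}$ and the error term $\err(q_i,S)$ is comparable to the true cost). Telescoping, $X^S = \sum_h (X^{S,\alpha_{h+1}} - X^{S, \alpha_h})$ where $X^{S,\alpha}$ uses the net point at scale $\alpha$; each increment is a Gaussian variable whose conditional variance, given $\Omega$, is controlled by $\sum_i w_{q_i}^2 (\text{net error at scale } h)^2 / \cost(P,S)^2$, which by \Cref{def: clustering-nets} is at most $\alpha_h^2 \sum_i w_{q_i}^2 \err(q_i,S)^2/\cost(P,S)^2$. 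The standard Dudley/chaining bound then gives
\[
\E_g \sup_S |X^S(\Omega,g)| \lesssim \sum_{h} \alpha_h \cdot \sqrt{\sigma_h^2 \cdot \log|M_{\alpha_h}|},
\]
where $\sigma_h^2$ is the relevant (normalized) variance proxy $\sup_S \sum_i w_{q_i}^2 \err(q_i,S)^2 / \cost(P,S)^2$ at scale $h$.

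The heart of the argument is then to feed in good bounds on $\sigma_h^2$. This is where event $\mathcal{E}$ and the sensitivity-sampling weights enter: using $P_1$–$P_3$ of \Cref{def: event-E} and the fact that $\mu(p)$ in \eqref{defn:mu} dominates each of the four sensitivity-like terms, one shows that $w_{q_i} \mu$-reweighting keeps $\sum_i w_{q_i}^2 \err(q_i,S)^2$ under control — this is exactly \Cref{lem: worst-case-var} for worst-case inputs, giving variance $2^{-2h}\min(k/(k_{B(S)}2^t),1)$, and \Cref{lem: stable-var} for $\beta$-stable inputs, giving variance $2^{-2h}/\beta \cdot k_{B(S)}$ for $h > (\log k)/2$ and the cruder $O(2^{-2h}k)$ bound (from \cite{CSS21}) for $h \leq (\log k)/2$. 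I would then combine these with the net bound $\log|M_{\alpha_h}| = O(\min(2^r + k\alpha_h^{-2}, 2^t k \alpha_h^{-2})\log(k\alpha_h^{-1}\eps^{-1}))$ from \Cref{lem: net-lemma}, splitting the sum over $h$ at the threshold $\alpha_h \approx 2^{-t/2}$ where the net transitions from $O(1)$-size to the polynomially growing regime, and further using $2^r \lesssim k \cdot k_{B(S)}$ and $k_{B(S)} \leq \min(k, 2^t \cdot(\text{something}))$. The computations mirror the displayed error-rate estimates in the technical overview: for worst-case inputs one gets $\sqrt{(k/m)\min(\sqrt{k},2^t)} \leq \sqrt{(k/m)\min(\sqrt{k},\eps^{-2})}$, and for stable inputs $\sqrt{k/(\beta m)}$; choosing $m$ as in the lemma hypothesis makes each at most $\eps$ (up to the $\polylog$ slack we already budgeted via the rescaling of $\eps$).

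There is one subtlety I would handle carefully before the chaining: the cost vector $u^S(\Omega)$ only records coordinates in $B(S) \cap \Omega$, but the indicator $\mathbbm{1}[q_i \in B(S)]$ depends on $S$, so as $S$ varies within $\mathcal{S}(r)$ the "support" of the process changes. I would address this exactly as the net definition does — by requiring net points to vanish on coordinates outside $B(S)$ and bounding the contribution of coordinates near the band boundary using the $\err$ term — and by observing that \Cref{lem: net-lemma} already accounts for discretizing which clusters land in $B(S)$. A second point: the chaining tail, i.e., scales $h$ above $2\log(1/\eps)$, contributes nothing because at that precision the net is exact (the per-coordinate error $\alpha_h \err(q_i,S)$ is dominated by the additive $\cost(q_i,A)+\Delta_{q_i}$ terms, which are absorbed by $P_1,P_3$ of $\mathcal{E}$); I would make this truncation explicit so the sum over $h$ is finite with only a $\log(1/\eps)$ number of terms, keeping all polylog factors inside the slack.

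The main obstacle, as the paper itself flags, is obtaining the tight variance bounds $\sigma_h^2$ — particularly \Cref{lem: stable-var}, which requires the structural fact (\Cref{lem: cost-increase}) that in a $\beta$-stable instance two interacting clusters force a cost blow-up of $\approx \beta k$, and then converting that into a $1/(\beta k_{B(S)})$ saving in variance that only materializes for $h > (\log k)/2$. Reconciling the two different variance regimes (the $\mathcal{E}$-based $O(2^{-2h}k)$ bound for small $h$ versus the interaction-based bound for large $h$) so that the two pieces of the chaining sum each come out to $\sqrt{k/(\beta m)}$ is the delicate quantitative step; everything else is bookkeeping over the $\poly\log$ many scales $(b,t,r,h)$, which is exactly why we budgeted $\Theta(\log^3(k\eps^{-1}))$-factor slack in $\eps$ at the start.
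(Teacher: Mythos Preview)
Your proposal is correct and follows essentially the same chaining approach as the paper. Two organizational points to note: the paper separates out an explicit coarsest term $X^{S,\init}$ (with $u^{S,0}_i=\cost(a_j,S)$, not a net vector) and bounds it by a direct max-over-clusters argument (\Cref{lem:Xinit}) rather than as the base of the Dudley sum, and in the detailed proof the variance/net case split is on the center class $r$ (whether $2^r\geq 2k$) rather than on the scale $h$---your $h$-threshold at $(\log k)/2$ follows the informal overview, and both routes give the same final rate.
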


  Before proving \Cref{lem: omega-event-e}, let us see first how the above lemmas imply \Cref{eq: inter-xs}. By the law of total expectation and as $\Pr[\mathcal{E}] \geq (1-\eps^3/k^3)$  by \Cref{lem:E}, we have
\begin{align*}
    \E_{\Omega} \E_g \sup_{S \inSR} \left\lvert X^S\right\rvert &\lesssim \eps \Pr[\mathcal{E}] + \eps^{-2} \Pr[\overline{\mathcal{E}}]  
    \lesssim \eps \cdot 1 + \eps^{-2} \cdot \eps^{3}/k^3 \lesssim \eps,
\end{align*}
as desired. 

The next several sections will be devoted to proving \Cref{lem: omega-event-e}. Henceforth, we fix a set  $\Omega = \{q_1, \ldots, q_m\}$ satisfying $\mathcal{E}$; thus $u^S(\Omega)$ is deterministic and the randomness of $X^{S}(\Omega,g)$ is only due to $g$. To avoid notational clutter, when clear from context, we will write $u^S$ to mean $u^S(\Omega)$ and $X^S$ to mean $X^S(\Omega,g)$.

\subsection{The Chaining Argument}\label{sec: chaining}
We now use a chaining argument to prove \Cref{lem: omega-event-e}.

First, we express each cost vector $u^{S} \in M$ as a telescoping sum of differences of net vectors that approximate it. To do this, we will need the following notation.

 For an integer $h \geq 1$ and $S \in \mathcal{S}(r)$, we define $u^{S,h} \in \R^{\csize}$ to be the net vector from a $(2^{-h}, \mathcal{S}(r))$-net (as defined in \Cref{def: clustering-nets}) that approximates $\uvec{S}$. We also define $\uvec{S,0} \in \R^{\csize}$ as follows: for each  $i \in [m]$, if $q_i$ 
 is from a cluster $C_j \in B(S)$ then  $u^{S,0}_i = \cost(a_j,S)$; else $\uq{S,0}{i} = 0$.

With the above notation, we now write: 
\begin{align} \label{eq: chain}
    \uvec{S} = \uvec{S,0} + \sum_{h = 1}^{\hmax}(\uvec{S,h} - \uvec{S,h-1}) + (\uvec{S} - \uvec{S,\hmax}), 
\end{align} 
where $\hmax =  \ceil{2\log_2(\eps^{-1})}$. The vector $\uvec{S,0}$  is an  ``extremely coarse'' approximation of $\uvec{S}$. The next $\hmax$ summands are differences of net vectors at finer scales. The last summand takes the final step to reach $\uvec{S}$. Using \eqref{eq: chain}, we decompose the random variable $X^S$ (defined in \eqref{def:X}) as follows:  
\begin{align}\label{eq: xs-sum}
    X^S(\Omega,g) = X^{S,\init}(\Omega,g) + \sum_{h = 1}^{\hmax} X^{S,h}(\Omega,g)+X^{S,\fin}(\Omega,g),
\end{align} where we define the random variables
\begin{align}\label{eq: x-init-fin}
    X^{S,\init}(\Omega,g) {}:={} \frac{\sum_{i \in [m]} g_i w_{q_i} \uq{S,0}{i}}{\cost(P,S) } \quad\text{and}\quad  X^{S,\fin}(\Omega,g) := \frac{\sum_{i \in [m]} g_i w_{q_i} (\uq{S}{i} - \uq{S,\hmax}{i})}{\cost(P,S)},
\end{align}
\begin{align}\label{eq: xsh}
   \text{For } 1 \leq h \leq \hmax,  \qquad X^{S,h}(\Omega,g) := \frac{\sum_{i \in [m]}g_i w_{q_i} (\uq{S,h}{i} - \uq{S,h-1}{i})}{\cost(P,S)}.\qquad \qquad
\end{align}
\noindent It follows from \eqref{eq: xs-sum}  and the triangle inequality that,
\[
    \E_{g} \sup_{S \inSR} |X^S(\Omega,g)| \leq \E_{ g} \sup_{S \inSR}|X^{S,\init}(\Omega,g)|+\sum_{h = 1}^ {\hmax} \E_{g} \sup_{S\inSR} |X^{S,h}(\Omega,g)|+\E_{g} \sup_{S\inSR}|X^{S ,\fin}(\Omega,g)|.
\]
Thus, it suffices to bound each term on the right-hand side of the equation above.

\begin{remark}
     We emphasize that the random variables $X^S, X^{S,\init}, X^{S,h}$, and $X^{S,\fin}$ are functions all functions of $\Omega$ and $g$. But since we have fixed an $\Omega$ (that satisfies $\mathcal{E}$), their randomness is entirely due to $g$. Henceforth, we suppress the arguments $\Omega$ and $g$ while writing these random variables.
\end{remark}

The following lemmas bound the suprema of the Gaussian processes $X^{S,\init}$ and $X^{S,\fin}$. 

\begin{lemma}
\label{lem:Xinit}
For any coreset $\Omega$ that satisfies event $\mathcal{E}$ with size at least $\Omega(k \eps^{-2} \log k) $ we have $\E_{g} \sup_{S \inSR} |X^{S,\init}| \leq \eps/6.
$
\end{lemma}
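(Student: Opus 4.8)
The plan is to bound the Gaussian process $\sup_{S} |X^{S,\init}|$ by a direct union bound over a net, since the "vector" $\uvec{S,0}$ is extremely coarse: its $i$-th coordinate only depends on which cluster $C_j \in B(S)$ contains $q_i$ and on the single scalar $\cost(a_j,S)$. First I would observe that $X^{S,\init}$ is, for fixed $\Omega$, a mean-zero Gaussian with variance
\[
\var(X^{S,\init}) \;=\; \frac{\sum_{i\in[m]} w_{q_i}^2 (\uq{S,0}{i})^2}{\cost(P,S)^2}
\;=\; \frac{\sum_{j : C_j \in B(S)} \cost(a_j,S)^2 \sum_{q_i \in \Omega\cap C_j} w_{q_i}^2}{\cost(P,S)^2}.
\]
Using event $\mathcal{E}$ (cluster size preservation $P_1$ and the ring bound $P_2$, which control how much mass and how much squared weight any single cluster can contribute — each sampled point in $C_j$ has weight $w_{q_i} \lesssim |C_j|/(m\cdot \mu\text{-mass of its ring})$, so $\sum_{q_i\in\Omega\cap C_j} w_{q_i}^2$ is dominated by $|C_j|\cdot\max_i w_{q_i}$, which in turn is controlled by the sensitivity-like lower bound on $\mu$ built into \eqref{defn:mu}), I would show each cluster $C_j\in B(S)$ contributes at most $O\!\big(\tfrac{1}{m}\,|C_j|\cdot \Delta_j\cdot\cost(a_j,S)\big)$-type terms, and since $C_j$ is close with type $t$ we have $\cost(a_j,S)\le 2^t\Delta_j$ and $|C_j|\Delta_j = \cost(C_j,A) \le \cost(C_j,S)$ up to constants. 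Summing over the $k_{B(S)} \le k$ clusters in $B(S)$ and normalizing by $\cost(P,S)^2$ gives $\var(X^{S,\init}) \lesssim \tfrac{1}{m}\cdot\tfrac{2^t k}{?}$; the precise bookkeeping should yield $\var(X^{S,\init}) \lesssim \tfrac{k}{m}$ (or better, $\tfrac{2^t k_{B(S)}}{m}$ with the stronger stability accounting), using that $\sum_{C_j\in B(S)}\cost(C_j,S) \le \cost(P,S)$.

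Next I would set up the union bound. Since $\uvec{S,0}$ takes only finitely many "shapes" — it is determined by assigning to each of the $\le k$ clusters in $B(S)$ a value $\cost(a_j,S)$, and for the initial coarse scale it suffices to discretize these values geometrically into $O(\log(k\eps^{-1}))$ buckets and to discretize the clustering-pattern of $B(S)$ — the relevant net has size $\exp(O(k\log(k\eps^{-1})))$; alternatively one invokes \Cref{lem: net-lemma} at a constant scale $\alpha = \Theta(1)$, giving $\log|M_{\Theta(1)}| = O(\min(2^r, 2^t k)\log(k\eps^{-1})) \le O(k\log(k\eps^{-1}))$. Then by a standard Gaussian maximal inequality,
\[
\E_g \sup_{S\inSR} |X^{S,\init}| \;\lesssim\; \max_S \sqrt{\var(X^{S,\init})}\cdot\sqrt{\log |M_{\Theta(1)}|} \;\lesssim\; \sqrt{\frac{k}{m}}\cdot\sqrt{k\log(k\eps^{-1})} \;=\; \sqrt{\frac{k^2\log(k\eps^{-1})}{m}},
\]
which is at most $\eps/6$ once $m = \Omega(k^2\eps^{-2}\log(k\eps^{-1}))$. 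This looks weaker than the claimed $m = \Omega(k\eps^{-2}\log k)$, so the real work is to be more careful: one must use that at the coarsest scale the net is much smaller, namely $\log|M| = O(k\log(k\eps^{-1}))$ while simultaneously the variance at this scale, measured against the $\err$-normalization, is only $O(1/m)$ per effective degree of freedom — concretely, one should bound $\sup_S|X^{S,\init}|$ not by (variance)$\times$(net size) bluntly but by noting $X^{S,\init}$ is a sum over clusters and apply a union bound cluster-by-cluster (there are $\le k$ "active" clusters, each contributing a sub-Gaussian term with the right normalization), so the $\sqrt{\log}$ factor multiplies $\sqrt{1/m}$ per cluster and the $k$ clusters' contributions add to give $\sqrt{k/m}\cdot\sqrt{\log(k\eps^{-1})}$, hence $m=\Omega(k\eps^{-2}\log(k\eps^{-1}))$ suffices.

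The main obstacle I anticipate is precisely this normalization/counting tension: naively bounding the variance of $X^{S,\init}$ by $k/m$ and the net by $\exp(O(k))$ loses a factor of $k$ relative to the target. The fix is to exploit that $\uvec{S,0}$ is constant on each cluster, so $X^{S,\init}$ decomposes as $\sum_{C_j\in B(S)}$ of independent Gaussians $Z_j$ with $\sum_j \var(Z_j) \lesssim k/m$ after normalization; by a Bernstein/Gaussian-concentration argument the supremum over the net of such a sum is controlled by $\sqrt{(\sum_j\var Z_j)\cdot \log|\text{net}|}$ with $\log|\text{net}| = O(k\log(k\eps^{-1}))$ only if the per-cluster values are the only free parameters — and indeed after conditioning on $\mathcal{E}$ the partition of $\Omega$ into clusters is fixed, so the only freedom is the $\le k$ scalars $\cost(a_j,S)$, each living in $O(\log(k\eps^{-1}))$ buckets, giving $\log|\text{net}| = O(k\log\log(\cdot) + k\log(k\eps^{-1})) = O(k\log(k\eps^{-1}))$. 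Carrying the $\err$-based normalization through this bookkeeping carefully — rather than the crude $\cost(P,S)$ normalization — is what recovers the stated $m = \Omega(k\eps^{-2}\log k)$ bound, and this careful accounting is where I expect essentially all the effort to go.
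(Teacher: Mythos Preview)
Your proposal has a genuine gap. You correctly identify that a direct variance bound of $\var(X^{S,\init})\lesssim k/m$ combined with a net of size $\exp(O(k\log(k\eps^{-1})))$ only yields $m=\Omega(k^2\eps^{-2}\log(k\eps^{-1}))$, which is off by a factor of $k$. Your proposed fix---``cluster-by-cluster union bound'' with the $\le k$ scalars $\cost(a_j,S)$ discretized---does not close the gap: you are still taking a union bound over $\exp(\Theta(k))$ configurations against a total variance of $\Theta(k/m)$, and no amount of bookkeeping with the $\err$-normalization changes that product.

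The paper's argument avoids nets entirely for $X^{S,\init}$, via a trick you are missing. Since $u_i^{S,0}=\cost(a_j,S)$ is constant on each cluster $C_j\in B(S)$, one bounds numerator and denominator of $|X^{S,\init}|$ separately and then applies a weighted-average-$\le$-max inequality:
\[
|X^{S,\init}|\;\lesssim\;\frac{\sum_{C_j\in B(S)}\cost(a_j,S)\,\bigl|\sum_{q_i\in C_j\cap\Omega}g_i w_{q_i}\bigr|}{\sum_{C_j\in B(S)}|C_j|\,\cost(a_j,S)}\;\le\;\max_{j\in[k]}\frac{\bigl|\sum_{q_i\in C_j\cap\Omega}g_i w_{q_i}\bigr|}{|C_j|}.
\]
The right-hand side is \emph{independent of $S$}: the values $\cost(a_j,S)$ have cancelled, and the remaining quantity is simply the maximum of $k$ fixed Gaussians, one per cluster of the approximate solution $A$. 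Under event $\mathcal{E}$ (using $P_1$ and the weight bound $w_q\lesssim k|C_j|/m$), each such Gaussian has variance $\lesssim k/m$, so by the standard max-of-Gaussians bound $\E_g\sup_S|X^{S,\init}|\lesssim\sqrt{(k\log k)/m}$, which gives the claimed $m=\Omega(k\eps^{-2}\log k)$. The point is that the coarse scale is so coarse that the $S$-dependence disappears after the ratio step---no net is needed at all.
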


\begin{lemma}
\label{lem:Xfin}For any coreset $\Omega$ that satisfies event $\mathcal{E}$ we have 
   $\E_{ g} \sup_{S\inSR}|X^{S,\fin}| \leq \eps/6$.
\end{lemma}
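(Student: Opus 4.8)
\textbf{Proof plan for \Cref{lem:Xfin}.}

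The plan is to bound $\E_{g}\sup_{S}|X^{S,\fin}|$ by exploiting that the ``tail'' vector $u^S - u^{S,\hmax}$ is extremely small entrywise: since $u^{S,\hmax}$ comes from a $(2^{-\hmax},\mathcal{S}(r))$-net with $\hmax = \ceil{2\log_2(\eps^{-1})}$, each coordinate satisfies $|u^S_i - u^{S,\hmax}_i| \le 2^{-\hmax}\err(q_i,S) \lesssim \eps^2 \cdot \err(q_i,S)$ whenever $q_i \in B(S)$, and is $0$ otherwise. Thus the vector $u^S - u^{S,\hmax}$ has $\ell_\infty$-norm (after reweighting by $w_{q_i}$ and normalizing by $\cost(P,S)$) that is tiny. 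The first step is therefore to pass from the supremum of the Gaussian process to a deterministic bound: either by a crude union bound over a single fine net together with a Gaussian tail bound, or — more cleanly — by observing that for a fixed $S$ the random variable $X^{S,\fin}$ is a mean-zero Gaussian whose variance I will bound, and then absorbing the supremum using the net-cardinality bound from \Cref{lem: net-lemma} at scale $\alpha = 2^{-\hmax}$.

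The key computation is to bound $\var(X^{S,\fin}) = \frac{1}{\cost(P,S)^2}\sum_i w_{q_i}^2 (u^S_i - u^{S,\hmax}_i)^2 \le \frac{2^{-2\hmax}}{\cost(P,S)^2}\sum_{q_i \in \Omega \cap B(S)} w_{q_i}^2\,\err(q_i,S)^2$. Here I would use the same estimates that drive the variance bounds elsewhere in the paper (\Cref{lem: worst-case-var}-style arguments): expanding $\err(q_i,S)^2$ into its four terms and using that, since $\Omega$ satisfies event $\mathcal{E}$ (in particular the ring-size and cluster-cost preservation properties $P_2, P_3$), the weighted sums $\sum_{q_i} w_{q_i}^2 \cost(q_i,S)\cost(q_i,A)$ etc. are controlled in terms of $\cost(P,S)\cdot\cost(P,A)/m$ and similar quantities. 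The point is that the ``raw'' (un-chained) variance of the full estimator $X^S$ is at most $O(\eps^{-2})$ by \Cref{lem: worst-case-omega} with matching structural bounds, and the extra $2^{-2\hmax} \lesssim \eps^4$ factor crushes this down to $O(\eps^2)$ per coordinate class, so that $\var(X^{S,\fin}) \lesssim \eps^4 \cdot \eps^{-2}/m \cdot (\text{net-dependent factor})$; for $m$ of the required size this is far smaller than $\eps^2/\log|M_{2^{-\hmax}}|$.

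Combining, by the standard maximal inequality for Gaussian processes, $\E_g \sup_S |X^{S,\fin}| \lesssim \sqrt{\var_{\max} \cdot \log|M_{2^{-\hmax}}|}$ where $\var_{\max}$ is the uniform variance bound just obtained and $\log|M_{2^{-\hmax}}| = O(\min(2^r + k\,2^{2\hmax}, 2^t k\, 2^{2\hmax})\log(k\eps^{-1}2^{\hmax}))$ by \Cref{lem: net-lemma}. Since $\var_{\max}$ carries a factor $2^{-2\hmax}$ from the entrywise bound and $\log|M_{2^{-\hmax}}|$ carries only a factor $2^{2\hmax}$ (times logs), these cancel, leaving a quantity of order $\sqrt{k/m}$ up to polylog factors, which is at most $\eps/6$ for the stipulated coreset size. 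The main obstacle I anticipate is the bookkeeping in the variance estimate: carefully showing that the reweighted sum $\sum_{q_i \in \Omega \cap B(S)} w_{q_i}^2 \err(q_i,S)^2$ is bounded by $O(\cost(P,S)^2 \cdot \text{poly}(\eps^{-1})/m)$ uniformly over $S \in \mathcal{S}(r)$ using only event $\mathcal{E}$ — this is exactly where one must invoke the ring decomposition and the fact that sensitivity sampling puts enough mass on high-cost points that $w_{q_i}\,\cost(q_i,A) \lesssim \cost(P,A)/m$ and similar bounds hold coordinate-wise.
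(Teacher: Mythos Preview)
Your variance calculation is on the right track --- indeed the paper establishes exactly the bound
\[
\sum_{q_i \in \Omega \cap B(S)} w_{q_i}^2 (u_i^S - u_i^{S,\hmax})^2 \;\lesssim\; 2^{-2\hmax}\cdot \frac{\cost(P,A)\cost(P,S)}{m} \;\lesssim\; \frac{\eps^2}{m}\cdot \cost(P,S)^2,
\]
using the weight bounds from \Cref{lem: weight-bound} together with event~$\mathcal{E}$. However, the step where you invoke the Gaussian maximal inequality
\[
\E_g \sup_{S}|X^{S,\fin}| \;\lesssim\; \sqrt{\var_{\max}\cdot \log|M_{2^{-\hmax}}|}
\]
does not go through as stated. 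The issue is that $X^{S,\fin}$ is a function of the \emph{true} cost vector $u^S$, not only of its net approximation $u^{S,\hmax}$. As $S$ varies over $\mathcal{S}(r)$ the vectors $u^S$ take uncountably many values, so $\sup_S|X^{S,\fin}|$ is a supremum over infinitely many Gaussians. Replacing $u^S$ by a net vector at scale $2^{-\hmax}$ would collapse $u^S - u^{S,\hmax}$ to zero; replacing it by a net vector at some finer scale $2^{-\hmax-1}$ leaves a new residual of the same form. Either way, there is no finite net $N$ for which the maximal inequality with $\log N$ captures $\sup_S|X^{S,\fin}|$ --- this is precisely the obstruction that forces one to terminate the chaining.

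The paper avoids this entirely via Cauchy--Schwarz: writing $z^S_i := w_{q_i}(u^S_i - u^{S,\hmax}_i)/\cost(P,S)$ so that $X^{S,\fin} = \langle g, z^S\rangle$, one has $|X^{S,\fin}| \le \|g\|\cdot\|z^S\|$ for every $S$ simultaneously. Your variance bound is exactly the statement $\|z^S\|^2 \lesssim \eps^2/m$ uniformly in $S$, whence
\[
\E_g \sup_S |X^{S,\fin}| \;\le\; \E_g\bigl[\|g\|\bigr]\cdot \sup_S\|z^S\| \;\lesssim\; \sqrt{m}\cdot \frac{\eps}{\sqrt{m}} \;=\; \eps.
\]
No union bound is needed; the net $M_{2^{-\hmax}}$ plays no role at this step. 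This is the standard trick for controlling the tail of a generic chaining argument beyond the finest scale.
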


The proofs of these lemmas use basic arguments and are given in \Cref{sec:lemXinit} and \Cref{sec:lemXfin}, respectively. Handling the summand corresponding to $X^{S,h}$ requires much more work. 
This is accomplished by the following two lemmas, one for the case when the input is $\beta$-stable and the other for worst-case inputs. The proofs of these lemmas are given in \Cref{sec:lemXh}, \Cref{sec: variance-stable} and \Cref{sec:worst-case-variance}.

\begin{lemma}[Bound for Stable Inputs]
\label{lem: stable-xsh}
Let $P$ be a $\Omega(1)$-stable input and $h$ be an integer in $[\hmax]$. If the coreset size is at least $\tilde{\Omega}( k \eps^{-2})$ then we have, 
    $
        \E_{ g}\sup_{S \inSR} |X^{S,h}| \leq \eps/(6\hmax).
   $ 
\end{lemma}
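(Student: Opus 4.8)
The plan is to bound, for each fixed scale $h \in [\hmax]$, the quantity $\E_{g}\sup_{S\inSR}|X^{S,h}|$ by a standard Gaussian-process maximal inequality, namely $\E_g \sup_S |X^{S,h}| \lesssim \sigma_h \sqrt{\log |M_{2^{-h}}|}$, where $\sigma_h^2 := \sup_S \var(X^{S,h})$ is the worst-case variance of the scale-$h$ increment over the current center class $\mathcal{S}(r)$, and $|M_{2^{-h}}|$ is the size of the $(2^{-h},\mathcal{S}(r))$-net from \Cref{lem: net-lemma}. (More precisely one needs the chaining/Dudley-type bound, since the differences $u^{S,h}-u^{S,h-1}$ live between two nets, but the net-size bound for the finer net dominates.) Since $\uvec{S,h}-\uvec{S,h-1}$ has coordinates controlled by $2^{-h}\cdot\err(q_i,S)$ via \Cref{def: clustering-nets}, the variance of $X^{S,h}$ is at most $2^{-2h}$ times a weighted sum of $\err(q_i,S)^2/\cost(P,S)^2$ over the sampled points in $B(S)$, which is where the sensitivity-sampling weights and event $\mathcal{E}$ get used.

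First I would invoke the two key structural inputs. From the net lemma we have $\log|M_{2^{-h}}| = O\big(\min(2^r + k2^{2h},\, 2^t k 2^{2h})\cdot \polylog(k\eps^{-1})\big)$, and in the relevant regime this is $O(k\cdot k_{B(S)}\cdot 2^{2h})$ when $k_{B(S)}$ is small (using $2^r \lesssim k\cdot k_{B(S)}$) or $O(k\, 2^t\, 2^{2h})$ otherwise. Second, for the variance I would split on the value of $h$ relative to $(\log k)/2$, exactly as flagged in the technical overview:
\begin{itemize}
\item For $h \leq (\log k)/2$, use the crude worst-case variance bound $\sigma_h^2 \lesssim 2^{-2h}\cdot k/m$ (the sensitivity-sampling analogue of the Feldman–Langberg / \cite{CSS21} bound), established in the variance lemma \Cref{lem: worst-case-var}. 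Then $\E_g\sup_S|X^{S,h}| \lesssim 2^{-h}\sqrt{(k/m)\cdot k_{B(S)}2^{2h}}\cdot\polylog \lesssim \sqrt{k\,k_{B(S)}/m}\cdot\polylog$, and since there are only $O(\log k)$ such scales, and $k_{B(S)}\le k$, in the stable regime one must additionally argue $k_{B(S)}$ is small here — but for small $h$ we are not yet in the "interacting" regime, so this sum telescopes to $O(\sqrt{k/m})\cdot\polylog$.
\item For $h > (\log k)/2$, this is where stability is essential. Here I would use \Cref{lem: cost-increase} (relating the cost of stable inputs to interaction): if a cluster in $B(S)$ interacts with a center of $S$, then since $P$ is $\Omega(1)$-stable the cost of the relevant cluster in $S$ must blow up by a factor $\Omega(\beta k)$ over its cost in $A$ — equivalently $2^t \gtrsim \beta k$ whenever $N(S)\ge 1$ and hence whenever $2^r \ge 1$. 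Feeding $2^t \gtrsim \beta k$ into the variance bound $\sigma_h^2 \lesssim 2^{-2h}\cdot \frac{1}{m\beta k_t}$ (from \Cref{lem: stable-var}, $k_t := k_{B(S)}$) and pairing it with the net size $\log|M_{2^{-h}}| \lesssim k\, k_t\, 2^{2h}\cdot\polylog$, one gets
\[
\E_g\sup_S|X^{S,h}| \;\lesssim\; 2^{-h}\sqrt{\frac{1}{m\beta k_t}\cdot k\,k_t\,2^{2h}}\cdot\polylog \;=\; 2^{-h}\cdot\sqrt{\frac{k}{m\beta}}\cdot\polylog.
\]
\end{itemize}
Summing this geometric series over $h > (\log k)/2$ gives $O\!\big(\tfrac{1}{\sqrt{k}}\sqrt{\tfrac{k}{m\beta}}\big)\cdot\polylog = O\!\big(\tfrac{1}{\sqrt{m\beta}}\big)\cdot\polylog$, which is even smaller. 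Adding the two regimes, $\E_g\sup_S|X^{S,h}|$ summed over all $h$ is $\lesssim \sqrt{k/(m\beta)}\cdot\polylog(k\eps^{-1})$, and with $m = \tilde\Omega(k\eps^{-2})$ and $\beta = \Omega(1)$ this is at most $\eps/\polylog$; distributing over the $\hmax = O(\log\eps^{-1})$ scales yields the per-scale bound $\eps/(6\hmax)$ claimed in the statement. (Strictly, to get the per-$h$ bound one keeps the $h$-dependent factor $2^{-h}$ or $2^{-(\hmax-h)}$ type decay explicit rather than summing first; I would phrase the final step so that each $X^{S,h}$ term is individually $\le \eps/(6\hmax)$, which is what the downstream telescoping in \Cref{eq: xs-sum} needs.)

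The main obstacle, and the genuinely new content, is the case $h > (\log k)/2$: establishing that stability forces $2^t = \Omega(\beta k)$ for any center class with $r \ge 0$ and then correctly propagating this into a variance bound that saves a factor of $\beta k$. The subtlety is that \Cref{lem: cost-increase} will only give this blow-up "on average" or under side conditions (the footnote in the overview warns that the clean statement "one cluster pays $\beta k$ more" makes simplifying assumptions), so the real work is handling the boundary cases — clusters that interact only weakly, or center classes with $2^r$ moderate — where one must interpolate between the $\min(2^r + k2^{2h}, 2^t k 2^{2h})$ branches of the net bound and the two variance branches of \Cref{lem: stable-var,lem: worst-case-var} to verify the product $\sigma_h^2\cdot\log|M_{2^{-h}}|$ is always $\lesssim 2^{-2h}\cdot k/(m\beta)\cdot\polylog$. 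I would organize this as a short case analysis on $(2^r \text{ vs } k k_t)$ and $(2^t \text{ vs } k)$, checking the target inequality in each cell of the resulting table, mirroring the tradeoff table in \Cref{table:core}.
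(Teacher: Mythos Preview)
Your high-level plan---bound $\E_g\sup_S|X^{S,h}|$ by $\sigma_h\sqrt{\log|M_{2^{-h}}|}$ and trade variance against net size---is right, and the ingredients you name (\Cref{lem: net-lemma}, \Cref{lem: stable-var}, the stability cost-increase lemma) are the correct ones. But your organization and several computations diverge from the paper in ways that matter.

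\textbf{The case split.} You split on $h \lessgtr (\log k)/2$, following the heuristic in the technical overview. The paper's actual proof does \emph{not} split on $h$ at all; it splits on the interaction level $r$, namely $2^r \geq 2k$ versus $2^r < 2k$. In the first case it uses \Cref{lem: stable-var} to get $\var[X^{S,h}] \lesssim k\,2^{-r-2h}/(m\beta)$ (note the dependence on $N(S)\approx 2^r$, not on $k_t$) and pairs it with the net bound $\log|M_{2^{-h}}| \lesssim 2^{r+2h}\log(k\eps^{-1})$, so the $2^r$ factors cancel and the product is $\sqrt{k\log(k\eps^{-1})/(m\beta)}$, uniformly in $h$. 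In the second case it uses \Cref{lem: easy-var} (not \Cref{lem: worst-case-var}) to get $\var[X^{S,h}] \lesssim 2^{-2h}/m$ against the net $\log|M_{2^{-h}}| \lesssim k\,2^{2h}\log(k\eps^{-1})$, again giving $\sqrt{k\log(k\eps^{-1})/m}$. No $h$-dependent phase transition, no geometric sum, no ``$2^t \gtrsim \beta k$'' claim.

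\textbf{Where your version breaks.} In your $h > (\log k)/2$ case you write $\E_g\sup_S|X^{S,h}| \lesssim 2^{-h}\sqrt{k/(m\beta)}$, but the $2^{-h}$ is spurious: with $\sigma_h^2 \lesssim 2^{-2h}/(m\beta k_t)$ and $\log|M| \lesssim k k_t 2^{2h}$, the $2^{\pm 2h}$ cancel and you get $\sqrt{k/(m\beta)}$ with no decay in $h$. Your ``geometric series'' therefore doesn't telescope; the sum over $h$ just picks up a $\hmax$ factor (which is fine, but it is not the saving you claimed). More importantly, your stable variance bound $\sigma_h^2 \lesssim 2^{-2h}/(m\beta k_t)$ is not what \Cref{lem: stable-var} says---that lemma gives $\sigma_h^2 \lesssim 2^{-2h}k/(m N(S)\beta)$ with $N(S)\approx 2^r$, and requires $N(S)\ge 2k$. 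Your claim that ``stability forces $2^t \gtrsim \beta k$ whenever $N(S)\ge 1$'' is not established anywhere and is not how the paper uses stability; the paper routes through \Cref{lem: cost-increase-2}, which lower-bounds $\cost(P,S)$ in terms of $N(S)/k$, not $2^t$. Finally, in your small-$h$ case you cite \Cref{lem: worst-case-var} for $\sigma_h^2 \lesssim 2^{-2h}k/m$, but the relevant (and stronger) bound $\sigma_h^2 \lesssim 2^{-2h}/m$ is \Cref{lem: easy-var}, and there is no need to argue $k_{B(S)}$ is small. The paper's $r$-split avoids all of this case-by-case ``interpolation table'' you anticipate at the end.
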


\begin{lemma}[Worst Case Bound]\label{lem: worst-case-xsh}
Let $P$ be any input and $h$ be an integer in $[\hmax]$. If the coreset size is at least $\tilde{\Omega}( k \eps^{-2} \cdot \min(\sqrt{k}, \eps^{-2}))$ then we have, 
    $
        \E_{ g}\sup_{S \inSR} |X^{S,h}| \leq \eps/(6\hmax).
   $ 
\end{lemma}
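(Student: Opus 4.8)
The plan is to bound the Gaussian process $\sup_{S\in\mathcal{S}(r)}|X^{S,h}|$ at a fixed scale $h$ by a union bound over the cost-vector net $M_{2^{-h}}$, controlling the two quantities that enter Dudley-type estimates: the \emph{variance} $\sigma^2$ of a single increment $X^{S,h}$ (conditioned on the fixed $\Omega$ satisfying $\mathcal{E}$), and the \emph{net size} $\log|M_{2^{-h}}|$ from \Cref{lem: net-lemma}. Since $X^{S,h}=\sum_i g_i w_{q_i}(u^{S,h}_i-u^{S,h-1}_i)/\cost(P,S)$ is a Gaussian with variance $\sum_i w_{q_i}^2(u^{S,h}_i-u^{S,h-1}_i)^2/\cost(P,S)^2$, and the net property (\Cref{def: clustering-nets}) guarantees $|u^{S,h}_i-u^{S,h-1}_i|\lesssim 2^{-h}\,\err(q_i,S)$, the first task is to show $\sum_i w_{q_i}^2\,\err(q_i,S)^2 / \cost(P,S)^2 \lesssim \sigma_h^2$ for an appropriate $\sigma_h^2$. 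This is exactly where the worst-case variance lemma \Cref{lem: worst-case-var} (via \Cref{lem: structured-cost}) enters: it should give $\sigma_h^2 \lesssim 2^{-2h}\cdot\min\!\bigl(k/(k_{B(S)}2^t),1\bigr)$ (the entry in \Cref{table:core}), using that each point's $\err$-contribution is controlled by its sampling probability $\mu(q_i)$ — the four terms of $\mu$ in \eqref{defn:mu} are precisely designed so that $w_{q_i}^2\err(q_i,S)^2$ telescopes against $\cost(C_j,A)$, $\cost(P,A)$ etc., and event $\mathcal{E}$ (ring-size preservation $P_2$) prevents oversampling of expensive rings.

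With the variance bound in hand, the second step is the standard maximal inequality: for a finite set $N$ of Gaussians each with variance $\le \sigma_h^2$, $\E\sup|X|\lesssim \sigma_h\sqrt{\log|N|}$. Plugging in $\log|M_{2^{-h}}|\lesssim \min(2^r+k2^{2h},\,2^t k 2^{2h})\cdot\log(k\eps^{-1})$ from \Cref{lem: net-lemma}, and noting $2^r\le N(S)\le k\cdot k_{B(S)}$, we get an increment bound of roughly
\[
\E_g\sup_{S}|X^{S,h}|\;\lesssim\; 2^{-h}\sqrt{\tfrac{1}{m}\cdot\min\!\Bigl(\tfrac{k}{k_{B(S)}2^t},1\Bigr)\cdot k\cdot\min(k_{B(S)},2^t)\cdot 2^{2h}\cdot\polylog}.
\]
The key cancellation is $\min(k/(k_{B(S)}2^t),1)\cdot\min(k_{B(S)},2^t)\le\min(k/k_{B(S)},k_{B(S)},2^t)\le\min(\sqrt{k},2^t)\le\min(\sqrt{k},\eps^{-2})$, so the $2^{-h}$ and $2^{2h}$ collapse to a single $2^{-h}$ (wait — they don't fully collapse: $2^{-h}\sqrt{2^{2h}}=1$, so the $h$-dependence vanishes, leaving $\lesssim \sqrt{k\min(\sqrt k,\eps^{-2})\polylog/m}$). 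Choosing $m=\tilde\Omega(k\eps^{-2}\min(\sqrt k,\eps^{-2}))$ makes each such term $\le \eps/(6\hmax)$ after absorbing the $\hmax=O(\log\eps^{-1})$ and the earlier $\log^3(k\eps^{-1})$ rescalings into the $\tilde\Omega$. One must also handle the two regimes of the net bound (the $2^r$-term vs. the $2^tk$-term) and the two regimes of the variance bound separately, taking whichever pairing is better — this case analysis mirrors the \Cref{table:core} row for Theorem~1 and is where the careful trade-off happens.

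The main obstacle I expect is proving the worst-case variance bound $\sigma_h^2\lesssim 2^{-2h}\min(k/(k_{B(S)}2^t),1)$ itself (\Cref{lem: worst-case-var}/\Cref{lem: structured-cost}) \emph{without} the group-sampling machinery: for group sampling the clusters in $B(S)$ come pre-sorted into groups with uniform weights, which makes bounding $\sum_i w_{q_i}^2\err(q_i,S)^2$ by a cost-balancing argument immediate, whereas here the $\mu(q_i)$-induced weights are heterogeneous and one must show directly that sensitivity sampling already ``sees'' the relevant structural properties — i.e. that $\cost(B(S),S)$ being large (which is forced when many clusters have type $t$ with $2^t$ large, via \Cref{lem: structured-cost} relating cost to the interaction number $N(S)\approx 2^r$) automatically drives the normalized variance down. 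Establishing that the four-term sampling distribution $\mu$ in \eqref{defn:mu} suffices to emulate the group-sampling variance accounting — in particular controlling the cross terms $\sqrt{\cost(q_i,S)\cost(q_i,A)}$ in $\err$ — is the crux, and it is what the paper flags as its ``main contribution'' for the worst-case part. Everything downstream (the Gaussian maximal inequality, summing over $h$, rescaling $\eps$) is routine once that variance estimate is secured.
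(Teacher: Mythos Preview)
Your proposal is correct and matches the paper's approach: combine the variance bounds (\Cref{lem: easy-var}, \Cref{lem: worst-case-var}) with the net bound (\Cref{lem: net-lemma}) via the Gaussian maximal inequality \eqref{eq: combine-net-variance}, and exploit the cancellation you describe.

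One point to make precise: the maximal inequality requires a \emph{uniform} variance bound over all $S\in\mathcal S(r)$ and a single net size, but $k_{B(S)}$ varies with $S$, so your displayed product $\min\bigl(\tfrac{k}{k_{B(S)}2^t},1\bigr)\cdot k\min(k_{B(S)},2^t)$ is not literally what one bounds. The paper fixes this exactly along the lines of your parenthetical ``noting $2^r\le N(S)\le k\cdot k_{B(S)}$'': \Cref{cor: worst-case-var} converts the $k_{B(S)}$-dependent variance into the uniform bound $\var[X^{S,h}]\lesssim 2^{-2h}\min(1,\,k^2 2^{-(t+r)})/m$. The case split you anticipate is then on $2^r\gtrless 2k$. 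For $2^r\ge 2k$ one has $2^r+k2^{2h}\lesssim 2^{r+2h}$, so $\log|M_{2^{-h}}|\lesssim\min(2^r,k2^t)\,2^{2h}$, and the product $\min(1,k^22^{-(t+r)})\cdot\min(2^r,k2^t)\le\min(k2^t,\,k^22^{-t})\le k\min(\sqrt k,\eps^{-2})$ recovers precisely your cancellation. For $2^r<2k$ the crude variance $2^{-2h}/m$ paired with the net $k2^{2h}$ already gives $\sqrt{k\log(k\eps^{-1})/m}$. So the parametrization should be by $r$, not $k_{B(S)}$, but the arithmetic is otherwise as you wrote.
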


Together with \eqref{eq:sym}, the above lemmas imply \cref{lem: goal-close}. 

The following section shows how a bound on the variance of $X^{S,h}$ can be traded off with the net size to bound the suprema of $X^{S,h}$.

\subsection{\texorpdfstring{Bounding the Supremum of the Gaussian Process $X^{S,h}$}{Bounding the Supremum of the Gaussian Process XSh}}
\label{sec:lemXh}

In this section, we give a generic upper bound on $\E_{g} \sup_S |X^{S,h}|$ in terms of the variance of $X^{S,h}$ and the size of the net at scale $2^{-h}$. 

First, observe that, for any $h$ satisfying $1 \leq h \leq \hmax$ we have,
\begin{align*}
 \sup_{S\inSR} |X^{S,h}| =  \sup_{S\inSR} \frac{\left\lvert\sum_{i \in [m]}g_i w_{q_i} (\uq{S,h}{i} - \uq{S,h-1}{i})\right\rvert}{\cost(P,S)} 
    \leq \sup_{\substack{(v^{h-1}, v^{h}) \in \\ M_{2^{-(h-1)}} \times M_{2^{-h}}}} \frac{\left\lvert \sum_{i \in [m]}g_i w_{q_i} (v_i^h - v_i^{h-1}) \right\rvert}{\cost(P,S)},
\end{align*}
where $M_{2^{-(h-1)}}$ and $M_{2^{-h}}$ are cost vectors nets at scales $2^{-(h-1)}$ and $2^{-h}$ respectively. While the first supremum was over infinitely many $S$, the final supremum is only over $|M_{2^{-(h-1)}}| \cdot |M_{2^{-h}}| \leq |M_{2^{-h}}|^2$ pairs of vectors. Thus, we have that $\sup_S |X^{S,h}|$ is the maximum of (the absolute value of) $|M_{2^{-h}}|^2$ Gaussians. If $\sigma_h^2$ is an upper bound on $\var[X^{S,h}]$ for $S \in \mathcal{S}(r)$, then we can bound the supremum of Gaussians using \Cref{fact: max-of-gaussians} to obtain:
\begin{align}\label{eq: combine-net-variance}
    \E_g \sup_{S\inSR} |X^{S,h}| \lesssim \sigma_h \sqrt{ \log |M_{2^{-h}}|}.
\end{align}

Therefore, our task reduces to finding an upper bound $\sigma_h^2$ on the variances of the Gaussians $X^{S,h}$ and combining them with the bounds on the net sizes given in \Cref{lem: net-lemma}. We do this for the stable inputs and worst-case inputs in  \Cref{sec: variance-stable} and \Cref{sec:worst-case-variance}, respectively.

\subsection{Bounding the Variance for Stable Inputs}\label{sec: variance-stable}

In the following lemma, we bound the variance of the random variable $X^{S,h}$.

\begin{lemma}[Variance Bound in Terms of the Cost]\label{lem: var-first}Let $S$ be any set of centers and $h\geq1$ be an integer. The random variable $X^{S,h}$ is a mean zero Gaussian with the following upper bound on its variance: 
$\var[X^{S,h}] \lesssim (2^{-2h} \cost(P,A))/(m \cost(P,S)).$
\end{lemma}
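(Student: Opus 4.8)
The random variable $X^{S,h}$ is, conditioned on the fixed $\Omega$, a linear combination $\sum_{i\in[m]} g_i\, w_{q_i}(u^{S,h}_i - u^{S,h-1}_i)/\cost(P,S)$ of independent standard Gaussians, so it is trivially mean-zero Gaussian, and its variance is
\[
\var[X^{S,h}] = \frac{1}{\cost(P,S)^2}\sum_{i\in[m]} w_{q_i}^2\,\bigl(u^{S,h}_i - u^{S,h-1}_i\bigr)^2.
\]
The whole task is therefore to bound $\sum_i w_{q_i}^2 (u^{S,h}_i - u^{S,h-1}_i)^2$ by $O(2^{-2h}\cost(P,A)\cdot\cost(P,S)/m)$. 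The first step is to control the per-coordinate difference $|u^{S,h}_i - u^{S,h-1}_i|$ using the net approximation guarantee of \Cref{def: clustering-nets}: both $u^{S,h}$ and $u^{S,h-1}$ approximate $u^S$ to within $2^{-h}\cdot\err(q_i,S)$ and $2^{-(h-1)}\cdot\err(q_i,S)$ respectively, so by the triangle inequality $|u^{S,h}_i - u^{S,h-1}_i| \le 3\cdot 2^{-h}\,\err(q_i,S)$ when $q_i\in B(S)$, and it is $0$ otherwise. Hence the sum is at most $9\cdot 2^{-2h}\sum_{q_i\in B(S)\cap\Omega} w_{q_i}^2\,\err(q_i,S)^2$.

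Next I would expand $\err(q_i,S)^2$. Since $\err(p,S) = \sqrt{\cost(p,S)\cost(p,A)} + \sqrt{\cost(p,S)\Delta_p} + \cost(p,A) + \Delta_p$, squaring and using $(\sum_{j=1}^4 a_j)^2 \le 4\sum a_j^2$ gives $\err(p,S)^2 \lesssim \cost(p,S)\cost(p,A) + \cost(p,S)\Delta_p + \cost(p,A)^2 + \Delta_p^2$. Now I need to handle the weights $w_{q_i} = 1/(m\,\mu(q_i))$. The key is that $\mu$ (defined in \eqref{defn:mu}) dominates, up to the factor $1/4$, each of the four terms in the bracket; in particular for $p\in C_j$ we have $\mu(p) \ge \frac14\cdot\frac{\cost(p,A)}{\cost(P,A)}$ and $\mu(p)\ge\frac14\cdot\frac{\Delta_p}{\cost(P,A)}$, so $w_p \le 4\cost(P,A)/(m\cost(p,A))$ and $w_p \le 4\cost(P,A)/(m\Delta_p)$. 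Using one factor of $w_{q_i}$ from $w_{q_i}^2$ to "cancel" against a cost term in $\err^2$ and keeping the other factor to be summed: e.g. $w_{q_i}^2\cost(q_i,S)\cost(q_i,A) \le w_{q_i}\cost(q_i,S)\cdot\frac{4\cost(P,A)}{m}$, and similarly for the $\Delta_p$ term using the $\Delta_p$-bound on $w_p$; for $w_{q_i}^2\cost(q_i,A)^2$ use both bounds to get $\lesssim \frac{\cost(P,A)^2}{m^2}$, and likewise $w_{q_i}^2\Delta_{q_i}^2 \lesssim \cost(P,A)^2/m^2$. Summing over $q_i\in B(S)\cap\Omega$: the first two types of terms give $\lesssim \frac{\cost(P,A)}{m}\sum_{q_i\in\Omega} w_{q_i}\cost(q_i,S) = \frac{\cost(P,A)}{m}\costom(P,S) \lesssim \frac{\cost(P,A)\cost(P,S)}{m}$ by \Cref{lem: event-e-coreset-cost}; the last two types give $\lesssim \frac{|B(S)\cap\Omega|}{m^2}\cost(P,A)^2 \le \frac{\cost(P,A)^2}{m}$, which is $\lesssim \frac{\cost(P,A)\cost(P,S)}{m}$ provided $\cost(P,A)\lesssim\cost(P,S)$ — and since $A$ is an $O(1)$-approximation and $S$ is a $k$-clustering, $\cost(P,S)\ge\opt\gtrsim\cost(P,A)$. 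Putting these together and dividing by $\cost(P,S)^2$ yields the claimed bound $\var[X^{S,h}]\lesssim 2^{-2h}\cost(P,A)/(m\cost(P,S))$.

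The main obstacle I anticipate is the bookkeeping around the weights: one has to be careful to split $w_{q_i}^2$ correctly so that exactly one factor of $w_{q_i}$ remains to be summed against a genuine $\cost(q_i,S)$ (converting $\sum_i w_{q_i}\cost(q_i,S)$ into $\costom(P,S)$), while the other factor is bounded by a term like $\cost(P,A)/(m\cdot\text{(cost term)})$ that cancels the matching cost term in $\err(q_i,S)^2$. The two "lower-order" terms $\cost(p,A)^2$ and $\Delta_p^2$ in $\err^2$ require using both weight bounds simultaneously and then invoking $|B(S)\cap\Omega|\le m$; care is needed that these really are dominated, which is where $\cost(P,S)\gtrsim\cost(P,A)$ (equivalently $\cost(P,S)\ge\opt$, valid since any $k$-center solution costs at least the optimum, and $A$ is a constant-factor approximation) is used. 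None of these steps is deep, but the constant-chasing is the part most likely to hide a slip, so I would set up the four term-types explicitly and dispatch them one at a time.
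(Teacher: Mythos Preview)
Your proposal is correct and follows essentially the same approach as the paper: compute the variance exactly, bound $|u^{S,h}_i-u^{S,h-1}_i|\lesssim 2^{-h}\err(q_i,S)$ via the net guarantee, expand $\err^2$, peel off one factor of $w_{q_i}$ using the pointwise weight bounds $w_q\cost(q,A)\lesssim\cost(P,A)/m$ and $w_q\Delta_q\lesssim\cost(P,A)/m$, and then sum the remaining $w_q\cost(q,S)$, $w_q\cost(q,A)$, $w_q\Delta_q$ terms using event~$\mathcal{E}$ and the fact that $A$ is an $O(1)$-approximation. The only cosmetic difference is that for the two ``lower-order'' terms $\cost(q,A)^2,\Delta_q^2$ you bound $w_q^2\cost(q,A)^2\lesssim\cost(P,A)^2/m^2$ and sum over $|B(S)\cap\Omega|\le m$, whereas the paper factors these as $(w_q\cost(q,A))\cdot(w_q\cost(q,A))\lesssim(\cost(P,A)/m)\,w_q\cost(q,A)$ and then uses $\sum_{q\in\Omega}w_q\cost(q,A)\lesssim\cost(P,A)$ from event~$\mathcal{E}$; both routes are valid and yield the same bound.
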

\begin{proof}
Recall that $X^{S,h}$ is given by the following expression:
\begin{align*}
X^{S,h} :={}& \frac{\sum_{i \in [m]}g_i w_{q_i} (\uq{S,h}{i} - \uq{S,h-1}{i})}{\cost(P,S)}.
\end{align*}
Since $X^{S,h}$ is a sum of independent centered Gaussians (using \Cref{fact: gaussian}), it is also a centered Gaussian with variance: 
\begin{align}\label{eq: var-0}
    &\var[X^{S,h}] = (1/\cost(P,S))^2 \cdot \sum_{i \in [m]}  w_{q_i}^2 (u_i^{S, h} -u_i^{S, h-1})^2 .
\end{align}
Recall that the net vector $u^{S,h}$ is an $m$-dimensional vector whose entries approximate the cost of points in $B(S) \cap \Omega$ with respect to $S$ (see \Cref{def: clustering-nets}) where $B(S)$ is the set of clusters in band $b$ and type $t$. It follows from the definition of $u^{S,h}$ that,
\begin{enumerate}[(i)]
    \item If $q_i \in B(S)$ then, 
    $$|u_i^{S,h} - u_i^{S,h-1}| \leq |u_i^{S,h} - u_i^{S}| + |u_i^{S,h-1} - u_i^{S}| \leq 2^{-h} \err(q_i, S) + 2^{-(h-1)} \err(q_i, S)\lesssim 2^{-h} \err(q_i, S).$$
    \item If $q_i \notin B(S)$ then $u_i^{S,h} = u_{i}^{S,h-1} = 0$.
\end{enumerate}  Using this in \eqref{eq: var-0}, we get,
\begin{align}\label{eq: var-1}
    \var[X^{S,h}]&\lesssim (1/\cost(P,S))^2 \cdot 2^{-2h} \cdot   \sum_{q \in B(S) \cap \Omega } w_q^2 \err(q,S)^2.
\end{align}
We now show that $\sum_{q \in B(S) \cap \Omega} w_q^2 \err(q,S)^2 \lesssim (\cost(P,A) \cdot \cost(P,S))/m$. This, together with \eqref{eq: var-1}, will then complete the proof of the lemma.  By the definition of $\err(q,S)$ (\Cref{def: clustering-nets}) we have $\err(q,S)^2 \lesssim \cost(q,S) ( \cost(q,A) + \Delta_q) + \cost(q,A)^2 + \Delta_q^2$. By \Cref{lem: weight-bound}, we also have $w_q \cost(q,A) \lesssim \cost(P,A)/m$ and $w_q \Delta_q \lesssim \cost(P,A)/m$. These equations imply that
\begin{align}
\nonumber
\sum_{q \in B(S) \cap \Omega} &w_q^2 \err(q,S)^2 
\lesssim  (\cost(P,A)/m) \cdot \sum_{q \in B(S) \cap  \Omega}  w_q  (\cost(q,S)  + \cost(q,A) + \Delta_q) 
  \\
\lesssim & (\cost(P,A)/m) \cdot \sum_{q \in  \Omega}  w_q  (\cost(q,S)  + \cost(q,A) + \Delta_q)  \tag{Summing over more points}\nonumber\\
\lesssim & (\cost(P,A)/m) \cdot \cost(P,S). \tag{\cref{lem: event-e-coreset-cost}, $A$ is a $O(1)$-approx}
\end{align}
This completes the proof of the lemma.
\end{proof}

Since $A$ is a $O(1)$-approximate solution we have the following immediate corollary.
\begin{corollary}\label{lem: easy-var}
    For any set of centers $S$, $\var[X^{S,h}] \lesssim 2^{-2h}/m$.
\end{corollary}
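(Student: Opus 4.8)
The plan is to obtain this as an immediate specialization of \Cref{lem: var-first}, which already establishes
\[
\var[X^{S,h}] \lesssim \frac{2^{-2h}\,\cost(P,A)}{m\,\cost(P,S)}.
\]
Thus the only thing left to check is that the ratio $\cost(P,A)/\cost(P,S)$ is bounded by an absolute constant, uniformly over all $S \in \mathcal{S}$.

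For this, recall that \Cref{alg: sensitivity-sampling} computes $A$ as an $O(1)$-approximate $k$-means solution for $P$, so $\cost(P,A) \leq c\cdot \opt$ for some absolute constant $c$, where $\opt$ denotes the optimal $k$-means cost of $P$. Since every competitor $S \in \mathcal{S}$ is a set of exactly $k$ centers, optimality gives $\opt \leq \cost(P,S)$, and hence $\cost(P,A) \leq c\cdot \cost(P,S)$, i.e. $\cost(P,A)/\cost(P,S) = O(1)$. Substituting this into the bound from \Cref{lem: var-first} yields $\var[X^{S,h}] \lesssim 2^{-2h}/m$, which is the claim.

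There is essentially no obstacle here: the estimate of \Cref{lem: var-first} holds for each fixed $S$ and $h\geq 1$ with a constant independent of $S$, so the corollary follows verbatim. The only point to keep in mind is that the approximation guarantee on $A$ is with respect to the optimal solution using $k$ centers, which matches the cardinality of the sets $S$ over which the supremum is taken; this is exactly the setting of the analysis.
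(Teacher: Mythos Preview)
Your proof is correct and matches the paper's approach exactly: the paper states this as an immediate corollary of \Cref{lem: var-first} using that $A$ is an $O(1)$-approximate solution, which is precisely the argument you give.
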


\subsubsection{Exploiting Stability: Relating Cost to Interaction}
For $\beta$-stable instances, we can get a tighter handle on the variance of $X^{S,h}$. Specifically, we will show that the cost with respect to a set of centers $S$ is proportional to the interaction number $\ccint(S)$ (see \Cref{def: interaction}) and $\beta$; this together with \Cref{lem: easy-var} gives tighter guarantees on the variance (when $N(S)$ and $\beta$ are large).

First, we show that if a center $x\in S$   interacts with a lot of clusters in $B(S)$ (according to \Cref{def: interaction}), then $P$ has a high cost to $S$. The intuition is that if $P$ is stable, its clusters are well separated; thus, if a center interacts with many clusters, most of these clusters pay a significant cost.
\begin{lemma}\label{lem: cost-increase}
	Let $P$ be a $\beta$-stable instance and $S\in \mathcal S(r)$ be a set of centers.
    For a center $x \in S$, 
    let $I(x)$ denote the set of clusters that interact with $x$; then $\cost(P,S) \geq \tfrac{1}{768} (|I(x)|-1) \beta \opt_k$ . 
\end{lemma}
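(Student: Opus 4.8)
The plan is to use the $\beta$-stability of $P$ as follows: suppose for contradiction that $\cost(P,S)$ is small, i.e., $\cost(P,S) < \tfrac{1}{768}(|I(x)|-1)\beta\,\opt_k$. I will then construct a $(k-1)$-means solution whose cost is less than $(1+\beta)\opt_k$, contradicting stability (\Cref{def: stability}), since $\opt_{k-1}$ is at most the cost of any explicit $(k-1)$-center solution. The natural candidate for the $(k-1)$-center solution is to take the optimal $k$-means solution $O^*$ (achieving $\opt_k$) and \emph{merge} two of its centers — but it is cleaner to work with the approximate centers $A$: take all $k$ centers of $A$, delete one center $a_j$ for some cluster $C_j \in I(x)$, and reassign the points of $C_j$ to the center $x \in S$ (which we are allowed to ``adopt'' as one of our $k-1$ centers — formally we take $(A \setminus \{a_j\}) \cup \{x\}$ if $x \notin A$, which has at most $k$ centers, and then we still have one spare; the bookkeeping here needs a little care, and this is where the factor $|I(x)|-1$ versus $|I(x)|$ comes in, since we ``spend'' one interacting cluster to host $x$).

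The key steps, in order, would be:
\begin{enumerate}
    \item \textbf{Set up the contradiction.} Assume $\cost(P,S) < \tfrac{1}{768}(|I(x)|-1)\beta\,\opt_k$. Since each cluster $C_j \in I(x)$ is close to $S$ and of type $t$, and since $x$ is an approximate nearest center in $S$ to $a_j$ (property $P_2$), the cost $\cost(C_j, S)$ is comparable to $\cost(C_j, x)$ up to constant factors; combined with property $P_1$ ($\cost(a_j,x) \geq 32\Delta_j$, so $x$ is far outside the average-cost ball), a standard approximate-triangle-inequality / bias-variance argument gives $\cost(C_j, x) \gtrsim |C_j|\cdot \cost(a_j,x) \gtrsim |C_j|\Delta_j = \cost(C_j,A)$, i.e., each interacting cluster pays at least a constant fraction of its $A$-cost (in fact, much more, by roughly the factor $\cost(a_j,x)/\Delta_j \ge 32$).
    \item \textbf{Lower bound $\cost(P,S)$ by the interacting clusters.} Summing over $C_j \in I(x)$ and using that these are disjoint clusters, $\cost(P,S) \geq \sum_{C_j \in I(x)} \cost(C_j, S) \gtrsim \sum_{C_j \in I(x)} \cost(C_j, A)$.
    \item \textbf{Build a cheap $(k-1)$-clustering.} Pick the interacting cluster $C_{j^*} \in I(x)$ with the smallest $A$-cost; so $\cost(C_{j^*},A) \le \tfrac{1}{|I(x)|}\sum_{C_j \in I(x)}\cost(C_j,A)$. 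Form the $(k-1)$-center solution $S' := (A \setminus \{a_{j^*}\}) \cup \{x\}$ when $x \notin A$ (of size $\le k-1$), or just $A \setminus \{a_{j^*}\}$ when $x \in A$. Reassign: every point keeps its $A$-center except points of $C_{j^*}$, which go to $x$. Then
    \[
    \cost(P, S') \;\le\; \cost(P,A) - \cost(C_{j^*},A) + \cost(C_{j^*}, x) \;\le\; \cost(P,A) + \cost(C_{j^*},x).
    \]
    Using $\cost(C_{j^*}, x) \le 2\cost(C_{j^*},S) + 2|C_{j^*}|\cdot\cost(a_{j^*},x) \lesssim \cost(C_{j^*},S)$ (again by $P_2$ and the close-cluster bound), and $\cost(P,A) \le O(1)\cdot \opt_k$ — wait, this last bound is too weak; instead I must be more careful and bound $\cost(C_{j^*},x)$ directly against $\cost(P,S)$ without routing through $\cost(P,A)$.
\end{enumerate}

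Let me restate step 3 correctly: the right comparison is $\cost(P,S') \le \opt_k + (\text{extra cost of moving } C_{j^*} \text{ to } x)$, where we start from the \emph{optimal} $k$-clustering, merge its cluster containing (most of) $C_{j^*}$ with a neighbor, and use the interaction structure plus $\cost(C_{j^*},A)\le \tfrac{1}{|I(x)|}\cdot O(\cost(P,S))$ to bound the extra cost by $O(\cost(P,S)/|I(x)|)$. Actually, since we are deleting a center, the standard trick is: there is \emph{some} cluster $C_{j^*} \in I(x)$ whose optimal-clustering cost plus reassignment cost is at most $\opt_k + O(\cost(P,S)/(|I(x)|-1))$, using an averaging argument over the $|I(x)|$ interacting clusters and the fact that one of them can be ``absorbed'' by reassigning to $x$ at cost $O(\cost(C_{j^*},S))$. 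Chaining these: $\opt_{k-1} \le \cost(P,S') \le \opt_k + O(\cost(P,S)/(|I(x)|-1)) < \opt_k + \beta\,\opt_k$ by the contradiction hypothesis, giving $\opt_{k-1} < (1+\beta)\opt_k$, contradicting $\beta$-stability. Tracking the constants through properties $P_1$ ($32$) and $P_2$ ($16$) of \Cref{def: interaction} yields the constant $1/768$.

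\textbf{Main obstacle.} The delicate point is step 3: converting ``$x$ interacts with many clusters in $B(S)$'' into a \emph{single} deletion of a center that costs only $O(\cost(P,S)/(|I(x)|-1))$ extra. One must be careful that reassigning a whole cluster $C_{j^*}$ to $x$ is genuinely cheap — this needs both that $C_{j^*}$ is a \emph{close} cluster of small enough type (so its points are tightly concentrated around $a_{j^*}$, via the ring decomposition and $\Delta_{j^*}$) and that $\cost(a_{j^*},x)$ is not much larger than $\cost(a_{j^*},S)$ (property $P_2$), so that $\cost(C_{j^*},x) = O(\cost(C_{j^*},S))$. The averaging over $I(x)$ to select $j^*$ with minimal cost, and the fact that we must also handle the edge case $x \in A$ (where no center is ``wasted''), are the fiddly parts; the loss of the additive $-1$ in $|I(x)|-1$ is exactly the cost of hosting $x$ as a new center in place of $a_{j^*}$.
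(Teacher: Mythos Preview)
Your overall strategy—assume $\cost(P,S)$ is small and build a cheap $(k-1)$-clustering to contradict $\beta$-stability—is a reasonable instinct, but the proposal has a concrete counting error, and the paper takes a different, more direct route.

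\textbf{The gap.} In your step 3 you set $S' := (A \setminus \{a_{j^*}\}) \cup \{x\}$ and assert $|S'| \le k-1$. This is false: $|A| = k$, so when $x \notin A$ the set $S'$ has exactly $k$ centers, and no contradiction with $\opt_{k-1}$ follows. To genuinely reach $k-1$ centers you must delete \emph{two} centers $a_i, a_j$ with $C_i, C_j \in I(x)$ and add the single center $x$, assigning both $C_i$ and $C_j$ to $x$. An averaging argument over pairs in $I(x)$ then selects $i,j$ with small combined $\cost(C_i,S)+\cost(C_j,S)$. This can be pushed through, but it further requires (i) that $A$ is a $(1+\beta/2)$-approximation rather than merely $O(1)$, so the baseline $\cost(P,A)$ does not already exceed $(1+\beta)\opt_k$; and (ii) a bound $\cost(C_\ell,A) \lesssim \cost(C_\ell,S)$ for interacting clusters, so the $A$-cost terms do not dominate. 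Neither point is addressed in the proposal.

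\textbf{How the paper proceeds.} The paper factors your merging idea out into a separate well-separation lemma (\Cref{lem: approxstability}): for a $\beta$-stable instance with $A$ a $(1+\beta/2)$-approximation, any two approximate centers satisfy $\cost(a_i,a_j) \ge \tfrac{\beta\opt_k}{2\min(|C_i|,|C_j|)}$. That lemma is proved by exactly the ``merge one cluster into another'' trick, done once and cleanly for a single pair. With it in hand, \Cref{lem: cost-increase} becomes a direct lower bound: order the clusters in $I(x)$ by the distance of their centers to $x$, so $a_1$ is closest; for each $j \ge 2$, the triangle inequality and property $P_2$ give
\[
\frac{\beta\opt_k}{2|C_j|} \;\le\; \cost(a_1,a_j) \;\le\; 2\bigl(\cost(a_1,x)+\cost(a_j,x)\bigr) \;\le\; 4\cost(a_j,x) \;\le\; 64\cost(a_j,S),
\]
so $\cost(a_j,S) \ge \tfrac{\beta\opt_k}{128|C_j|}$; then \Cref{lem: cost-from-center} converts this into $\cost(C_j,S) \ge \tfrac{1}{6}|C_j|\cost(a_j,S) \ge \tfrac{1}{768}\beta\opt_k$. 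Summing over $j=2,\ldots,|I(x)|$ gives the claim, and the $-1$ arises because the closest cluster $C_1$ is exempted, not because a center is ``spent'' hosting $x$. This avoids all the averaging and reassignment bookkeeping in your plan.
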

\begin{proof}
	Fix a center $x$ and let $r:= |I(x)|$. Without loss of generality, we can assume that $r \geq 2$ (otherwise, the bound holds trivially). For convenience, we re-index the clusters so that the set of clusters that interact with $x$ is $I(x) = \{C_{1}, \ldots, C_{r}\}$, and such that the clusters are indexed in the increasing order of the distance of their centers from $x$. 
    We show that all clusters except possibly $C_{1}$ have cost $\Omega(\beta \opt_k)$ with respect to $S$. This then implies the desired bound.

Using the guarantee of \Cref{lem: approxstability} on the distance between centers $a_i, a_j$ of stable instances and the triangle inequality, we get	\begin{align}\label{eq: stab-cost-1}
		(\beta \opt_k)/2|C_{j}| \leq \cost(a_{1}, a_{j}) 
        \leq 2(\cost(a_{1}, x) + \cost(a_{j},x)) 
        \leq 4\cost(a_j,x) 
        \leq  64\cost(a_j,S).
	\end{align}	
	The last inequality above uses the fact that $x$ is an approximate nearest center to $a_j$ (see \Cref{def: interaction}). Rearranging we get $\cost(a_j,S) \geq (\beta \opt_k)/(128 |C_j|)$. Since the cost of the center of $C_j$ is high, we can argue that the cost of $C_j$ is lower bounded by $\Omega(|C_j| \cost(a_j, S))$. This is formalized by the following claim whose proof is in \Cref{app: missing-proofs}.

\begin{claim}\label{lem: cost-from-center}
    Let $S$ be a set of $k$ centers. Suppose that $C_j$ is a cluster whose center $a_j$ satisfies $\cost(a_j,S) \geq 32\Delta_j$; then $\cost(C_j,S) \geq \tfrac{1}{6}  |C_j| \cost(a_j,S)$.
\end{claim}

 By \Cref{def: interaction} we have $\cost(a_j,x) \geq 32 \Delta_j$. This together with \eqref{eq: stab-cost-1} and \Cref{lem: cost-from-center} gives $\cost(C_j,S) \geq \tfrac{1}{6} |C_j| \cost(a_j,S) \geq \tfrac{1}{768} \beta \opt_k$.\qedhere
\end{proof}

\begin{lemma}\label{lem: cost-increase-2}
	Let $P$ be a $\beta$-stable instance and $S$ be any set of centers; then we have $\cost(P,S) \geq \opt_k \cdot \max(1,  \tfrac{1}{768} (\tfrac{\ccint(S)}{k} - 1) \beta)$. 
\end{lemma}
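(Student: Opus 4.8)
The plan is to combine the two trivial facts that (i) $\cost(P,S) \geq \opt_k$ always (since $\opt_k$ is the minimum $k$-means cost and $S$ is a valid $k$-center solution) and (ii) the interaction-based bound from \Cref{lem: cost-increase}, and then aggregate the latter over all centers of $S$. Recall that $N(S) = \sum_{i=1}^k |I(x_i)|$ by \Cref{def: interaction-number}. By \Cref{lem: cost-increase}, for each individual center $x_i \in S$ we have $\cost(P,S) \geq \tfrac{1}{768}(|I(x_i)| - 1)\beta\opt_k$. The subtle point is that this bound holds \emph{for each} $x_i$ but we cannot simply sum these $k$ inequalities, since the left-hand side $\cost(P,S)$ is the same quantity each time — summing would give $k \cdot \cost(P,S) \geq \tfrac{\beta\opt_k}{768}\sum_i(|I(x_i)|-1) = \tfrac{\beta\opt_k}{768}(N(S) - k)$, i.e. $\cost(P,S) \geq \tfrac{\beta\opt_k}{768}(\tfrac{N(S)}{k} - 1)$, which is exactly what we want. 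So the ``aggregation'' is really just the observation that an inequality valid for every $i$ is in particular valid when we average the right-hand sides over $i \in [k]$.

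Concretely, the steps are: first, note $\cost(P,S) \geq \opt_k$ trivially, giving the $\max(1, \cdot)$'s first argument. Second, apply \Cref{lem: cost-increase} for each $x_i \in S$: $\cost(P,S) \geq \tfrac{1}{768}(|I(x_i)|-1)\beta\opt_k$. Third, average these $k$ inequalities: since each holds, their average holds, so
\[
\cost(P,S) \;\geq\; \frac{1}{k}\sum_{i=1}^k \frac{1}{768}(|I(x_i)|-1)\beta\opt_k \;=\; \frac{\beta\opt_k}{768}\left(\frac{N(S)}{k} - 1\right),
\]
using $\sum_i |I(x_i)| = N(S)$. Fourth, combine the two lower bounds by taking their maximum, which yields the claimed $\cost(P,S) \geq \opt_k\cdot\max\!\big(1,\, \tfrac{1}{768}(\tfrac{N(S)}{k}-1)\beta\big)$.

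There is essentially no obstacle here — this lemma is a short bookkeeping corollary of \Cref{lem: cost-increase}. The only thing to be slightly careful about is the averaging argument (one does not sum and re-divide in a way that loses a factor), and the trivial reminder that when $N(S)/k < 1$ the second term in the max is non-positive and the bound is dominated by $\cost(P,S) \geq \opt_k$, so the statement is still correct (indeed the $\max$ makes it vacuously fine). One could alternatively pick the single $x_i$ with the largest $|I(x_i)|$, which is $\geq N(S)/k$, and apply \Cref{lem: cost-increase} to just that center; this avoids any averaging subtlety and gives the same bound. Either route is a two-line argument.
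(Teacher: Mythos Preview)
Your proposal is correct and essentially matches the paper's proof. The paper takes exactly the alternative route you mention at the end: it picks $x^* = \argmax_{x\in S}|I(x)|$, observes $|I(x^*)|\ge N(S)/k$, and applies \Cref{lem: cost-increase} once to $x^*$; your primary averaging argument yields the identical bound and is equally valid, so the two are interchangeable one-line arguments.
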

\begin{proof}
	 The first bound follows from the fact that we trivially have bound $\cost(P,S) \geq \opt_k$. 
  
  The second bound is a corollary of \Cref{lem: cost-increase}. Let $x^* \in S$ be the center which satisfies  $x^* = \argmax_{x \in S} |I(x)|$. We then have $|I(x^*)| \geq \tfrac{1}{k} \cdot \sum_{x \in S} |I(x)| \geq N(b)/k $. The bound follows by applying  \Cref{lem: cost-increase} to $x^*$.\qedhere
\end{proof}

Combining \Cref{lem: var-first} and  \Cref{lem: cost-increase-2} yields the following variance bound.

\begin{lemma}[Variance Bounds for Stable Instances]\label{lem: stable-var} Let $P$ be a $\beta$-stable instance and $S$ be a set of centers with $N(S) \geq 2k$ then $\var[X^{S,h}] \lesssim (2^{-2h} k)/(m N(S) \beta)$.
\end{lemma}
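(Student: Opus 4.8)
The plan is to obtain this lemma as an immediate corollary of the two results already set up: the cost-form variance bound \Cref{lem: var-first} and the stability-based cost lower bound \Cref{lem: cost-increase-2}. First I would recall that \Cref{lem: var-first} gives $\var[X^{S,h}] \lesssim (2^{-2h}\cost(P,A))/(m\,\cost(P,S))$ for \emph{every} set of centers $S$. Since the clustering $A$ produced by \Cref{alg: sensitivity-sampling} is an $O(1)$-approximate $k$-means solution, $\cost(P,A) = O(\opt_k)$, so this bound becomes $\var[X^{S,h}] \lesssim (2^{-2h}\,\opt_k)/(m\,\cost(P,S))$. All that remains is to lower bound the denominator $\cost(P,S)$ using stability.

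For the second step I would invoke \Cref{lem: cost-increase-2}, which states that for a $\beta$-stable instance and any $S$ one has $\cost(P,S) \geq \opt_k\cdot\max\!\big(1,\tfrac{1}{768}(\tfrac{N(S)}{k}-1)\beta\big)$. The hypothesis $N(S)\geq 2k$ is exactly what is needed to simplify the factor $\tfrac{N(S)}{k}-1$: since $N(S)/k \geq 2$, we have $\tfrac{N(S)}{k}-1 \geq \tfrac12\cdot\tfrac{N(S)}{k}$, and hence $\cost(P,S) \gtrsim \tfrac{N(S)}{k}\,\beta\,\opt_k$. Substituting this into the simplified variance bound yields
\[
\var[X^{S,h}] \;\lesssim\; \frac{2^{-2h}\,\opt_k}{m\cdot \tfrac{N(S)}{k}\,\beta\,\opt_k} \;=\; \frac{2^{-2h}\,k}{m\,N(S)\,\beta},
\]
which is precisely the claimed estimate.

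I do not expect any genuine obstacle in this particular lemma; it is a bookkeeping step, and the only care needed is the elementary constant arithmetic that converts $\tfrac{N(S)}{k}-1$ into $\Theta(\tfrac{N(S)}{k})$ under the assumption $N(S)\geq 2k$. The substantive work lies entirely in the two lemmas being combined: \Cref{lem: var-first}, whose proof exploits the $\err(q,S)$-approximation guarantee of the cost-vector nets together with the weight bounds $w_q\cost(q,A),\,w_q\Delta_q \lesssim \cost(P,A)/m$ and \Cref{lem: event-e-coreset-cost}; and \Cref{lem: cost-increase-2}, which in turn rests on \Cref{lem: cost-increase} (separation of cluster centers in stable instances via \Cref{lem: approxstability}, combined with \Cref{lem: cost-from-center}). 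Once those are in hand, the present lemma follows by direct substitution.
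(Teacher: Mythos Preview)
Your proposal is correct and matches the paper's approach exactly: the paper simply states that this lemma follows by combining \Cref{lem: var-first} with \Cref{lem: cost-increase-2}, and your write-up fills in precisely those details, including the use of $\cost(P,A)=O(\opt_k)$ and the simplification of $\tfrac{N(S)}{k}-1$ to $\Theta(N(S)/k)$ under the hypothesis $N(S)\geq 2k$.
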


\subsubsection{Trading the Net and Variance Bounds for Stable Inputs}\label{sec: trading-net-variance-stable}
We have now proved the required net and variance bounds to upper bound $ \E_g \sup_{S \in \mathcal{S}(r)} |X^{S,h}|$. We consider two separate cases depending on the value of $2^r$(recall that $2^r$ is approximately the interaction number of centers considered). 

\paragraph{Case 1 ($2^r \geq 2k$).} \Cref{lem: net-lemma} shows that there exists a $(2^{-h}, \mathcal{S}(r))$ cost vector net $M_{2^{-h}}$ of size $|M_{2^{-h}}| = \exp(O\left((2^r + k 2^{2h})\cdot \log(k \eps^{-1}2^{h})\right))$. Since $2^r \geq 2k$ and $h = O(\log \eps^{-1})$, we have $|M_{2^{-h}}| = \exp(O(2^{2h+r} \log(k \eps^{-1} ))$. By \Cref{lem: stable-var},  for any $S \in \mathcal{S}(r)$, since $N(S)$ lies in the interval $[2^r, 2^{r+1}]$, we have that $\var[X^{S,h}] \lesssim (k \cdot 2^{-r-2h}  )/ (m \beta)$. Combining the variance, net bounds using  \eqref{eq: combine-net-variance} we obtain: 
\begin{align*}
\E_g \sup_{S \in \mathcal{S}(r)} |X^{S,h}| &\lesssim \sqrt{\log |M_{2^{-h}}|} \cdot \sqrt{\var[X^{S,h}]}\\
&\lesssim \sqrt{  2^{r+2h} \cdot\log(k \eps^{-1})} \cdot \sqrt{(k \cdot 2^{-r-2h} )/(m  \beta)}\\
&\lesssim \sqrt{k\log(k \eps^{-1})/(m \beta)}.
\end{align*}
\paragraph{Case 2 $(2^r < 2k)$.}
In this case, we use the following alternate bound on the variance given by \Cref{lem: easy-var}: $\var[X^{S,h}] \lesssim 2^{-2h}/m$. Using the net bounds given by \Cref{lem: net-lemma} and simplifying we also have $ |M_{2^{-h}}|  \leq \exp(O(k2^{2h} \cdot \log(k \eps^{-1})))$. By \Cref{eq: combine-net-variance}, we have 
\begin{align*}
\E_g \sup_{S \in \mathcal{S}(r)} |X^{S,h}|\lesssim \sqrt{k \cdot 2^{2h} \log(k \eps^{-1})} \cdot \sqrt{2^{-2h}/m}   \lesssim \sqrt{k \log(k \eps^{-1})/m}.
\end{align*}
It follows that if $m = \tilde{\Omega}(k \eps^{-2}\max(1, \beta^{-1}) \log(k \eps^{-1}))$ then $\E_g \sup_{S \in \mathcal{S}(r)} |X^{S,h}| \leq \eps$ for any $r$. Rescaling $\eps$ by ${\hmax = O(\log(\eps^{-1}))}$ completes the proof of \Cref{lem: stable-xsh}.

\subsection{Bounding the Variance for Worst Case Inputs} \label{sec:worst-case-variance}
We can also show another lower bound on the $\cost(P,S)$, which we will use to get a new upper bound on the variance. Since all clusters in the set $B(S) = B_{b,t}(S)$ are of type $t$ and band $b$, we can show that their cost wrt $S$ is approximately $2^{b+t} T$; thus the $\cost(P,S)$ is at least $k_{B(S)} \cdot 2^{b+t}$. We formalize these calculations in the lemma below. 
\begin{lemma}\label{lem: structured-cost}
 Let $S$ be a set of centers. Recall that $B(S)$ contains $k_{B(S)}$ clusters, all of which are from band $b$ and are of type $t$ for $S$. If $t\geq 6$ then $\cost(P,S) \gtrsim k_{B(S)} \cdot 2^{b+t}T$.
\end{lemma}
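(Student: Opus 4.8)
The plan is to show that every cluster in $B(S)$ contributes cost $\gtrsim |C_j| \cdot 2^t \Delta_j \approx 2^{b+t} T$ to $\cost(P,S)$, and then sum over the $k_{B(S)}$ clusters. The starting point is the definition of Type-$t$: for each $C_j \in B(S)$ we have $\cost(a_j, S) \in [2^{t-1}\Delta_j, 2^{t}\Delta_j)$, so the center $a_j$ is at squared distance $\approx 2^t \Delta_j$ from $S$. Since $t \geq 6$, this means $\cost(a_j,S) \geq 2^{t-1}\Delta_j \geq 32\Delta_j$, so the hypothesis of \Cref{lem: cost-from-center} is satisfied. Applying that claim gives $\cost(C_j, S) \geq \tfrac16 |C_j| \cost(a_j, S) \geq \tfrac16 |C_j| \cdot 2^{t-1}\Delta_j = \tfrac{1}{12}\cdot 2^{t} \cdot |C_j|\Delta_j = \tfrac{1}{12}\cdot 2^t \cost(C_j, A)$, where the last step uses $|C_j|\Delta_j = \cost(C_j,A)$ by definition of $\Delta_j$.

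Next I would use the band constraint: since $C_j$ is in Band-$b$, $\cost(C_j, A) \geq 2^b T$. Combining, $\cost(C_j,S) \gtrsim 2^{b+t} T$ for each of the $k_{B(S)}$ clusters in $B(S)$. Since the clusters $C_j$ are disjoint subsets of $P$, summing yields $\cost(P,S) \geq \sum_{C_j \in B(S)} \cost(C_j, S) \gtrsim k_{B(S)} \cdot 2^{b+t} T$, which is exactly the claimed bound.

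The argument is essentially routine given \Cref{lem: cost-from-center}; the only point requiring a little care is the role of the hypothesis $t \geq 6$, which is needed precisely to guarantee $\cost(a_j,S) \geq 32\Delta_j$ so that \Cref{lem: cost-from-center} applies (for small $t$ the center $a_j$ may be too close to $S$ for the cluster's cost to be controlled from below by its center's cost). So I expect no real obstacle here — the main thing to double-check is that the constant $32$ in \Cref{lem: cost-from-center} is indeed dominated by $2^{t-1}$ when $t\ge 6$, i.e. $2^{5}=32$, which holds with equality at $t=6$, so one should verify whether the intended threshold is $t \geq 6$ or $t \geq 7$; in either case, absorbing the constant into the $\gtrsim$ notation makes this immaterial for the statement as written.
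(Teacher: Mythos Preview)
Your proposal is correct and follows essentially the same route as the paper: use the Type-$t$ definition to get $\cost(a_j,S)\geq 2^{t-1}\Delta_j\geq 32\Delta_j$, invoke \Cref{lem: cost-from-center} to obtain $\cost(C_j,S)\gtrsim 2^t\cost(C_j,A)$, then use the Band-$b$ lower bound $\cost(C_j,A)\geq 2^bT$ and sum over the $k_{B(S)}$ clusters. Your observation about the threshold $t\geq 6$ giving exactly $2^{t-1}=32$ is accurate and matches the paper's argument.
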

\begin{proof} 
    We trivially have that $\cost(P,S) \geq \cost(B(S), S) = \sum_{C_j \in B(S)} \cost(C_j, S)$. We show a lower bound on the cost of any cluster in $B(S)$.
    
    Fix a cluster $C_j \in B(S)$. Since $C_j$ is of type $t \geq 6$, we have $\cost(a_j,S) \geq 2^{t-1} \Delta_j \geq 32\Delta_j$. Applying \Cref{lem: cost-from-center}, we get $\cost(C_j,S) \geq \tfrac{1}{6}|C_j| 2^{t-1} \Delta_j \gtrsim 2^t \cost(C_j,a_j) \gtrsim 2^{b+t}T$. Summing over all clusters in $B(S)$ yields the desired lower bound on $\cost(P,S)$.
\end{proof}

 Plugging this cost lower bound into the variance bound given by \Cref{lem: var-first}, we get  \begin{align*}
     \var[X^{S,h}] \lesssim 2^{-2h} \cost(P,A)/(m \cdot k_{B(S)} \cdot 2^{b+t}T).
 \end{align*}
 However, in the lemma below, we see that this bound is sub-optimal when the clusters in $B(S)$ have cost less than $\cost(P,A)/k$.
\begin{lemma}\label{lem: worst-case-var}
    Let $S$ be any set of centers. We have $\var[X^{S,h}] \lesssim (2^{-2h-t}k)/(m \cdot k_{B(S)})$.
\end{lemma}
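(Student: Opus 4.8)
The plan is to refine the variance computation from the proof of \Cref{lem: var-first} by exploiting that every cluster of $B(S)$ lies in band $b$. As in that proof, $X^{S,h}$ is a centered Gaussian, and
\begin{align*}
	\var[X^{S,h}] \lesssim \frac{2^{-2h}}{\cost(P,S)^2} \sum_{q \in B(S) \cap \Omega} w_q^2\, \err(q,S)^2 ,
\end{align*}
so it suffices to show $\sum_{q \in B(S)\cap\Omega} w_q^2\,\err(q,S)^2 \lesssim 2^{b} T k\, \cost(P,S)/m$. The key point is that the coarse weight bounds $w_q\cost(q,A), w_q\Delta_q \lesssim \cost(P,A)/m$ used in \Cref{lem: var-first} only recover \Cref{lem: easy-var}; instead, for $q$ in a band-$b$ cluster $C_j$, the definition of $\mu$ gives $\mu(q) \geq \tfrac14\cdot\tfrac{\cost(q,A)}{k\,\cost(C_j,A)}$ and $\mu(q)\geq \tfrac{1}{4k|C_j|}$, hence $w_q\cost(q,A)\lesssim k\,\cost(C_j,A)/m \lesssim 2^{b}Tk/m$ and, using $\Delta_q = \cost(C_j,A)/|C_j|$, also $w_q\Delta_q \lesssim k\,\cost(C_j,A)/m \lesssim 2^{b}Tk/m$.

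Next I would expand $\err(q,S)^2 \lesssim \cost(q,S)\cost(q,A) + \cost(q,S)\Delta_q + \cost(q,A)^2 + \Delta_q^2$ and bound the four resulting sums. For the first two, $\sum_q (w_q\cost(q,A))(w_q\cost(q,S)) \lesssim (2^bTk/m)\,\costom(B(S),S) \lesssim (2^bTk/m)\,\cost(P,S)$ by \Cref{lem: event-e-coreset-cost}, and similarly for the $\Delta_q$ variant. For the third, $\sum_q (w_q\cost(q,A))^2 \lesssim (2^bTk/m)\,\costom(B(S),A) \lesssim (2^bTk/m)\,\cost(P,A)$ using the cluster cost preservation property of $\mathcal{E}$. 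For the fourth, since $\Delta_q$ is constant on each cluster, the cluster size preservation property of $\mathcal{E}$ gives $\sum_{q\in C_j\cap\Omega} w_q\Delta_q \lesssim |C_j|\Delta_j = \cost(C_j,A)$, so $\sum_q (w_q\Delta_q)^2 \lesssim (2^bTk/m)\sum_{q\in B(S)\cap\Omega} w_q\Delta_q \lesssim (2^bTk/m)\,\cost(P,A)$. Since $A$ is an $O(1)$-approximation, $\cost(P,A)\lesssim\opt_k\leq\cost(P,S)$, so all four sums are $\lesssim 2^bTk\,\cost(P,S)/m$, which yields $\var[X^{S,h}]\lesssim 2^{-2h}\cdot 2^bTk/(m\,\cost(P,S))$.

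Finally I would split on $t$. If $t\geq 6$, \Cref{lem: structured-cost} gives $\cost(P,S)\gtrsim k_{B(S)}\,2^{b+t}T$; plugging this in yields exactly $\var[X^{S,h}]\lesssim 2^{-2h-t}k/(m\,k_{B(S)})$. If $t<6$ (including $t=0$), \Cref{lem: structured-cost} is unavailable, but \Cref{lem: easy-var} already gives $\var[X^{S,h}]\lesssim 2^{-2h}/m$, and since $k_{B(S)}\leq k$ and $2^{-t}=\Theta(1)$ this is also $\lesssim 2^{-2h-t}k/(m\,k_{B(S)})$. I expect the only place any care is needed to be the four-term bound above: specifically, keeping track that one must use the band-refined weight bounds rather than the generic ones, and invoking the cluster size preservation of $\mathcal{E}$ for the $\Delta_q^2$ term; the rest is bookkeeping.
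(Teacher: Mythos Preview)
Your proposal is correct and follows essentially the same approach as the paper: refine the weight bounds using band membership ($w_q\cost(q,A),\,w_q\Delta_q \lesssim k\cdot 2^bT/m$), expand $\err(q,S)^2$, invoke the properties $P_1,P_3$ of event~$\mathcal{E}$, and split on $t$ using \Cref{lem: structured-cost} versus \Cref{lem: easy-var}. The only organizational difference is that the paper first replaces $\cost(q,S)$ by $2^t(\cost(q,A)+\Delta_q)$ and then bounds $\sum_{q\in B(S)\cap\Omega} w_q(\cost(q,A)+\Delta_q)\lesssim k_{B(S)}\cdot 2^bT$, arriving at $\Gamma\lesssim k\cdot 2^{2b+t}T^2 k_{B(S)}/m$ before dividing by $\cost(P,S)^2$; you instead keep a factor of $\cost(P,S)$ in the numerator via \Cref{lem: event-e-coreset-cost} and cancel one power at the end, which is a slightly cleaner route to the same bound.
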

\begin{proof}
If $t < 6$ then the bound follows trivially from \Cref{lem: easy-var} as $\var[X^{S,h}] \lesssim (2^{-2h})/m \lesssim (2^{-2h}k)/(m k_{B(S)} 2^t)$ where we used that $k \geq k_{B(S)}$ and $2^t$ is a constant.

Suppose instead that $t \geq 6$. We proceed similarly as in \Cref{lem: easy-var} to obtain,
\begin{align}\label{eq: var-1-2}
    \var[X^{S,h}]&\lesssim (1/\cost(P,S))^2 \cdot 2^{-2h} \cdot   \sum_{q \in B(S) \cap \Omega } w_q^2 \err(q,S)^2.
\end{align}

Using the definition of $\err(q,S)$ (see \Cref{def: clustering-nets}) we now obtain:
\begin{equation*}
    \sum_{q \in B(S) \cap \Omega }  w_q^2 \cdot \err(q,S)^2 \lesssim  \sum_{q \in B(S)\cap \Omega}  w_q^2  (\cost(q,S) \cost(q,A) + \cost(q,S) \Delta_q  + \cost(q,A)^2 + \Delta_q^2).
\end{equation*}
Next observe that by \Cref{lem: weight-bound}, the weights satisfy $w_q \cost(q,A) \lesssim k \cost(C_j, A)/m$ and also $w_q \Delta_q \lesssim k \Delta_j |C_j|/m) \lesssim k \cost(C_j, A)/m)$ given by \Cref{lem: weight-bound}. Since each cluster  $C_j \in B(S)$ is in band $b$, we have $\cost(C_j, A) \lesssim 2^b T$. Thus we get,
\begin{align*}
    \Gamma := \sum_{q \in B(S) \cap \Omega} w_q^2 \err(q,S)^2 \lesssim (k 2^b T)/m \cdot  \sum_{q \in B(S) \cap \Omega} w_q (\cost(q,S) + \cost(q,A) + \Delta_q). 
\end{align*}
 For any point $q \in B(S) \cap \Omega$, we have $\cost(q,S) \leq 2(\cost(q,A) + 2^t \Delta_q) \lesssim 2^t( \cost(q,A) + \Delta_q)$, where the last inequality holds since we focus on types with $t\geq 6$. Using this, we get,
 \begin{align*}
     \Gamma \lesssim (k 2^{b+t} T)/m \cdot  \sum_{q \in B(S) \cap \Omega} w_q ( \cost(q,A) + \Delta_q).
 \end{align*}
 By property $P_3$ of \Cref{def: event-E} we have 
 \begin{align*}
     \sum_{q \in B(S) \cap \Omega} w_q \cost(q,A) \lesssim \sum_{C_j \in B(S)} \sum_{q \in C_j} w_q \cost(q,a_j) \lesssim k_{B(S)} \cdot 2^{b}T.
\end{align*}
Similarly using property $P_1$ of \Cref{def: event-E}, 
\begin{align*}
  \sum_{q \in B(S) \cap \Omega} w_q \Delta_q \lesssim \sum_{C_j \in B(S)} \sum_{q \in C_j} w_q \Delta_q  \lesssim  \sum_{C_j \in B(S)} \cost(C_j, S) \lesssim k_{B(S)} \cdot 2^{b}T.  
\end{align*}
Using the three previous equations, 
we get $\Gamma \lesssim (k\cdot 2^{2b +t} \cdot  T^2 \cdot k_{B(S)})/m $. The above inequality together with \Cref{eq: var-1-2} and \Cref{lem: structured-cost} gives:
\begin{align*}
    \var[X^{S,h}] \lesssim 2^{-2h}\Gamma/ \cost(P,S)^2     \lesssim 2^{-2h}\Gamma/ (k_{B(S)}\cdot 2^{b+t}T)^2 \lesssim (2^{-2h-t} k)/(m\cdot k_{B(S)}),
\end{align*}
which completes the proof.
\end{proof}
\noindent Using the bound $ N(S) \leq k \cdot k_{B(S)}$ and the previous lemma we obtain the following corollary.
\begin{corollary}\label{cor: worst-case-var} 
Let $S$ be any set of centers; then  $\var[X^{S,h}] \lesssim (2^{-2h-t}k^2)/(m \cdot N(S))$.
\end{corollary}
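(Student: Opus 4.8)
The plan is to obtain \Cref{cor: worst-case-var} as an immediate algebraic consequence of \Cref{lem: worst-case-var} together with the elementary counting bound on the interaction number recorded right after \Cref{def: interaction-number}. Recall that since each center $x_i \in S$ interacts with at most $|B(S)| = k_{B(S)}$ clusters, the signature satisfies $N(S) = \sum_{i=1}^k |I(x_i)| \le k \cdot k_{B(S)}$, and hence $1/k_{B(S)} \le k/N(S)$.

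First I would invoke \Cref{lem: worst-case-var}, which holds for an arbitrary set of centers $S$ (here in the fixed band $b$ and type $t$, with $\Omega$ fixed to satisfy event $\mathcal{E}$) and gives $\var[X^{S,h}] \lesssim (2^{-2h-t}k)/(m \cdot k_{B(S)})$. Then I would substitute the inequality $1/k_{B(S)} \le k/N(S)$ into the right-hand side, which yields $\var[X^{S,h}] \lesssim (2^{-2h-t}k/m)\cdot(k/N(S)) = (2^{-2h-t}k^2)/(m \cdot N(S))$, exactly the claimed bound.

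I expect no real obstacle here: the corollary is just a restatement of \Cref{lem: worst-case-var} with the denominator re-expressed in terms of the interaction number $N(S)$ rather than the cluster count $k_{B(S)}$, and the only extra ingredient is the trivial bound $N(S) \le k \cdot k_{B(S)}$. The reason for recording the variance in this form is that the chaining step in \Cref{sec:lemXh} trades the variance against the net size of \Cref{lem: net-lemma} via \eqref{eq: combine-net-variance}, and that net cardinality is parameterized through the center class $\mathcal{S}(r)$ with $N(S) \in [2^r, 2^{r+1})$; having $\var[X^{S,h}]$ written in terms of $N(S) \approx 2^r$ is precisely what makes the net/variance product combine to the desired $\tilde{O}(k\eps^{-2}\min(\sqrt{k},\eps^{-2}))$ coreset bound in \Cref{sec:worst-case-variance}.
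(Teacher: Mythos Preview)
Your proposal is correct and matches the paper's own one-line derivation exactly: the paper states right before the corollary that it follows from \Cref{lem: worst-case-var} together with the bound $N(S) \leq k \cdot k_{B(S)}$, which is precisely the substitution you describe.
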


\subsubsection{Trading off the Net Size and Variance to obtain the Worst Case Bound}\label{sec: completing-the-proof-kmeans}
We shall now use the net and variance bounds shown thus far to bound $\E \sup_{S \in \mathcal{S}(r)}|X^{S,h}|$ for worst-case inputs thus completing the proof of \Cref{lem: worst-case-xsh}. We consider two different cases depending on the interaction number $2^r$.

\paragraph{Case 1 ($2^r \geq 2k)$}
By the variance bounds due to \Cref{lem: easy-var} and \Cref{cor: worst-case-var} we have that for any set of centers  $S \in \mathcal{S}(r)$,
\begin{align*}
    \var[X^{S,h}] \lesssim  \min(1, k^2 2^{-(t+r)}) \cdot 2^{-2h} m^{-1}.
\end{align*}
Using the net bounds from \Cref{lem: net-lemma}, the fact that $2^r + k2^{2h} \lesssim 2^{r+2h}$, we know that there exists a $(2^{-h}, \mathcal{S}(r))$ net $M_{2^{-h}}$ which satisfies:
\begin{align*}
    \log |M_{2^{-h}}| \lesssim  \min(k2^{t}, 2^{r}) \cdot 2^{2h} \log(k\eps^{-1}).
\end{align*}
Using \Cref{eq: combine-net-variance} to combine the corresponding net and variance bounds in the two previous inequalities, we get, 
\begin{align}
    \Gamma := \E_g \sup_{S \in \mathcal{S}(r)} |X^{S,h}| &\lesssim \sqrt{\min(k2^t, k^2 2^{-t}) \cdot \log(k \eps^{-1})m^{-1}} \label{eq: interm-combine}\\
    \notag
    &\lesssim \sqrt{k^{1.5} \cdot \log(k \eps^{-1})m^{-1}}.
\end{align}
where we used the fact that $\min(k2^t, k^2 2^{-t}) \leq k^{1.5}$ for any $t$.
Therefore  if $m =  \Omega(k^{1.5} \eps^{-2} \log(k \eps^{-1}))$ then we have $\Gamma \leq\eps$. Moreover, as $2^t \leq 2^{\tmax}\lesssim \eps^{-2}$, the first bound in \Cref{eq: interm-combine} gives $\Gamma \lesssim \sqrt{k\eps^{-2}\log(k \eps^{-1}) m^{-1}}$; so if  $m = \Omega(k\eps^{-4} \log(k \eps^{-1}))$ then $\Gamma \leq \eps$. Rescaling $\eps$ by $\hmax = O(\log(\eps^{-1}))$ completes the proof of \Cref{lem: worst-case-xsh} for this case.

\paragraph{Case 2 $(2^r < 2k)$.} The calculations in this case are identical to those described for the case of stable inputs (see \Cref{sec: trading-net-variance-stable}). 
The variance bound $\var[X^{S,h}] \lesssim 2^{-2h}/m$ 
 due to \Cref{lem: easy-var}: can be combined with the net bound of  $ |M_{2^{-h}}|  \leq \exp(O(k2^{2h} \cdot \log(k \eps^{-1})))$ 
 by \Cref{lem: net-lemma} to obtain: 
\begin{align*}
\E_g \sup_{S \in \mathcal{S}(r)} |X^{S,h}|\lesssim \sqrt{k \cdot 2^{2h} \log(k \eps^{-1})} \cdot \sqrt{2^{-2h}/m}   \lesssim \sqrt{k \log(k \eps^{-1})/m}.
\end{align*}
It follows that, in this case, if $m = \tilde{\Omega}(k \eps^{-2} \log(k \eps^{-1}))$ then $\E_g \sup_{S \in \mathcal{S}(r)} |X^{S,h}| \leq \eps$.  Rescaling $\eps$ by ${\hmax = O(\log(\eps^{-1}))}$ factors completes the proof.

        \section{Lower Bounds}
\label{sec:lowerbound}

We start by proving that any coreset consisting of input points must have size $\Omega(k/\varepsilon^2)$, even if the data is very separable.

\begin{theorem}\label{thm: lower-bound}
For any $\beta>0$, there exists a $\beta$-stable instance $A$ such that any $k$-means corset $S\subseteq A$ must satisfy $|S|\geq c\cdot k\cdot \varepsilon^{-2}$ for some absolute constant $c$.
\end{theorem}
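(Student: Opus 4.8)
The plan is to build the hard instance from $\Theta(k)$ far‑apart copies of a single ``hard gadget'' --- a regular simplex of dimension $\Theta(\varepsilon^{-2})$ --- and to show that any input‑point coreset must retain a constant fraction of the vertices of a constant fraction of these copies. Concretely, fix $d=\lceil\varepsilon^{-2}\rceil$, let $G$ be a regular simplex with vertices $v_0,\dots,v_d$ all at distance $\rho$ from their common centroid (with $\rho$ tiny), and let $A=G^{(1)}\cup\dots\cup G^{(k)}$ be $k$ disjoint translates whose centroids $\mu^{(1)},\dots,\mu^{(k)}$ are pairwise at distance at least $D:=C\rho\sqrt{k(1+\beta)}$ for a large absolute constant $C$ (the ambient dimension is irrelevant; one may place the copies on a large sphere or in orthogonal subspaces). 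Since $k$ centers can be placed one per copy at the $\mu^{(j)}$'s, $\opt_k=k(d+1)\rho^2$, while with only $k-1$ centers a pigeonhole argument shows some copy has all its points at distance $\Omega(D)$ from every center (a center lies within $D/4$ of at most one $\mu^{(j)}$), so $\opt_{k-1}=\Omega((d+1)D^2)$; choosing $C$ large forces $\opt_{k-1}\ge(1+\beta)\opt_k$, i.e.\ $A$ is $\beta$-stable. Note $|A|=\Theta(k\varepsilon^{-2})$, and $n$ can be inflated freely by repeating vertices.

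Two elementary ingredients drive the argument. The first is a geometric fact: in a regular simplex, the convex hull of any $s$ vertices is at distance exactly $\rho\sqrt{(d+1-s)/(sd)}$ from the centroid (by symmetry the nearest point is the barycenter of those $s$ vertices). Hence, because coreset weights are non‑negative, if $\Omega\subseteq A$ keeps only $s<c\varepsilon^{-2}$ vertices of some copy $G^{(j)}$, its weighted centroid $\mu^{(j)}_\Omega$ of $\Omega\cap G^{(j)}$ lies at distance at least (constant)$\cdot\varepsilon\rho$ from the true centroid $\mu^{(j)}$, where the constant is made as large as we wish by shrinking $c$ (using $d=\lceil\varepsilon^{-2}\rceil$). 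The second is the closed form of the $1$-means cost: for any $x$, $\cost(G^{(j)},x)=V_j+n_j\|\mu^{(j)}-x\|^2$ and $\costom(G^{(j)},x)=W^{(j)}_\Omega+W^{(j)}\|\mu^{(j)}_\Omega-x\|^2$, with $n_j=d+1$, $V_j=(d+1)\rho^2$, $W^{(j)}=\sum_{q\in\Omega\cap G^{(j)}}w_q$, and $W^{(j)}_\Omega$ the coreset's weighted variance of that copy. Expanding at the query point on the ray from $\mu^{(j)}_\Omega$ through $\mu^{(j)}$, taken a distance $\rho$ beyond $\mu^{(j)}$, the per‑copy error $\costom(G^{(j)},x)-\cost(G^{(j)},x)$ is \emph{positive} and has dominant term $2W^{(j)}\|\mu^{(j)}-\mu^{(j)}_\Omega\|\,\rho$.

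For the aggregation I would first run a few ``diagnostic'' queries. Placing all $k$ centers very far from every copy forces, by coreset validity, $\sum_j W^{(j)}=(1\pm\varepsilon)\sum_j n_j$; placing one center far from a single copy $G^{(j)}$ with the rest at their centroids (and using the previous bound to control the contribution of the others) forces $W^{(j)}\ge\tfrac12 n_j$ for \emph{every} $j$, since otherwise that query already has relative error near $1/2$; and placing all centers at the true centroids forces $\sum_j\big(W^{(j)}_\Omega+W^{(j)}\|\mu^{(j)}_\Omega-\mu^{(j)}\|^2\big)=(1\pm\varepsilon)\sum_j V_j$. Now assume $|\Omega|<ck\varepsilon^{-2}$ for a small constant $c$. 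Then more than $k/2$ copies are ``light'' (fewer than $2c\varepsilon^{-2}$ coreset points), and by the geometric fact each light $G^{(j)}$ has $\|\mu^{(j)}_\Omega-\mu^{(j)}\|$ at least a large constant times $\varepsilon\rho$. Take the single query $S$ placing, for \emph{every} copy, a center a distance $\rho$ beyond $\mu^{(j)}$ on the ray from $\mu^{(j)}_\Omega$. Since the copies are far apart, $\cost(A,S)$ and $\costom(A,S)$ split as sums over copies, with $\cost(A,S)=2kV$ where $V=(d+1)\rho^2$, and $\costom(A,S)-\cost(A,S)$ decomposes into three sums: one equals the (bounded) error of the all‑at‑centroids query, hence is $O(\varepsilon kV)$; one equals $\rho^2\big(\sum_j W^{(j)}-\sum_j n_j\big)=O(\varepsilon kV)$; and the third, $2\rho\sum_j W^{(j)}\|\mu^{(j)}_\Omega-\mu^{(j)}\|$, is at least a large constant times $\varepsilon kV$ by summing the lower bounds over the $>k/2$ light copies and using $W^{(j)}\ge\tfrac12 n_j$. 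Thus $\costom(A,S)-\cost(A,S)$ strictly exceeds $\varepsilon\cdot 2kV=\varepsilon\,\cost(A,S)$, contradicting that $\Omega$ is an $\varepsilon$-coreset; hence $|\Omega|\ge ck\varepsilon^{-2}$, matching the upper bound of \Cref{thm: main-coreset-thm}.

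The main obstacle is not the geometry but the aggregation: a priori a coreset could starve the light copies of weight and park that mass on a few heavily sampled copies, rendering the stress query useless. The diagnostic step is exactly what forbids this --- it pins $W^{(j)}=\Theta(n_j)$ for every copy and $\sum_j(W^{(j)}_\Omega+\dots)=\Theta(\sum_j V_j)$ --- and the whole argument hinges on choosing the constants (the simplex dimension versus $c$, and the $\rho$-scale displacement) so that the aligned ``signal'' term strictly dominates both residual terms. Verifying $\beta$-stability and the $1$-means cost identities is routine.
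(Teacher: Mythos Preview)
Your proof is correct and shares the paper's overall scheme: $k$ far-apart copies of a regular $\Theta(\varepsilon^{-2})$-simplex, diagnostic queries to pin down per-copy weights, an averaging argument to find $>k/2$ ``light'' copies, and a single stress query exposing the error. The difference lies in the stress query and its analysis. The paper queries each light copy at $s_j=(\sum_{p\in\Omega_j}p)/\|\sum_{p\in\Omega_j}p\|$ (the direction \emph{toward} the retained vertices), so the coreset \emph{underestimates}; a direct computation gives $\|p-s_j\|^2=2-2/\sqrt{|\Omega_j|}$ for $p\in\Omega_j$ versus $2$ otherwise, and this yields an explicit constant ($\approx 0.61$) for the $1$-mean gadget before the $k$-copy aggregation. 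You instead query at a point distance $\rho$ beyond the true centroid on the ray from $\mu^{(j)}_\Omega$, so the coreset \emph{overestimates}; the centroid identity then splits the per-copy error into two terms already controlled by earlier diagnostics plus the positive cross term $2\rho\sum_j W^{(j)}\|\mu^{(j)}-\mu^{(j)}_\Omega\|$, which the simplex distance formula lower-bounds. Your decomposition is more modular (it would port to other equidistant gadgets without recomputing pairwise distances), while the paper's calculation is more concrete and tied to the standard-basis realization; both hinge on non-negative weights, in your case to confine $\mu^{(j)}_\Omega$ to the convex hull of the retained vertices.
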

\begin{proof}
We will argue that for the $1$-mean problem, any coreset must have $\Omega(\varepsilon^{-2})$ points. The claim then follows by adding multiple copies of the $1$-mean instance with an arbitrarily large separation.

 Our instance is the $n$-dimensional simplex, where $n=\varepsilon^{-2}$. More concretely, the points in $A$ correspond to the standard unit basis of $\mathbb{R}^{n}.$
Note that $\mu(A) = \frac{1}{n}\mathbf{1}$ and $\sum_{p\in A} \|p-\mu(A)\|^2 = n-1$.

Let $\Omega$ be the designated coreset with non-negative weight function $w:\Omega\rightarrow \mathbb{R}_{\geq 0}$.
We must have $w(\Omega):=\sum_{p\in \Omega} w_p\geq (1-\varepsilon)\cdot n$, as otherwise the cost of far away center cannot be well approximated.
Consider the point $s:=(\sum_{p\in \Omega}p)/\|\sum_{p\in \Omega}p\|$. Notice that 
We have 
\[\|p-s\|^2 = 2-2/\sqrt{|\Omega|} \text{for any $p\in \Omega$, and }  \|q-s\|^2 = 2
\text{ for every } q\in A\setminus \Omega,
\]
and thus  $\cost(A,s) = 2(n-\sqrt{|\Omega|})$.
At the same time, we know that 
\[\cost(\Omega,s) = w(\Omega)\cdot \left(2-2|\Omega|^{-1/2}\right)\leq 2(1+\varepsilon)\cdot\left(n-\sqrt{|\Omega|} + \sqrt{|\Omega|}-n|\Omega|^{-1/2}\right),\]
and thus
\[\cost(A,s)-\cost(\Omega,s)\geq -\varepsilon\cdot 2(n-|\Omega|^{1/2}) + 2(1+\varepsilon)\cdot (n|\Omega|^{-1/2}-|\Omega|^{1/2}).\]
With some straightforward, but tedious calculation, this term is greater than $\varepsilon\cdot 2(n-|\Omega|^{1/2})$ whenever
$|\Omega|<\left(-\varepsilon^{-1}/2 + \sqrt{\varepsilon^{-2}/4+(1+\varepsilon)\cdot \varepsilon^{-2}}\right)^2\approx 0.61\cdot  \varepsilon^{-2}$.

Therefore, any $1$-mean coreset on $A$ must have size at least $0.61\eps^{-2}$. Let $A'$ be a set of $k$ copies of $A$, placed arbitrarily far away from each other. We argue that a coreset for $k$-means on $A'$ must be a coreset for $1$-mean on most of the copies. 

First, the weight of each copy of $A$ is preserved, up to $(1\pm \eps)$: consider a set of centers $S$ where each copy of $A$ but one gets one center at the origin, and the last copy gets a center far away (say, distance $kn^2$). This copy will dictate the entire cost of the instance, and therefore its weight must be well approximated.

Preserving the weight implies that, for each simplex, the cost of clustering it to its origin is preserved. Suppose now that $|\Omega| \leq \frac{k(10\eps)^{-2}}{100}$. Then, at least $9k/10$ copies get fewer than $0.61\cdot (10\eps)^{-2}$ points: let $A_1, ..., A_{k'}$ be those copies. In $A_i$, the cost of center $s_i$ (defined previously as the average of coreset points in $A_i$) will be underestimated by $10 \eps \cdot 2(n-\sqrt{|\Omega \cap S_i|}) \geq 10\eps (2n-2\sqrt{n}) \geq 10\eps n$. For each other copy, the cost of clustering to the origin is overestimated by at most $\eps n$, since the weight is preserved.

Consider therefore the set of centers consisting the origins and the $s_i$. Its cost is $k/10 n + 9k/10 \cdot 2n- \sum_{i=1}^{k'}\sqrt{|\Omega \cap A_i|} \leq k/10 n + 9kn/5 \leq 2k n$. 
The previous discussion shows that the estimation error from $\Omega$ is at least $9k/10 \cdot 10\eps n - k/10 \cdot \eps n$. This is at least $8k \eps n $, which is strictly more than an $\eps$ fraction of the cost of the full instance. Hence, such a small $\Omega$ cannot be a coreset, which concludes the proof.
\end{proof}

\paragraph{Allowing non-input coreset points.} We further complement these lower bounds by showing that using non-input points, it is possible to compute significantly smaller coresets for certain ranges of $\varepsilon$, $k$, and $\beta$. Therefore, the restriction to using input points is necessary. We did not attempt to optimize the dependency on the ranges of parameters and believe that significantly better parameters are possible.

We will use a slightly different coreset notion known as a coreset with offset originally proposed by \cite{FeldmanSS20}. It requires that for some $\Delta\geq 0$, it holds that 
\[\left\vert \cost(\Omega,S) + \Delta - \cost(P,S)\right\vert \leq \varepsilon\cdot \cost(P,S).\]
In the following, we will show that there exists a coreset with offset of size $k$ if $\beta$ is sufficiently large.
It is possible to set the offset to $0$ by adding two points per center and increasing $\beta$ slightly, but we forgo this to simplify the presentation.

\begin{proposition}
For any cost-stable instance $P$ with $\beta > 512\varepsilon^{-2}$, there exists a coreset with offset of size $k$.
\end{proposition}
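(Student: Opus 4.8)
The plan is to let $\Omega$ consist of the $k$ centroids $a_1,\dots,a_k$ of an optimal $k$-means clustering $C_1,\dots,C_k$ of $P$, give $a_i$ the weight $|C_i|$, and use offset $\Delta:=\opt_k=\sum_i\cost(C_i,a_i)$. By the parallel-axis identity $\cost(C_i,c)=\cost(C_i,a_i)+|C_i|\cost(a_i,c)$, for every $S\in\mathcal S$ one has
\[
\costom(P,S)+\Delta=\sum_{i=1}^k\bigl(\cost(C_i,a_i)+|C_i|\cost(a_i,S)\bigr)=\sum_{i=1}^k\min_{c\in S}\cost(C_i,c)=:\cost^{*}(P,S),
\]
i.e.\ $\costom(P,S)+\Delta$ is the cost of $S$ when each optimal cluster is routed \emph{as a block} to its cheapest center in $S$. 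Since the true nearest-center assignment never does worse on a block, $\cost^{*}(P,S)\ge\cost(P,S)$, so $\mathrm{Gap}(S):=\costom(P,S)+\Delta-\cost(P,S)\ge 0$ and one half of the coreset-with-offset guarantee is automatic; it remains to prove $\mathrm{Gap}(S)\le\varepsilon\cost(P,S)$ for all $S$.

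The first step is a per-cluster estimate. Write $\mathrm{Gap}(S)=\sum_i\mathrm{gap}_i$ with $\mathrm{gap}_i=\cost(C_i,c_i^{*})-\sum_{p\in C_i}\cost(p,S)\ge 0$, where $c_i^{*}$ is the center of $S$ nearest to $a_i$; only clusters that $S$ \emph{splits} among two or more centers contribute. For $p\in C_i$ served by $x\ne c_i^{*}$, expanding around $a_i$ and using $\cost(a_i,c_i^{*})\le\cost(a_i,x)$ gives $\cost(p,c_i^{*})-\cost(p,x)\le 2\|p-a_i\|\cdot\|x-c_i^{*}\|$, and $\|x-c_i^{*}\|\le 2\|p-c_i^{*}\|\le 2(\|p-a_i\|+\|a_i-c_i^{*}\|)$. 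Summing over $p$, applying Cauchy--Schwarz, and using \Cref{lem: cost-from-center} to compare $|C_i|\cost(a_i,S)$ with $\cost(C_i,S)$, I get $\mathrm{gap}_i\lesssim\cost(C_i,a_i)+\sqrt{\cost(C_i,a_i)\cdot\cost(C_i,S)}$, hence by a further Cauchy--Schwarz over $i$,
\[
\mathrm{Gap}(S)\ \lesssim\ \opt_k+\sqrt{\opt_k\cdot\cost(P,S)}.
\]

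The second step shows that $\opt_k$ is a negligible fraction of $\cost(P,S)$ whenever $\mathrm{Gap}(S)\ne 0$; the displayed bound together with $\beta>512\varepsilon^{-2}$ then yields $\mathrm{Gap}(S)\le\varepsilon\cost(P,S)$. If the $S$-Voronoi partition of $P$ has at most $k-1$ nonempty parts, then $\cost(P,S)\ge\opt_{k-1}\ge(1+\beta)\opt_k$; the same holds (via the displayed bound and a quadratic inequality) if $i\mapsto c_i^{*}$ is non-injective, since then the block clustering uses at most $k-1$ centers, so $\cost(P,S)+\mathrm{Gap}(S)\ge\opt_{k-1}$. In the remaining case a split produces $p\in C_i$ served by $c_j^{*}$ with $j\ne i$; as $\|a_i-a_j\|\le 2\|p-a_i\|+\|a_i-c_i^{*}\|+\|a_j-c_j^{*}\|$ while $\beta$-stability forces $\|a_i-a_j\|^2\gtrsim\beta\opt_k/\min(|C_i|,|C_j|)$ (\Cref{lem: approxstability}), one of $\cost(a_i,S)$, $\cost(a_j,S)$, $\|p-a_i\|^2$ is $\gtrsim\beta\opt_k/\min(|C_i|,|C_j|)$. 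In the first two sub-cases \Cref{lem: cost-from-center}, exactly as in \Cref{lem: cost-increase}, forces $\cost(C_i,S)$ or $\cost(C_j,S)$, hence $\cost(P,S)$, to be $\gtrsim\beta\opt_k$. In the third $p$ is a far outlier, which pins down $|C_i|\gtrsim\beta$; depending on where $c_j^{*}$ lies one either lands back in the previous sub-case (the destination cluster pays $\gtrsim\beta\opt_k$), or $\mathrm{gap}_i$ remains second-order in $\|a_i-c_i^{*}\|$, so that summing over the at most $|C_i|/\beta$ such outliers (Cauchy--Schwarz) contributes only a lower-order term $\lesssim\opt_k/\sqrt\beta+\sqrt{\opt_k\cost(P,S)}/\sqrt\beta$. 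In every case $\mathrm{Gap}(S)\le\varepsilon\cost(P,S)$.

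I expect the third sub-case of the second step to be the main obstacle: formalizing ``splitting a cluster wastes a center, which is costly on a well-separated instance'' without losing constants, the delicate scenario being a far outlier of a \emph{large} cluster captured by a neighbouring cluster's center. Exploiting that such outliers are scarce (the cluster has $\gtrsim\beta$ points) and that the relevant gap is governed by the small quantity $\cost(a_i,S)$ is what makes it go through, and is also the source of the polynomial dependence of $\beta$ on $\varepsilon^{-1}$. Finally, the offset can be removed (at the cost of a constant-factor larger $\beta$) by replacing each weighted centroid $a_i$ with $O(1)$ explicit points reproducing both $|C_i|$ and the second moment $\cost(C_i,a_i)$, so that $\cost^{*}$ is matched exactly.
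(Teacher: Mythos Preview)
Your construction (optimal centroids weighted by cluster size, offset $\Delta=\opt_k$) and the identification of $\costom(P,S)+\Delta$ with the block-assignment cost $\cost^{*}(P,S)$ are exactly what the paper does. The two proofs diverge in how the error is controlled.

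\textbf{Step 1.} The paper does not derive your Cauchy--Schwarz bound $\mathrm{Gap}(S)\lesssim\opt_k+\sqrt{\opt_k\cdot\cost(P,S)}$. Instead it applies the approximate triangle inequality (\Cref{fact: triangle}\ref{item:weaktri}) with parameter $\varepsilon/2$ \emph{pointwise}: for every $p\in C_i$,
\[
\bigl|\cost(p,S)-(\cost(\mu_i,S)+\cost(p,\mu_i))\bigr|\le \tfrac{\varepsilon}{2}\cost(p,S)+\tfrac{8}{\varepsilon}\cost(p,\mu_i),
\]
and sums to get $|\cost(P,S)-\cost^{*}(P,S)|\le \tfrac{\varepsilon}{2}\cost(P,S)+\tfrac{8}{\varepsilon}\opt_k$. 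This is coarser than your bound but already carries the $\varepsilon$, so the endgame is a one-line substitution rather than a quadratic inequality.

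\textbf{Step 2.} Here the paper avoids your entire case analysis with a single pigeonhole: if \emph{any} cluster is split by the Voronoi partition of $S$, then counting edges in the bipartite incidence graph between clusters and Voronoi cells shows two clusters $C_a,C_b$ have points in the \emph{same} cell $V(s'')$. Stability then forces $|C_b|\cost(\mu_b,S)\gtrsim\beta\opt_k$, which is already a term of $\cost^{*}(P,S)$, so $\opt_k\le(\varepsilon^2/64)\cost^{*}(P,S)$ and plugging into the Step~1 bound finishes. This bypasses your cases (a)--(c) entirely.

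\textbf{The gap in your sub-case 3.} Your case (c) is precisely where this pigeonhole would help. As written, your third sub-case (``far outlier $p\in C_i$ served by $c_j^{*}$'') is not rigorous: you assert that either ``the destination cluster pays $\gtrsim\beta\opt_k$'' or ``$\mathrm{gap}_i$ remains second-order in $\|a_i-c_i^{*}\|$'', but you do not establish a dichotomy here, and the sketch that follows (counting outliers, invoking Cauchy--Schwarz again) does not close. Concretely, being in sub-case~3 only says $\|p-a_i\|^2$ is the large term in the triangle bound for $\|a_i-a_j\|$; it does \emph{not} force $\cost(a_i,S)$ to be small, so your ``second-order'' claim on $\mathrm{gap}_i$ has no footing. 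The cleanest fix is to replace your case (c) wholesale by the paper's pigeonhole: in your setting ($i\mapsto c_i^{*}$ injective, all Voronoi cells nonempty, some cluster split), the degree count still produces two clusters sharing a Voronoi cell, and you can then argue directly from stability that $\cost^{*}(P,S)\gtrsim\beta\opt_k$, which combined with your Step~1 bound gives $\mathrm{Gap}(S)\lesssim\cost(P,S)/\sqrt{\beta}\le\varepsilon\cost(P,S)$.
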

\begin{proof}
    Let $C=\{\mu_1,\mu_2,\ldots,\mu_k\}$ be an optimal solution with induced clusters $\{C_1,\ldots,C_k\}$ and let $\Delta = \cost(P,C)$. We claim that $C$, where $\mu_i$ is weighted by $|C_i|$ is the desired coreset.

    Consider some solution $S$. We first observe that if the points in $C_i$ are served by a single center in $s_i\in S$, then $\cost(C_i,s_i) = \cost(C_i,\mu_i) + |C_i|\cost(\mu_i,s_i)$ due to Lemma \ref{lem: centroid-fact}. Therefore, for any solution $S$ where each cluster is served by a single center, the cost is exactly preserved.

    Next, suppose that some cluster $C_i$ is served by at least two centers $s,s'\in S$. Then there must exist two clusters $C_a$ and $C_b$ that are served by the same center $s''$.
    For any cluster $C_i$, we have
    \begin{align*}
& \left\vert \cost(C_i,S) - (|C_i|\cost(\mu_i,S) + \cost(C_i, \mu_i))  \right\vert =        \bigg\vert \sum_{p\in C_i} \cost(p,S) - (\cost(\mu_i,S) + \cost(p,\mu_i)) \bigg\vert\\
&\leq   \sum_{p\in C_i} \varepsilon/2\cdot \cost(p,S) + (8/\varepsilon)\cdot \cost(p,\mu_i) = \varepsilon/2\cdot \cost(C_i,S) + (8/\varepsilon)\cdot \cost(C_i,C).
    \end{align*}
Summed over all clusters, this ensures 
\begin{equation}
    \label{eq:non_input_cost_diff}
    | \cost(P,S) - (\sum_i |C_i|\cost(\mu_i,S) + \cost(P, C))  | \leq \varepsilon/2\cdot \cost(P,S) + (8/\varepsilon)\cdot \opt_k.
\end{equation}
We use the stability property to bound $\opt_k$.
Without loss of generality, assume that $\|\mu_a - s''\| \leq \|\mu_b-s''\|$.
Combined with the properties of $\beta$-stable instances ( \Cref{lem: approxstability}), this ensures $\cost(\mu_b,S) > \cost(\mu_b,\mu_a)/4 \geq \beta\cdot OPT_k/(8|C_b|)$. 
Therefore, we get:
\begin{align*}
    \opt_k &\leq  8 |C_b|\cost(\mu_b, S)/\beta
    \leq (\eps^2/64) \cdot (\sum_i |C_i|\cost(\mu_i,S) + \cost(P, C))\\
    &\leq  (\eps^2/64) \big(\cost(P, S) + \big\vert \cost(P,S) - \big(\sum_i |C_i|\cost(\mu_i,S) + \cost(P, C)\big)  \big\vert\big).
\end{align*}
 Plugging this into \eqref{eq:non_input_cost_diff} and rearranging gives the result.
\end{proof}

\subsection{A Brief Remark on Perturbation Resilient Instances}\label{sec: perturbation}
Finally, we briefly remark on the second most popular stability notion for clustering termed perturbation resilience.
Unlike cost stability, perturbation resilience does not help with coreset construction.
We first define the condition and then argue why it cannot help.
\begin{definition}
    Let $(X,d)$ be a metric. We say that $(X,d')$ is an $\alpha$-perturbation of $(X,d)$ if for any pair of points $p,q\in X$
    $$d(p,q) \leq d'(p,q) \leq \alpha\cdot d(p,q).$$
\end{definition}
We note that the function $d'$ is not necessarily a distance function, that is $(X,d')$ might not be a metric. 
\begin{definition}
    Let $(X,d)$ be a metric. For any set of points $P\subseteq X$, let $C_{opt} = \{C_1,\ldots C_k\}$ be the optimal $k$-means clustering of $P$. We say that $P$ is $\alpha$-perturbation resilient, if for any  $(X,d')$, $C_{opt}$ is an optimal $k$-means clustering of $P$.
\end{definition}

\begin{proposition}
For any $\alpha$-perturbation resilient instance $P$, any $k$-means coreset must have size $\Omega(k\cdot \varepsilon^{-2}\min(\sqrt{k},\varepsilon^{-2}) \cdot \alpha^{-8})$.
\end{proposition}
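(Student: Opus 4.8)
I would prove this by reduction to the worst-case lower bound of Huang, Li and Wu \cite{HLW23}: I would exhibit an $\alpha$-perturbation resilient instance obtained from their hard instance by \emph{contracting} each optimal cluster toward its centroid, and then argue that a small coreset for the contracted instance would ``un-contract'' into an almost-as-good coreset for the original hard instance. Concretely, let $H$ be the instance of \cite{HLW23} for which every $\varepsilon$-coreset has size $\Omega(k\varepsilon^{-2}\min(\sqrt k,\varepsilon^{-2}))$, let $C^\star=(C_1,\dots,C_k)$ be an optimal $k$-clustering of $H$ with centroids $\mu_1,\dots,\mu_k$, and write $\Delta^\star=\mathrm{OPT}_k(H)$. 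Fix $\gamma=\Theta(1/\alpha)$ with a small implicit constant and let $P$ be obtained from $H$ by replacing each $p\in C_i$ with $\mu_i+\gamma(p-\mu_i)$; the image $C_i'$ keeps centroid $\mu_i$ and has $\mathrm{cost}(C_i',\mu_i)=\gamma^2\mathrm{cost}(C_i,\mu_i)$, so $\mathrm{OPT}_k(P)=\gamma^2\Delta^\star$. (If the optimal centroids of $H$ are close relative to $\sqrt{\Delta^\star}$, first translate the clusters $C_i$ rigidly to separate them; since the worst-case configurations of \cite{HLW23} keep each center near its own centroid, such a translation does not decrease the coreset complexity of $H$.)

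\textbf{Step 1: $P$ is $\alpha$-perturbation resilient.} The optimal $k$-clustering of $P$ is $(C_1',\dots,C_k')$, of cost $\gamma^2\Delta^\star$. Any other $k$-clustering must, by pigeonhole, place two sets $C_i',C_j'$ into a common part, and that part costs at least a constant times $\|\mu_i-\mu_j\|^2=\Omega(\Delta^\star)$, which exceeds $\alpha^2\cdot\gamma^2\Delta^\star$ once the constant in $\gamma$ is small enough. Since any $\alpha$-perturbation $d'$ satisfies $d^2\le d'^2\le\alpha^2 d^2$ pointwise, for every fixed clustering $\mathcal C$ one has $\mathrm{cost}_d(\mathcal C)\le\mathrm{cost}_{d'}(\mathcal C)\le\alpha^2\,\mathrm{cost}_d(\mathcal C)$ (choosing, per part, the $d'$-optimal center can only help). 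Hence $(C_1',\dots,C_k')$ remains the unique optimum under every $d'$, i.e.\ $P$ is $\alpha$-perturbation resilient.

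\textbf{Step 2: transferring the lower bound.} Given an $\varepsilon$-coreset $\Omega$ of $P$, un-contract it: send each $q\in\Omega\cap C_i'$ to $\mu_i+\gamma^{-1}(q-\mu_i)$, keeping its weight, to get a weighted set $\Omega_H$ with $|\Omega_H|=|\Omega|$. One shows that $\Omega_H$ is an $O(\alpha^2\varepsilon)$-coreset of $H$: for a solution $S$, split the clusters $C_i$ into those ``served locally'' by $S$ (one center of $S$ is the near-neighbour of $\mu_i$ and is nearest to essentially all of $C_i$, both before and after contraction) and the rest. On locally-served clusters the contraction rescales every relevant squared distance by exactly $\gamma^2$, so $\mathrm{cost}_{\Omega_H}$ equals $\gamma^{-2}$ times a quantity that $\Omega$ controls to within $\varepsilon\cdot\mathrm{cost}(P,\cdot)$; on the remaining clusters all of $C_i$ is (pre- and post-contraction) roughly equidistant from $S$, and the cluster-mass preservation inherited from $\Omega$ being an $\varepsilon$-coreset of $P$ makes the estimate accurate up to a constant factor. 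Summing and using $\gamma^{-2}=\Theta(\alpha^2)$ gives relative error $O(\alpha^2\varepsilon)$. Then \cite{HLW23} forces $|\Omega|=|\Omega_H|=\Omega\!\big(k(\alpha^2\varepsilon)^{-2}\min(\sqrt k,(\alpha^2\varepsilon)^{-2})\big)=\Omega\!\big(\alpha^{-8}\,k\varepsilon^{-2}\min(\sqrt k,\varepsilon^{-2})\big)$, since $\alpha\ge1$ and $\min(a x,a y)=a\min(x,y)$.

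\textbf{Main obstacle.} The delicate part is Step 2: making rigorous that un-contracting a coreset of $P$ costs only a $\mathrm{poly}(\alpha)$ factor in the accuracy. This requires a careful case analysis over \emph{all} solutions of $H$ (not just the hard ones of \cite{HLW23}) to decide which clusters are ``served locally'' and to bound the error from the rest via mass preservation; it also relies on the structural fact that the clusters in \cite{HLW23}'s construction can be rigidly separated without weakening the lower bound, so that the contracted instance is simultaneously $\alpha$-perturbation resilient and still hard. The precise exponent (here $8$) is an artifact of this bookkeeping and is presumably not tight.
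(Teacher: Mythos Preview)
Your approach is genuinely different from the paper's, and your Step~2 has a real gap. The paper does \emph{not} modify the hard instance $P$ of \cite{HLW23} at all. Instead it places next to $P$ (which consists of $k/2$ groups with intra-group distances $\Theta(1)$ and inter-group distances $\Theta(\varepsilon^{-1})$) a second copy $P'$ whose inter-group distances are shrunk to $\Theta(\alpha)$. The $k$ groups of $P\cup P'$ form the optimal clustering and the instance is $\Omega(\alpha)$-perturbation resilient. The lower bound then follows from a \emph{single} carefully chosen solution: serve all of $P'$ with one center (inflating the total cost by a factor $\Theta(\alpha^2)$) and use the $k/2$ freed centers on $P$. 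Relative to the inflated cost, the $\varepsilon$-coreset must approximate $P$ with effective precision $\varepsilon'=\Theta(\varepsilon\alpha^2)$, and the \cite{HLW23} bound at precision $\varepsilon'$ yields the $\alpha^{-8}$ factor. No transfer argument is needed because $P$ sits inside the instance untouched.

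Your Step~2 breaks at the sentence ``on locally-served clusters the contraction rescales every relevant squared distance by exactly $\gamma^2$.'' It does not: if $p'=\mu_i+\gamma(p-\mu_i)$ and $s\neq\mu_i$, then $\|p'-s\|^2=\|\mu_i-s\|^2+2\gamma\langle p-\mu_i,\mu_i-s\rangle+\gamma^2\|p-\mu_i\|^2$, which is not $\gamma^2\|p-s\|^2$. So un-contracting $\Omega$ does not convert a cost estimate for $P$ into one for $H$ under any natural mapping of solutions. More seriously, the solutions that witness the \cite{HLW23} lower bound are precisely those that do \emph{not} serve each group with a single nearby center; they split groups among several centers and exploit the fine intra-group geometry. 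Those solutions therefore never fall into your ``locally served'' case, and the fallback ``mass preservation'' argument for the rest is far too coarse to give an $O(\alpha^2\varepsilon)$ relative error there.

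There is also a structural mismatch in Step~1: the \cite{HLW23} instance has only $k/2$ groups, so the optimal $k$-clustering already splits groups, and ``contracting each optimal cluster toward its centroid'' is not the same as contracting groups. Your proposed fix of rigidly translating clusters apart would need its own proof that the \cite{HLW23} lower bound survives, since the hardness there comes exactly from inter-group interaction. The paper's add-a-copy trick avoids all of this.
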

\begin{proof}
The argument is very reliant on the worst-case lower bound construction by \cite{HLW23}.
We will only describe the properties that are necessary to ensure that their lower bound applies to a perturbation-resilient instance and encourage the reader to refer to their proof for details.
The simplest variant of their lower bound assumes that $\varepsilon^{-1} = k^{1/4}.$
The points $P$ are split into $k/2$ groups with the following properties.
\begin{enumerate}
    \item The groups have equal size $\Theta(k)$,
    \item Two points in the same group have pairwise distance $\Theta(1)$,
    \item Two points in different groups have pairwise distance $\Theta(\varepsilon^{-1})$.
\end{enumerate}

Notice that this instance would be perturbation-resilient for $\alpha\in \Theta(k^{1/4})$, if we had $k$ groups, each inducing a cluster.
To create a perturbation-resilient point set, we introduce a modified copy $P'$ of $P$ located at an arbitrary distance with the following properties:
\begin{enumerate}
    \item The groups have equal size $\Theta(k)$,
    \item Two points in the same group have pairwise distance $\Theta(1)$,
    \item Two points in different groups have pairwise distance $\Theta(\alpha)$.
\end{enumerate}

Observe that now the groups of $P\cup P'$ are the clusters of an optimal solution and that the instance is $\Omega(\alpha)$-perturbation stable. We further observe that if we serve all the points of $P'$ with one center, the cost increases by roughly a factor $\Theta(\alpha^2)$.
However,  there are now $k/2$ unused centers with which $P$ becomes a worst-case instance for coresets with $\varepsilon' = \Theta(\varepsilon/\alpha^2)$. Thus, a coreset lower bound of $\Omega(k\cdot \varepsilon^{-2}\min(\sqrt{k},\varepsilon^{-2}) \cdot \alpha^{-8})$ for $\alpha$-perturbation resilient instances follows from the $\Omega(k\cdot \varepsilon^{-2}\min(\sqrt{k},\varepsilon^{-2}))$ worst case lower bound of \cite{HLW23}.
\end{proof}
        \bibliographystyle{alpha}
        \bibliography{general}
        \appendix
        \section{Useful Facts}\label{app: missing-proofs}
Here, we collect some standard facts that we have used repeatedly.
\begin{fact}[Bernstein's Inequality]\label{lem: bernstein}
	Let $Y_1, \ldots, Y_\ell$ be independent mean-zero bounded random variables satisfying $|Y_i| \leq M$, and let $S=\sum_{i=1}^\ell Y_i$ and $\sigma^2 =\sum_{i = 1}^\ell \E[Y_i^2]$. Then, for all $t>0$,
 \begin{align*}
		\Pr\left( |S| \geq t \right) \leq 2\exp\left(-t^2/(2\sigma^2 + 2Mt/3)\right).
	\end{align*}
	
\end{fact}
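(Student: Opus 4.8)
This is the classical Bernstein inequality, and I would prove it by the exponential moment (Chernoff) method, applied once to $S$ and once to $-S$, followed by a union bound. The only mildly delicate point is choosing the series estimate for the moment generating function so that the optimized exponent collapses exactly to the advertised form $t^2/(2\sigma^2 + 2Mt/3)$.

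\textbf{Step 1: Chernoff setup.} For any $\lambda > 0$, Markov's inequality applied to $e^{\lambda S}$ gives
\[
\Pr(S \geq t) \leq e^{-\lambda t}\,\E[e^{\lambda S}] = e^{-\lambda t}\prod_{i=1}^\ell \E[e^{\lambda Y_i}],
\]
using independence of the $Y_i$.

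\textbf{Step 2: Bounding each MGF factor.} Fix $i$ and expand $e^{\lambda Y_i} = 1 + \lambda Y_i + \sum_{k\geq 2}\lambda^k Y_i^k/k!$. Taking expectations and using $\E[Y_i]=0$, together with the moment bound $|\E[Y_i^k]| \leq M^{k-2}\,\E[Y_i^2]$ for $k\geq 2$ (which follows from $|Y_i|\le M$), we get
\[
\E[e^{\lambda Y_i}] \leq 1 + \E[Y_i^2]\sum_{k\geq 2}\frac{\lambda^k M^{k-2}}{k!}.
\]
Now invoke the elementary inequality $k! \geq 2\cdot 3^{k-2}$ for $k\geq 2$, so that for $\lambda M < 3$ the tail sum is dominated by a geometric series:
\[
\sum_{k\geq 2}\frac{\lambda^k M^{k-2}}{k!} \leq \frac{\lambda^2}{2}\sum_{j\geq 0}\Big(\frac{\lambda M}{3}\Big)^j = \frac{\lambda^2}{2(1-\lambda M/3)}.
\]
Combining with $1+x \leq e^x$ yields $\E[e^{\lambda Y_i}] \leq \exp\!\big(\tfrac{\lambda^2 \E[Y_i^2]}{2(1-\lambda M/3)}\big)$, and multiplying over $i$ gives $\prod_i \E[e^{\lambda Y_i}] \leq \exp\!\big(\tfrac{\lambda^2 \sigma^2}{2(1-\lambda M/3)}\big)$.

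\textbf{Step 3: Optimizing $\lambda$ and symmetrizing.} Plugging into Step 1, $\Pr(S\geq t) \leq \exp\!\big(-\lambda t + \tfrac{\lambda^2\sigma^2}{2(1-\lambda M/3)}\big)$ for all $0<\lambda<3/M$. Choosing $\lambda = t/(\sigma^2 + Mt/3)$ (which indeed satisfies $\lambda M < 3$), one computes $1-\lambda M/3 = \sigma^2/(\sigma^2+Mt/3)$, so the second term equals $\lambda t/2$ and the exponent becomes $-\lambda t/2 = -t^2/(2\sigma^2 + 2Mt/3)$. Hence $\Pr(S\geq t) \leq \exp(-t^2/(2\sigma^2+2Mt/3))$. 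Applying the identical argument to the variables $-Y_i$ (still mean-zero, still bounded by $M$, same $\sigma^2$) bounds $\Pr(S \leq -t)$ by the same quantity, and a union bound over the two one-sided events gives the factor $2$ in the statement. The main (only) obstacle is the bookkeeping in Step 2–3, namely picking the $k!\geq 2\cdot 3^{k-2}$ estimate rather than a cruder one, since this is precisely what produces the $2Mt/3$ term after the $\lambda$-optimization; any looser constant there would change the denominator.
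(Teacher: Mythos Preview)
Your proof is correct and is the standard Chernoff-method derivation of Bernstein's inequality. The paper itself does not prove this statement: it is listed as a ``Fact'' in the appendix of standard tools and is quoted without proof, as is customary for this classical inequality. So there is nothing to compare against; your argument supplies exactly the textbook proof, with the right series estimate $k!\ge 2\cdot 3^{k-2}$ producing the sharp $2Mt/3$ term after optimizing $\lambda$.
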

\begin{fact}[Approximate Triangle Inequality]\label{fact: triangle}
   Let $p_1, p_2, p_3, p_4$ be points in $\R^d$. We then have,
    \begin{enumerate}[(i)]
    \item\label{item:weaktri} \label{item: tri-1} For any $\eps > 0$, $\|p_1-p_3\|^2 \leq (1+\varepsilon)\|p_1-p_2\|^2 + \left(1+\frac{1}{\varepsilon}\right) \|p_2-p_3\|^2$
        \item \label{item: tri-2}$\left|\norm{p_1 - p_3}^2 - \norm{p_1- p_2}^2 \right| \leq 2 \cdot\norm{p_1 - p_2} \cdot \norm{p_2-p_3} + \norm{p_2- p_3}^2.$
         \item \label{item: tri-3} $\norm{p_1 - p_3}^2 \leq 2 (\norm{p_1 - p_2}^2 + \norm{p_2 - p_3}^2)$.    
         \item \label{item: tri-4} $\norm{p_1 - p_4}^2 \leq 3(\norm{p_1 - p_2}^2 + \norm{p_2 - p_3}^2 + \norm{p_3 - p_4}^2)$
    \end{enumerate}

\end{fact}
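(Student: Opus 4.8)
The plan is to derive all four inequalities from a single elementary identity: for any $a,b\in\R^d$, expanding the squared Euclidean norm gives $\|a+b\|^2=\|a\|^2+2\langle a,b\rangle+\|b\|^2$, and by Cauchy--Schwarz $|\langle a,b\rangle|\le\|a\|\,\|b\|$. Each part is then a matter of choosing the right $a,b$ and bounding the cross term appropriately.

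For part (ii) I would take $a=p_1-p_2$ and $b=p_2-p_3$, so $a+b=p_1-p_3$. The identity then gives $\|p_1-p_3\|^2-\|p_1-p_2\|^2=2\langle a,b\rangle+\|b\|^2$; taking absolute values and applying Cauchy--Schwarz to the cross term yields exactly $\bigl|\,\|p_1-p_3\|^2-\|p_1-p_2\|^2\,\bigr|\le 2\|p_1-p_2\|\,\|p_2-p_3\|+\|p_2-p_3\|^2$. For part (i) I would use the same expansion but bound the cross term by Young's inequality, $2\|a\|\,\|b\|\le\varepsilon\|a\|^2+\varepsilon^{-1}\|b\|^2$ for every $\varepsilon>0$, which after substitution gives $\|p_1-p_3\|^2\le(1+\varepsilon)\|p_1-p_2\|^2+(1+\varepsilon^{-1})\|p_2-p_3\|^2$. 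Part (iii) is then just the special case $\varepsilon=1$ of part (i) (equivalently, it follows from the parallelogram law $\|a+b\|^2+\|a-b\|^2=2\|a\|^2+2\|b\|^2$).

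For part (iv), iterating (iii) twice would produce the constant $4$ rather than $3$, so instead I would invoke the general bound $\|x_1+x_2+x_3\|^2\le 3\bigl(\|x_1\|^2+\|x_2\|^2+\|x_3\|^2\bigr)$, which follows from Cauchy--Schwarz ($\|x_1+x_2+x_3\|\le\|x_1\|+\|x_2\|+\|x_3\|\le\sqrt{3}\,(\sum_i\|x_i\|^2)^{1/2}$) or, equivalently, from convexity of $t\mapsto t^2$. Taking $x_1=p_1-p_2$, $x_2=p_2-p_3$, $x_3=p_3-p_4$ gives the stated inequality.

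There is no genuine obstacle here; the statement is entirely elementary. The only point that requires a little care is precisely part (iv), where one must resist applying the two-step triangle inequality naively and instead bound the three differences simultaneously in order to keep the constant equal to $3$.
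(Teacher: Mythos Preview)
Your proposal is correct and essentially matches the paper's proof: both start from the squared triangle inequality (you via the identity $\|a+b\|^2=\|a\|^2+2\langle a,b\rangle+\|b\|^2$ plus Cauchy--Schwarz, the paper by squaring $\|p_1-p_3\|\le\|p_1-p_2\|+\|p_2-p_3\|$), then use Young's inequality $2ab\le\varepsilon a^2+\varepsilon^{-1}b^2$ for (i), set $\varepsilon=1$ for (iii), and use $(a+b+c)^2\le 3(a^2+b^2+c^2)$ for (iv). The only minor difference is that the paper derives (ii) from (i) by choosing $\varepsilon=\|p_2-p_3\|/\|p_1-p_2\|$, whereas your direct use of the identity handles the absolute value slightly more cleanly.
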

\begin{proof}
    The first inequality follows by squaring  $\norm{p_1-p_3} \leq \norm {p_1-p_2} + \norm{p_2-p_3}$ and using that $2ab \leq \eps a^2 + \eps^{-1} b^2$, for any $a,b$. 
    The second and third inequality follows from the first by setting $\eps = \norm{p_2 - p_3}/\norm{p_1 - p_2}$ and  $\eps = 1$ respectively. The fourth inequality follows by squaring
  $\norm{p_1 - p_4} \leq (\norm{p_1 - p_2} + \norm{p_2 - p_3} + \norm{p_3 - p_4})$ and as  $(a+b+c)^2 \leq 3(a^2 + b^2 + c^2)$ by  Cauchy-Schwarz. \qedhere
\end{proof}

\begin{fact}\label{fact: gaussian}
    Let $a_1, \ldots, a_t$ be reals and $g_1, \ldots, g_t$ be independent $N(0,1)$ random variables; then $\sum_{i=1}^t {a_i g_i}$ has the same distribution as $ N(0, \sum_{i=1}^t a_i^2)$.
\end{fact}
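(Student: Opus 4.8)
The plan is to prove \Cref{fact: gaussian} via moment generating functions (MGFs) and their uniqueness. First I would record the MGF of a single standard Gaussian: for $g_i \sim N(0,1)$ and any $s \in \R$, one has $\E[e^{s g_i}] = e^{s^2/2}$, which follows by completing the square inside the Gaussian integral. Scaling then gives $\E[e^{s a_i g_i}] = e^{a_i^2 s^2/2}$, i.e.\ $a_i g_i$ has the MGF of $N(0, a_i^2)$.

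Next, since $g_1, \dots, g_t$ are independent, so are $a_1 g_1, \dots, a_t g_t$, and hence the MGF of the sum factors as a product: $\E[e^{s \sum_{i} a_i g_i}] = \prod_{i=1}^t \E[e^{s a_i g_i}] = \prod_{i=1}^t e^{a_i^2 s^2/2} = e^{\sigma^2 s^2/2}$, where $\sigma^2 := \sum_{i=1}^t a_i^2$. This is exactly the MGF of $N(0,\sigma^2)$, with the degenerate case $\sigma^2 = 0$ understood as the point mass at $0$. Since all these MGFs are finite on the whole of $\R$, the uniqueness theorem for moment generating functions lets us conclude that $\sum_{i=1}^t a_i g_i$ has the same distribution as $N(0, \sum_{i=1}^t a_i^2)$.

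I expect no genuine obstacle here, as this is a textbook fact; the only slightly non-elementary ingredient is the uniqueness theorem for MGFs. If one prefers to sidestep the finiteness hypothesis, one can run the identical computation with characteristic functions $u \mapsto \E[e^{\mathrm{i} u (\cdot)}]$, which are always well-defined and whose uniqueness theorem is hypothesis-free. Alternatively, one could induct on $t$, using at the inductive step that the convolution of two centered Gaussian densities is again a centered Gaussian density with the variances added — a direct "complete the square" computation — to reduce to the two-variable case.
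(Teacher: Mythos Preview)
Your proof is correct and complete. The paper states this as a standard fact without proof, so there is no argument to compare against; your MGF computation (or the characteristic-function variant you mention) is a perfectly standard way to establish it.
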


\begin{fact} [See for example \cite{kamath2015bounds}]\label{fact: max-of-gaussians}
	For $n \geq 2$ and $i = 1,\ldots, n$, let $g_i \sim \mathcal{N}(0,\sigma_i^2)$ be Gaussians such that $\max_i \sigma_i \leq \sigma$; then $\E \max_{i \in [n]} |g_i|\leq \sigma\sqrt{2\ln 2n}$.
\end{fact}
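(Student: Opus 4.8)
The plan is to use the classical exponential-moment (Chernoff-style) method for bounding the expected maximum of random variables, specialized to Gaussians. First I would introduce a free parameter $\lambda > 0$, to be optimized later, and apply Jensen's inequality to the convex function $x \mapsto e^{\lambda x}$, followed by a union-type bound on the maximum:
\[
e^{\lambda\, \E\left[\max_{i \in [n]} |g_i|\right]} \le \E\!\left[e^{\lambda \max_{i} |g_i|}\right] = \E\!\left[\max_{i} e^{\lambda |g_i|}\right] \le \sum_{i=1}^n \E\!\left[e^{\lambda |g_i|}\right].
\]

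The second step is to bound the moment generating function of each $|g_i|$. Using the pointwise inequality $e^{\lambda |x|} \le e^{\lambda x} + e^{-\lambda x}$ together with the Gaussian MGF $\E[e^{\lambda g_i}] = \E[e^{-\lambda g_i}] = e^{\lambda^2 \sigma_i^2 / 2}$, one gets $\E[e^{\lambda |g_i|}] \le 2 e^{\lambda^2 \sigma_i^2 / 2} \le 2 e^{\lambda^2 \sigma^2 / 2}$, where the last step invokes the hypothesis $\sigma_i \le \sigma$. Substituting into the display above and taking logarithms yields
\[
\E\left[\max_{i \in [n]} |g_i|\right] \le \frac{\ln(2n)}{\lambda} + \frac{\lambda \sigma^2}{2}.
\]

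The final step is to choose $\lambda$ minimizing the right-hand side. Elementary one-variable calculus gives the optimizer $\lambda = \sqrt{2\ln(2n)}\,/\,\sigma$, which is legitimate because $n \ge 2$ forces $\ln(2n) > 0$; plugging it back in gives exactly $\E[\max_i |g_i|] \le \sigma \sqrt{2 \ln(2n)}$, as claimed.

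I do not anticipate a genuine obstacle, as this is a textbook fact (consistent with the citation to \cite{kamath2015bounds}). The only point needing a little care is that the statement concerns $\max_i |g_i|$ rather than $\max_i g_i$; this is handled cleanly by the bound $e^{\lambda|x|} \le e^{\lambda x} + e^{-\lambda x}$, whose sole cost is the harmless factor $2$ inside the logarithm. An alternative route — integrating the tail estimate $\Pr[|g_i| > t] \le 2 e^{-t^2/(2\sigma^2)}$ against a union bound via $\E[Y] = \int_0^\infty \Pr[Y > t]\, dt$ — also works, but produces a slightly worse constant, so the MGF argument above is the cleaner choice.
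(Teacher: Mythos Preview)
Your proof is correct and is the standard MGF/Chernoff argument for this bound. The paper itself does not prove this fact at all; it is stated with a citation to \cite{kamath2015bounds} and used as a black box, so there is no paper proof to compare against.
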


We have the following standard property of the centroid.
\begin{fact}\label{lem: centroid-fact}
    Let $P \subset \R^d$ be a set of $n$ points and $\Delta(P) = \tfrac{1}{n}\sum_{p \in P} p$ be the centroid. For any $x \in \R^d$ we have,
    $\sum_{p \in P} \norm{x - p}^2 = n \norm{x - \Delta(P)}^2 + \sum_{p \in P} \norm{\Delta(P) - p}^2 $.
\end{fact}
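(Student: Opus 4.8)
The plan is to prove this by the standard bias--variance (parallel axis) decomposition: for each $p \in P$, write $x - p = (x - \Delta(P)) + (\Delta(P) - p)$ and expand the squared norm. This gives
\[
\sum_{p \in P} \norm{x - p}^2 = \sum_{p \in P}\Big( \norm{x-\Delta(P)}^2 + 2\ip{x-\Delta(P),\ \Delta(P)-p} + \norm{\Delta(P)-p}^2\Big).
\]

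First I would split the right-hand side into its three sums. The first sum equals $n\,\norm{x - \Delta(P)}^2$ since the summand does not depend on $p$, and the third sum is precisely $\sum_{p\in P}\norm{\Delta(P) - p}^2$, the second term claimed on the right. It then remains only to observe that the cross term vanishes: pulling the fixed vector $x - \Delta(P)$ out of the sum gives $2\,\ip{x - \Delta(P),\ \sum_{p\in P}(\Delta(P) - p)}$, and $\sum_{p\in P}(\Delta(P) - p) = n\Delta(P) - \sum_{p\in P} p = \mathbf{0}$ by the definition $\Delta(P) = \tfrac1n\sum_{p\in P} p$. Adding the three contributions yields the stated identity.

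There is essentially no obstacle here: this is the classical fact underlying the observation that the centroid minimizes the sum of squared Euclidean distances, and the only point that needs checking --- the vanishing of the inner product against the mean-centered differences --- is immediate from the definition of $\Delta(P)$.
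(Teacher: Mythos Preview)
Your proof is correct and is precisely the standard argument. The paper itself does not prove this statement---it is listed as a known fact without proof---so there is nothing to compare against; your derivation via expanding $\norm{(x-\Delta(P)) + (\Delta(P)-p)}^2$ and using $\sum_{p\in P}(\Delta(P)-p)=0$ is exactly what one would supply.
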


\begin{claim}
    Let $S$ be a set of $k$ centers. Suppose that $C_j$ is a cluster whose center $a_j$ satisfies $\cost(a_j,S) \geq 32\Delta_j$; then $\cost(C_j,S) \geq \tfrac{1}{6}  |C_j| \cost(a_j,S)$
\end{claim}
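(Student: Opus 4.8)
The plan is a short averaging argument followed by the triangle inequality. First I would set $D := \sqrt{\cost(a_j,S)}$, the Euclidean distance from $a_j$ to its nearest center in $S$, so that $D^2 = \cost(a_j,S) \ge 32\Delta_j$ by hypothesis. Since $a_j$ is the center of $C_j$ in the solution $A$, every point of $C_j$ has $a_j$ as its nearest center in $A$, so $\sum_{p\in C_j}\cost(p,a_j) = \cost(C_j,A) = |C_j|\Delta_j$. By Markov's inequality, at most $|C_j|/8$ points $p\in C_j$ can satisfy $\cost(p,a_j) > 8\Delta_j$; call the remaining (at least) $\tfrac{7}{8}|C_j|$ points \emph{good}. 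Each good point $p$ then has $\|p-a_j\| \le \sqrt{8\Delta_j} \le D/2$, using $D^2 \ge 32\Delta_j$.

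Next, for a good point $p$ let $c\in S$ be its nearest center. Since $\|a_j - c'\| \ge D$ for \emph{every} $c'\in S$, the triangle inequality gives $\|p - c\| \ge \|a_j - c\| - \|a_j - p\| \ge D - D/2 = D/2$, hence $\cost(p,S) = \|p-c\|^2 \ge D^2/4 = \cost(a_j,S)/4$. Finally, summing this pointwise bound over the good points,
\[
\cost(C_j,S) \;\ge\; \sum_{p \text{ good}} \cost(p,S) \;\ge\; \frac{7}{8}\,|C_j|\cdot\frac{\cost(a_j,S)}{4} \;=\; \frac{7}{32}\,|C_j|\,\cost(a_j,S) \;\ge\; \frac{1}{6}\,|C_j|\,\cost(a_j,S),
\]
which is the claim.

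I do not anticipate a genuine obstacle here; the only thing to check is that the constants line up, and they do with room to spare ($7/32 > 1/6$). The Markov threshold $8\Delta_j$ is essentially forced: one needs the kept points to lie inside the ball of radius $D/2$ about $a_j$, and the factor $32$ in the hypothesis $\cost(a_j,S)\ge 32\Delta_j$ is exactly what guarantees that a constant fraction of $C_j$ does so.
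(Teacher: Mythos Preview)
Your proof is correct and follows essentially the same approach as the paper's: a Markov bound to isolate a constant fraction of $C_j$ lying close to $a_j$, followed by a triangle-inequality estimate showing each such point pays $\Omega(\cost(a_j,S))$ to $S$. The only cosmetic differences are the Markov threshold (you use $8\Delta_j$ and keep $7/8$ of the points; the paper uses $2\Delta_j$ and keeps $1/2$) and that you apply the ordinary triangle inequality to distances and then square, whereas the paper applies the approximate triangle inequality for squared costs directly.
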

\begin{proof}
    We show that since the centers in $S$ are significantly far away from $S$, at least half the points in cluster $C_j$ pay a cost of $\Omega(\cost(a_j,S))$ to $S$.
    
    Let $p$ be a point of $C_j$ that satisfies $\cost(a_j,p) \leq 2\Delta_j$. For any such point, we have,
    \begin{align*}
        |\cost(p,S)- \cost(a_j,S)| &\leq 2 \sqrt{\cost(a_j,S) \cost(a_j,p)} + \cost(a_j,p) \tag{By \Cref{item: tri-2} of \Cref{fact: triangle}}\\
        &\leq 2\cost(a_j,S)/3 \tag{as $\cost(a_j,p) \leq 2 \Delta_j \leq \cost(a_j,S)/16$}
    \end{align*}
    Rearranging terms we get $\cost(p,S) \geq \cost(a_j,S)/3$. By Markov's inequality the number of points $p$ with $\cost(a_j,p) \leq 2 \Delta_j$ is at least $|C_j|/2$. It therefore follows that $\cost(C_j,S) \geq |C_j|/2 \cdot \cost(a_j,S)/3 \geq \tfrac{1}{6} |C_j| \cost(a_j,S)$.
\end{proof}

\section{Basic Properties of the Algorithm}

\subsection{Properties of the Sample}\label{sec: properties-of-sample}
\noindent We state some properties of \Cref{alg: sensitivity-sampling} which are useful in the analysis. Firstly, since the weight of a point is the inverse of its probability of being sampled, it is easily verified from \eqref{defn:mu}  that the weights satisfy the following bound. 
\begin{fact}\label{lem: weight-bound}
	For a point $q \in \Omega$ from cluster $C_j$, its weight satisfies  
	\begin{equation*}
 w_q \leq 4\min\left(\frac{k\,|C_j|}{\csize}, \frac{k \, \cost(C_j,A)}{\csize\, \cost(p,A)}, \frac{\cost(P,A)}{\csize\, \cost(q,A)}, \frac{\cost(P,A)}{\csize \, \Delta_q}\right). 
 \end{equation*}
\end{fact}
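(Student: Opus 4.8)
The plan is to read the bound off directly from the definition of the sampling probability $\mu$ in \eqref{defn:mu} together with the weight rule $w_{q_i} = 1/(\csize\cdot\mu(q_i))$ used in \Cref{alg: sensitivity-sampling}. The only structural observation needed is that $\mu(p)$, up to the leading constant $1/4$, is a \emph{sum} of four non-negative quantities, so $\mu(p)$ dominates $1/4$ times each of those four summands individually.

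Carrying this out: fix $q\in\Omega$ lying in cluster $C_j$. Discarding three of the four non-negative summands of \eqref{defn:mu} at a time gives the four lower bounds $\mu(q)\ge \tfrac14\cdot\tfrac{1}{k|C_j|}$, $\mu(q)\ge \tfrac14\cdot\tfrac{\cost(q,A)}{k\,\cost(C_j,A)}$, $\mu(q)\ge \tfrac14\cdot\tfrac{\cost(q,A)}{\cost(P,A)}$, and $\mu(q)\ge \tfrac14\cdot\tfrac{\Delta_q}{\cost(P,A)}$. Taking reciprocals, dividing by $\csize$, and then combining the resulting four upper bounds on $w_q$ into a single minimum yields exactly the claimed inequality (here $p$ and $q$ denote the same point, and the $\cost(p,A)$ appearing in the statement is $\cost(q,A)$).

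There is no real obstacle here: the argument is a one-line consequence of the non-negativity of the summands in $\mu$. The only minor care needed is the degenerate case where a denominator such as $\cost(C_j,A)$ or $\cost(q,A)$ vanishes; in that case the corresponding summand (and hence the corresponding bound in the $\min$) is interpreted as $+\infty$, and the minimum is attained by one of the remaining three terms, so the stated inequality still holds.
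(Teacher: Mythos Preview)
Your proposal is correct and matches the paper's treatment exactly: the paper does not give a detailed proof but simply notes that since the weight is the inverse of the sampling probability, the bound is easily verified from the definition of $\mu$ in \eqref{defn:mu}. Your argument spells out precisely this verification.
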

These bounds will be extremely useful to bound the variance of various quantities.

Secondly, we show that with high probability, the coreset preserves the number of points in each cluster as well as its cost. Moreover, it ensures that the cluster does not over-sample high-cost points. These properties are summarized by an event $\mathcal{E}$ defined below.

Before we do this, it will be convenient to first partition points in a cluster into rings according to their cost from the center. We define the notation $\Delta_j = \cost(C_j,A)/|C_j|$ to be the average cost of cluster $C_j$.

\paragraph{Partitioning Clusters into Rings.} We begin by partitioning each cluster $C_j$ into rings centered around its center $a_j$; for $\ell$ satisfying $1 \leq \ell \leq \ell_{max} =  \floor{\log_2(1/\eps)}$, we define the \textit{ring} $R_j(\ell)\subset C_j$ to be the set of points $p \in C_j$ with $\cost(p,a_j) \in [2^{\ell} \Delta_j, 2^{\ell+1} \Delta_j)$. We also let $R_j(0)$ be the points $p \in C_j$ with $\cost(p,a_j) < 2 \Delta_j$ and $R_j(\ell_{max}+1)$ to be the points $p$ satisfying $\cost(p,a_j) \geq 2^{\lmax + 1}\Delta_j$. Clearly, the sets $R_j(0), \ldots, R_j(\ell_{max} +1)$ partition $C_j$.  

We now define the event $\mathcal{E}$.
\begin{definition}[Event $\mathcal E$]
	The event $\mathcal E$ occurs iff $\Omega$ satisfies the following properties:
	\begin{description}
		\item[$P_1$: (Cluster Size Preservation)] 
  For each cluster $C_j$, we have \[\sum_{q\in \Omega \cap C_j}w_q \in  [(1- \eps)|C_j|,(1+\eps)|C_j|].\]
	\item[$P_2$: (Ring Size Preservation)] 
 For each $j \in [k]$ and $0 \leq \ell \leq \ell_{max}+1$ the set $R_j(\ell)$ satisfies, \[\sum_{q\in \Omega \cap R_j(\ell)}w_q \leq |C_j|/2^{\ell-1}.\]
        \item[$P_3$: (Cluster Cost Preservation)] For each cluster $C_j$, $\costom(C_j, A) = (1 \pm \eps) \cost(C_j, A)$.
	\end{description}
\end{definition}

Notice crucially that $\mathcal{E}$ only depends on the sample $\Omega$ (and in particular does \textit{not} place any restriction on $S$).
The following lemma shows that $\mathcal{E}$ holds with high probability. 
\begin{lemma}
    If $m= \Omega(k\eps^{-2}\log (k \eps^{-1}))$ event $\mathcal{E}$ holds with probability at least $1-\eps^3/k^3$.
\end{lemma}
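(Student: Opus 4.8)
The plan is to establish the three properties $P_1$, $P_2$, $P_3$ defining $\mathcal{E}$ one at a time, each by writing the relevant quantity as a sum of $m$ independent bounded random variables (one per sample $q_i$), applying Bernstein's inequality (\Cref{lem: bernstein}), and finishing with a union bound over the $O(k\log\eps^{-1})$ events involved.

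For $P_1$, fix a cluster $C_j$ and set $Y_i := w_{q_i}\cdot\mathbbm{1}[q_i \in C_j]$, so that $\sum_{q\in\Omega\cap C_j} w_q = \sum_{i=1}^m Y_i$. Since $q_i$ is drawn from $\mu$ and weighted by $w_{q_i}=1/(m\mu(q_i))$, each $Y_i$ has mean $|C_j|/m$, hence the sum is an unbiased estimator of $|C_j|$. The first term $\tfrac{1}{4k|C_j|}$ in \eqref{defn:mu} --- equivalently the first bound of \Cref{lem: weight-bound} --- gives the deterministic bound $Y_i \le 4k|C_j|/m$ and, together with $\E[Y_i^2] = \tfrac{1}{m^2}\sum_{p\in C_j}1/\mu(p) \le 4k|C_j|^2/m^2$, yields total variance $\sigma^2 = \sum_i \E[Y_i^2] \lesssim k|C_j|^2/m$. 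Bernstein's inequality then bounds the probability that $\sum_i Y_i$ deviates from $|C_j|$ by more than $\eps|C_j|$ by $2\exp(-\Omega(\eps^2 m/k))$. The argument for $P_3$ is the same with $Y_i$ replaced by $Z_i := w_{q_i}\cost(q_i,A)\mathbbm{1}[q_i\in C_j]$: now $\sum_i Z_i = \costom(C_j,A)$ estimates $\cost(C_j,A)$ without bias, the second term of \eqref{defn:mu} gives $Z_i = w_{q_i}\cost(q_i,A) \lesssim k\cost(C_j,A)/m$ and $\sigma^2 \lesssim k\cost(C_j,A)^2/m$, and Bernstein again gives failure probability $2\exp(-\Omega(\eps^2 m/k))$.

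For $P_2$, fix $j$ and a ring index $\ell$ and put $W_i := w_{q_i}\mathbbm{1}[q_i\in R_j(\ell)]$, so $\sum_i W_i$ is an unbiased estimator of $|R_j(\ell)|$. Every $p\in R_j(\ell)$ has $\cost(p,A)\ge 2^\ell\Delta_j$, so the second term of \eqref{defn:mu} gives $\mu(p)\ge 2^\ell/(4k|C_j|)$, hence $W_i\le 4k|C_j|/(2^\ell m)$ and $\sigma^2\lesssim k|C_j|^2/(2^{2\ell}m)$; moreover $|R_j(\ell)|\le |C_j|/2^\ell$ because $|C_j|\Delta_j = \cost(C_j,A)\ge |R_j(\ell)|\cdot 2^\ell\Delta_j$. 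Therefore the bad event $\sum_i W_i > |C_j|/2^{\ell-1}$ is contained in the event that $\sum_i W_i$ exceeds its mean by at least $t := |C_j|/2^\ell$, and Bernstein with this $t$ (and the above $\sigma^2$ and $M\lesssim k|C_j|/(2^\ell m)$) bounds its probability by $2\exp(-\Omega(m/k))$.

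Putting the pieces together, a union bound over the $k$ clusters for $P_1$ and $P_3$ and over the $k(\ell_{max}+2) = O(k\log\eps^{-1})$ pairs $(j,\ell)$ for $P_2$ shows that $\mathcal{E}$ fails with probability at most $O(k\log\eps^{-1})\cdot 2\exp(-\Omega(\eps^2 m/k))$, which is below $\eps^3/k^3$ once $m \ge C k\eps^{-2}\log(k\eps^{-1})$ for a sufficiently large absolute constant $C$. I expect the only mildly delicate point to be $P_2$: one has to compare the estimator against the deliberately loose target $|C_j|/2^{\ell-1}$ rather than its exact mean $|R_j(\ell)|$ (which may be far smaller, or even zero), and it is precisely this slack that makes the one-sided Bernstein bound go through uniformly over all rings. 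Everything else is a routine second-moment computation driven by the weight bounds of \Cref{lem: weight-bound}, one bound of $\mu$ feeding each property.
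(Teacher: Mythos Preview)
Your proof is correct and follows essentially the same route as the paper: apply Bernstein's inequality to each of $P_1,P_2,P_3$ using the appropriate weight bounds from \Cref{lem: weight-bound}, then take a union bound over the $O(k\log\eps^{-1})$ events. Your handling of $P_2$ is in fact slightly cleaner than the paper's (you exploit the ring-specific bound $w_p\le 4k|C_j|/(2^\ell m)$ and aim directly at the one-sided target $|C_j|/2^{\ell-1}$, whereas the paper argues for $(1\pm\eps)|R_j(\ell)|$ concentration and then relaxes); note only that your lower bound $\cost(p,A)\ge 2^\ell\Delta_j$ does not hold for $\ell=0$, but that case is covered trivially by the first term of $\mu$ (or by $P_1$).
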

\noindent We now show that each of the three properties holds with high probability. The above lemma then follows by a union bound. We first consider the properties $P_1$ and $P_2$.
\begin{lemma}
 If $\csize \geq 48 k\eps^{-2} \log(10k\eps^{-1})$, then 
 $\Pr[P_1] \geq 1-\eps^4/5k^3$.
\end{lemma}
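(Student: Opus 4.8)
The plan is to recognize $\sum_{q \in \Omega \cap C_j} w_q$ as a sum of $\csize$ i.i.d.\ bounded random variables, apply Bernstein's inequality (\Cref{lem: bernstein}) to each cluster, and conclude with a union bound over the $k$ clusters. Concretely, fix a cluster $C_j$ and for $i \in [\csize]$ let $Y_i := \mathbbm{1}[q_i \in C_j]\,w_{q_i} = \mathbbm{1}[q_i \in C_j]/(\csize\,\mu(q_i))$, so that $\sum_{q \in \Omega \cap C_j} w_q = \sum_{i=1}^{\csize} Y_i$. Since each $q_i$ is drawn from $\mu$, one has $\E[Y_i] = \sum_{p\in C_j}\mu(p)\cdot\frac{1}{\csize\,\mu(p)} = |C_j|/\csize$, so $\sum_i Y_i$ is an unbiased estimator of $|C_j|$ and it suffices to show it concentrates within a $(1\pm\eps)$ factor.

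Next I would extract the two quantities that Bernstein needs. Using the first term of \eqref{defn:mu}, every $p\in C_j$ has $\mu(p)\ge\frac{1}{4k|C_j|}$, so $0\le Y_i\le 4k|C_j|/\csize$ deterministically (equivalently, invoke \Cref{lem: weight-bound}); hence the centered variables $Y_i-\E[Y_i]$ are bounded by $M:=5k|C_j|/\csize$ in absolute value. For the second moment, $\E[Y_i^2]=\frac{1}{\csize^2}\sum_{p\in C_j}\frac{1}{\mu(p)}\le\frac{4k|C_j|^2}{\csize^2}$, so the total variance proxy is $\sigma^2:=\sum_i\var[Y_i]\le 4k|C_j|^2/\csize$.

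Then I would apply \Cref{lem: bernstein} to $\sum_i(Y_i-\E[Y_i])$ with deviation $t=\eps|C_j|$; since $\eps<1$, the denominator $2\sigma^2+\tfrac23 Mt$ is $O(k|C_j|^2/\csize)$, giving a tail bound of $2\exp(-\Omega(\eps^2\csize/k))$. Plugging in $\csize\ge 48k\eps^{-2}\log(10k\eps^{-1})$ makes the exponent at least $4\log(10k\eps^{-1})$ (the constant $48$ is tuned exactly for this), so the per-cluster failure probability is at most $\eps^4/(5k^4)$; a union bound over the $k$ clusters then gives $\Pr[\neg P_1]\le\eps^4/(5k^3)$, which is the claim.

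I do not expect a genuine obstacle here: the argument is a routine Bernstein-plus-union-bound. The only points requiring attention are (i) centering the $Y_i$ so that the absolute bound $M$ and the variance $\sigma^2$ entering Bernstein are clean up to constants, and (ii) bookkeeping the numerical constants carefully enough that the stated sample-size threshold $48k\eps^{-2}\log(10k\eps^{-1})$ really drives the per-cluster failure probability below $\eps^4/(5k^4)$ after the union bound over clusters.
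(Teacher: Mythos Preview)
Your proposal is correct and follows essentially the same approach as the paper: define the per-sample indicator-weighted variables, compute their mean and second moment using the bound $w_q \leq 4k|C_j|/\csize$ from \Cref{lem: weight-bound}, apply Bernstein's inequality with $t = \eps|C_j|$, and union bound over the $k$ clusters. The paper's argument is identical in structure and constants (it uses $M = 4k|C_j|/\csize$ rather than your slightly looser $5k|C_j|/\csize$, but this is immaterial).
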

\begin{proof}
This follows from a standard application of probabilistic tail bounds.
 Fix a cluster $C_j$, and consider the random variable $W=\sum_{q \in C_j\cap \Omega} w_q$. We will show that
 \[  \Pr[ |W-|C_j| | > \eps |C_j|] \leq \eps^4/5k^4.\]
For $1 \leq i \leq \csize$, define the random variable $X_i$ to be the weight $w_{q_i}$ of the $i$-th sample $q_i$ if $q_i \in C_j$ and $0$ otherwise. Note that the $X_i$ are independent, $\sum_{i=1}^\csize  X_i = W$ and as  $w_q \mu(q)=1/\csize$, we have
\[\E[X_i] = \sum_{q \in C_j}  \mu(q) w_q= |C_j|/\csize,\] and therefore $ \E[W] = |C_j|$. 

Next, by \Cref{lem: weight-bound}, $w_q \leq 4k|C_j|/\csize$ for each $q\in C_j$ and thus $X_i \in [0,4k|C_j|/\csize]$, so that
\[ \E[X_i^2] \leq  \sum_{q \in C_j} \mu(q) w_q^2 = \sum_{q \in C_j} w_q/ \csize\leq \sum_{q \in C_j} 4k|C_j|/\csize^2  \leq 4k|C_j|^2 /\csize^2. \]
Let $Y_i = X_i - \E[X_i]$. Then $\sum_{i=1}^\csize Y_i = W- |C_j|$  and thus by Bernstein's inequality,
	\begin{align*}
		\Pr (|W -|C_j|| \geq \eps |C_j|) &
		\leq 2\exp\left(\frac{-(\eps|C_j|)^2/2}{4k|C_j|^2 /\csize + 4k\eps|C_j|^2/3\csize}\right) \\ 
		&= 2\exp\left(- \frac{\eps^2 \csize}{12k}\right) = 2 \left(\eps / 10k\right)^4 \leq \eps^4/5k^4.
	\end{align*}
where we use that $\sum_{i=1}^\csize \E[Y_i^2] = \sum_{i=1}^\csize \var[X_i] \leq  \sum_{i=1}^\csize \E[X_i^2] \leq 4kC_j^2/\csize$.
The bound on $\Pr[P_1]$ follows by a union bound over the $k$ clusters. 
\end{proof}
The same proof shows that $P_2$ occurs with high probability:
\begin{lemma}
     If $\csize \geq 48 k\eps^{-2} \log(10k\eps^{-1})$, then 
 $\Pr[P_2] \geq 1-\eps^3/5k^3$.
\end{lemma}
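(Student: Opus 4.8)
The plan is to repeat, essentially verbatim, the concentration argument used for $P_1$: apply Bernstein's inequality (\Cref{lem: bernstein}) to the total weight sitting in a single ring $R_j(\ell)$, and then take a union bound over all $k\cdot(\ell_{max}+2)=O(k\log\eps^{-1})$ pairs $(j,\ell)$. The ring $\ell=0$ needs no separate work: since $R_j(0)\subseteq C_j$, property $P_1$ already gives $\sum_{q\in\Omega\cap R_j(0)}w_q\le\sum_{q\in\Omega\cap C_j}w_q\le(1+\eps)|C_j|\le 2|C_j|$, which is the claimed bound $|C_j|/2^{\ell-1}$ for $\ell=0$. So I would fix $j\in[k]$ and $\ell$ with $1\le\ell\le\ell_{max}+1$.

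Two elementary observations about $R_j(\ell)$ drive the estimate. First, every $p\in R_j(\ell)$ has $\cost(p,a_j)\ge 2^\ell\Delta_j$, so Markov's inequality applied to the cluster $C_j$ (whose points have average cost $\Delta_j$) gives $|R_j(\ell)|\le |C_j|/2^\ell$. Second, since $a_j$ is the center of the cluster containing $p$ we have $\cost(p,A)=\cost(p,a_j)\ge 2^\ell\Delta_j=2^\ell\cost(C_j,A)/|C_j|$, and feeding this into the second term of \Cref{lem: weight-bound} yields the sharpened estimate $w_q\le 4k|C_j|/(m\,2^\ell)$ for every $q\in R_j(\ell)$ — the extra factor $2^{-\ell}$ over the crude bound $4k|C_j|/m$ is exactly what makes the target $|C_j|/2^{\ell-1}$ attainable. (For the extreme ring $\ell=\ell_{max}+1$ the same reasoning applies with $2^{\ell_{max}+1}$ in the denominator.)

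With these in hand I would set up the estimator as in the $P_1$ proof: for $1\le i\le m$ let $X_i$ be $w_{q_i}$ if $q_i\in R_j(\ell)$ and $0$ otherwise; these are independent, $W_\ell:=\sum_i X_i=\sum_{q\in\Omega\cap R_j(\ell)}w_q$, and using $w_q\mu(q)=1/m$ one gets $\E[W_\ell]=|R_j(\ell)|\le|C_j|/2^\ell$. The facts above give $X_i\le M:=4k|C_j|/(m\,2^\ell)$ and $\E[X_i^2]=\sum_{q\in R_j(\ell)}w_q/m$, hence $\sigma^2:=\sum_i\var[X_i]\le |R_j(\ell)|\cdot M\le 4k|C_j|^2/(m\,2^{2\ell})$. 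Taking deviation $t:=|C_j|/2^\ell\ge\E[W_\ell]$ (so that $|W_\ell-\E[W_\ell]|\le t$ forces $W_\ell\le 2|C_j|/2^\ell=|C_j|/2^{\ell-1}$), one checks $2\sigma^2+2Mt/3=\Theta(k|C_j|^2/(m\,2^{2\ell}))$, so the Bernstein exponent $t^2/(2\sigma^2+2Mt/3)$ equals $3m/(32k)$. Thus $\Pr[W_\ell>|C_j|/2^{\ell-1}]\le 2\exp(-3m/32k)$, which for $m\ge 48k\eps^{-2}\log(10k\eps^{-1})$ is at most $(\eps/10k)^4$; a union bound over the $O(k\log\eps^{-1})$ pairs $(j,\ell)$ then yields $\Pr[P_2]\ge 1-\eps^3/5k^3$.

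I expect no genuine obstacle — as the text says, this is proved by "the same proof" as $P_1$. The only points needing a moment of care are (i) extracting the sharpened weight bound $w_q\le 4k|C_j|/(m\,2^\ell)$ on the ring rather than the cruder cluster-level bound, since without the $2^{-\ell}$ factor both $M$ and $\sigma^2$ are too large by $2^{2\ell}$ and Bernstein can no longer control a deviation as small as $|C_j|/2^{\ell-1}$; and (ii) the routine bookkeeping for the two boundary rings, $\ell=0$ (subsumed by $P_1$) and $\ell=\ell_{max}+1$.
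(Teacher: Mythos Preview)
Your proof is correct. The route differs slightly from the paper's: the paper asserts that ``the exact same steps'' as for $P_1$ give the multiplicative concentration $\sum_{q\in\Omega\cap R_j(\ell)} w_q=(1\pm\eps)|R_j(\ell)|$ for every ring, then combines this with the Markov bound $|R_j(\ell)|\le|C_j|/2^\ell$ and $1+\eps\le 2$ to conclude. You instead target the \emph{additive} deviation $t=|C_j|/2^\ell$ directly, which is the cleaner choice: the paper's multiplicative statement, taken literally with only the crude cluster-level bound $w_q\le 4k|C_j|/m$, would give a Bernstein exponent of order $\eps^2 m\,|R_j(\ell)|/(k|C_j|)$, which is not controlled when $|R_j(\ell)|\ll|C_j|$. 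Your observation that on ring $\ell\ge 1$ one has the sharpened bound $w_q\le 4k|C_j|/(m\,2^\ell)$ (from the second term in \Cref{lem: weight-bound}) is exactly what makes either version go through, and you are right to flag it as the one non-routine step. The handling of the boundary cases ($\ell=0$ via $P_1$, $\ell=\ell_{max}+1$ by the same ring argument) is fine; just note that invoking $P_1$ for $\ell=0$ formally means the failure event for $P_1$ is absorbed into your union bound, which the stated probabilities comfortably allow.
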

\begin{proof}
    Following the exact same steps as in the proof of property $P_1$, one can show that with probability at least $1-\eps^4/5k^3$, for all cluster $C_j$ and ring $R_j(\ell)$, it holds that $\sum_{q \in \Omega \cap R_j(\ell)} w_q = (1\pm \eps) |R_j(\ell)|$. 
    The only differences with the proof of property $P_1$ are that the variable $X_i$ is 0 when $q_i \notin R_j(\ell)$, instead of $q_i \notin C_j$; and the last union-bound should be over all the $k \log_2(1/\eps)$ many rings, instead of the $k$ clusters. 
    
    We can conclude from this: Markov's inequality ensures that $|R_j(\ell)| \leq |C_j|/2^\ell$, as otherwise the cost of points in $R_j(\ell)$ would exceed $\cost(C_j, A)$. Therefore, taking $\eps = 1$, we get $\sum_{q \in \Omega \cap R_j(\ell)} w_q \leq |C_j|/2^{\ell-1}$.
\end{proof}

Next, we show that property $P_3$ of \Cref{def: event-E} also occurs with high probability. 
\begin{lemma}
    If $\csize \geq 48 k\eps^{-2} \log(10k\eps^{-1})$, then $\Pr[P_3] \geq 1 - \eps^4/5k^3$.
\end{lemma}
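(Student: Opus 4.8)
The plan is to reuse, essentially verbatim, the Bernstein-plus-union-bound template already used for properties $P_1$ and $P_2$, changing only the random variable being tracked. Fix a cluster $C_j$ and write $W := \costom(C_j, A) = \sum_{q \in C_j \cap \Omega} w_q \cost(q,A)$. For $1 \le i \le \csize$, set $X_i := w_{q_i}\cost(q_i, A)\cdot \mathbbm{1}[q_i \in C_j]$, so that the $X_i$ are independent and $W = \sum_i X_i$. Using $w_q \mu(q) = 1/\csize$ one gets $\E[X_i] = \sum_{q \in C_j}\mu(q) w_q \cost(q,A) = \cost(C_j,A)/\csize$, hence $\E[W] = \cost(C_j,A)$; that is, $W$ is an unbiased estimator of the cluster cost.

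Next I would extract the range and variance bounds needed for Bernstein from \Cref{lem: weight-bound}. The key is to use the bound $w_q \le 4k\,\cost(C_j,A)/(\csize\,\cost(q,A))$, equivalently $w_q\cost(q,A) \le 4k\,\cost(C_j,A)/\csize$, valid for every $q\in C_j$. This gives the range bound $X_i \in [0,\,4k\cost(C_j,A)/\csize]$, and also $\E[X_i^2] = \tfrac1\csize\sum_{q\in C_j} w_q\cost(q,A)\cdot\cost(q,A) \le \tfrac{4k\cost(C_j,A)}{\csize^2}\sum_{q\in C_j}\cost(q,A) = 4k\cost(C_j,A)^2/\csize^2$, so $\sum_i \var[X_i]\le \sum_i\E[X_i^2]\le 4k\cost(C_j,A)^2/\csize$. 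Applying Bernstein's inequality (\Cref{lem: bernstein}) to $Y_i := X_i - \E[X_i]$ with deviation $t = \eps\cost(C_j,A)$, the factors $\cost(C_j,A)^2$ cancel in the exponent exactly as in the proof of $P_1$ (with $|C_j|$ replaced by $\cost(C_j,A)$), leaving $\Pr[|W-\cost(C_j,A)|\ge\eps\cost(C_j,A)] \le 2\exp(-\eps^2\csize/(12k))$. For $\csize \ge 48k\eps^{-2}\log(10k\eps^{-1})$ this is at most $2(\eps/10k)^4 \le \eps^4/5k^4$, and a union bound over the $k$ clusters yields $\Pr[P_3] \ge 1 - \eps^4/5k^3$.

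There is no real obstacle here; the argument is a direct transcription of the $P_1$ computation. The one place to be careful is the choice of term from \Cref{lem: weight-bound}: using the weaker $w_q\cost(q,A)\le 4\cost(P,A)/\csize$ would make the range and second-moment bounds scale with $\cost(P,A)$ instead of $\cost(C_j,A)$, the $\cost(C_j,A)^2$ would fail to cancel, and the resulting sample-size requirement would degrade by a factor of $\cost(P,A)/\cost(C_j,A)$, which can be as large as $\mathrm{poly}(n)$. Picking the cluster-local weight bound is precisely what keeps the required $\csize$ uniform over all clusters and lets the same threshold used for $P_1$ and $P_2$ work here.
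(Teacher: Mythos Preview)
Your proposal is correct and essentially identical to the paper's proof: the same per-cluster random variables $X_i = w_{q_i}\cost(q_i,A)\mathbbm{1}[q_i\in C_j]$, the same weight bound $w_q \le 4k\cost(C_j,A)/(\csize\cost(q,A))$ from \Cref{lem: weight-bound}, the same Bernstein computation yielding $2\exp(-\eps^2\csize/12k)$, and the same union bound over clusters. Your closing remark about why the cluster-local weight bound is essential is a nice addition not spelled out in the paper.
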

\begin{proof}
The proof is almost the same as the previous properties, and we repeat it for completeness.

 Fix a cluster $C_j$, 
for $1 \leq i \leq \csize$, define the random variable $X_i$ to be the cost $w_{q_i} \cost(q_i, A)$ of the $i$-th sample $q_i$ if $q_i \in C_j$ and $0$ otherwise. Note that the $X_i$ are independent, that $\sum X_i = \costom
(C_j, A)$, and as  $w_q \mu(q)=1/\csize$,  we have
\[\E[X_i] = \sum_{q \in C_j}  \mu(q) w_q \cost(q, A) = \cost(C_j, A)/\csize.\]
Next, by \Cref{lem: weight-bound}, $w_q \leq \frac{4k \cost(C_j, A)}{\csize \cost(q, A)}$ for each $q\in C_j$ and therefore $|X_i| \leq 4k \cost(C_j, A) / \csize$. We also have: 
\begin{align*}
\E[X_i^2] &\leq  \sum_{q \in C_j} \mu(q) w_q^2 \cost(q, A)^2 
= \sum_{q \in C_j} w_q \cost(q, A)^2\\
&\leq \sum_{q \in C_j} 4k \cost(C_j, A) \cost(q, A)/\csize^2  \leq 4k\cost(C_j, A)^2 /\csize^2.    
\end{align*}
Let $Y_i = X_i - \E[X_i]$. Then $\sum_{i=1}^\csize Y_i = \costom(C_j, A) - \cost(C_j, A)$  and thus by
 Bernstein's inequality,
	\begin{align*}
		\Pr \left(|\costom(C_j, A) - \cost(C_j, A)| \geq \eps \cost(C_j, A)\right) &
		\leq 2\exp\left(\frac{-(\eps \cost(C_j, A))^2/2}{4k\cost(C_j, A)^2 /\csize + 4k\eps\cost(C_j, A)^2/3\csize}\right)\\ 
		&= 2\exp\left(- \frac{\eps^2 \csize}{12k}\right) \leq \eps^4/5k^4.
	\end{align*}
where we use that $\sum_{i=1}^\csize \E[Y_i^2] = \sum_{i=1}^\csize \var[X_i] \leq  \sum_{i=1}^\csize \E[X_i^2] \leq 4k\cost(C_j, A)^2/\csize$.
The bound on $\Pr[P_3]$ follows by a union bound over the $k$ clusters. 
\end{proof}

This completes the proof of \Cref{lem:E}. We now show that the event $\mathcal{E}$ directly implies that the cost of $\Omega$ with respect to any set of centers $S$ is bounded above by a constant factor times the true cost. 

\begin{lemma}
    If event $\mathcal{E}$ holds, then for any set of centers $S$, we have $\costom(P,S) \lesssim \cost(P,S)$.
\end{lemma}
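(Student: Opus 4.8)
The estimator splits over the clustering computed in Step~1 of \Cref{alg: sensitivity-sampling}: writing each sampled point $q\in\Omega$ as belonging to its cluster $C_j$, we have $\costom(P,S)=\sum_{j\in[k]}\sum_{q\in C_j\cap\Omega}w_q\cost(q,S)$. The first step is to decouple $\cost(q,S)$ from $S$ using the approximate triangle inequality: picking a center $c^{\star}\in S$ nearest to $a_j$, so that $\cost(a_j,S)=\cost(a_j,c^{\star})$, \Cref{item: tri-3} of \Cref{fact: triangle} gives $\cost(q,S)\le\cost(q,c^{\star})\le 2\bigl(\cost(q,a_j)+\cost(a_j,S)\bigr)$. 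Summing over $q\in C_j\cap\Omega$ then yields
\[
\costom(P,S)\ \le\ 2\sum_{j\in[k]}\Bigl(\costom(C_j,A)\ +\ \cost(a_j,S)\!\!\sum_{q\in C_j\cap\Omega}\!\! w_q\Bigr).
\]

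Next I would invoke the good event $\mathcal{E}$ (\Cref{def: event-E}). Property $P_3$ bounds the first inner term by $\costom(C_j,A)\le(1+\eps)\cost(C_j,A)$, and property $P_1$ bounds the weight mass of each cluster by $\sum_{q\in C_j\cap\Omega}w_q\le(1+\eps)|C_j|$. Plugging these in and summing gives $\costom(P,S)\lesssim \cost(P,A)+\sum_{j\in[k]}|C_j|\cost(a_j,S)$. It remains to control $\sum_j|C_j|\cost(a_j,S)$, which I would do by an averaging version of the same triangle bound: for every point $p\in C_j$, taking a center nearest to $p$, $\cost(a_j,S)\le 2(\cost(a_j,p)+\cost(p,S))$; averaging over $p\in C_j$ gives $|C_j|\cost(a_j,S)\le 2\bigl(\cost(C_j,A)+\cost(C_j,S)\bigr)$, and summing over $j$ yields $\sum_j|C_j|\cost(a_j,S)\lesssim \cost(P,A)+\cost(P,S)$. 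Combining, $\costom(P,S)\lesssim \cost(P,A)+\cost(P,S)$, and since $A$ is an $O(1)$-approximate $k$-means solution we have $\cost(P,A)\lesssim\opt_k\le\cost(P,S)$, which closes the argument.

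This proof is essentially a bookkeeping exercise, so I do not anticipate a genuine obstacle; the only point requiring a little care is applying the approximate triangle inequality in the correct direction — namely, bounding $\cost(a_j,S)$ from above using $\cost(p,S)$ by routing through a center \emph{nearest to $p$} (so that $\cost(a_j,S)=\min_c\cost(a_j,c)$ only helps), rather than through a center nearest to $a_j$. One should also keep in mind that $a_j$ is an (approximate) optimal center and not necessarily the centroid of $C_j$, but since we only use triangle inequalities and never the centroid identity (\Cref{lem: centroid-fact}), this causes no issue.
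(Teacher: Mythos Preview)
Your proposal is correct and follows essentially the same route as the paper: split $\cost(q,S)\le 2(\cost(q,a_j)+\cost(a_j,S))$, invoke $P_3$ and $P_1$ of event $\mathcal{E}$ to control the two sums, then bound $|C_j|\cost(a_j,S)\le 2(\cost(C_j,A)+\cost(C_j,S))$ by the reverse triangle step and finish with $\cost(P,A)\lesssim\cost(P,S)$. Your careful remark about routing through the center nearest to $p$ (not $a_j$) in the second triangle step is exactly the point the paper uses.
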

\begin{proof} 
Suppose event $\mathcal{E}$ holds. For any point $q$ in cluster $C_j$, by the triangle inequality $\cost(q,S) \leq 2(\cost(q, a_j) + \cost(a_j, S))$. As  $\costom(P,S) = \sum_{j=1}^k \sum_{q \in C_j \cap \Omega} w_q \cost(q,S)$, this gives
    \begin{align}
     \costom(P,S) 
    \leq \sum_{j=1}^k \sum_{q \in C_j \cap \Omega} 2w_q\,\cost(q,a_j) + \sum_{j=1}^k\sum_{q \in C_j \cap \Omega}2w_q\,\cost(a_j,S) 
    \label{eq:frt6}
      \end{align}
     The first term can be bounded using property $P_3$,
\begin{equation} 
\sum_{j=1}^k \sum_{q \in C_j \cap \Omega} w_q\,\cost(q,a_j) = \sum_{j=1}^k \costom(C_j,A) \leq (1+\eps) \cost(C_j,A).
\label{eq:frt8}
\end{equation}
The second term of \Cref{eq:frt6} can be bounded since property $P_1$ holds (i.e., for each cluster $C_j$ we have 
 $(\sum_{q \in C_j \cap \Omega} w_q) \leq(1+\eps) |C_j|$).
\begin{align} & \sum_{j=1}^k\sum_{q \in C_j \cap \Omega} w_q\,\cost(a_j,S) = \sum_{j=1}^k \cost(a_j,S) (\sum_{q \in C_j \cap \Omega} w_q) \leq \sum_{j=1}^k \cost(a_j,S) ((1+\eps)|C_j|) \notag \\
& \leq \sum_{j=1}^k 2 (1+\eps) (\cost(C_j,S) + \cost(C_j,A)) = 2(1+\eps)(\cost(P,S) + \cost(P,A)),
\label{eq:frt7}
\end{align}
where the last inequality uses that for any set of centers $S$ and a cluster $C_j$, we have that
\begin{equation*}
\cost(a_j, S) |C_j|\leq 2(\cost(C_j, S) + \cost(C_j,A)),
\end{equation*}
which follows noting that for any point $q\in C_j$, $ \cost(a_j,S)\leq 2(\cost(q,S) + \cost(q,a_j))$ by the triangle inequality, and summing up over all points $q\in C_j$.

Plugging the bounds \eqref{eq:frt8}
 and \eqref{eq:frt7} in \eqref{eq:frt6}, and using that $\eps\leq 1/2$, gives the claimed bound  
 \[ \costom(P,S) \leq 9 \,\cost(P,A) + 6\, \cost(P,S).   \]
 Since $A$ is a $O(1)$ approximation the claim then follows. \qedhere
\end{proof}
 Next, we give a worst-case bound on the cost estimated by $\Omega$ when the event $\mathcal{E}$ does not occur.
\begin{lemma}\label{lem: worst-case-coreset-cost}
    For any set of centers $S$, $\costom(P,S) \lesssim k \,\cost(P,S)$.
\end{lemma}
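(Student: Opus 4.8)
The proof will mirror that of \Cref{lem: event-e-coreset-cost}, but every place where that argument invoked event $\mathcal{E}$ (to say that the total weight in a cluster or ring is within a $(1\pm\eps)$ factor of the true size) we instead use a \emph{crude, deterministic} bound: the number of samples landing in any cluster $C_j$ is at most $\csize$, and each such sample has weight controlled by \Cref{lem: weight-bound}. Combining these gives that the total weight placed inside $C_j$ by $\Omega$ is at most $4k|C_j|$, and the total weighted $A$-cost placed inside $C_j$ is at most $4k\cost(C_j,A)$ — each an extra factor of $k$ compared to the truth, which is exactly the loss we are willing to absorb.

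\textbf{Key steps.} First, write $\costom(P,S)=\sum_{j=1}^k\sum_{q\in C_j\cap\Omega}w_q\cost(q,S)$ and apply the triangle inequality $\cost(q,S)\le 2(\cost(q,a_j)+\cost(a_j,S))$ exactly as in \eqref{eq:frt6}, splitting into a ``$q$-to-$a_j$'' term and an ``$a_j$-to-$S$'' term. For the first term, note $\cost(q,a_j)=\cost(q,A)$ for $q\in C_j$, so by \Cref{lem: weight-bound} each summand satisfies $w_q\cost(q,a_j)\le 4k\cost(C_j,A)/\csize$; since $|C_j\cap\Omega|\le\csize$, summing gives $\sum_{q\in C_j\cap\Omega}w_q\cost(q,a_j)\le 4k\cost(C_j,A)$, hence the first term is $\lesssim k\,\cost(P,A)$. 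For the second term, \Cref{lem: weight-bound} gives $w_q\le 4k|C_j|/\csize$, so $\sum_{q\in C_j\cap\Omega}w_q\le 4k|C_j|$; plugging in the elementary inequality $\cost(a_j,S)|C_j|\le 2(\cost(C_j,S)+\cost(C_j,A))$ (already derived in the proof of \Cref{lem: event-e-coreset-cost} by the triangle inequality summed over $C_j$) yields that the second term is $\lesssim k(\cost(P,S)+\cost(P,A))$.

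\textbf{Conclusion.} Putting the two bounds together gives $\costom(P,S)\lesssim k\,(\cost(P,S)+\cost(P,A))$. Finally, since $A$ is an $O(1)$-approximate $k$-means solution we have $\cost(P,A)\le O(1)\cdot\opt_k\le O(1)\cdot\cost(P,S)$ for every $S\in\mathcal S$, so $\cost(P,A)\lesssim\cost(P,S)$ and the claimed bound $\costom(P,S)\lesssim k\,\cost(P,S)$ follows.

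\textbf{Main obstacle.} There is no real obstacle here — the statement is a routine worst-case counterpart of \Cref{lem: event-e-coreset-cost}. The only point requiring a little care is the bookkeeping of the $1/\csize$ factors: the weights shrink like $1/\csize$ while up to $\csize$ samples can pile into a single cluster, so these cancel and leave exactly the advertised extra factor of $k$ (coming from the $\tfrac14\cdot\tfrac1{k|C_j|}$ and $\tfrac14\cdot\tfrac{\cost(p,A)}{k\cost(C_j,A)}$ terms in $\mu$), with no dependence on whether $\mathcal E$ holds.
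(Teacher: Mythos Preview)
Your proposal is correct and follows essentially the same route as the paper: decompose via \eqref{eq:frt6}, then bound each term using the deterministic weight bounds from \Cref{lem: weight-bound} together with the crude count $|C_j\cap\Omega|\le m$, and finish using that $A$ is an $O(1)$-approximation. The only minor difference is that for the first term the paper uses the bound $w_q\cost(q,A)\le 4\cost(P,A)/m$ to get $O(\cost(P,A))$ directly (no factor $k$), whereas you use $w_q\cost(q,A)\le 4k\cost(C_j,A)/m$ and pick up an extra $k$; this is harmless since the second term already carries the factor $k$.
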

\begin{proof}
The proof shares similarity with the one of \cref{lem: event-e-coreset-cost}, and we re-use several parts of it.
We start with the bound in \eqref{eq:frt6} on $\costom(P,S)$. By \eqref{eq:frt8}, the first term is always bounded by $6 \cost(P,A)$. To bound the second term, we first note that 
for any $q\in C_j$, 
\begin{equation*}
    w_q\, \cost(a_j,S)  \leq \frac{3k|C_j|}{m} \cost(a_j,S)  \leq \frac{3k}{m}  \left(2\,\cost(C_j,S)+2\,\cost(C_j,a_j)\right)
\end{equation*}
where we use that $w_q\leq 3k|C_j|/m$ by \Cref{lem: weight-bound}, and that $|C_j|\cost(a_j,S) \leq  \sum_{p\in C_j} (2 \cost(p,S) + 2\cost(p,a_j))$ by the triangle inequality.

As $|C_j \cap \Omega| \leq |\Omega|=m$ for each $j$, 
we can bound the second term in \eqref{eq:frt6} using the equation above as
\begin{align*}\sum_{j=1}^k \sum_{q \in C_j \cap \Omega} 2w_q \cost(a_j,S) & \leq m \sum_{j=1}^k  \frac{6k}{m} (2\cost(C_j,S)+2\cost(C_j,a_j)) \\
& = O(k (\cost(P,S) + \cost (P,A)). 
\end{align*}
Since $A$ is a $O(1)$ approximation the claim then follows.
\end{proof}

\section{Stability of an Approximate Clustering}\label{ap:computeStable}
Given a $\beta$-stable instance and an optimal $k$-means  solution $\{a_1, \ldots, a_k\}$, we can show that the centers $a_i$ are pairwise well separated. Moreover, this property holds even for sufficiently good approximate $k$-means solutions. 

\begin{lemma}
\label{lem: approxstability}
    Let $P \subseteq \R^d$ be a $\beta$-stable instance with respect to the $k$-means objective. Let $\opt_k$ be the value of the optimal solution with $k$ centers. For $1 \leq \gamma \leq 1 + \beta/2$, let $\{C_1, \ldots C_k \}$ be a  $\gamma$-approximation to the optimal solution and $a_1, \ldots, a_k$ be the respective centroids of the clusters. Then for any $i, j \in[k]$ with $i \neq j$, it holds that $$\cost(a_i, a_j) \geq  \frac{ \beta \opt_k}{2\min\left(|C_i|, |C_j|\right)}.$$    
\end{lemma}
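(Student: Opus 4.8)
The plan is to contradict $\beta$-stability by a cluster-merging argument: if $a_i$ and $a_j$ were too close, collapsing $C_i$ and $C_j$ onto a single center would yield a $(k-1)$-clustering almost as cheap as the given $\gamma$-approximate $k$-clustering, which stability forbids.

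Concretely, I would first set $\Phi := \sum_{l=1}^{k}\cost(C_l,a_l)$, so that $\Phi \leq \gamma\,\opt_k$ by the approximation hypothesis. Fix $i\neq j$. I then construct a feasible $(k-1)$-means solution: keep the centers $\{a_l : l\neq j\}$, leave the assignment of $C_l$ unchanged for $l\notin\{i,j\}$, and reassign every point of $C_j$ to $a_i$. By the centroid identity (\Cref{lem: centroid-fact}) applied to the set $C_j$ with $x=a_i$, we have $\cost(C_j,a_i)=\cost(C_j,a_j)+|C_j|\cdot\cost(a_i,a_j)$, so the total cost of this solution is exactly $\Phi+|C_j|\cdot\cost(a_i,a_j)$. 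Since any solution with $k-1$ centers has cost at least $\opt_{k-1}$, and $\beta$-stability gives $\opt_{k-1}\geq(1+\beta)\opt_k$, we obtain $\Phi+|C_j|\cdot\cost(a_i,a_j)\geq(1+\beta)\opt_k$.

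Rearranging and using $\Phi\leq\gamma\,\opt_k$ gives $|C_j|\cdot\cost(a_i,a_j)\geq(1+\beta-\gamma)\,\opt_k\geq(\beta/2)\,\opt_k$, where the last inequality is precisely where the hypothesis $\gamma\leq 1+\beta/2$ enters. Dividing by $|C_j|$ yields $\cost(a_i,a_j)\geq\beta\,\opt_k/(2|C_j|)$; running the identical argument with the roles of $i$ and $j$ exchanged (reassign $C_i$ to $a_j$ and drop $a_i$) yields $\cost(a_i,a_j)\geq\beta\,\opt_k/(2|C_i|)$, and taking the stronger of the two bounds gives the claimed $\cost(a_i,a_j)\geq\beta\,\opt_k/(2\min(|C_i|,|C_j|))$.

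I do not anticipate a genuine obstacle here; the argument is short. The only points needing care are that the merged assignment is a bona fide $(k-1)$-center solution (so that comparing its cost against $\opt_{k-1}$ is legitimate) and that the centroid identity is applied to the correct set and point, both of which hold as stated. It is also worth noting that the condition $1\leq\gamma\leq 1+\beta/2$ is exactly what keeps $1+\beta-\gamma\geq\beta/2>0$, so the bound is vacuous (but still true) for cruder approximations.
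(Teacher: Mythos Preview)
Your proof is correct and follows essentially the same approach as the paper: remove one center, use the centroid identity (\Cref{lem: centroid-fact}) to compute the cost increase of reassigning the orphaned cluster, and compare against $\opt_{k-1}$ via stability. The paper assumes without loss of generality that $|C_i|\leq |C_j|$ and removes $a_i$ once, whereas you run the argument twice symmetrically; the computations are otherwise identical.
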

\begin{proof}
    Fix two distinct clusters $C_i$ and $C_j$ and assume without loss of generality assume that $|C_i| \leq |C_j|$. Since the solution is $\gamma$-approximate with $\gamma \leq 1 + \beta/2$, we have $\cost(P,A) \leq (1+ \beta/2) \opt_k$.  By the $\beta$-stability of the instance, $\opt_{k-1} \geq (1+\beta) \opt_k$. Therefore, if $A' = A \setminus \{a_i\}$, we have $\cost(P,A') \geq (1+\beta) \opt_k$. It follows that: 
    \begin{align*}
        \frac{\beta \opt_k}{2}  &\leq \cost(P,A') - \cost(P,A)\\
        & = \cost(C_i,A') - \cost(C_i, a_i) \tag{For $j \neq i$,  $\cost(C_j, A) = \cost(C_j, A') = \cost(C_j,a_j)$}\\
        & \leq \cost(C_i, a_j) - \cost(C_i, a_i) \tag{$\cost(p,a_j) \geq \cost(p, A')$ since $a_j \in A'$}\\
        & =|C_i| \cdot \cost(a_i, a_j)\tag{Using \Cref{lem: centroid-fact} and that $a_i$ is the centroid of $C_i$}.
    \end{align*}
    Rearranging terms proves the inequality.
\end{proof}

Fortunately, even for small values of $\beta$, it is possible to compute the desired $\gamma$-approximation in time $n^{O(1/\gamma)}$, see \cite{ABS10}. This dependency on $n$ can be reduced to $\text{poly}(k/\eps)$ via an initial coreset computation. Moreover, a polynomial dependency on $1/\gamma$ is not possible as Euclidean $k$-means is APX hard \cite{Cohen-AddadSL21}. For large (constant) values of $\beta$, more conventional (and fast) approximation algorithms are sufficient. The fastest known true approximation algorithms for Euclidean $k$-means are local search based, see \cite{ChooGPR20,KMNPSW04,LattanziS19}. Moreover, local search is also an algorithm that can be used to solve stable $k$-means \cite{Cohen-AddadS17}, so one can always seed the coreset construction with the computed solution of local search.

Thus, for constant $\beta$, which is the regime of interest, the algorithm always runs in polynomial time, and in fact, for sufficiently large $\beta> 18$, there is effectively no drop-off in performance.

\section{\texorpdfstring{Missing Details from \Cref{sec: analysis}}{Missing Details from Analysis}}

\subsection{Analysis of Far Points}
\label{sec:far-points}
We now prove \Cref{lem: goal-far}, which says that for any set of centers, $S$ the cost of coreset points from far clusters approximates the cost of far clusters to within an error of $\pm \eps \,\cost(P,S)$. 
More formally, our goal is to show  that when $\csize = \Omega( k \eps^{-2}  \cdot \log(k \eps^{-1}))$, then
	\begin{align*}
		\begin{split}
			\E_{ \Omega} \sup_{S\inSR} \left\lvert \frac{\cost(P_F(S),S) - \costom(P_F(S),S)}{\cost(P,S)} \right\rvert \leq \eps/2,
		\end{split}
	\end{align*}
where $P_F(S)$ is the set of clusters $C_j$ such that $\cost(a_j, S) > \Delta_j \eps^{-2}$.

The idea is that when $C_j$ is a far cluster, then  $a_j$ is so far away from $S$ that the cost of any point of $C_j$ will essentially be $\cost(a_j, S)$. This might not be true for all $q \in C_j$, as some $q$ may be far away from $a_j$ as well, but it holds for most of the points, which is enough for us. 
More precisely, we show in \Cref{lem: far-cluster} below that $\cost(C_j, S) \approx |C_j| \cost(a_j, S)$ and $\costom(C_j, S) \approx \sum_{q \in \Omega \cap C_j} w_q \cdot \cost(a_j, S)$.
When the coreset $\Omega$ satisfies event $\mathcal E$, those two quantities are (almost) equal, and the cost difference for any far cluster is at most $\eps \cost(C_j, S)$. 
Note that this holds deterministically for any $S$, as long as $\Omega$ satisfies event $\mathcal E$: therefore, 
\[\E_{ \Omega} \left[\sup_{S\inSR} \left\lvert \frac{\cost(P_F(S),S) - \costom(P_F(S),S)}{\cost(P,S)} \right\rvert \mathcal{E}   \right] \leq \eps/2.\]
When $\Omega$ does not satisfy event $\calE$, then we bound $|\cost(P_F(S),S) - \costom(P_F(S),S)| \leq \cost(P,S) + \costom(P,S) \leq O(k\, \cost(P, S))$ (from \cref{lem: worst-case-coreset-cost}). This holds simultaneously for all solution $S$. 

Since $\Omega$ satisfies event $\calE$ with probability at least $1-\eps/k$ (\Cref{lem:E}), the Law of Total Expectation ensures that
\[\E_{ \Omega} \sup_{S\inSR} \left\lvert \frac{\cost(P_F(S),S) - \costom(P_F(S),S)}{\cost(P,S)} \right\rvert \lesssim \eps,\] 
which concludes the proof of \Cref{lem: goal-far}.

What remains to show is the following:
\begin{lemma}\label{lem: far-cluster}
If the coreset $\Omega$ satisfies the event $\mathcal E$ (\Cref{def: event-E}), then for any set of centers $S$, any cluster $C_j$ that is far from $S$  (deterministically) satisfies  $|\cost (C_j,S) - \costom(C_j, S)| \lesssim  \eps\, \cost(C_j, S).$
\end{lemma}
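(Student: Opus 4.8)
The plan is to show that, once $\mathcal E$ holds, every point of a far cluster $C_j$ — and every coreset point drawn from $C_j$ — pays essentially $\cost(a_j,S)$ to $S$, so that both $\cost(C_j,S)$ and $\costom(C_j,S)$ are $(1\pm O(\eps))\,|C_j|\cost(a_j,S)$ and hence close to each other. Note that $\cost(q,A)=\cost(q,a_j)$ for $q\in C_j$, so $\cost(C_j,A)=|C_j|\Delta_j$ and, by $P_3$, $\costom(C_j,A)=\sum_{q\in C_j\cap\Omega}w_q\cost(q,a_j)$.

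First I would record the pointwise estimate: for any $q\in C_j$, \Cref{item: tri-2} of \Cref{fact: triangle} gives
\[
\bigl|\cost(q,S)-\cost(a_j,S)\bigr|\;\leq\;2\sqrt{\cost(a_j,S)\,\cost(q,a_j)}+\cost(q,a_j).
\]
Summing over $q\in C_j$, using $\sum_{q\in C_j}\cost(q,a_j)=\cost(C_j,A)=|C_j|\Delta_j$ together with the Cauchy--Schwarz bound $\sum_{q\in C_j}\sqrt{\cost(q,a_j)}\leq|C_j|\sqrt{\Delta_j}$, yields
\[
\bigl|\cost(C_j,S)-|C_j|\cost(a_j,S)\bigr|\;\lesssim\;|C_j|\sqrt{\cost(a_j,S)\,\Delta_j}+|C_j|\Delta_j.
\]
Because $C_j$ is far, $\Delta_j<\eps^{2}\cost(a_j,S)$, so the right-hand side is $O(\eps\,|C_j|\cost(a_j,S))$; in particular $\cost(C_j,S)=(1\pm O(\eps))\,|C_j|\cost(a_j,S)$, and for $\eps$ small enough $|C_j|\cost(a_j,S)\leq 2\cost(C_j,S)$.

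Next I would repeat the computation with the coreset weights. Multiplying the pointwise estimate by $w_q$ and summing over $q\in C_j\cap\Omega$ bounds $\bigl|\costom(C_j,S)-\bigl(\sum_{q\in C_j\cap\Omega}w_q\bigr)\cost(a_j,S)\bigr|$ by
\[
2\sqrt{\cost(a_j,S)}\sum_{q\in C_j\cap\Omega}w_q\sqrt{\cost(q,a_j)}\;+\;\sum_{q\in C_j\cap\Omega}w_q\,\cost(q,a_j).
\]
Property $P_3$ of $\mathcal E$ gives $\sum_{q\in C_j\cap\Omega}w_q\,\cost(q,a_j)=\costom(C_j,A)\leq(1+\eps)|C_j|\Delta_j$, and Cauchy--Schwarz together with $P_1$ and $P_3$ gives $\sum_{q\in C_j\cap\Omega}w_q\sqrt{\cost(q,a_j)}\leq\sqrt{\sum_{q\in C_j\cap\Omega}w_q}\cdot\sqrt{\sum_{q\in C_j\cap\Omega}w_q\cost(q,a_j)}\leq(1+\eps)|C_j|\sqrt{\Delta_j}$; using $\Delta_j<\eps^{2}\cost(a_j,S)$ once more, the displayed bound is $O(\eps\,|C_j|\cost(a_j,S))$. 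Finally $P_1$ gives $\sum_{q\in C_j\cap\Omega}w_q=(1\pm\eps)|C_j|$, so $\costom(C_j,S)=(1\pm O(\eps))\,|C_j|\cost(a_j,S)$ as well.

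Combining the two estimates by the triangle inequality gives $\bigl|\cost(C_j,S)-\costom(C_j,S)\bigr|\lesssim\eps\,|C_j|\cost(a_j,S)\lesssim\eps\,\cost(C_j,S)$, which is exactly the claim; the whole argument is deterministic given $\mathcal E$. I do not expect a genuine obstacle: the only point needing care is the weighted sum $\sum_{q\in C_j\cap\Omega}w_q\sqrt{\cost(q,a_j)}$, since the coreset may carry large weight on points far from $a_j$ and there is no useful pointwise bound — but Cauchy--Schwarz paired with the cluster-size and cluster-cost preservation properties ($P_1$, $P_3$) controls it, and it is the "far" hypothesis that converts every error term into $O(\eps)\cdot|C_j|\cost(a_j,S)$.
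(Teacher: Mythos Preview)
Your proof is correct and in fact cleaner than the paper's. Both arguments split into the same two terms, $\bigl|\cost(C_j,S)-|C_j|\cost(a_j,S)\bigr|$ and $\bigl|\costom(C_j,S)-|C_j|\cost(a_j,S)\bigr|$, and both use the same pointwise inequality from \Cref{fact: triangle}(\ref{item: tri-2}). The difference is in how the sum $\sum_q \sqrt{\cost(q,a_j)}$ (and its weighted analogue) is controlled. The paper decomposes each cluster into geometric rings $R_j(\ell)$, bounds the contribution of each ring by $2^{-\ell/2}\eps|C_j|\cost(a_j,S)$ using $|R_j(\ell)|\leq |C_j|/2^{\ell}$, and sums the geometric series; on the coreset side this requires property $P_2$ (ring size preservation) of $\mathcal E$ to bound the weighted ring sizes. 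You bypass the ring decomposition entirely with a single Cauchy--Schwarz step, so you only need $P_1$ and $P_3$ and never touch $P_2$. Your route is shorter and shows that $P_2$ is not actually needed for this lemma; the paper's ring argument buys nothing extra here, though the ring partition is set up earlier for other purposes.
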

\begin{proof}
Adding and subtracting $|C_j| \cdot \cost(a_j,S)$ and applying the triangle inequality we have, $$|\cost(C_j,S)- \costom(C_j,S)| \leq \underbrace{|\cost(C_j,S) - |C_j| \cdot \cost(a_j,S)|}_{\text{Term 1}} + \underbrace{|\costom(C_j,S) - |C_j| \cdot \cost(a_j,S)|}_{\text{Term 2}}$$
The lemma follows if bound both terms are $O(\eps\, \cost(C_j, S)$. We show this  next.
\paragraph{Bounding Term 1.}
As the rings $R_j(0), \ldots, R_j(\lmax + 1)$ partition $C_j$, we have
\begin{equation*}
     \text{Term 1} \leq \sum_{\ell= 0}^{\ell_{max}+1} |\cost(R_j(\ell), S) - |R_j(\ell)| \cdot \cost(a_j,S)| \leq \sum_{\ell= 0}^{\ell_{max}+1} \sum_{p \in R_j(\ell)} |\cost(p,S) - \cost(a_j,S)|.
\end{equation*} 
We start by bounding the term corresponding to points in $R_j(\lmax + 1)$. These points are at most $\eps |C_j|$ in number and therefore are easily dealt with.

Let $p$ be a point in $R_j(\lmax +1)$. By triangle inequality (\Cref{item: tri-3} of \Cref{fact: triangle}), we have  $\cost(p,S) \leq 2(\cost(p,a_j) + \cost(a_j,S))$. Therefore, $|\cost(p,S) - \cost(a_j,S)| \lesssim \cost(p,a_j) + \cost(a_j,S)$. Noting that there are at most $\eps |C_j|$ points in $R_j(\lmax + 1)$ we get,
\begin{align*}
     \sum_{p \in R_j(\lmax +1)} &|\cost(p,S) - \cost(a_j,S)| \lesssim \cost(C_j,a_j) + \eps |C_j| \cost(a_j,S).
\end{align*}
Here, note that $|C_j|\cost(a_j, S) \leq \cost(C_j, a_j) + \cost(C_j, S)$. Furthermore, 
since $C_j$ is a far cluster, we have that $\cost(C_j, a_j) \leq \eps^2 |C_j| \cost(a_j, S) \lesssim \eps^2 (\cost(C_j, S) + \cost(C_j, a_j)$. Rearranging, we get $\cost(C_j, a_j) \lesssim \eps^2 \cost(C_j, S)$. Therefore, 
\begin{align*}
     \sum_{p \in R_j(\lmax +1)} &|\cost(p,S) - \cost(a_j,S)|\lesssim \eps^2 \cost(C_j,S) + \eps \cost(C_j,S) \lesssim \eps \cost(C_j,S).
\end{align*}
Let $p$ be a point in ring $\ell$ where $0 \leq \ell \leq \lmax$. By an application of the approximate triangle inequality  (\Cref{item: tri-2} of \Cref{fact: triangle}) and the fact that $\cost(p, a_j) \ll \cost(a_j, S)$,
\begin{align}\label{eq: reuse-2}
|\cost(p,S) - \cost(a_j,S)| \leq 2 \sqrt{ \cost(p,a_j)\cost(a_j,S)} + \cost(p,a_j) 
\lesssim  \sqrt{\cost(p,a_j)\cost(a_j,S)}.
\end{align}
Since $p$ is in $R_j(\ell)$ and $C_j$ is far from $S$ we obtain $\cost(p,a_j) \leq 2^{\ell+1} \Delta_j \leq 2^{\ell+1} \eps^{2} \cost(a_j,S)$. Plugging this into  \Cref{eq: reuse-2} gives 
\begin{align}
\label{eq: reuse-3}
|\cost(p,S) - \cost(a_j,S)| \lesssim 2^{\ell/2} \eps \cost(a_j,S).
\end{align}
Using this inequality and the fact that $|R_j(\ell)|$ is at most $|C_j|/2^{\ell}$ (by Markov's inequality), we get
\begin{align*}
    \sum_{\ell = 0}^{\lmax}\sum_{p \in R_j(\ell)} |\cost(p,S) - \cost(a_j,S)| \lesssim \sum_{\ell = 0}^{\lmax} \frac{|C_j|}{2^\ell} \cdot 2^{\ell/2} \eps \cost(a_j,S)
    \lesssim \eps|C_j| \cost(a_j,S). 
\end{align*}

Combined with the bound on $R_j(\lmax+1)$, we get:
\begin{align}
\notag
    \text{Term 1} &= |\cost(C_j,S) - |C_j| \cdot \cost(a_j,S)| \lesssim \eps \cost(C_j,S) + \eps|C_j| \cost(a_j,S)\\ 
    \label{eq:term1}
    &\leq \eps \cost(C_j, S) + \eps (\cost(C_j, S) + |\cost(C_j,S) - |C_j| \cdot \cost(a_j,S)|).
\end{align}
Rearranging, this implies $\text{Term 1} \lesssim \eps \,\cost(C_j, S)$, as desired.
\paragraph{Bounding term 2.} This proof will be very similar to the previous one, relying on properties for event $\calE$ to show the weight of each cluster is preserved in $\Omega$.

  For $\ell$ in range $0 \leq \ell \leq \ell_{max} +1$, let $W_\ell = \sum_{q \in \Omega \cap R_j(\ell)} w_q$ denote the sum of weights of coreset points in the $R_j(\ell)$ and  let $W = \sum_{\ell = 0}^{\ell_{max}+1} W_{\ell}$ be the total weight of coreset points from $C_j$. Since $ \costom(C_j,S) = \sum_{\ell= 0}^{\ell_{max}+1} \costom(R_j(\ell), S)$ we have,
  \begin{align}
\label{eq:frt4}
     |\costom(C_j,S) - |C_j| \cost(a_j,S) | \leq  |\costom(C_j,S) - W \cost(a_j,S) | + |(W - |C_j|)\cdot \cost(a_j,S) |
\end{align} 
If event $\mathcal E$ holds, it follows that $|W - |C_j|| \leq \eps |C_j|$ and by the same reasoning as \Cref{eq:term1} that the second term of \Cref{eq:frt4} is $O(\eps\, \cost(C_j, S) + \eps \text{(Term 1)})$. It now remains to bound the first term. 
\[  |\costom(C_j,S) - W \cost(a_j,S)| \leq \sum_{\ell = 0}^{\ell_{max}+1} |\costom(R_j(\ell),S) -W_\ell \cost(a_j,S)|. \]
For $0 \leq \ell \leq \ell_{max}$, the event $\mathcal{E}$ and Markov's inequality ensure that $W_\ell \lesssim |C_j|/2^\ell$. Therefore, a similar calculation as in \eqref{eq: reuse-3} shows that 

\begin{flalign}\notag
&|\costom(R_j(\ell),S) -W_\ell\, \cost(a_j,S)| = O(W_\ell \cdot \eps 2^{\ell/2}\, \cost(a_j,S))\\
\label{eq: fake2}
& = O(|C_j|/2^\ell \cdot \eps 2^{\ell/2} \,\cost(a_j,S))= O(\eps 2^{-\ell/2} |C_j|\, \cost(a_j,S)) = O(\eps 2^{-\ell/2}\, \cost(C_j,S)).
\end{flalign}
Finally, we also get, for $\ell = \ell_{max}+1$
\begin{flalign}\label{eq: fake3}
\begin{split}
&|\costom(R_j(\ell_{max}+1),S) -W_{\ell_{max}+1}
\cdot \cost(a_j,S)| \leq \sum_{q \in \Omega \cap R_j(\ell_{max}+1)} |w_q \, \cost(q,S) - w_q\, \cost(a_j,S)|  \\& \leq \sum_{q \in \Omega \cap R_j(\ell_{max}+1)} w_q \, \cost(q,S) + w_q\,  \cost(a_j,S) \overset{(i)}{\leq} \sum_{q \in \Omega \cap R_j(\ell_{max}+1)}  (2w_q\,\cost(q,a_j) + 3w_q\,\cost(a_j,S))\\
&\overset{(ii)}{\leq} 2\costom(C_j,A) + 6\eps |C_j|\, \cost(a_j,S)
\overset{(iii)}{\leq} O(\cost(C_j,A)) + O(\eps\,\cost(C_j,S)) = O(\eps \,\cost(C_j,S)),
\end{split}
\end{flalign}
where step $(i)$ uses \Cref{fact: triangle}, step $(ii)$ uses the bound ${W_{\ell_{max}+1} \leq 2\eps|C_j|}$ which holds whenever event $\mathcal{E}$ occurs.
Step $(iii)$ also follows by the third property of event $\mathcal{E}$.

Summing  \eqref{eq: fake2} and \eqref{eq: fake3} gives that Term 2 $\lesssim \eps\, \cost(C_j, S)$, which concludes the proof.
\end{proof}

\subsection{Handling Clusters with Low Cost}\label{app: tiny-cluster}
\begin{proof}[Proof of \Cref{lem: tiny-cluster}]
Let us denote $\psi_T(\Omega, S) = \left\lvert \sum_{j \in J_S}(\cost(C_j, S) - \costom(C_j,S))\right\rvert$.  
Our goal is to show that
\[\E_{\Omega} \sup_{S} \psi_T(\Omega, S) \lesssim \eps \cost(P,A).\]
This will follow directly from the following simple bounds. 
\begin{claim}\label{clm: useful-tiny}
    Fix a set $\Omega$ and a set of centers $S$. For any cluster $C_j$ with $j \in J_S$, it always holds that (i)   $\cost(C_j, S) \lesssim \tfrac{\eps}{k} \cost(P,A)$  and (ii)  $\costom(C_j, S) \lesssim  \cost(P,A)$.
  
    Moreover if $\Omega$ satisfies event $\mathcal{E}$, we have the stronger bound (iii)  $\costom(C_j, S) \lesssim \tfrac{\eps}{k} \cost(P,A)$.
\end{claim}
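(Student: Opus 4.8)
The plan is to derive all three estimates directly from the definition of a low-cost close cluster, using the weight bounds of \Cref{lem: weight-bound} and---only for part (iii)---the guarantees of event $\mathcal{E}$. Two elementary observations do most of the work: since $C_j$ is \emph{close} to $S$, we have $\cost(a_j,S)\le \eps^{-2}\Delta_j=\eps^{-2}\cost(C_j,A)/|C_j|$, hence $|C_j|\cost(a_j,S)\le \eps^{-2}\cost(C_j,A)$; and since $C_j$ is \emph{low-cost}, $\cost(C_j,A)<T=\eps^{3}\cost(P,A)/k$. For part (i), I would apply the approximate triangle inequality $\cost(p,S)\le 2\cost(p,a_j)+2\cost(a_j,S)$ (\Cref{fact: triangle}) and sum over $p\in C_j$ to get $\cost(C_j,S)\le 2\cost(C_j,A)+2|C_j|\cost(a_j,S)\lesssim \eps^{-2}\cost(C_j,A)\lesssim \eps^{-2}\cdot \eps^{3}\cost(P,A)/k=\tfrac{\eps}{k}\cost(P,A)$, using the two observations above.

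For part (ii), I would run the same decomposition on the coreset side:
\[\costom(C_j,S)=\sum_{q\in C_j\cap\Omega}w_q\cost(q,S)\le 2\sum_{q\in C_j\cap\Omega}w_q\cost(q,a_j)+2\Big(\sum_{q\in C_j\cap\Omega}w_q\Big)\cost(a_j,S).\]
Since event $\mathcal{E}$ is not assumed here, I would control the total sampled weight crudely via $|C_j\cap\Omega|\le m$ together with the pointwise bounds of \Cref{lem: weight-bound}, namely $w_q\cost(q,a_j)=w_q\cost(q,A)\le 4k\cost(C_j,A)/m$ and $w_q\le 4k|C_j|/m$. Summing gives $\sum_{q\in C_j\cap\Omega}w_q\cost(q,a_j)\le 4k\cost(C_j,A)$ and $\sum_{q\in C_j\cap\Omega}w_q\le 4k|C_j|$, whence, by the closeness and low-cost inequalities, $\costom(C_j,S)\lesssim k\cost(C_j,A)+k|C_j|\cost(a_j,S)\lesssim kT+k\eps^{-2}T\lesssim \eps\cdot\cost(P,A)$, which is certainly $\lesssim\cost(P,A)$.

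For part (iii), I would repeat the computation of (ii) but replace the two crude bounds by their event-$\mathcal{E}$ counterparts. Property $P_3$ of \Cref{def: event-E} gives $\sum_{q\in C_j\cap\Omega}w_q\cost(q,a_j)=\costom(C_j,A)\le (1+\eps)\cost(C_j,A)\le 2T$, and property $P_1$ gives $\sum_{q\in C_j\cap\Omega}w_q\le (1+\eps)|C_j|\le 2|C_j|$. Plugging these into the same decomposition, together with $|C_j|\cost(a_j,S)\le \eps^{-2}\cost(C_j,A)<\eps^{-2}T$, yields $\costom(C_j,S)\lesssim T+\eps^{-2}T\lesssim \eps^{-2}T=\tfrac{\eps}{k}\cost(P,A)$.

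The whole argument is essentially bookkeeping, and I do not anticipate a genuine obstacle. The one step worth spelling out carefully is part (ii): without event $\mathcal{E}$ one cannot invoke the size-preservation property $P_1$, so one must fall back on the worst-case cardinality bound $|C_j\cap\Omega|\le m$ (used for both $\sum_{q\in C_j\cap\Omega}w_q$ and $\sum_{q\in C_j\cap\Omega}w_q\cost(q,a_j)$); everything else is a direct substitution of the closeness inequality, the low-cost inequality, and \Cref{lem: weight-bound}.
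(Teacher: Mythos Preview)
Your proposal is correct and follows essentially the same approach as the paper: the same triangle-inequality decomposition $\cost(q,S)\le 2\cost(q,a_j)+2\cost(a_j,S)$, the same use of closeness and the low-cost threshold for (i), the same unconditional weight bounds from \Cref{lem: weight-bound} plus $|C_j\cap\Omega|\le m$ for (ii), and properties $P_1,P_3$ of event~$\mathcal{E}$ for (iii). The only cosmetic difference is that for the first sum in (ii) the paper applies the global bound $w_q\cost(q,A)\le 4\cost(P,A)/m$ whereas you use the cluster-local bound $w_q\cost(q,A)\le 4k\cost(C_j,A)/m$; both are valid instances of \Cref{lem: weight-bound} and lead to the same conclusion.
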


Let us first see how this claim directly implies the result.

For any $S$, we trivially have
\[\psi_T(\Omega, S) \leq \sum_{j \in J_S}(\cost(C_j, S) + \costom(C_j, S)). \]

If $\Omega$ satisfies the event $\mathcal{E}$, then plugging the bounds (i) and (iii) in Claim \ref{clm: useful-tiny} for each $j\in J_S$, we get that $\psi_T(\Omega, S)\lesssim \eps \cost(P,A)$.

If $\Omega$ does not satisfy $\mathcal{E}$, the worst case bound (ii) in Claim \ref{clm: useful-tiny} gives $\costom(C_j, S) \lesssim \cost(P,A)$ and hence $\psi_T(\Omega, S)\lesssim k \cost(P,A)$.

Using these bounds on $\psi_T(\Omega, S)$, and as $\Pr[\mathcal{E}] \geq (1 - \eps/k)$  by \Cref{lem:E}, the Law of total Expectation gives that 
\begin{align*}
\E_{\Omega} \sup_{S} \psi_T(\Omega, S) &= \E_{\Omega} \sup_{S} [\psi_T(\Omega, S)| \mathcal{E}] \cdot \Pr[\mathcal{E}] + \E_{\Omega} \sup_{S}[\psi_T(\Omega, S)| \overline{\mathcal{E}}] \cdot \Pr[\overline{\mathcal{E}}] 
\\
&\leq \eps \cost(P,A) \cdot 1 + k\cost(P,A) \cdot \tfrac{\eps}{k} \lesssim \eps \cost(P,A),
\end{align*}
as desired.

We now prove \Cref{clm: useful-tiny}.
\begin{proof}[Proof of \Cref{clm: useful-tiny}]
For any point $p \in C_j$, using the triangle inequality (\Cref{fact: triangle}) and that $C_j$ is close to $S$ (so that $\cost(a_j,S)\leq \Delta_j \eps^{-2})$ gives 
\begin{align}\label{eq: tiny3}
\cost(p,S) &\leq 2(\cost(p,a_j) + \cost(a_j,S))
\leq 2(\cost(p,a_j) + \Delta_j \eps^{-2}).
\end{align}
Summing this over points $p$ in $C_j$ and using that $\cost(C_j, A) \leq T = \tfrac{\eps^3}{k} \cost(P,A)$, we get
\begin{align*}\label{eq: tiny4}
    \cost(C_j,S) \leq 2 (\cost(C_j,A) + \cost(C_j,A) \eps^{-2}) \leq 4 \eps^{-2}\cost(C_j,A) \lesssim \tfrac{\eps}{k} \cost(P,A). 
\end{align*}
This proves the bound (i). 

To prove the bound (ii), summing \Cref{eq: tiny3} over points in $C_j \cap \Omega$ gives,
\begin{align*}
    \costom(C_j, S) &\leq 2 \sum_{p \in C_j \cap \Omega} (w_p \cost(p,a_j)) + 2\Delta_j \eps^{-2} \sum_{p \in C_j \cap \Omega} w_p\\
    &\overset{(i)}{\leq} 2 \sum_{p \in C_j \cap \Omega} (\tfrac{3\cost(P,A)}{m\cdot \cost(p,a_j)} \cdot \cost(p,a_j)) + 2\Delta_j \eps^{-2} \sum_{p \in C_j \cap \Omega} \tfrac{4k|C_j|}{m} \tag{By \Cref{lem: weight-bound}}\\
    &\leq 6 \cost(P,A) + 8k\eps^{-2} \cost(C_j,A) \lesssim \cost(P,A). \tag{As $\cost(C_j, A) \leq \tfrac{\eps^3}{k} \cost(P,A)$}
\end{align*}
\allowdisplaybreaks
If the event $\mathcal{E}$ holds, then step $(i)$ above can be tightened using the inequalities: $\sum_{p \in C_j \cap \Omega} w_p \cost(p,a_j) = \costom(C_j, A) \leq (1+\eps) \cost(C_j, A)$ which holds by property $P_3$ (see \Cref{def: event-E} of event $\mathcal{E}$) and $\sum_{p \in C_j \cap \Omega} w_p \leq (1+\eps) |C_j|$ which holds by property $P_1$. 
This gives, 
\begin{align*}
    \costom(C_j, S) &\leq 2(1+\eps) \cost(C_j,A) + 2\Delta_j \eps^{-2} (1+\eps) |C_j|\\
    & \lesssim \eps^{-2} \cost(C_j,A) 
    \lesssim \frac{\eps}{k} \cost(P,A).\qedhere
\end{align*}
\end{proof}
\end{proof}

\subsection{Symmetrization}
A random process is a collection of random variables $(X(t))_{t \in T}$ on the same probability space and is indexed by elements $t$ of some set $T$. Consider $m$ random processes $X_1(t), \ldots X_m(t)$ with the same index set $T$ and defined on the same probability space. We say these random processes are independent if for any $t' \in T$, the random variables $X_1(t'), \ldots, X_m(t')$ are mutually independent. We state the following useful symmetrization lemma for independent random processes.

\begin{fact}[Symmetrization for Random Processes] \label{lem: symmetrization} Let $X_1(t), \ldots, X_m(t)$ be independent random processes indexed by $t \in T $. Let $g_1, \dots, g_m \sim \mathcal{N}(0,1)$ be independent Gaussians. Then
	\begin{align*}
		\E \sup_{t \in T} \left\lvert\sum_{i = 1}^n (X_i(t) - \E X_i(t))\right\rvert \leq \sqrt{2 \pi} \E  \sup_{t \in T} \left\lvert\sum_{i = 1}^n g_i X_i(t) \right\rvert.
	\end{align*}
\end{fact}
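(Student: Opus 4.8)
The plan is to run the classical two-step symmetrization: first replace the centering $\E X_i(t)$ by an independent copy of the whole family of processes, then trade Rademacher signs for Gaussian weights.

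\emph{Step 1 (decoupling via an independent copy).} I would introduce $X_1'(t),\dots,X_m'(t)$, an independent copy of $X_1(t),\dots,X_m(t)$ on a product space, so that $\E X_i(t) = \E'[X_i'(t)]$ pointwise in $t$. Pulling the inner expectation outside the convex functional $x\mapsto \sup_t|\sum_i x_i(t)|$ via Jensen gives
\[
\E\sup_t\Big|\sum_i(X_i(t)-\E X_i(t))\Big|\ \le\ \E\,\E'\sup_t\Big|\sum_i(X_i(t)-X_i'(t))\Big|.
\]
Each process $D_i(t):=X_i(t)-X_i'(t)$ is symmetric (its law is invariant under $D_i\mapsto -D_i$) and the $D_i$ are independent across $i$; hence for independent Rademacher signs $\varepsilon_1,\dots,\varepsilon_m$ the tuple $(\varepsilon_1 D_1,\dots,\varepsilon_m D_m)$ has the same law as $(D_1,\dots,D_m)$, so I may insert the $\varepsilon_i$ at no cost. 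The triangle inequality together with the fact that $X'$ has the same law as $X$ then yields
\[
\E\,\E'\,\E_\varepsilon\sup_t\Big|\sum_i\varepsilon_i(X_i(t)-X_i'(t))\Big|\ \le\ 2\,\E\,\E_\varepsilon\sup_t\Big|\sum_i\varepsilon_i X_i(t)\Big|.
\]

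\emph{Step 2 (Rademacher to Gaussian).} Here I would fix a realization of all the processes, write $x_i(t)$ for the resulting deterministic scalars, and decompose $g_i=\varepsilon_i|g_i|$ with $\varepsilon_i:=\operatorname{sign}(g_i)$ a Rademacher variable independent of $|g_i|$, the $|g_i|$ mutually independent. For fixed $\varepsilon$ the map $(r_1,\dots,r_m)\mapsto\sup_t|\sum_i\varepsilon_i r_i x_i(t)|$ is convex (a supremum of absolute values of linear forms), so Jensen over the $|g_i|$ and $\E|g_i|=\sqrt{2/\pi}$ give
\[
\E_g\sup_t\Big|\sum_i g_i x_i(t)\Big|=\E_\varepsilon\,\E_{|g|}\sup_t\Big|\sum_i\varepsilon_i|g_i|x_i(t)\Big|\ \ge\ \E_\varepsilon\sup_t\Big|\sum_i\varepsilon_i\,\E|g_i|\,x_i(t)\Big|=\sqrt{\tfrac{2}{\pi}}\;\E_\varepsilon\sup_t\Big|\sum_i\varepsilon_i x_i(t)\Big|.
\]
Taking expectation over the processes and chaining with Step 1 gives $\E\sup_t|\sum_i(X_i(t)-\E X_i(t))|\le 2\sqrt{\pi/2}\,\E\,\E_g\sup_t|\sum_i g_i X_i(t)|=\sqrt{2\pi}\,\E\sup_t|\sum_i g_i X_i(t)|$, which is exactly the claimed bound (note $\sqrt{2\pi}=2\sqrt{\pi/2}$ is precisely the product of the factor $2$ from the triangle inequality and $\sqrt{\pi/2}$ from the Gaussian comparison).

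The conceptual content is entirely in the two symmetrization moves; the only real care needed is measurability/Fubini for the iterated expectations and the suprema over $T$, which I would handle in the standard way by assuming $T$ is countable (or the processes separable) as is implicit here, and by checking the minor bookkeeping that the ``$n$'' appearing in the displayed statement is the same as $m$. I do not expect any genuine obstacle.
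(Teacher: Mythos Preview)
Your proof is correct and is exactly the standard two-step argument (decoupling via an independent copy plus Jensen to get the Rademacher version with a factor $2$, then the convexity/Jensen comparison $\E|g|=\sqrt{2/\pi}$ to pass from Rademacher to Gaussian). The paper does not actually prove this statement at all: it is recorded there as a \emph{Fact} and used as a black box, so there is no ``paper's own proof'' to compare against. Your write-up supplies precisely the textbook justification one would cite for it, and your observation that $2\cdot\sqrt{\pi/2}=\sqrt{2\pi}$ accounts for the constant exactly.
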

\noindent We now use the above result to prove \Cref{eq:sym}. 
\begin{lemma}\label{lem: sym-apply}
	Let $g_1, g_2, \cdots, g_\csize$ be independent Gaussians sampled from $\mathcal{N}(0,1)$ where $\csize$ is the number of samples in $\Omega$, then
	\begin{align*}
		\E_{ \Omega} \sup_{S\inSR} \left\lvert \frac{ \cost(B(S), S)- \costom(B(S),S)}{\cost(P,S) }\right\rvert \leq \sqrt{2\pi} \E_{ \Omega} \E_{g} \sup_{S\inSR} \left\lvert X^{S}(\Omega,g)\right\rvert.
	\end{align*}
\end{lemma}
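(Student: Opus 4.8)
The plan is to apply the symmetrization lemma (Fact~\ref{lem: symmetrization}) with the right choice of index set and random processes, matched so that the left side becomes exactly the quantity $\E_\Omega \sup_S |(\cost(B(S),S)-\costom(B(S),S))/\cost(P,S)|$ and the right side becomes $\sqrt{2\pi}\,\E_\Omega\E_g\sup_S|X^S(\Omega,g)|$. The natural move is: condition on nothing yet, and view the coreset construction as drawing $m$ i.i.d.\ samples $q_1,\dots,q_m$ from $\mu$. The index set is $T = \mathcal{S}(r)$, and for each $i\in[m]$ we define the random process
\[
  X_i(S) \;:=\; \frac{w_{q_i}\,u_i^S(\Omega) }{\cost(P,S)} \;=\; \frac{1}{m\,\mu(q_i)}\cdot \frac{\mathbbm{1}[q_i\in B(S)]\cdot\cost(q_i,S)}{\cost(P,S)},
\]
where $w_{q_i}=1/(m\mu(q_i))$ as in Algorithm~\ref{alg: sensitivity-sampling}. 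These $m$ processes are independent because the samples $q_i$ are drawn independently, and they share the index set $\mathcal{S}(r)$ and the same probability space, so Fact~\ref{lem: symmetrization} applies verbatim.

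The second step is to compute $\E_\Omega \sum_{i\in[m]} X_i(S)$ and check it equals $\cost(B(S),S)/\cost(P,S)$, i.e.\ that the estimator is unbiased. For a single sample $q$ drawn from $\mu$,
\[
  \E_q\!\left[\frac{1}{m\mu(q)}\,\mathbbm{1}[q\in B(S)]\,\cost(q,S)\right] \;=\; \frac{1}{m}\sum_{p\in B(S)} \cost(p,S) \;=\; \frac{\cost(B(S),S)}{m},
\]
so summing over the $m$ independent samples gives $\E_\Omega\big[\sum_i w_{q_i}u_i^S(\Omega)\big]=\cost(B(S),S)$, hence $\E_\Omega \sum_i X_i(S) = \cost(B(S),S)/\cost(P,S)$. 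Therefore $\sum_{i\in[m]}\big(X_i(S)-\E X_i(S)\big) = \big(\costom(B(S),S)-\cost(B(S),S)\big)/\cost(P,S)$, whose supremum over $S$ has the same expectation as the supremum of its negation, matching the left-hand side of the claimed inequality. On the right-hand side, $\sum_{i\in[m]} g_i X_i(S) = \sum_{i\in[m]} g_i w_{q_i} u_i^S(\Omega)/\cost(P,S) = X^S(\Omega,g)$ by the definition \eqref{def:X}. Plugging these identifications into Fact~\ref{lem: symmetrization} and taking the outer expectation over $\Omega$ yields exactly the claim.

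The one subtlety worth being careful about — and the only place the argument is not completely mechanical — is that the index set $\mathcal{S}(r)$ (the center class $r$) is itself defined in terms of the interaction number $N(S)$, which depends on $B(S)=B_{b,t}(S)$, and this is a deterministic function of $S$ only, not of $\Omega$; so $\mathcal{S}(r)$ is a fixed set and the processes $X_i(\cdot)$ are genuinely well-defined on a common index set independent of the randomness. One should also note that measurability/integrability issues are handled in the standard way (the supremum over the infinite set $\mathcal{S}(r)$ is understood as in the chaining literature, and is finite since later we bound it by a finite net), so Fact~\ref{lem: symmetrization} applies. With these points noted, the proof is a direct substitution; I expect no real obstacle beyond bookkeeping the unbiasedness computation and confirming the independence of the $m$ single-sample processes.
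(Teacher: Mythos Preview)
Your proposal is correct and follows essentially the same approach as the paper: you define the per-sample processes $X_i(S) = w_{q_i} u_i^S(\Omega)/\cost(P,S)$ (the paper calls these $Y_i^S(\Omega)$), verify independence and unbiasedness $\E_\Omega \sum_i X_i(S) = \cost(B(S),S)/\cost(P,S)$, and then apply Fact~\ref{lem: symmetrization} directly. Your extra remarks that $\mathcal{S}(r)$ is a deterministic index set independent of $\Omega$ and the note on measurability are fine sanity checks that the paper leaves implicit.
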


\begin{proof}
Consider a fixed $S \inSR$. Recall that $u^S(\Omega)$ is a vector in $\R^m$ whose $i$-th entry given by $u_i^S(\Omega) = \mathbbm{1}[q_i \in B(S)] \cost(q_i,S)$. 
 For each $i \in [\csize]$, define the random variable $Y^S_{i}(\Omega) = (w_{q_i} \uq{S}{i}(\Omega))/\cost(P,S)$ where the randomness of $Y^S_i(\Omega)$ is in the choice of the $i$-th sample $q_i$ of $\Omega$.  Firstly, we have $ \sum_{i \in [m]} Y_i^S(\Omega) = \costom(B(S), S)/\cost(P,S) $. Moreover, we have,

\begin{align*}
    \E_{\Omega} Y^{S}_{i} (\Omega) &\overset{(i)}{=} \frac{1}{\cost(P,S)}\sum_{p \in P} \mu(p) \cdot 1/(m \mu(p)) \cdot \mathbbm{1}[p \in B(S)] \cdot \cost(p,S) \\
    &= \frac{1}{\cost(P,S)} \sum_{p \in B(S)} 1/\csize \cdot \cost(p,S) = \frac{\cost(B(S),S)}{\csize \cdot \cost(P,S)}
\end{align*}
 where step $(i)$ uses that \Cref{alg: sensitivity-sampling} assigns weight $1/(m \mu(p))$ if the $i$-th sample is $p$. By the linearity of expectation, $\E_{\Omega} \sum_{i \in [\csize]} Y^S_{i}(\Omega) = \cost(B(S),S)/\cost(P,S)$. Since the coordinates of $u^S(\Omega)$ are mutually independent, the random processes $Y^S_1(\Omega),\ldots, Y^S_m(\Omega)$ indexed by $S \in \allcenters(r)$ are also independent. Applying \Cref{lem: symmetrization} gives,
\begin{align*}
		&\E_{ \Omega} \sup_{S\inSR} \left\lvert \frac{\cost(B(S),S)- \costom(B(S),S))}{\cost(P,S) }\right\rvert \\
        &= \E_{ \Omega} \sup_{S\inSR} \left\lvert \sum_{i \in [\csize]} \Bigl(Y^S_i(\Omega) - \E_{\Omega} Y^{S}_i(\Omega)\Bigr) \right\rvert \leq \sqrt{2\pi} \E_{ \Omega} \E_{g}\sup_{S\inSR} 
        \left\lvert \sum_{i \in [\csize]} g_i Y^S_i(\Omega)\right\rvert.
\end{align*}
which concludes the proof since $X^S(\Omega, g) = \sum_{i \in [m]} g_i Y^{S}_i(\Omega)$.
\end{proof}

\subsection{\texorpdfstring{Worst Case Bound on Gaussian Process: Proof of \Cref{lem: worst-case-omega}}{Worst Case Bound on Gaussian Process}}\label{app: proof-worst-case}

\begin{proof}
    Define the vector $z^S \in \R^m$ such that for $q \in \Omega$, the $q$-th coordinate $z_q^S = (w_q\uq{S}{q}) /\cost(P,S)$. This allows us to write $X^{S} = \langle g, z^S \rangle \leq \norm{g}\cdot {\norm{z^S}}$ where the last step uses the Cauchy-Schwarz inequality.

    \noindent We begin by bounding the norm of $z^S$. 
    For a point $q \in C_j \cap B(S)$, we have $\cost(q, S) \lesssim \cost(q, A) + \cost(a_j, S) \lesssim \cost(q, A) + \Delta_j / \eps^2$. 
    Using \cref{lem: weight-bound} $w_q \leq \min \left(\frac{\cost(P, A)}{m \cost(q, A)}, \frac{\cost(P, A)}{m\Delta_j}\right)$, we therefore get $\cost(q, S) \lesssim \frac{\cost(P, A)}{\eps^2 m}$, and so:
    \begin{align}\label{eq: zS-bound}
        \norm{z^S}^2 = \sum_{q \in B(S)\cap \Omega} \frac{(w_q \cost(q,S))^2}{\cost(P,S)^2} \lesssim \sum_{q \in B(S)\cap \Omega} \frac{(\cost(P,A)^2/(\eps^{4}m^2))}{\cost(P,S)^2} \lesssim \frac{1}{\eps^4m},
    \end{align}
    where the last step uses $\cost(P, A) \leq \gamma \cost(P, S) \lesssim \cost(P, S)$.

    Therefore using \Cref{eq: zS-bound} and the fact that $\E_{g} \norm{g} \lesssim \sqrt{m}$ we get,
    \begin{align*}
    E_g \sup_{S\inSR} |X^S| \lesssim       E_g \sup_{S\inSR} \norm{g} \cdot \norm{z^S} \lesssim E_g \sup_{S\inSR} \norm{g} \cdot 1/(\eps^2 \sqrt{m} ) \lesssim  1/\eps^2
    \end{align*}
\end{proof}

\section{Existence of Cost Vector Nets of Small Size} \label{app: nets}
Consider a fixed signature $(N_1, \ldots, N_k)$ (as defined in \Cref{sec:further-classifying}) and let $N = \sum_{i= 1}^k N_j$. Also, let  $\mathcal{T}$ be the set of all centers with this signature. We show  the following:
\begin{lemma}\label{lem: net-lemma-signature}
    There is an $(\alpha, \mathcal{T})$-net $M_\alpha$ of size $\exp(O(\min(k2^t \alpha^{-2}, N + k \alpha^{-2}) \cdot \log(k \eps^{-1} \alpha^{-1})))$. 
\end{lemma}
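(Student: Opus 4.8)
The plan is to construct, for every solution $S$ in the fixed‑signature class $\mathcal T$, an explicit approximating cost vector $v(S)\in\R^m$ assembled from a bounded number of discrete choices, and then to count the distinct vectors produced. Since $\Omega$ is fixed and finite and the $i$‑th coordinate of the cost vector depends on $S$ only through the $k$ squared distances $\{\cost(q_i,x)\}_{x\in S}$, I would first reduce the ambient dimension via a Johnson–Lindenstrauss / terminal‑embedding projection of $\Omega\cup\{a_1,\dots,a_k\}$ (and of the centers, which may be assumed to lie in the affine hull) into dimension $d'=O(\alpha^{-2}\log(k\eps^{-1}\alpha^{-1}))$ preserving all relevant squared distances up to a $(1\pm\alpha)$ factor. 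This step is borrowed as a black box from \cite{Cohen-AddadSS21}; after it, every center of $S$ lives in a $d'$‑dimensional space.

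Next I would define $v(S)$ cluster by cluster. For each $C_j\in B(S)$, let $x(j)\in S$ be an approximately nearest center to $a_j$. I would discretize $x(j)$ by (i) rounding the scalar $\cost(a_j,x(j))$ to the nearest power of $(1+\alpha\,2^{-t/2})$: as $C_j$ has type $t$ this scalar ranges over a window of multiplicative width $\mathrm{poly}(k/\eps)$ (or is "small", i.e. below $\Delta_j$, which is handled separately), costing $O(\log(k\eps^{-1}\alpha^{-1}))$ bits; and (ii) rounding the unit direction $(x(j)-a_j)/\|x(j)-a_j\|$ to an $\alpha$‑net of the unit sphere of the $d'$‑dimensional space, costing $O(d'\log(\alpha^{-1}))=O(\alpha^{-2}\log(k\eps^{-1}\alpha^{-1}))$ bits. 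Then I set $v_i(S):=\cost(q_i,\hat x(j))$ for $q_i\in C_j\cap\Omega$ (with $\hat x(j)$ the rounded center) and $v_i(S):=0$ for $q_i\notin B(S)$. Summed over the $k$ clusters, the second cost gives the $k\alpha^{-2}$ term. The additional $N$ term accounts for the fact that a single center of $S$ can be the approximate nearest center of several clusters it \emph{interacts} with (\Cref{def: interaction}); one then only needs to specify, per such interacting (center, cluster) pair, the ownership of its rounded copy — $O(N\log(k\eps^{-1}))$ further bits, absorbed. Crucially, a nearest center that does \emph{not} interact with $C_j$ must lie within $32\Delta_j$ of $a_j$ (by $P_2$ of \Cref{def: interaction}), so it is captured by a coarse $O(\log(k\eps^{-1}\alpha^{-1}))$‑bit net around $a_j$ and costs nothing extra; this is exactly why only the $N$ interacting pairs are charged. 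The alternative bound $k2^t\alpha^{-2}$ comes from the same scheme, except the direction net is replaced by a direct net of the radius‑$O(\sqrt{2^t\Delta_j})$ ball around $a_j$ at additive granularity $\alpha\sqrt{\Delta_j}$, which is never worse.

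Finally I would verify $|v_i(S)-\cost(q_i,S)|\le\alpha\cdot\err(q_i,S)$ for all $q_i$. Using the standard centroid identity (\Cref{lem: centroid-fact}), write $\cost(q_i,x(j))=\cost(q_i,a_j)+\cost(a_j,x(j))+2\langle q_i-a_j,\,a_j-x(j)\rangle$, and compare against $v_i(S)=\cost(q_i,a_j)+\cost(a_j,\hat x(j))+2\langle q_i-a_j,\,a_j-\hat x(j)\rangle$ term by term: the scalar rounding contributes $\lesssim \alpha2^{-t/2}\cost(a_j,x(j))\lesssim\alpha\sqrt{\cost(q_i,S)\cost(q_i,A)}$, and the direction rounding contributes $\lesssim\alpha\sqrt{\cost(a_j,x(j))}\cdot\sqrt{\cost(q_i,a_j)}\lesssim\alpha\sqrt{\cost(q_i,S)\cost(q_i,A)}$, both of which are $\le\alpha\cdot\err(q_i,S)$ by the definition of $\err$ (the dimension‑reduction error adds a further $\alpha$‑fraction of the same quantities). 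Points whose nearest center in $S$ differs from $x(j)$ are handled by observing that their cost to $S$ is sandwiched between their cost to $\hat x(j)$ and a quantity differing from it by at most $\alpha\,\err(q_i,S)$, so the same $v_i(S)$ still works, possibly after a symmetric re‑rounding.

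The step I expect to be the main obstacle is this last verification: choosing the granularities so that the per‑cluster rounding error is uniformly at most $\alpha\cdot\err(q_i,S)$ across all rings — points with $\cost(q_i,a_j)\ll\Delta_j$ demand the finest accuracy on $\cost(a_j,x(j))$ — while keeping the scalar net to $O(\log(k\eps^{-1}\alpha^{-1}))$ bits, which forces one to exploit that the relevant distance range has only $\mathrm{poly}(k/\eps)$ multiplicative width. The secondary obstacle is the combinatorial bookkeeping for the $N$ term: ruling out a full $d'$‑dimensional direction net per (center, cluster) pair, which would destroy the bound, by splitting into interacting versus non‑interacting nearest centers as above. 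Once \Cref{lem: net-lemma-signature} is in hand, \Cref{lem: net-lemma} follows by taking a union over the $\exp(O(\min(2^r,k)\log(k\eps^{-1})))$ signatures with $\sum_i N_i\in[2^r,2^{r+1})$, which is absorbed into the stated bound.
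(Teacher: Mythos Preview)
Your overall plan --- discretize each center after dimension reduction, assemble per-cluster approximations, count choices --- is reasonable in spirit, but it diverges from the paper's construction and contains a genuine gap.

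\textbf{The main gap: per-cluster discretization cannot give a two-sided error bound.} You set $v_i(S)=\cost(q_i,\hat x(j))$ where $\hat x(j)$ discretizes a center nearest to $a_j$. But the cost vector net (\Cref{def: clustering-nets}) requires $|v_i-\cost(q_i,S)|\le\alpha\cdot\err(q_i,S)$, and $\cost(q_i,S)=\min_{x\in S}\cost(q_i,x)$ is the cost to $q_i$'s nearest center, which need not be $x(j)$. When $q_i$'s nearest center $x'\neq x(j)$, you only get $v_i\approx\cost(q_i,x(j))\ge\cost(q_i,S)$, and the excess $\cost(q_i,x(j))-\cost(q_i,S)$ can be of order $\cost(q_i,A)+2^t\Delta_j$, which is far larger than $\alpha\cdot\err(q_i,S)$. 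Your ``sandwiching'' remark only establishes a lower bound on $v_i$, not the required upper bound. This is precisely why the paper works per \emph{center}, not per cluster: it builds a single-center net $M_{\alpha,j}$ for each $x_j\in S$ satisfying a one-sided guarantee (a) plus a two-sided guarantee (b) \emph{only when $x_j$ is nearest to $q_i$}, and then takes the coordinate-wise minimum over $j$ to recover the correct two-sided bound.

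\textbf{The role of $N$ is different from what you describe.} In the paper, the $N$ term does not come from ``ownership bookkeeping''. For each center $x_j$, the paper constructs a data-dependent subspace by \emph{seeding} it with the centroids $\{a_i:C_i\in I(x_j)\}$ of the $N_j$ clusters interacting with $x_j$, and then greedily extending by $O(\alpha^{-2})$ further coreset points (Lemma~E.5, not a Johnson--Lindenstrauss projection). The resulting subspace has dimension $O(N_j+\alpha^{-2})$, and the cross-term (Term~4 in the Pythagorean decomposition) is controlled precisely because the interacting $a_i$'s lie in the subspace (\Cref{clm:bounding-term4-2}). Enumerating which $N_j$ clusters interact with $x_j$ costs $\binom{k}{N_j}$, and the ball net in the $O(N_j+\alpha^{-2})$-dimensional subspace costs $\exp(\tilde O(N_j+\alpha^{-2}))$; summing over $j$ gives $N+k\alpha^{-2}$. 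Your JL-based scheme, in which every center lives in a single $O(\alpha^{-2}\log)$-dimensional space, does not reproduce this structure, and in particular gives you no obvious way to pay $N_j$ rather than a full direction net for each (center, cluster) pair you must resolve.

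In short, the per-center decomposition with coordinate-wise minimum, together with the iterative data-dependent subspace of Lemma~E.5, are the two ingredients you are missing; both are needed to simultaneously get the two-sided error and the $N+k\alpha^{-2}$ size bound.
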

\noindent 
 The interaction number $N(S)$ for any $S \in \mathcal{S}(r)$ is, by definition, in range $[2^r, 2^{r+1})$. Moreover, the total number of signatures is at most $(k+1)^k=\exp(O(k \log k))$. These two facts together with \Cref{lem: net-lemma-signature} imply the net bound guaranteed by \Cref{lem: net-lemma}.
\begin{definition}[Single Center Nets]\label{def: single-center-net}
    Let $\alpha \in (0,1/2]$, $j \in [k]$ and $\mathcal{T}$ be the set of centers with signature $(N_1, \ldots, N_k)$. An $(\alpha, \mathcal{T},j)$-net is a set of vectors $M_{\alpha,j}$ which for each $S \in \mathcal{T}$ contains a vector $y$ that approximates the costs of points in $\Omega$ to the $j$-th center $x_j$ of $S$ as follows. For each $q_i \in B(S) \cap \Omega$, 
	\begin{enumerate}[(a)]
            \item \label{item: scn-item1} $ y_i \geq \cost(q_i,S) - \alpha \cdot \err(q_i,S)$ where $\err(\cdot, \cdot)$ is as defined in \Cref{def: clustering-nets}.
		\item  If $\cost(q_i,S) = \cost(q_i, x_j)$ then we further have $y_{i} \leq \cost(q_i,S) + \alpha \cdot \err(q_i,S)$ 
	\end{enumerate}
\end{definition}
In words, the net $M_{\alpha,j}$ contains vectors approximating the cost of points in $\Omega$ with respect to out of the $k$ centers in $S$ (specifically the $j$-th center $x_j$). Moreover, if a point $q \in \Omega$, has $x_j$ as the nearest center among all centers in $S$, then its cost is $\alpha$ approximated. For other points, we only guarantee that their cost wrt $x_j$ is not underestimated by the vector.

We give the following bound on the size of single center nets.
\begin{lemma}[Single Center Net Bound]\label{lem: single-center-net}
	For any $\alpha \in (0,1/2]$ and $j \in [k]$ there exists a $(\alpha, \mathcal{T},j)$ single center net of size $\exp(O(\min(2^t  \alpha^{-2}, N_j + \alpha^{-2}) \cdot \log(k \eps^{-1} \alpha^{-1})))$.
\end{lemma}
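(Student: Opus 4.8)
The plan is to discretize the location of the single center $x_j$, exploiting that its contribution to the cost vector is determined by just one point of $\R^d$: for every $q_i\in B(S)\cap\Omega$ the relevant entry is $\cost(q_i,x_j)=\|q_i-x_j\|^2$. So the task reduces to covering the set of \emph{realizable} positions of $x_j$ (those appearing as the $j$-th center of some $S\in\mathcal{T}$) by a small family of candidate positions, from which the net vectors are read off; I would follow the cost-vector-net construction of \cite{Cohen-AddadSS21} adapted to the error function $\err(\cdot,\cdot)$ of \Cref{def: clustering-nets}. Two observations simplify the bookkeeping. Since $\cost(q_i,S)\le\cost(q_i,x_j)$, the first requirement of \Cref{def: single-center-net} is met by \emph{any} over-estimate of $\cost(q_i,x_j)$; the two-sided guarantee is needed only on points whose nearest center in $S$ is $x_j$, where $\cost(q_i,x_j)=\cost(q_i,S)$. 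And if $q_i\in C_\ell$ then $\cost(q_i,A)=\cost(q_i,a_\ell)$, $\Delta_{q_i}=\Delta_\ell$, and the approximate triangle inequality gives $|\cost(q_i,x_j)-\cost(a_\ell,x_j)|\lesssim \sqrt{\cost(a_\ell,x_j)\cost(q_i,a_\ell)}+\cost(q_i,a_\ell)$, which is $\lesssim\err(q_i,S)$ whenever $x_j$ is approximately nearest to $q_i$; hence $\cost(a_\ell,x_j)$ is already a constant-factor-$\err$ proxy, and upgrading to additive accuracy $\alpha\,\err$ only requires pinning $x_j$ down to a ball of radius $\Theta(\alpha\sqrt{\Delta_\ell})$.

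Next I would classify each cluster $C_\ell\subseteq B(S)$ by its relation to $x_j$. (i) If property $P_2$ of \Cref{def: interaction} fails for $C_\ell$ (so $\cost(a_\ell,x_j)>16\cost(a_\ell,S)$), then, using that $C_\ell$ is close and of type $t$, $x_j$ is not the nearest center of essentially any $q_i\in C_\ell$ — the few exceptions are high-$\cost(\cdot,A)$ points absorbed by the slack in $\err$ — so a single coarse over-estimate handles these entries. (ii) If $P_1$ fails ($x_j$ lies inside the $32\Delta_\ell$-ball of $a_\ell$), then $\cost(q_i,x_j)\lesssim\cost(q_i,a_\ell)+\Delta_\ell=O(\err(q_i,S))$, so the relevant value range is small and it suffices to grid $x_j$ inside this tiny ball. (iii) Otherwise $C_\ell$ interacts with $x_j$; there are $N_j$ (up to constants, fixed by the signature) such clusters, and $x_j$ lies in the intersection of their serving shells, each a spherical shell around $a_\ell$ of radii between $\Theta(\sqrt{\Delta_\ell})$ and $\Theta(2^{t/2}\sqrt{\Delta_\ell})$ (upper bound from $P_2$ and type $t$).

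Finally I would build the net in two ways and keep the smaller. After a dimension reduction bringing $x_j$ and all $q_i$ into $d'=O(\alpha^{-2}\log(k\eps^{-1}\alpha^{-1}))$ dimensions while preserving the relevant squared distances up to $(1\pm\alpha)$: (a) for the $N_j+\alpha^{-2}$ bound, first guess the $\lesssim N_j$ interacting clusters — at most $k^{O(N_j)}$ choices — then lay down a grid of the appropriate resolution for $x_j$ inside the resulting (nonempty) intersection of shells, contributing a further $\exp(O(d'\log(k\eps^{-1}\alpha^{-1})))=\exp(O(\alpha^{-2}\log(k\eps^{-1}\alpha^{-1})))$ choices, for a total of $\exp(O((N_j+\alpha^{-2})\log(k\eps^{-1}\alpha^{-1})))$ candidate positions; (b) for the $2^t\alpha^{-2}$ bound, relevant when $N_j$ is large, do not name the interacting clusters individually, but exploit that every cluster of $B(S)$ has type $t$, so all serving shells have radius-to-resolution ratio $2^{O(t)}/\alpha$, and a single grid of $\exp(O(2^t\alpha^{-2}\log(k\eps^{-1}\alpha^{-1})))$ positions for $x_j$ already meets the $\alpha\,\err$ requirement for all $q_i\in B(S)\cap\Omega$ simultaneously. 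Cases (i) and (ii) add only lower-order factors, which are absorbed.

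The main obstacle is construction (b), the $2^t\alpha^{-2}$ bound: one must grid $x_j$ \emph{globally} rather than cluster-by-cluster — a per-cluster discretization would multiply into $\exp(O(N_j\alpha^{-2}\log(\cdot)))$, far too large — while still achieving per-point additive accuracy $\alpha\,\err(q_i,S)$. This relies on the shape of $\err$, namely that $\err(q,S)\gtrsim \cost(q,A)+\Delta_q+\sqrt{\cost(q,S)(\cost(q,A)+\Delta_q)}$ forgives large additive error on high-cost points, together with the fact that type $t$ caps the spread of radii the grid must span. This is the technical heart imported from \cite{Cohen-AddadSS21}; I would re-derive the grid resolution for the present $\err$ and verify the counting. \Cref{lem: net-lemma-signature} and \Cref{lem: net-lemma} then follow by taking coordinate-wise minima over the $k$ single-center nets and summing exponents.
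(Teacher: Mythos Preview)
Your classification of clusters by whether they interact with $x_j$, and your plan to give two separate net constructions, matches the paper's structure. The essential gap is the dimension-reduction step. A black-box Johnson--Lindenstrauss projection to $d'=O(\alpha^{-2}\log(k\eps^{-1}\alpha^{-1}))$ dimensions with $(1\pm\alpha)$ distortion does not do the job: first, $x_j$ is an arbitrary point of $\R^d$, not a member of the fixed set you embed, so no fixed linear map preserves $\|q_i-x_j\|$ for all $x_j$ (that would force the map to be an isometry); second, multiplicative $(1\pm\alpha)$ on $\cost(q_i,x_j)$ is additive $\alpha\cdot\cost(q_i,x_j)$, which for type-$t$ clusters exceeds $\alpha\cdot\err(q_i,S)$ by a factor $\Theta(2^{t/2})$. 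Relatedly, your own arithmetic for bound (b) does not close: a grid of radius-to-resolution ratio $2^{O(t)}/\alpha$ in dimension $d'=O(\alpha^{-2}\log(\cdot))$ has $\exp\bigl(O(\alpha^{-2}\log(\cdot)\cdot(t+\log\alpha^{-1}))\bigr)$ points, not the $\exp(O(2^t\alpha^{-2}\log(\cdot)))$ you state.

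The paper avoids both issues with a \emph{data-dependent, enumerable} subspace (\Cref{lem: subspace}): for every $x$ there is a subset $R(x)\subset\Omega$ of size $O(\alpha^{-2})$ such that, with $\Pi$ the projection onto $\mathrm{span}(U_0\cup R(x))$, one has $|\langle q,x\rangle-\langle\Pi q,x\rangle|\le\alpha\,\|q-\Pi q\|\cdot\min_{q'\in\Omega}\|q'-x\|$. Since $R(x)$ is a small subset of $\Omega$, one enumerates the $\binom{m}{O(\alpha^{-2})}$ possibilities, then writes $\cost(q,x_j)=\|\Pi(q-x_j)\|^2+\|(I-\Pi)q\|^2+\|(I-\Pi)x_j\|^2-2(\langle q,x_j\rangle-\langle\Pi q,x_j\rangle)$ and handles the four terms separately: the first by gridding $\Pi x_j$ inside the low-dimensional subspace (\Cref{lem: warm-up}), the second is fixed once $\Pi$ is, the third is a single scalar, and the cross term is bounded by the subspace guarantee. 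The $2^t\alpha^{-2}$ bound arises by taking $U_0=\emptyset$ but invoking \Cref{lem: subspace} at precision $\alpha\,2^{-t/2}$ (this is exactly the $2^{t/2}$ slack you are missing), yielding subspace dimension $O(2^t\alpha^{-2})$; the $N_j+\alpha^{-2}$ bound arises by taking $U_0=\{a_\ell:C_\ell\in I(x_j)\}$, so that for interacting clusters $\|q-\Pi q\|\le\dist(q,A)$ already makes the cross term $O(\alpha\,\err)$ without the $2^{-t/2}$ scaling. The two exponents come from the dimension of this tailored subspace, not from a global JL step followed by gridding.
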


\noindent Before giving a proof of \Cref{lem: single-center-net}, we show that constructing single center nets suffices. Specifically, the above bounds on the sizes of single center nets imply the existence of cost vector nets of the size guaranteed by \Cref{lem: net-lemma-signature}.
\begin{proof}[Proof of  \Cref{lem: net-lemma-signature}]
	
	For each $i \in [k]$, let $M_{\alpha,i}$ be the set of vectors given by \Cref{lem: single-center-net}. We define $$M'_{\alpha} = \{\min(v^{(1)}, \ldots, v^{(k)}) | (v^{(1)}, \ldots, v^{(k)}) \in M_{\alpha,1} \times \cdots \times M_{\alpha,k}\}$$
where $\min$ denotes the coordinate-wise minimum of vectors.

Next, we will ensure that the only non-zero coordinates correspond to points from $B(S)$ (clusters with band $b$ and type $t$). We achieve this  by ``guessing'' which clusters are in $B(S)$ and zero out the coordinates of points in the remaining clusters. Formally, for each vector $v$ in $M'_\alpha$ and each subset $\mathcal{C}_0$ of clusters $\{C_1, \ldots, C_k\}$, define $v(\mathcal{C}_0) \in \R^m$ as follows: for $q_i \in \Omega$ if $q_i$ lies in a cluster in $\mathcal{C}_0$, set $(v(\mathcal{C}_0))_i = v_i$; otherwise set $(v(\mathcal{C}_0))_i = 0$. Let $M_{\alpha}$ be the set of all vectors $v(\mathcal{C}_0)$ obtained by varying $v$ over points in $M'_\alpha$ and $\mathcal{C}_0$ over all subsets of $\{C_1, \ldots, C_k\}$. It follows that:
\begin{align*}
    |M_{\alpha}| = 2^k \cdot |M'_\alpha| = 2^k \cdot  \Pi_{i \in [k]} |M_{\alpha,i}| = \exp(O(\min(k2^t \alpha^{-2}, N + k \alpha^{-2}) \cdot \log(k \eps^{-1} \alpha^{-1}))).
\end{align*}
We conclude by proving that $M_{\alpha}$ satisfies the properties required by \Cref{def: clustering-nets}. Consider a set $S =\{x_1, \ldots, x_k\}$ of centers from $\mathcal{T}$.  Let $v^{(j)}$ be the vector in $M_{\alpha,j}$ corresponding to $S$ and define $u_{min} =\min(v^{(1)}, \ldots, v^{(k)})$. Finally, let $u$ be the vector obtained by setting the $i$-th coordinate of $u_{min}$ to zero if $q_i$ is not in $B(S)$. By the construction above, $u$ is contained in $M_{\alpha}$.  We claim that $u$ approximates the cost of points in $\Omega$ as required. First, if $q_i \notin B(S)$ then $u_i = 0$. On the other hand, if $q_i \in B(S)$, then by \Cref{lem: single-center-net}, $u_i \geq  \cost(q_i,S) - \alpha \cdot \err(q_i,S)$. Moreover if $x_j \in S$ is the nearest center to $q_i$, then $u_i = \min_{j'} v^{(j')}_i \leq v^{(j)}_i \leq \cost(q_i,S)+\alpha \cdot \err(q_i,S)$ completing the proof of the lemma.
\end{proof}

Our main goal thus reduces to proving \Cref{lem: single-center-net}. 

\subsection{A Warm-up}
 As a first step towards proving \Cref{lem: single-center-net}, we give a prove a  warm-up lemma which shows the existence of single center nets whose size depends exponentially on the dimension $d$. In the next section, we will show that we can use dimensionality reduction ideas to avoid such a dependence on $d$. We begin by recalling some facts about the net size of a Euclidean ball in $d$-dimensions.

 Let $B_2^d(R)$ denote the $d$-dimensional Euclidean $\ell_2$-ball of radius $R$. A set $D_\eta \subset \R^d$ is called an $\eta$-net  of $B_2^d(R)$ if for any point $x \in B_2^d(R)$ there exists $y \in D_\eta$ such that $\norm{x - y}_2 \leq \eta$. The following is a standard result that bounds the size of the $\eta$-net of $B_2^d(R)$.

\begin{fact}[Lemma 5.2 of \cite{vershynin2010introduction}]\label{lem: eps-net}
	The ball $B_2^d(R)$ has an $\eta$-net of size at most $(2R/\eta + 1)^d.$
\end{fact}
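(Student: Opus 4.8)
\textbf{Proof plan for \Cref{lem: eps-net}.}

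The plan is to prove the standard volumetric bound on the covering number of a Euclidean ball by a greedy packing argument. First, construct a maximal set $D_\eta \subseteq B_2^d(R)$ of points that are pairwise at distance strictly greater than $\eta$; such a maximal set exists (e.g.\ by Zorn's lemma, or simply by a greedy procedure since the ball is bounded). Maximality immediately implies that $D_\eta$ is an $\eta$-net: if some $x \in B_2^d(R)$ had $\norm{x-y}_2 > \eta$ for every $y \in D_\eta$, we could add $x$ to $D_\eta$, contradicting maximality.

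The second step is to bound $|D_\eta|$ by a volume comparison. Since the points of $D_\eta$ are pairwise at distance greater than $\eta$, the open balls $B_2^d(y,\eta/2)$ centered at the points $y \in D_\eta$ are pairwise disjoint. Moreover, each such ball is contained in $B_2^d(R+\eta/2)$. Comparing Lebesgue volumes, and using that the volume of a $d$-dimensional Euclidean ball of radius $\rho$ scales as $\rho^d$ times a dimension-dependent constant that cancels in the ratio, we get
\[
|D_\eta| \cdot (\eta/2)^d \;\leq\; (R + \eta/2)^d,
\]
hence $|D_\eta| \leq (2R/\eta + 1)^d$, as claimed.

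There is essentially no obstacle here: the argument is entirely routine and the only points to be careful about are (i) invoking maximality correctly to get the covering property, and (ii) the inclusion $B_2^d(y,\eta/2) \subseteq B_2^d(R+\eta/2)$, which follows from the triangle inequality since $\norm{y}_2 \leq R$. Since this is a well-known fact (stated, e.g., as Lemma 5.2 of \cite{vershynin2010introduction}), one may alternatively just cite it; we include the short argument for completeness.
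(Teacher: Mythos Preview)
Your proposal is correct and is exactly the standard volumetric argument from the cited reference. The paper itself does not give a proof of this fact; it simply states it with a citation to Vershynin, so your argument matches the source the paper relies on.
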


We use the above fact, to prove the following useful lemma. 
\begin{lemma}[Warm Up]\label{lem: warm-up}
    Let $\Omega = \{q_1, \ldots, q_m\} \subset \R^d$ and suppose that each point $q$ in $\Omega$ has an associated threshold distance $R_q$. For $\alpha \in (0,1/2]$, there is a set $N_\alpha \subset \R^m$ of size at most $m (3/\alpha)^d$ with the following property. For any point $x \in \R^d$ there is a vector $y \in N_\alpha$ that approximates costs of the points $q_i$ with respect to $x$ as follows: for $i \in [m]$,  
	\begin{enumerate}[(a)]
		\item \label{item: net-cond-1} If $\cost(q_i, x) \leq R_{q_i}^2$ then 
		$| y_{i} - \cost(q_i, x)| \leq \alpha R_{q_i}^2$
            \item \label{item: net-cond-2}If $\cost(q_i, x) > R_{q_i}^2$ then $y_i \geq R_{q_i}^2 - \alpha R_{q_i}^2$.
	\end{enumerate}
\end{lemma}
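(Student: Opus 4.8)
\textbf{Proof plan for Lemma~\ref{lem: warm-up}.}
The plan is to reduce the construction of the vector net $N_\alpha$ to discretizing a single Euclidean ball, using the fact that the cost vector induced by a center $x$ depends on $x$ only through a bounded region near $\Omega$. First I would handle the points $q_i$ that are ``too far'' from the relevant region in one shot: let $R := \max_i R_{q_i}$ and note that if $\norm{q_i - x} > R_{q_i} + R$ then no matter where $x$ lies, condition (b) is trivially satisfied by setting $y_i$ equal to any value $\geq R_{q_i}^2 - \alpha R_{q_i}^2$; concretely we can fix $y_i = R_{q_i}^2$ for all such far coordinates. So the only centers $x$ worth approximating are those lying in the union of balls $\bigcup_{i} B_2^d(q_i, R_{q_i} + R)$. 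Since all $q_i$ and radii are comparable, this union is contained in a single ball of radius $O(R)$ centered at (say) $q_1$ — more precisely in $B_2^d(q_1, 3R)$ after a crude bound using $\norm{q_1 - q_i} \leq 2R$ (each $q_i$ is within its own ball, which overlaps with a bounded region). [If the points are not a priori bounded relative to each other, we instead note the net only needs to handle $x$ within distance $R_{q_i}$ of \emph{some} active ring, and we can take the obvious cover; the cardinality bound is what matters.]

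Next I would take an $\eta$-net $D_\eta$ of this ball with $\eta = \alpha R / 3$, which by Fact~\ref{lem: eps-net} has size at most $(2\cdot 3R/\eta + 1)^d = (18/\alpha + 1)^d \leq (3/\alpha)^{O(1)\cdot d}$; absorbing constants into the exponent (or tightening the constants in the net radius) gives the claimed $(3/\alpha)^d$ bound, with the leading factor $m$ coming from the choice of which $q_i$ plays the role of the center, or can be dropped by a uniform choice. For each $z \in D_\eta$ I would define the candidate vector $y(z) \in \R^m$ by $y(z)_i := \min(\cost(q_i, z), R_{q_i}^2)$ for the near coordinates and $y(z)_i := R_{q_i}^2$ for the far coordinates, and let $N_\alpha$ be the collection of all such $y(z)$. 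The verification is then a short calculation: given an arbitrary $x$, pick $z \in D_\eta$ with $\norm{x - z} \leq \eta$; for each near coordinate $i$, by the approximate triangle inequality (Fact~\ref{fact: triangle}\eqref{item: tri-2}),
\[
\bigl|\cost(q_i, x) - \cost(q_i, z)\bigr| \leq 2\norm{q_i - z}\cdot\norm{z - x} + \norm{z-x}^2 \leq 2(R_{q_i} + \eta)\eta + \eta^2,
\]
and since $\eta \leq R_{q_i}$ (after choosing $\eta$ small enough relative to each active $R_{q_i}$, or by restricting attention to rings where this holds — the degenerate rings contribute nothing), the right-hand side is at most $O(\eta R_{q_i}) \leq \alpha R_{q_i}^2$ provided we set $\eta = \alpha R_{q_i}/c$ for a suitable constant $c$. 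Since $R = \max_i R_{q_i}$, choosing $\eta = \alpha R / c$ simultaneously works for the coordinate with the largest radius; for coordinates with much smaller $R_{q_i}$, either the same bound $|\cost(q_i,x)-\cost(q_i,z)| \leq \alpha R_{q_i}^2$ still holds when $\cost(q_i,x)\le R_{q_i}^2$, or if $\cost(q_i,x) > R_{q_i}^2$ we are in case (b) and only need $y_i \geq (1-\alpha)R_{q_i}^2$, which the truncation $\min(\cdot, R_{q_i}^2)$ guarantees exactly.

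\textbf{Main obstacle.} The delicate point is that a \emph{single} net radius $\eta$ must serve all coordinates whose thresholds $R_{q_i}$ may vary wildly in scale; a coarse $\eta$ tuned to the largest $R_{q_i}$ is too coarse for coordinates with small $R_{q_i}$. The resolution — and the key structural observation — is that a coordinate with small $R_{q_i}$ only needs an \emph{accurate} estimate (condition (a)) when $\cost(q_i,x) \leq R_{q_i}^2$, i.e.\ when $x$ is within distance $R_{q_i}$ of $q_i$; in that regime the error bound $2(R_{q_i}+\eta)\eta + \eta^2$ must be compared against $\alpha R_{q_i}^2$, which fails if $\eta \gg R_{q_i}$. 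The fix is to not use one global net but to build the net hierarchically, or — cleaner for the bound we want — to observe that when $\norm{x - q_i} \le R_{q_i}$ \emph{and} $\norm{x-z}\le \eta$ with $\eta$ possibly larger than $R_{q_i}$, then $z$ may be outside $B(q_i, R_{q_i})$ but still $\cost(q_i,z) \le (R_{q_i}+\eta)^2 = O(\eta^2)$, so truncating $y(z)_i = \min(\cost(q_i,z), R_{q_i}^2)$ keeps $y_i = R_{q_i}^2$, and we must check $|R_{q_i}^2 - \cost(q_i,x)| \le \alpha R_{q_i}^2$; but $\cost(q_i,x) \le R_{q_i}^2$ and $\cost(q_i,x) \ge (\norm{x-z} - \norm{z-q_i})^2$ which can be as small as $0$ — so this case genuinely requires $\eta \lesssim \alpha^{1/2} R_{q_i}$ for the accurate estimate. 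The honest route, which I would take, is therefore to partition coordinates by the scale of $R_{q_i}$ into $O(\log(1/\alpha) + \log m)$ buckets (the paper's setting has all relevant radii within a polynomial range), build a net at the appropriate scale for each bucket, and take the product; the product of the bucket cardinalities is still $(3/\alpha)^{O(d)}$ up to constants in the exponent, and polylogarithmic factors are suppressed throughout the paper. I expect the write-up to spend most of its length on exactly this bucketing bookkeeping, with the geometric core (Fact~\ref{lem: eps-net} plus Fact~\ref{fact: triangle}\eqref{item: tri-2}) being routine.
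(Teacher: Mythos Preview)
Your proposal has a genuine gap exactly where you flag it as the ``main obstacle,'' and your suggested fixes (bucketing by scale, or relying on the application's polynomial range of radii) are both unnecessary detours that do not yield the clean $m(3/\alpha)^d$ bound stated in the lemma. The lemma is stated for arbitrary thresholds $R_{q_i}$ with no assumed relation between them, and the $q_i$ are not assumed to lie in a bounded region (your claim that ``each $q_i$ is within its own ball, which overlaps with a bounded region'' is unfounded).

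The key idea you are missing is this: instead of a single global net at one scale, the paper builds \emph{one net per point at that point's own scale}. For each $q\in\Omega$, take an $(\alpha R_q)$-net $D(q)$ of the ball $B(q,R_q)$; each such net has size at most $(3/\alpha)^d$ by Fact~\ref{lem: eps-net}, so $D(\Omega):=\bigcup_q D(q)$ has at most $m(3/\alpha)^d$ points. The net $N_\alpha$ consists of the cost vectors $y(x')$ with $y_i(x')=\cost(q_i,x')$ for $x'\in D(\Omega)$ (no truncation). Given any $x$, choose $x'$ to be the point of $D(\Omega)$ \emph{nearest to $x$}. Now the multi-scale problem evaporates: for every $q_i$ with $\cost(q_i,x)\le R_{q_i}^2$, the point $x$ lies in $B(q_i,R_{q_i})$, so $D(q_i)$ already contains some point within $\alpha R_{q_i}$ of $x$, and hence the nearest point $x'$ satisfies $\|x-x'\|\le \alpha R_{q_i}$ \emph{at the scale of that particular $q_i$}. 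Condition~(a) then follows from Fact~\ref{fact: triangle}\eqref{item: tri-2}. For condition~(b), when $\cost(q_i,x)>R_{q_i}^2$, let $p$ be the boundary point of $B(q_i,R_{q_i})$ on the segment from $q_i$ to $x$; since $D(q_i)$ contains a point within $\alpha R_{q_i}$ of $p$, one gets $\|x-x'\|\le \|x-p\|+\alpha R_{q_i}$, and by collinearity $\|q_i-x'\|\ge \|q_i-x\|-\|x-x'\| = R_{q_i}-\alpha R_{q_i}$, giving $y_i(x')\ge (1-\alpha)^2 R_{q_i}^2$.

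In short, the factor $m$ in the bound is not coming from ``which $q_i$ plays the role of center'' but from building $m$ separate nets, one per threshold; taking the union and snapping to the nearest point is what makes a single $x'$ work simultaneously at all scales.
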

\begin{proof}
For each $q \in \Omega$, consider the $d$-dimensional ball of radius $R_q$ centered at this point. Let $D(q)$ be an $(\alpha R_q)$-net of this ball (as defined in \Cref{lem: eps-net}).  By \Cref{lem: eps-net}, we have that $|D(q)| \leq (3/\alpha)^d$ and that the set $D(\Omega) = \bigcup_{q' \in \Omega} D(q')$  has at most $m (3/\alpha)^d$ points.

For $x' \in D(\Omega)$, define the cost vector $y(x') \in \R^m$ as follows: for $i \in [m]$ set $y_i(x') = \cost(q_i, x')$. Let $N_\alpha$ be the set of all such vectors. Note that $N_\alpha$ is of the claimed size.
 
Let $x$ be a point in $\R^d$ and $x'$ be the point in  $D(\Omega)$ nearest to $x$. We show that $y(x') \in N_\alpha$ satisfies both \Cref{item: net-cond-1} and \Cref{item: net-cond-2} for the point $x$. 

\textbf{Proof of \Cref{item: net-cond-1}.} If $q_i \in \Omega$ is a point satisfying $\cost(q_i,x) \leq R_{q_i}^2$ then we have by our choice of $x'$ that $\norm{x - x'} \leq \alpha R_{q_i}$. By triangle inequality (\Cref{fact: triangle}), 
\begin{align*}
    |\cost(q_i,x) - y_i(x')| &= |\cost(q_i,x) - \cost(q_i,x')| 
            \leq 2\sqrt{\cost(q_i,x) \cost(x,x')} + \cost(x,x') \\
            &\leq 2 \sqrt{R_{q_i}^2 \cdot \alpha^2 R_{q_i}^2} + \alpha^2 R_{q_i}^2 \leq 3\alpha R_{q_i}^2.
\end{align*}

\textbf{Proof of \Cref{item: net-cond-2}.} Suppose that $q_i \in \Omega$ is a point with $\cost(q_i,x)> R_{q_i}^2$. We now want to show that $y_i(x')$ does not underestimate the cost of $q_i$ by too much.   Consider the ball of radius $R_{q_i}$ centered at $q_i$ and let $p$ be the point in this ball nearest to $x$. Clearly, $D(\Omega)$ contains a point within distance $(\alpha R_{q_i})$ of $p$. Therefore, we have $\norm{x-x'} \leq \norm{p-x} + \alpha R_{q_i}$. By triangle inequality,
\begin{align}
                \norm{{q_i}-x'} &\,\,\geq \,\, \norm{{q_i}-x} - \norm{x - x'} \,\,\geq\,\,\norm{{q_i} - x} - \norm{p-x} - \alpha R_{q_i} \,\,= \,\,R_{q_i} - \alpha R_{q_i} 
\end{align}
where the last line uses that ${q_i},p,x$ are collinear. Therefore, $y_i(x') = \cost({q_i},x') \geq (R_{q_i} - \alpha R_{q_i})^2 \geq R_{q_i}^2 - 2\alpha R_{q_i}^2$. The result follows by rescaling $\alpha$.
\end{proof}

We now show that the above lemma implies a bound on the size of  $(\alpha, \mathcal{T}, j)$-single center nets. The idea is that since $q \in B(S)$ is a close point, by the triangle inequality, it satisfies $\cost(q,S) \leq 2(\cost(q,A) + \Delta_q \eps^{-2})$. Therefore, if $x_j$ is the nearest center to $q$ then it is within this cost radius of $q$.   Therefore, plugging $R_q^2 = 2(\cost(q,A) + \Delta_q \eps^{-2})$ in the above lemma and rescaling $\alpha$ by an $O(\epsilon^{-2})$ factor, yields a single center of $\poly(k/\eps) \cdot (1/\alpha \eps)^d$. 

The size of the net constructed above is not ideal (we want net sizes independent of $d$). To eliminate this dependence on the dimension, we show that we can use the above lemma after first projecting the points in $\Omega$ onto a suitably chosen low-dimensional subspace. This then yields the final bound on the net size. 

\subsection{Dimensionality Reduction Lemma}
We now present the dimensionality reduction lemma that we shall use in our net construction. 

The lemma below shows that for any set of points $\Omega$ and a point $x$, there is a low dimensional subspace $U(x)$ with the following two properties: (i) $U(x)$ is spanned by a subset of points in $\Omega$, (ii) If $\Pi$ is the orthogonal projection matrix of $U(x)$, 
 then $\langle q, x \rangle \approx \langle \Pi q, x \rangle$. Furthermore, one can augment any set $U_0 \subset \R^d$ with a small subset of points from $\Omega$ to obtain a subspace with such a property.

\begin{lemma} \label{lem: subspace}
	Let $\Omega \subset \R^d$ be an arbitrary set of points, $x$ be a point in $\R^d$ and $\alpha \in (0,1)$. Any set $U_0 \subset \R^d$ can be extended to obtain a set $U(x) = U_0 \cup R(x)$ such that the following properties hold:
	\begin{enumerate}[]
		\item The set $R(x)$ is a subset of $\Omega$ with cardinality $|R(x)| = O(1/\alpha^2)$.
		\item $R(x)$ contains the point in $\Omega$ nearest to $x$.
		\item Let $\Pi$ denote the orthogonal projection on to the span of $U(x)$.  For any point $q \in \Omega$ we have,
		\begin{align}\label{eq: progress-cond}|\langle q, x \rangle - \langle \Pi q, x \rangle | \leq \alpha  \cdot \norm{q - \Pi q}\cdot \min_{q' \in \Omega} \norm{q' - x}.\end{align}
	\end{enumerate}
    We call the set of points $U(x)$ with the above properties an $\alpha$-good set for $x$.
\end{lemma}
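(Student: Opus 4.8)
The plan is to build $R(x)$ greedily, one point at a time, so that the projection error decreases geometrically. Start with $U_0$ and add to it the point $q^* \in \Omega$ nearest to $x$; call the current set $U^{(0)} := U_0 \cup \{q^*\}$, with orthogonal projection $\Pi^{(0)}$ onto its span. Now suppose after $i$ steps we have a set $U^{(i)} = U_0 \cup \{q^*, p_1, \dots, p_i\}$ with projection $\Pi^{(i)}$. If the desired inequality $|\langle q, x\rangle - \langle \Pi^{(i)} q, x\rangle| \le \alpha \cdot \|q - \Pi^{(i)} q\| \cdot \min_{q'\in\Omega}\|q' - x\|$ already holds for every $q \in \Omega$, we stop and set $R(x)$ to the points added so far. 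Otherwise, there is a \emph{witness} $p_{i+1} \in \Omega$ violating it, and we add $p_{i+1}$ to the set. The key point is to show this process terminates after $O(1/\alpha^2)$ steps.

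For the termination argument, track the potential $\phi_i := \|x - \Pi^{(i)} x\|^2$, the squared distance from $x$ to the current subspace (more precisely, to its affine span — but since we may as well translate so $0\in U_0$, or work with the linear span after noting $q^*$ is included, I will treat it as the linear span; this bookkeeping is routine). First, $\phi_i$ is nonincreasing in $i$ since subspaces only grow. Second, and this is the crux: when $p_{i+1}$ is a witness, adding it to the subspace decreases $\phi$ by a definite amount. Write $q = p_{i+1}$, let $q^\perp = q - \Pi^{(i)} q$ be its component orthogonal to the current subspace, and let $x^\perp = x - \Pi^{(i)} x$. The violated inequality says $|\langle q^\perp, x^\perp\rangle| = |\langle q - \Pi^{(i)}q, x - \Pi^{(i)}x\rangle| > \alpha \|q^\perp\| \cdot \min_{q'}\|q'-x\|$. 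Now $\min_{q'}\|q'-x\| \le \|q^* - x\| = \|x - \Pi^{(0)}x\|$-ish, and more usefully, since $q^* \in U^{(i)}$, we have $\|x - \Pi^{(i)} x\| \le \|x - q^*\| = \min_{q'}\|q'-x\|$ — wait, that's backwards; we actually want a \emph{lower} bound on $\min_{q'}\|q'-x\|$ in terms of $\phi_i^{1/2}$, which is false in general. The correct move: since $q^*$ is the nearest point, $\min_{q'}\|q'-x\| = \|q^*-x\| \ge \|x - \Pi^{(0)}x\| \ge \|x - \Pi^{(i)}x\| = \phi_i^{1/2}$... no. Let me restate: $\|x-\Pi^{(i)}x\| \le \|x - q^*\|$ because $q^* \in \mathrm{span}(U^{(i)})$, so $\phi_i^{1/2} \le \min_{q'}\|q'-x\|$. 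Hence the violation gives $|\langle q^\perp, x^\perp\rangle| > \alpha \|q^\perp\| \phi_i^{1/2}$, i.e. the cosine of the angle between $q^\perp$ and $x^\perp$ exceeds $\alpha$ (since $\|x^\perp\| = \phi_i^{1/2}$): $|\langle q^\perp, x^\perp\rangle| > \alpha \|q^\perp\|\|x^\perp\|$. Adding $q$ (equivalently $q^\perp$) to the subspace removes the $q^\perp$-component of $x$, so $\phi_{i+1} = \phi_i - \langle x^\perp, q^\perp/\|q^\perp\|\rangle^2 < \phi_i - \alpha^2 \|x^\perp\|^2 = (1-\alpha^2)\phi_i$. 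Thus $\phi$ shrinks by a factor $(1-\alpha^2)$ at each step, so after $O(\alpha^{-2}\log(\dots))$ steps $\phi$ is tiny — and in fact, if we ever reach $\phi_i = 0$ then $x$ lies in the span and all inequalities hold trivially; more carefully, once $\phi_i$ drops below the smallest relevant scale the process must have already stopped. To get the clean $O(1/\alpha^2)$ bound one argues that $\phi$ cannot shrink by factor $(1-\alpha^2)$ more than $O(1/\alpha^2)$ times before either hitting zero or having no witness — this needs the observation that a witness forces $\|q^\perp\| > 0$ and a bit of care, but the geometric decay is the engine.

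Finally, the inequality $|\langle q, x\rangle - \langle \Pi q, x\rangle| \le \alpha\|q-\Pi q\|\min_{q'}\|q'-x\|$ for the \emph{final} $\Pi = \Pi(x)$ holds for all $q\in\Omega$ precisely because we only stopped when there was no witness left, and enlarging $U_0$ to $U(x)$ only shrinks both $\|q - \Pi q\|$ and (weakly) the left side in the right direction — the monotonicity of projections guarantees that the property, once achieved, is not destroyed by the final set being what it is (there is nothing further to add). The main obstacle I anticipate is the affine-versus-linear subtlety (whether one projects onto $\mathrm{span}(U(x))$ as a linear subspace or its affine hull) and pinning down the exact constant so the bound reads $O(1/\alpha^2)$ rather than $O(\alpha^{-2}\log(1/\text{something}))$; the fix is to note that once $\phi_i < \alpha^2 \cdot \phi_0 / \text{poly}$ the Cauchy–Schwarz slack can no longer be violated, or alternatively to absorb the log into the hidden constant since all relevant quantities are polynomially bounded — I would check the source (Cohen-Addad–Saulpic–Schwiegelshohn) cited nearby to match their precise formulation. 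Everything else — triangle inequalities, $2ab \le a^2+b^2$, and the claim that $R(x)$ contains the nearest point — is immediate from the construction.
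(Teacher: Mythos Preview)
Your construction is exactly the paper's: start with the nearest point $q^*$, then greedily add witnesses violating \eqref{eq: progress-cond}. The only issue is in your termination analysis, and you already put your finger on it. You replace the fixed quantity $\min_{q'}\|q'-x\|=\|q^*-x\|$ by the shrinking quantity $\phi_i^{1/2}$, which downgrades an \emph{additive} decrement to a merely \emph{geometric} one and leaves you with an unwanted $\log$ factor.

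Simply do not make that replacement. From the violation you already have
\[
\bigl|\langle x^\perp, q^\perp/\|q^\perp\|\rangle\bigr| > \alpha\,\|q^*-x\|,
\]
so adding the witness drops $\phi_i$ by more than $\alpha^2\|q^*-x\|^2$, a fixed positive amount independent of $i$. Since $q^*$ lies in the initial span you have $\phi_0\le\|q^*-x\|^2$, and therefore the process terminates in at most $1/\alpha^2$ rounds. (Equivalently, as the paper phrases it, $\|\Pi_i x\|^2$ increases by at least $\alpha^2\|q^*-x\|^2$ each round, starting from $\|\Pi_1 x\|^2\ge\|x\|^2-\|q^*-x\|^2$ and bounded above by $\|x\|^2$.) The affine-versus-linear worry is a non-issue here: the lemma is stated for the linear span, and the single inequality $\|x-\Pi^{(0)}x\|\le\|x-q^*\|$ needs only that $q^*$ belongs to that span.
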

\begin{proof}
    If $|\Omega| = O(1/\alpha^2)$ then we are trivially done by taking $R(x) = \Omega$. If this is not the case, we will construct $R(x)$ iteratively in several rounds. Let $R_i$ denote the set $R(x)$ in the $i$-th round and $\Pi_i$ denote the orthogonal projection onto the span of $U_0 \cup R_i$.

    Begin by setting $R_1 = \{q'\}$ where the $q'$ is the point in $\Omega$ nearest to $x$. If in the $i$-th round (where $i \geq 1$) there exists a point $q$ violating \Cref{eq: progress-cond}, then add it to $R_i$ to obtain $R_{i+1}$. If no such $q$ exists, then terminate. We show that the algorithm above terminates in $O(1/\alpha^2)$ rounds. Observe first that by the Pythagorean theorem, 
    \begin{align}\label{lem: proj-lower-bound}
    \norm{\Pi_1 x}^2 = \norm{x}^2 - \norm{(I-\Pi_1)x}^2 \geq \norm{x}^2 - \norm{q' - x}^2
    \end{align}where the last inequality used the fact that $q'$ is in the subspace $R_1$.
    
    Suppose that point $q \in \Omega$ violates condition $3$ in round $i$; we show that augmenting $q$ to the subspace significantly increases the projection of $x$ onto it. We have:
    \begin{align}\label{eq: v-eq}
        |\langle q - \Pi_i q, x \rangle| = |\langle q, x \rangle - \langle \Pi_iq, x \rangle| > \alpha \cdot \norm{q - \Pi_i q} \cdot \norm{q' - x}.
    \end{align}
    For $v =\tfrac{ q - \Pi_i q}{\norm{q - \Pi_i q }}$, rearranging \Cref{eq: v-eq} we get $|\langle v, x \rangle| > \alpha \norm{q' - x}$.  Moreover, we have 
    \begin{align}
    \norm{\Pi_{i+1} x}^2 = \norm{\Pi_i x}^2 + \langle v, x \rangle ^2 > \norm{\Pi_i x}^2 + \alpha^2 \norm{q' - x}^2.
    \end{align}
    Therefore, after each round, the norm of the projection of $x$ increases by $\alpha ^2 \norm{q' - x}^2$. But by \Cref{lem: proj-lower-bound} and the fact that $\norm{x}^2 \geq \norm{\Pi_i x}^2$, this can happen for at most $O(1/\alpha^2)$ rounds. 
\end{proof}

\subsection{\texorpdfstring{Bound $1$: $\exp(\tilde{O}(2^t \alpha^{-2}))$}{Bound 1}}\label{subsec: net-bound-1}

We now combine \Cref{lem: subspace} together with \Cref{lem: warm-up} to construct the single center nets of size guaranteed by \Cref{lem: single-center-net}. 

In this subsection, we prove the first bound on the size of nets, which depends on the type $t$ of the points in $B(S) \cap \Omega$. (we recall that $B(S)$ is defined as the set of points in type $t$ and band $b$; see \Cref{sec:further-classifying} for the precise definitions). In particular, we  construct an $(\alpha, \mathcal{T}, 1)$ net $M_{\alpha,1} \subset \R^m$ of size $\exp(\tilde{O}(2^t \alpha^{-2}))$. We will describe the construction of $M_{\alpha,1}$ by fixing $S \in \mathcal{T}$ with first center $x_1$ and showing how to construct a net vector for $x_1$. It will be clear from the procedure used to construct the net vector that the number of possible vectors obtainable is bounded by $\exp(\tilde{O}(2^t \alpha^{-2}))$.

 We begin by applying \Cref{lem: subspace} with $U_0 = \emptyset$ to obtain an $(\alpha2^{-t/2})$-good subset $U(x_1)$ for $x_1$. Let  $U$ denote the subspace spanned by the points in $U(x_1)$ and $\Pi$ be its orthogonal projection matrix. By the guarantees of the lemma $\dim(U) = O(\alpha^{-2} 2^t)$. The first step is to use the Pythagorean theorem to write the cost of any point $q_i \in \Omega$ to $x_1$ as follows:

\begin{align}\label{eq: pythagoras}
\begin{split}
    \cost(q,x_1)=\underbrace{\norm{\Pi(q-x_1)}^2}_{\text{Term 1}} 
    + \underbrace{\norm{(I-\Pi)q}^2}_{\text{Term 2}} 
    + \underbrace{\norm{(I-\Pi)x_1}^2}_{\text{Term 3}} -\underbrace{2  \big(\langle q, x_1 \rangle - \langle \Pi q, x_1 \rangle\big)  }_{\text{Term 4}}. 
\end{split}
\end{align}
To obtain a single center net, we shall construct small ``nets'' to approximate each of the first three terms and then use \Cref{lem: subspace} to argue that the Term $4$ is small and need not be approximated. We then combine these nets to obtain the single center net.

The following claim gives bounds on Term $4$ and is an immediate consequence of \Cref{lem: subspace}. It states that if $x_1$ is an approximate nearest neighbor of $q$, then term $4$ is bounded by $O(\alpha \err(q,S))$; if not, we have a guarantee which is weaker but sufficient for our analysis.

\begin{claim}\label{clm-term-4-interim}
Term $4$ is at most $O(\alpha 2^{-t/2} \cost(q,x_1))$.
\end{claim}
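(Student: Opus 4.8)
The plan is to read the bound off directly from the approximation guarantee of an $\alpha$-good set, instantiated at the scale $\alpha 2^{-t/2}$ at which $U(x_1)$ was built. First I would rewrite Term~$4$: by its definition in \eqref{eq: pythagoras} it equals $2\big(\langle q,x_1\rangle-\langle \Pi q,x_1\rangle\big)$, so it suffices to control $\big|\langle q,x_1\rangle-\langle \Pi q,x_1\rangle\big|$. Since $U(x_1)$ is an $(\alpha 2^{-t/2})$-good set for $x_1$, the guarantee \eqref{eq: progress-cond} of \Cref{lem: subspace} applied to the point $q$ gives
\[\big|\langle q,x_1\rangle-\langle \Pi q,x_1\rangle\big|\;\leq\;\alpha 2^{-t/2}\cdot\norm{q-\Pi q}\cdot\min_{q'\in\Omega}\norm{q'-x_1}.\]

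Next I would bound the two remaining factors by $\norm{q-x_1}$. For the minimum distance this is immediate, since $q\in\Omega$, so $\min_{q'\in\Omega}\norm{q'-x_1}\leq\norm{q-x_1}$. For $\norm{q-\Pi q}$ — which a priori is only the distance from $q$ to the subspace $U$ — I would use the fact (the second property listed in \Cref{lem: subspace}) that $U(x_1)$ contains the point $q^\star\in\Omega$ nearest to $x_1$; since $q^\star$ lies in $U$ and $\Pi q$ is the closest point of $U$ to $q$, we get $\norm{q-\Pi q}\leq\norm{q-q^\star}\leq\norm{q-x_1}+\norm{x_1-q^\star}\leq 2\norm{q-x_1}$, the last step again using $q\in\Omega$. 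Combining these estimates, $\text{Term } 4\leq 2\cdot\alpha 2^{-t/2}\cdot 2\norm{q-x_1}\cdot\norm{q-x_1}=4\,\alpha 2^{-t/2}\cost(q,x_1)$, which is the claimed $O(\alpha 2^{-t/2}\cost(q,x_1))$.

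There is no genuine obstacle here; the statement is essentially a one-line corollary of \Cref{lem: subspace}. The only point that deserves care is the estimate $\norm{q-\Pi q}\leq 2\norm{q-x_1}$: the quantity $\norm{q-\Pi q}$ measures distance to the subspace rather than to $x_1$, and it is precisely to make this step go through that \Cref{lem: subspace} is set up so as to force the subspace to contain the nearest neighbour of $x_1$ in $\Omega$.
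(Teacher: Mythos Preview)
Your proof is correct and follows essentially the same approach as the paper: apply the guarantee of \Cref{lem: subspace} at scale $\alpha 2^{-t/2}$, bound $\min_{q'\in\Omega}\norm{q'-x_1}$ by $\norm{q-x_1}$ since $q\in\Omega$, and bound $\norm{q-\Pi q}\leq 2\norm{q-x_1}$ via the nearest neighbour $q^\star$ contained in the subspace. Your emphasis on why the nearest neighbour must belong to $U(x_1)$ is well placed, as this is exactly the step that makes the argument go through.
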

\begin{proof}
    Since $\Pi$ is an orthogonal projection onto the subspace spanned by an  $(\alpha 2^{-t/2})$-good subset, by \Cref{lem: subspace},  for any $q \in \Omega$ we have, 
\begin{align}\label{eq: inter-guarantee-1}
    |\langle q,x_1 \rangle - \langle q, \Pi x_1\rangle| \leq (\alpha2^{-t/2}) \cdot \norm{q - \Pi q} \cdot \min_{q' \in \Omega} \norm{q' - x_1} \leq (\alpha2^{-t/2}) \cdot \norm{q - \Pi q} \cdot \norm{q - x_1}.
\end{align}

Let $q^{*}$ denote the nearest point to $x_1$  in $\Omega$. Since $q^*$ is guaranteed to lie in the good subset (by \Cref{lem: subspace}), by an application of triangle inequality we have  $\norm{q - \Pi q} \leq \norm{q - q^*} \leq \norm{q - x_1} + \norm{q^{*} - x_1} \leq 2 \norm{q - x_1}$. Plugging this into \Cref{eq: inter-guarantee-1} proves the claim.
\end{proof}
\begin{corollary}[Bound on Term 4]\label{clm: term4-bound}  Let $q$ be a point in $B(S) \cap \Omega$. (i)Term $4$ for point $q$ is at most $\cost(q,x_1)/4$. (ii) If we further have $\cost(q, x_1) \leq 4 \cost(q,S)$ then Term $4$ is $O(\alpha \err(q,S))$.
\end{corollary}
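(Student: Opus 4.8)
The plan is to read off both parts from \Cref{clm-term-4-interim}, which we already have: Term $4$ for $q$ is at most $c_0\,\alpha 2^{-t/2}\cost(q,x_1)$ with $c_0=4$ (tracing its proof: $\norm{q-\Pi q}\le 2\norm{q-x_1}$, and Term $4 = 2|\langle q,x_1\rangle-\langle q,\Pi x_1\rangle|$). Throughout we may assume $t\ge 6$: for $t=O(1)$ the target bound $\exp(\tilde O(2^t\alpha^{-2}))$ of this subsection is already implied by the second bound $\exp(\tilde O(N_j+\alpha^{-2}))$ of \Cref{lem: single-center-net}, so the construction here is only run for $t\ge 6$.

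Part (i) is then immediate: since $\alpha\le 1/2$ and $t\ge 6$, we get $c_0\,\alpha 2^{-t/2}\le 4\cdot\tfrac12\cdot\tfrac18=\tfrac14$, so Term $4\le \cost(q,x_1)/4$.

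For part (ii), the hypothesis $\cost(q,x_1)\le 4\cost(q,S)$ upgrades the bound to Term $4\le 4c_0\,\alpha 2^{-t/2}\cost(q,S)$, so it suffices to show $2^{-t/2}\cost(q,S)\lesssim \err(q,S)$. Let $C_j$ be the cluster containing $q$, with centroid $a_j$; since $C_j\in B(S)$ is of type $t$ we have $2^{t-1}\Delta_j\le \cost(a_j,S)<2^t\Delta_j$ and $\Delta_q=\Delta_j$. By the (approximate) triangle inequality, $\cost(q,S)\le 2\cost(q,a_j)+2\cost(a_j,S)\le 2\cost(q,A)+2^{t+1}\Delta_q$, hence $2^{-t/2}\cost(q,S)\le 2\cost(q,A)+2^{t/2+1}\Delta_q$ (using $t\ge 0$). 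The term $2\cost(q,A)\le 2\,\err(q,S)$ is fine, so it remains to bound $2^{t/2}\Delta_q$. I would split on $\cost(q,a_j)$: if $\cost(q,a_j)\le 2^{t-3}\Delta_j$, then $\cost(q,S)\ge \tfrac12\cost(a_j,S)-\cost(q,a_j)\ge 2^{t-2}\Delta_j-2^{t-3}\Delta_j=2^{t-3}\Delta_j$, so $\err(q,S)\ge\sqrt{\cost(q,S)\Delta_q}\ge 2^{(t-3)/2}\Delta_q$ and thus $2^{t/2}\Delta_q\le 2^{3/2}\,\err(q,S)$; otherwise $\cost(q,A)=\cost(q,a_j)>2^{t-3}\Delta_q\ge 2^{t/2}\Delta_q$, where the last inequality is exactly where $t\ge 6$ is used ($t-3\ge t/2$), giving $2^{t/2}\Delta_q\le \err(q,S)$. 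Combining the two cases yields $2^{-t/2}\cost(q,S)\lesssim \err(q,S)$, hence Term $4\lesssim \alpha\,\err(q,S)$, which is (ii).

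The only genuine obstacle is the lower bound on $\cost(q,S)$ in the $2^{t/2}\Delta_q$ term: a priori $q$ may lie far from its own centroid $a_j$, so $\cost(q,S)$ need not be comparable to $\cost(a_j,S)\approx 2^t\Delta_j$. The case split resolves this cleanly — points far from $a_j$ automatically have $\cost(q,A)$ large enough to dominate $2^{t/2}\Delta_q$ outright — and this is the one place where the standing assumption $t\ge 6$ is actually needed.
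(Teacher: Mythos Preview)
Your case-split argument for part (ii) is correct once you assume $t\ge 6$, but the justification for that assumption does not hold. You claim that for $t=O(1)$ the bound $\exp(\tilde O(2^t\alpha^{-2}))$ is ``already implied'' by the second bound $\exp(\tilde O(N_j+\alpha^{-2}))$, so the first construction is unnecessary. This is backwards: for constant $t$ the first bound is $\exp(\tilde O(\alpha^{-2}))$, which can be \emph{much smaller} than $\exp(\tilde O(N_j+\alpha^{-2}))$ since $N_j$ ranges up to $k_{B(S)}$. The $\min$ in \Cref{lem: single-center-net} is achieved by the first construction precisely in this regime, so Bound~1 is genuinely needed for small $t$ too, and you cannot simply restrict to $t\ge 6$.

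The paper avoids this entirely with a cleaner argument for (ii) that works for all $t\ge 0$. Starting from the same inequality $\cost(q,S)\le 2^{t+1}(\cost(q,A)+\Delta_q)$, instead of expanding and case-splitting it writes
\[
2^{-t/2}\cost(q,S)\;\le\;2^{-t/2}\sqrt{\cost(q,S)}\cdot\sqrt{2^{t+1}(\cost(q,A)+\Delta_q)}\;\lesssim\;\sqrt{\cost(q,S)(\cost(q,A)+\Delta_q)}\;\lesssim\;\err(q,S),
\]
which is exactly the $\sqrt{\cost(q,S)\cost(q,A)}+\sqrt{\cost(q,S)\Delta_q}$ part of $\err$. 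No lower bound on $\cost(q,S)$, no case analysis, and no restriction on $t$. For part (i), the paper's ``sufficiently small $\alpha$'' should be read as rescaling the good-subspace parameter to $(c\alpha\,2^{-t/2})$ for a small absolute constant $c$; this only changes the dimension $O(2^t\alpha^{-2})$ by a constant factor but makes $\text{Term 4}\le 4c\alpha\,2^{-t/2}\cost(q,x_1)\le \cost(q,x_1)/4$ hold for every $\alpha\le 1/2$ and $t\ge 0$. This is the right fix, not restricting $t$.
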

\begin{proof} 
Bound $(i)$ follows from \Cref{clm-term-4-interim} by picking a sufficiently small $\alpha$ and noting that $t \geq 0$. Now we prove bound $(ii)$, which is more interesting. Suppose that $\cost(q, x_1) \leq 4 \cost(q, S)$, i.e., $x_1$ is an approximate nearest center.  Since $q$ is in a type $t$ cluster, by the triangle inequality, we have, 
\begin{align}
\cost(q,S) \leq 2(\cost(q,A) + 2^t \Delta_q) 
    \leq 2^{t+1}(\cost(q,A) + \Delta_q).
\end{align}
This implies that 
\begin{align*}
    2^{-t/2} \cost(q,x_1) &\lesssim 2^{-t/2} \cdot \cost(q,S) \tag{$x_1$ is an approximate nearest center}\\ &\lesssim  2^{-t/2}\sqrt{\cost(q,S) \cdot 2^{t} (\cost(q,A) + \Delta_q) } \tag{Point $q$ is of type $t$}\\
    &\lesssim \sqrt{\cost(q,S) (\cost(q,A)+\Delta_q)}\\
    &\lesssim \err(q,S)
    .
\end{align*}
completing the proof of $(ii)$.
\end{proof}

If $x_1$ is not an approximate nearest center  to $q$ then by the previous lemma term $4$ is at most $\cost(q,x_1)/4$. Therefore, the sum of the Terms $1$-$3$ is at least $3/4 \cdot  \cost(q,x_1) \geq 3 \cost(q,S)$; hence at least one of them is greater than or equal to $\cost(q,S)$. This observation ensures that our net construction satisfies  \Cref{item: scn-item1}. In particular, our nets for terms $1$-$3$ ensure that if the value of the term is larger than $\cost(q,S)$, then the coordinate of the net vector corresponding to $q$ is at least $\cost(q,S) - \alpha \err(q,S)$; thus ensuring that we never underestimate the cost of $q$ by too much.

We construct nets $V_1(U), V_2(U)$ and $V_3(U)$ which approximate terms $1$, $2$ and $3$ respectively. We guarantee that if $x_1$ is an approximate nearest center to $q$, then the net vector's entry approximates $\cost(q,x_1)$ to within error $\alpha \pm \err(q,S)$. The $i$-th net also guarantees that if term $i$ is larger than $\cost(q,S)$, then the corresponding entry of the net vector is at least $\cost(q,S) -\alpha \err(q,S)$.

 The final net $M_{1, \alpha}$ that approximates the costs is obtained by their combination, 
\begin{align} \label{eq: mink-sum}M_{1,\alpha} =  \bigcup\limits_{U} \,\, (V_1(U) \oplus V_2(U) \oplus V_3(U))
\end{align}
where $\oplus$ denotes the Minkowski sum between sets and the union in the above equation is over all subspaces that are spanned by a subset of $\Omega$ of size $O(2^t \alpha^{-2})$. The final net size is then $|M_{\alpha,1}| =  \binom{m}{O(2^t\alpha^{-2})} \cdot |V_1(U)| \cdot |V_2(U)| \cdot |V_3(U)|$.

Below, we show how each $V_i(U)$ is constructed and also give a bound on its cardinality. We begin by constructing a net that approximates Term $1$.

\begin{claim}[Approximating Term 1]\label{clm: term-1}
    There is a set $V_1(U) \subset \R^{m}$ with  $|V_1(U)| = \exp(O( 2^t \alpha^{-2} \cdot \log(k \eps^{-1} \alpha^{-1})))$ with the property that for any $S \in \mathcal{T}$ with first center $x_1$, it contains a vector $y$ satisfying the following. For each $q_i \in B(S) \cap \Omega$:
    \begin{enumerate}
        \item \label{item: term1-a} If $\cost(q_i, x_1) \leq 4 \cost(q_i,S)$ then $\left|y_i - \norm{\Pi(q_i-x_1)}^2 \right| \leq \alpha\cdot  \err(q_i,S).$
        \item \label{item: term1-b} If  $\norm{\Pi(q_i-x_1)}^2 \geq \cost(q_i,S)$ then $y_i \geq  \cost(q_i,S) - \alpha \err(q_i,S).$
    \end{enumerate}
\end{claim}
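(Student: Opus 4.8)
The plan is to reduce the approximation of Term~1, namely $\norm{\Pi(q_i-x_1)}^2$, to the warm-up net of \Cref{lem: warm-up} applied \emph{inside the low-dimensional subspace} $U$. The key observation is that $\Pi(q_i - x_1) = \Pi q_i - \Pi x_1$, and since $U$ is spanned by a subset of $\Omega$ of dimension $D := O(2^t\alpha^{-2})$, the vectors $\Pi q_i$ are fixed once $U$ is fixed; the only unknown is $\Pi x_1 \in U$, a point in a $D$-dimensional space. So Term~1 is exactly $\cost(\Pi q_i, \Pi x_1)$ computed in $U$, and we want a net over the possible values of $\Pi x_1$.

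First I would fix the subspace $U$ and introduce, for each $q_i \in \Omega$, a threshold radius $R_{q_i}$ chosen so that $R_{q_i}^2 = \Theta(\cost(q_i,A) + \Delta_{q_i}\eps^{-2})$ — this is the ``close-point radius'' already used elsewhere, since every $q_i \in B(S)$ is a close point and hence $\cost(q_i,S) \le 2(\cost(q_i,A) + \Delta_{q_i}\eps^{-2}) \lesssim R_{q_i}^2$. Apply \Cref{lem: warm-up} in the $D$-dimensional space $U$ to the point set $\{\Pi q_i\}_i$ with thresholds $R_{q_i}$ and accuracy parameter $\alpha' = \Theta(\alpha\eps^2)$ (rescaled to absorb the $\eps^{-2}$ blow-up between $R_{q_i}^2$ and the quantities appearing in $\err(q_i,S)$). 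This yields a set $V_1(U) \subset \R^m$ of size $m\cdot(3/\alpha')^{D} = \exp(O(2^t\alpha^{-2}\log(k\eps^{-1}\alpha^{-1})))$ — here the $m$ factor and the $\log$ are harmless since $m = \poly(k/\eps)$. For $x_1$ given, the lemma provides a net vector $y$ such that: if $\cost(\Pi q_i,\Pi x_1) \le R_{q_i}^2$ then $|y_i - \cost(\Pi q_i,\Pi x_1)| \le \alpha' R_{q_i}^2$, and otherwise $y_i \ge R_{q_i}^2 - \alpha' R_{q_i}^2$. I would then translate these two cases into conclusions \eqref{item: term1-a} and \eqref{item: term1-b} using the bound $\alpha' R_{q_i}^2 \lesssim \alpha\,\err(q_i,S)$ (which holds because $R_{q_i}^2 \lesssim \cost(q_i,A) + \Delta_{q_i}\eps^{-2} \le \eps^{-2}(\cost(q_i,A)+\Delta_{q_i}) \lesssim \eps^{-2}\,\err(q_i,S)$, absorbed by choosing $\alpha' \approx \alpha\eps^2$) together with the elementary fact that $\norm{\Pi(q_i - x_1)}^2 \le \cost(q_i,x_1)$ so the ``far'' case of \Cref{lem: warm-up} cannot trigger unless Term~1 is already comparable to $\cost(q_i,S)$.

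The main subtlety — and the step I expect to require the most care — is matching the two hypotheses of the claim with the two cases of \Cref{lem: warm-up}. In case \eqref{item: term1-a} we are told $\cost(q_i,x_1) \le 4\cost(q_i,S) \le 8R_{q_i}^2$, so Term~1 $\le 8R_{q_i}^2$ and we need the warm-up net to be in its ``close'' regime; this forces the threshold in \Cref{lem: warm-up} to be a constant-factor inflation of $R_{q_i}$, i.e.\ I should set the warm-up radius to $3R_{q_i}$ rather than $R_{q_i}$ and note this only changes constants. In case \eqref{item: term1-b} we know only that Term~1 itself is $\ge \cost(q_i,S)$, with no control on $\cost(q_i,x_1)$; but Term~1 $= \cost(\Pi q_i,\Pi x_1)$, and whether this exceeds $R_{q_i}^2$ or not, the warm-up guarantee gives $y_i \ge \min(\text{Term 1}, R_{q_i}^2) - \alpha'R_{q_i}^2 \ge \cost(q_i,S) - \alpha'R_{q_i}^2 \ge \cost(q_i,S) - \alpha\err(q_i,S)$, using $\cost(q_i,S) \lesssim R_{q_i}^2$ and rescaling $\alpha$. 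Finally, the net $V_1(U)$ must be taken over all choices of $U$ — but in this claim $U$ is fixed as part of the construction preceding \eqref{eq: mink-sum}, so here I only need the per-$U$ bound; the union over the $\binom{m}{D}$ subspaces is accounted for in \eqref{eq: mink-sum} and contributes only another $\exp(O(D\log m))$ factor of the same order. I would close by remarking that $\Pi x_1$ ranges over a ball of radius $\norm{x_1}$, which is unbounded, but only its projections onto the affine hulls matter for costs of points in $\Omega$; the standard fix (also used in \Cref{lem: warm-up}) is that $\Pi x_1$ can be replaced by its nearest point in the union of the radius-$R_{q_i}$ balls around the $\Pi q_i$ without changing any relevant cost by more than $\alpha' R_{q_i}^2$, so the net size remains as claimed.
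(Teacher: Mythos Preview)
Your proposal is correct and follows essentially the same approach as the paper: apply \Cref{lem: warm-up} to the projected point set $\{\Pi q_i\}$ inside the $O(2^t\alpha^{-2})$-dimensional subspace $U$, with thresholds $R_{q_i}^2 = \Theta(\cost(q_i,A)+\Delta_{q_i}\eps^{-2})$ and accuracy rescaled to $\Theta(\alpha\eps^2)$, then read off the two cases. Your extra caveats (inflating the radius by a constant, worrying about $\Pi x_1$ being unbounded) are unnecessary --- the paper absorbs the constant into the $O(\cdot)$ in the definition of $R_{\Pi q_i}^2$, and \Cref{lem: warm-up} already applies to arbitrary $x\in\R^d$ --- but they do no harm.
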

\begin{proof}
    
    Let $\Omega'$ denote the set of points $\Pi q_i$ for $q_i \in \Omega$. We apply \Cref{lem: warm-up} to $\Omega'$ with $R_{\Pi q_i}^2 = O(\cost(q_i,A) + \Delta_{q_i} \eps^{-2})$ and the ``$\alpha$'' in the lemma set to $O(\alpha \eps^2)$. We pick $V_1(U)$ to be the set of vectors returned by the lemma. Next, we show that this set indeed satisfies our requirements.

    Let $y \in V_1(U)$ be the vector corresponding to the point $x_1$ (as given by \Cref{lem: warm-up}). We show that $y$ has the claimed properties.

    \noindent \textbf{Proof of \Cref{item: term1-a}.} Suppose that $q_i$ is a point such that  $\cost(q_i,x_1) \leq 4 \cost(q_i, S) \leq 8 (\cost(q_i,A) + \Delta_{q_i} \eps^{-2}) = R_{\Pi q_i}^2$. In this case, the squared distance between their projections onto the subspace $U$ is also at most $R_{\Pi q_i}^2$, i.e., $\norm{\Pi {q_i} - \Pi x_i}^2 \leq R_{\Pi q_i}^2$. Hence by \Cref{lem: warm-up} we obtain $|y_i - \norm{\Pi(q_i - x_1)}^2| \leq \alpha \eps^2 R_{\Pi q_i}^2 \lesssim \alpha \err(q_i,S)$. 

    \noindent \textbf{Proof of \Cref{item: term1-b}.} If $R_{\Pi q_i}^2 \geq \norm{\Pi(q_i - x_1)}^2 \geq  \cost(q_i,S)$ then by the first condition of \Cref{lem: warm-up} we have that $y_i \geq \norm{\Pi(q_i - x_1)}^2 - \alpha \eps^2 R_{\Pi q_i}^2 \geq  \cos(q,S) - \alpha \err(q_i, S)$. On the other hand, if $\norm{\Pi(q_i - x_1)}^2 \geq R_{\Pi q_i}^2$, then the second condition of \Cref{lem: warm-up} leads to the same conclusion.

    \noindent \textbf{Bounding the Net Size.} Since $U$ is a  $O(2^t\alpha^{-2})$ dimensional subspace, by \Cref{lem: warm-up} we have $|V_1(U)| = m \exp(O(2^t \alpha^{-2} \log( \eps^{-1} \alpha^{-1}))) = \exp(O( 2^t \alpha^{-2} \log(k \eps^{-1} \alpha^{-1})))$ where we used the fact that $|\Omega| = m= \poly(k/\eps)$ above.\qedhere
\end{proof}

Term $2$ is independent of $x$ and the vector $z \in \R^m$ with $z_i =\norm{(I-\Pi)q_i}^2$ approximates it with zero error. Therefore, picking $V_2(U) = \{z\}$ suffices.

\begin{claim}[Approximating Term 3]\label{clm: term-3}
        There  is a set $V_3(\Pi) \subset \R^m$ of size $|V_3(\Pi)| = O(\alpha^{-1}\eps^{-2}m)$ which for any $S \in \mathcal{T}$ with center $x_1$, contains a vector $\vecy$ satisfying the following. For each $q_i \in B(S) \cap \Omega$:
    \begin{enumerate}
        \item If $\cost(q_i, x_1) \leq 4 \cost(q_i,S)$ then we have $\norm{(I - \Pi)x_1}^2\leq y_i \leq \norm{(I - \Pi)x_1}^2+\alpha \err(q_i,S) $.
        \item If $\norm{(I - \Pi)x_1}^2 \geq \cost(q_i, S)$ then $y_i \geq \cost(q_i,S)$.
    \end{enumerate}
\end{claim}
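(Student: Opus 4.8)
The key observation is that Term~$3$, $\norm{(I-\Pi)x_1}^2$, is a single nonnegative scalar depending only on $x_1$ (with $\Pi$ the fixed orthogonal projection onto the subspace $U$ constructed in this subsection); write $\tau := \norm{(I-\Pi)x_1}^2$. Consequently $V_3(\Pi)$ will not really be a net of $\R^m$, but a one-dimensional grid of candidate values for $\tau$, lifted to $\R^m$ by a fixed, coordinate-wise ``round-up'' quantization whose step is allowed to depend on the point. Concretely, the plan is: for each $q_i\in\Omega$ set $R_{q_i}^2 := 2(\cost(q_i,A)+\Delta_{q_i}\eps^{-2})$ and $s_i := \tfrac12\,\alpha\,\eps^2 R_{q_i}^2$, define $f_i(\tau):=\min\{8R_{q_i}^2,\ \lceil \tau/s_i\rceil\, s_i\}$, and set $V_3(\Pi):=\{(f_1(\tau),\dots,f_m(\tau)) : \tau\ge 0\}$. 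For a given $S\in\mathcal{T}$ with first center $x_1$ we will take $y$ to be the vector obtained at $\tau=\norm{(I-\Pi)x_1}^2$.

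First I would bound $|V_3(\Pi)|$. Each $f_i$ is a nondecreasing step function with at most $8R_{q_i}^2/s_i=16/(\alpha\eps^2)=O(\alpha^{-1}\eps^{-2})$ breakpoints, and the vector $(f_1(\tau),\dots,f_m(\tau))$ is constant on each interval between consecutive breakpoints of the union of all the $f_i$. Hence $|V_3(\Pi)|\le 1+\sum_i O(\alpha^{-1}\eps^{-2}) = O(m\,\alpha^{-1}\eps^{-2})$, as claimed.

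Next I would verify the two approximation properties, which rest on two elementary facts. First, since each $q_i\in B(S)\cap\Omega$ lies in a cluster $C_j$ that is close to $S$, the (approximate) triangle inequality gives $\cost(q_i,S)\le 2(\cost(q_i,a_j)+\cost(a_j,S))\le 2(\cost(q_i,A)+\Delta_{q_i}\eps^{-2})=R_{q_i}^2$. Second, letting $q^\ast$ be the point of $\Omega$ nearest to $x_1$, the construction of $U(x_1)$ via \Cref{lem: subspace} guarantees $q^\ast\in U(x_1)$, so $\Pi q^\ast=q^\ast$ and therefore $\tau=\norm{(I-\Pi)(x_1-q^\ast)}^2\le\norm{x_1-q^\ast}^2=\cost(q^\ast,x_1)\le\cost(q_i,x_1)$. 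For property~$1$: if $\cost(q_i,x_1)\le 4\cost(q_i,S)$ then the two facts give $\tau\le 4\cost(q_i,S)\le 4R_{q_i}^2<8R_{q_i}^2$, so the cap in $f_i$ is inactive and $\tau\le y_i\le\tau+s_i$; it remains only to note that $s_i=\alpha(\eps^2\cost(q_i,A)+\Delta_{q_i})\le\alpha(\cost(q_i,A)+\Delta_{q_i})\le\alpha\,\err(q_i,S)$. For property~$2$: if $\tau\ge\cost(q_i,S)$, then either $\tau\le 8R_{q_i}^2$, whence $y_i=\lceil\tau/s_i\rceil s_i\ge\tau\ge\cost(q_i,S)$, or $\tau>8R_{q_i}^2$, whence $y_i=8R_{q_i}^2\ge R_{q_i}^2\ge\cost(q_i,S)$ by the first fact. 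The argument is entirely routine; the only point requiring care is that the quantization step $s_i$ must scale with $R_{q_i}^2$, since the points in $B(S)\cap\Omega$ can have very different cost scales---this is exactly why the bound is $m$ copies of an $O(\alpha^{-1}\eps^{-2})$-point grid rather than a single such grid---and, as usual, one should rescale $\alpha$ by an absolute constant at the end. I do not anticipate any genuine obstacle beyond this bookkeeping.
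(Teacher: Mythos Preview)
Your construction is correct; the coordinate-wise ceiling-with-cap map $f_i$ does satisfy both properties, and the breakpoint-counting argument gives the claimed cardinality bound. Both facts you invoke---that $\cost(q_i,S)\le R_{q_i}^2$ for close points and that $\tau\le\cost(q^\ast,x_1)\le\cost(q_i,x_1)$ since $q^\ast$ lies in the subspace---are exactly the right ingredients.

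The paper's argument is the same idea but with a simpler construction: it uses \emph{constant} vectors $\lambda\cdot\mathbf{1}$ rather than a coordinate-dependent quantization. The point is that since $\tau$ is a single scalar, one can simply round it up to the next multiple of $\alpha\eps^2 R_{q_{i^\ast}}^2$, where $i^\ast$ minimizes $R_{q_i}^2$ over the relevant points; the resulting $\lambda$ then satisfies $\tau\le\lambda\le\tau+\alpha\eps^2 R_{q_{i^\ast}}^2\le\tau+\alpha\eps^2 R_{q_i}^2\lesssim\tau+\alpha\,\err(q_i,S)$ for \emph{every} relevant $q_i$, and property~2 is immediate since $\lambda\ge\tau$. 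The net is then $\{\,j\alpha\eps^2 R_{q_i}^2\cdot\mathbf{1}: i\in[m],\,j\in[O(\alpha^{-1}\eps^{-2})]\,\}$, of the same size. Your breakpoint count and the paper's enumeration over $(i,j)$ are essentially dual views of the same union of one-dimensional grids; the paper just avoids the cap and the coordinate-dependent rounding by observing that the finest relevant grid already works uniformly. Your version has the mild advantage that property~2 holds even if no point satisfies the hypothesis of property~1 (a case the paper's terse proof leaves implicit), at the cost of slightly more bookkeeping.
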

\begin{proof}
The third term $\norm{(\Pi - I) x_1}^2$ is the squared distance of $x_1$ to the subspace and is independent of the point $q_i$. We show that there exists a $\lambda \in \R$ such that $\norm{(\Pi - I)x_1}^2 \leq \lambda \leq \norm{(\Pi - I)x_1}^2 + \alpha \err(q_i,S)$ for any $q_i$ for which we have $\cost(q_i,x_1) \leq 4 \cost(q_i,S)$. The $m$-dimensional vector, all of whose entries are $\lambda$, then satisfies our requirements.

For each $q_i \in B(S)$ for which $\cost(q_i, x_1) \leq 4 \cost(q_i, S)$, Term $3$ lies in the interval $[0, 4R_{q_i}^2]$ where $R_{q_{i}}^2 = 2(\cost(q_i,A) + \Delta_{q_i} \eps^{-2})$. Therefore one can find $i \in [m]$ and $j \in [ \lceil\alpha^{-1} \eps^{-2} \rceil]$ such that  picking $\lambda= j \alpha \eps^2 R_{q_i}^2$ works.\qedhere

\end{proof}

Multiplying the cardinalities of the nets obtained above, we get that the final net size is $|M_{1,\alpha}| = \binom{m}{O(2^t \alpha^{-2})} \cdot |V_1(U)| \cdot |V_2(U)|
 \cdot |V_3(U)| = \exp(O(2^t \alpha^{-2} \log(k\eps^{-1} \alpha^{-1})))$.

\subsection{\texorpdfstring{Bound $2$: $\exp(\tilde{O}(N_1 + \alpha^{-2}))$}{Bound 2}}
We now obtain the second guarantee of  \Cref{lem: single-center-net}, which upper bounds the single center net size in terms of the number of clusters $N_1$ with which $x_1$ interacts (see \Cref{def: interaction} for a precise definition of interaction). Our goal is to show that the size of the net is at most $\exp(\tilde{O}(N_1 + \alpha^{-2}))$. 
 To obtain nets of this size, we will choose the low dimensional subspace on which $\Omega$ is projected more carefully by using the information about the clusters with which the first center interacts. In particular, we enumerate over all possible choices of $I(x_1)$ which are at most $\binom{k}{N_1} 
= \exp(O(N_1 \log k))$ in number. 

Suppose that $I(x_1) = \{C_1, \ldots, C_{N_1}\}$; then we pick $U_0 = \{a_1, a_2, \ldots, a_{N_1}\}$ to be the centers corresponding to the clusters of these clusters.  Then we apply \Cref{lem: subspace} to $U_0$ and $x_1$ to obtain an $\alpha$-good set of size $O(1/\alpha^2)$ such that the set $U(x_1) = U_0 \cup R(x_1)$ satisfies the guarantees of the lemma. If $U$ denotes the subspace spanned by the points in $U(x_1)$, we now have $\dim(U) = O(N_1 + \alpha^{-2})$. We again use \Cref{eq: pythagoras} to express the cost as a sum of four terms.

The idea will again be to show that if $\cost(q,x_1) \leq 4 \cost(q,S)$, then the cost of the point $q$ is approximated by the net up to an error of $O(\alpha \err(q,S))$. On the other hand, if $\cost(q, x_1) > 4 \cost(q,S)$ then show that the net does not underestimate the cost by more than $\alpha \err(q,S)$. 

The way in which nets are constructed to approximate for Terms $1$-$3$ to obtain Bound $2$ are identical to the previous case and will not be repeated. However, since the subspace $U$ we picked has dimension $O(N_1 +\alpha^{-2})$, the size of net approximating term $1$ is $\exp((N_1 + \alpha^{-2}) \cdot \log(k \epsilon^{-1} \alpha^{-1}))$. The only other main difference in the proofs of Bound $1$ and $2$ is the argument used to show that the magnitude of term $4$ is small. This is what we present next.

\begin{claim}\label{clm:bounding-term4-2}
 Let $q$ be a point in $B(S) \cap \Omega$. (i)Term $4$ for point $q$ is at most $\cost(q,x_1)/4$. (ii) If we further have $\cost(q, x_1) \leq 4 \cost(q,S)$ then Term $4$ is $O(\alpha \err(q,S))$.
\end{claim}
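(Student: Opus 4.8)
The plan is to follow the proof of Corollary~\ref{clm: term4-bound} from the previous subsection almost verbatim, simply replacing the $2^{-t/2}$ savings there (which came from projecting onto an $\alpha 2^{-t/2}$-good set) by the fact that here the subspace $U$ contains the centers $a_i$ of \emph{all} clusters interacting with $x_1$. Since $U(x_1)$ is $\alpha$-good for $x_1$ and $\Pi$ is the orthogonal projection onto its span, Lemma~\ref{lem: subspace} bounds the quantity Term~4 from~\eqref{eq: pythagoras} by
\[
\text{Term }4 = 2\bigl|\langle q,x_1\rangle-\langle\Pi q,x_1\rangle\bigr|\le 2\alpha\,\norm{q-\Pi q}\cdot\min_{q'\in\Omega}\norm{q'-x_1}\le 2\alpha\,\norm{q-\Pi q}\cdot\norm{q-x_1},
\]
the last step using $q\in\Omega$. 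So everything reduces to estimating $\norm{q-\Pi q}$, for which there are two tools: first, since the point $q^\ast\in\Omega$ nearest to $x_1$ lies in $R(x_1)\subseteq U$, we always have $\norm{q-\Pi q}\le\norm{q-q^\ast}\le\norm{q-x_1}+\norm{q^\ast-x_1}\le 2\norm{q-x_1}$; second, if $q$ lies in a cluster $C_j$ interacting with $x_1$, then $a_j\in U_0\subseteq U$, so $\norm{q-\Pi q}\le\norm{q-a_j}=\sqrt{\cost(q,A)}$.

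For part (i), the first bound gives $\text{Term }4\le 4\alpha\cost(q,x_1)$, which is at most $\cost(q,x_1)/4$ after rescaling $\alpha$ to a sufficiently small constant (exactly as in Corollary~\ref{clm: term4-bound}; this only changes the exponent in the net-size bound by constant factors). For part (ii) I would split on whether the cluster $C_j$ containing $q$ interacts with $x_1$. If it does, the second bound on $\norm{q-\Pi q}$ together with the hypothesis $\cost(q,x_1)\le 4\cost(q,S)$ yields $\text{Term }4\le 2\alpha\sqrt{\cost(q,A)}\cdot 2\sqrt{\cost(q,S)}=4\alpha\sqrt{\cost(q,A)\cost(q,S)}\le 4\alpha\,\err(q,S)$, using $\err(q,S)\ge\sqrt{\cost(q,S)\cost(q,A)}$. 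If $C_j$ does not interact with $x_1$, then one of the conditions $P_1,P_2$ of Definition~\ref{def: interaction} fails, and in either case I claim $\cost(q,x_1)=O(\err(q,S))$, so $\text{Term }4\le 4\alpha\cost(q,x_1)=O(\alpha\,\err(q,S))$ by part (i). If $P_1$ fails then $\cost(a_j,x_1)<32\Delta_j$, hence $\cost(q,x_1)\le 2(\cost(q,a_j)+\cost(a_j,x_1))<2\cost(q,A)+64\Delta_q\lesssim\err(q,S)$ directly. If $P_2$ fails then $\cost(a_j,x_1)>16\cost(a_j,S)$, and I argue by contradiction that $\cost(q,A)=\cost(q,a_j)\gtrsim\cost(q,x_1)$: if instead $\cost(q,a_j)\le\cost(q,x_1)/25$, then (approximate) triangle inequalities give $\cost(a_j,x_1)\le(6/5)^2\cost(q,x_1)$ and, comparing $q$ with the center of $S$ achieving $\cost(a_j,S)$, $\cost(a_j,S)\ge(3/5)^2\cost(q,S)$; combined with $\cost(q,x_1)\le 4\cost(q,S)$ these force $\cost(a_j,x_1)\le\frac{36}{25}\cdot 4\cdot\frac{25}{9}\cost(a_j,S)=16\cost(a_j,S)$, contradicting the failure of $P_2$. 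Hence $\cost(q,x_1)\le 25\cost(q,A)\le 25\,\err(q,S)$.

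The main obstacle is precisely the sub-case where $P_2$ fails: it is the only place where the argument genuinely departs from Bound~1, and one must chain several approximate triangle inequalities while keeping constants tight enough that the numbers $32$ and $16$ in Definition~\ref{def: interaction} actually close the contradiction (note how $(6/5)^2\cdot 4\cdot(5/3)^2=16$ exactly). Everything else—part (i), the interacting case, and the $P_1$-failing case—is a routine repackaging of the estimates already used in Claim~\ref{clm-term-4-interim} and Corollary~\ref{clm: term4-bound}. Once Term~4 is controlled by this claim, the rest of Bound~2 (constructing the nets $V_1(U),V_2(U),V_3(U)$ for Terms~1--3 and combining them via a Minkowski sum) is identical to the proof of Bound~1 in Claims~\ref{clm: term-1} and~\ref{clm: term-3}, the only change being that the ambient subspace now has dimension $O(N_1+\alpha^{-2})$ instead of $O(2^t\alpha^{-2})$, which is exactly what produces the exponent $\tilde O(N_1+\alpha^{-2})$ in the final net-size bound.
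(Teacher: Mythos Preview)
Your proof is correct and follows essentially the same route as the paper: part (i) is identical, and part (ii) splits on whether $C_j$ interacts with $x_1$, using $a_j\in U$ in the interacting case and a case analysis (failure of $P_1$ versus $P_2$) otherwise. The only cosmetic differences are that in the interacting case you bound $\norm{q-x_1}$ directly from the hypothesis $\cost(q,x_1)\le 4\cost(q,S)$ (the paper instead routes through the $P_2$ property), and in the non-interacting case you organize the split by which of $P_1,P_2$ fails rather than by the magnitude of $\cost(a_j,x_1)$; both lead to the same contradiction/bound.
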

\begin{proof}
    The proof of (i) is identical to the proof of \Cref{clm: term4-bound}. So, we focus on proving the second part of the lemma, which now requires a new argument.

    We consider the cluster $C_i$ containing $q$ and consider two different cases depending on whether $C_i$ and $x_1$ interact.
    
    \textbf{Cluster $C_i$ and center $x_1$ interact. } We use the fact that the subspace contains the center $a_i$ of the cluster $C_i$ to argue that Term $4$ is small.

     By \Cref{lem: subspace} the constructed subspace $U(x_1)$ with projection matrix $\Pi$ satisfies,
\begin{align}\label{eq: inter-guarantee-2}
    |\langle q,x_1 \rangle - \langle q, \Pi x_1\rangle| \leq \alpha \cdot \norm{q - \Pi q} \cdot \min_{q' \in \Omega} \norm{q' - x_1} \leq \alpha \cdot \norm{q - \Pi q} \cdot \norm{q - x_1}.
\end{align}
Since the center of $a_i$ of the cluster $C_i$ is in the subspace, 
\begin{align}\label{eq: net-eq-step-1}\norm{q - \Pi q} \leq \dist(q,A).
\end{align}
Moreover, since $x_1$ is an approximate nearest neighbor center to $a_i$ (by the \Cref{def: interaction} of interaction), i.e. it satisfies $\cost(a_i, x_1) \leq 16\cost(a_i, S)$ , we can also bound $\norm{q - x_1}$: 
\begin{align}\label{eq: net-eq-step-2}
    \norm{q - x_1} &\leq \dist(q, a_i) + \dist(a_i, x_1) \nonumber\\
    &\leq \dist(q, a_i) + 4 \dist(a_i, S)  \nonumber\\
    &\leq \dist(q,a_i) + 4(\dist(q,a_i) + \dist(q,S))\nonumber\\
    &\leq 5\dist(q,a_i) +4\dist(q,S).
\end{align}
Using \Cref{eq: net-eq-step-1}, \Cref{eq: net-eq-step-2}  we can upper bound the RHS of \Cref{eq: inter-guarantee-2} by  \begin{align*}
    O(\alpha \cdot  \dist(q,a_i) \cdot (\dist(q,a_i) + \dist(q,S)))  = O(\alpha \err(q,S)).
\end{align*}

\textbf{Cluster $C_i$ and center $x_1$ do not interact. }

Consider the center $a_i$ of the cluster $C_i$ containing $q$. If  $\cost(a_i,x_1) \leq 100(\cost(q, a_i) + \Delta_q)$ then by the triangle inequality we also have $\cost(q, x_1) = O(\cost(q,a_i) + \Delta_q)$ and hence Term $4$ is $O(\alpha( \cost(q,a_i) + \Delta_q))$ by part $(i)$ of the lemma. We are done in this case as $
\cost(q,a_i) + \Delta_q \leq \err(q,S)$.

Therefore, suppose otherwise that $\cost(a_i, x_1) > 100 (\cost(q,a_i) + \Delta_q)$. We show that in this case, we must have $\cost(q,x_1) > 4 \cost(q, S)$, thus completing the proof of the lemma. 

Since $C_i$ and $x_1$ do not interact and $\cost(a_i, x_1) > 100(\cost(q,a_i) + \Delta_q) \geq 100 \Delta_q$, it must be the case that $\cost(a_i, x_1) \geq 16 \cost(a_i,S)$, i.e.,  $\dist(a_i, x_1) \geq 4 \dist(a_i, S)$.  We now use this to show that $\dist(q, x_1) > 2 \dist(q, S)$. 
\begin{align*}
    \dist(q,x_1) &\geq \dist(a_i, x_1) - \dist(a_i, q)\\
    & > \dist(a_i, x_1)/2 + 4\dist(a_i, q) \tag{$\dist(a_i, x_1) > 10 \dist(a_i, q)$} \\ 
    &\geq 2 \dist(a_i, S) +4 \dist(a_i, q)\\
    & \geq 2 (\dist(q, S) - \dist(a_i, q)) + 4 \dist(a_i, q)\\
    & \geq 2 \dist(q,S).
\end{align*}
Therefore, $\cost(q,x_1) > 4 \cost(q,S)$ and we are done.
\end{proof}

\paragraph{Bounding the Net Size. } The final net bound is $\binom{k}{N_1} \cdot \binom{m}{O(\alpha^{-2})} \cdot |V_1(U)| \cdot |V_2(U)| \cdot |V_3(U)|= \exp(O((N_1 + \alpha^{-2}) \log(k \eps^{-1} \alpha^{-1})))$. The first factor in the product above comes from having to enumerate over all possible subsets of clusters with which $x_1$ could interact,  the second factor comes from enumerating over all possible $\alpha$-good sets for $x_1$ (which are determined by a subset of $\Omega$ of size $O(\alpha^{-2}))$, the bound on the size of $V_1(U)$ is from the above discussion and the bounds on $|V_2(U)|$ and $|V_3(U)|$ are from \Cref{subsec: net-bound-1}.

\section{Missing Proofs from Chaining Analysis}
\subsection{\texorpdfstring{Bounding the supremum of the Gaussian process $X^{S,\init}$}{Bounding the Supremum of X{S,Init}}}
\label{sec:lemXinit}
In this section, we prove  \Cref{lem:Xinit}. We wish to upper bound $\E_g \sup_{S\inSR} |X^{S,\init}|$, where  recall that $X^{S,\init}$ is defined (in \eqref{eq: x-init-fin}) as,
\[
    X^{S,\init} = \frac{\sum_{i \in [m]} g_i w_{q_i} \uq{S,0}{i}}{\cost(P,S) }.
\]
We begin by bounding the numerator and denominator of $|X^{S,\init}|$ separately. The numerator is upper-bounded as follows:
\begin{align*}
    \left\lvert\sum_{i \in [m]}g_i w_{q_i} \uq{S,0}{i}\right\rvert
    &\lesssim \,\, \sum_{j \in [k]} \left\lvert\sum_{{q_i} \in C_j \cap \Omega}g_i w_{q_i} u^{S,0}_i\right\rvert \tag{Triangle inequality}\\
    &\lesssim \,\,\sum_{C_j \in B(S)} \left\lvert\sum_{{q_i} \in C_j \cap \Omega}g_i w_{q_i} \cost(a_j,S)\right\rvert \tag{By the  definition of $u^{S,0}$}.
\end{align*}
We lower bound the denominator as,
\begin{align*}
    \cost(P,S) &\gtrsim \cost(P,S) + \cost(P,A) \tag{as $\cost(P,A) \lesssim \cost(P,S)$}\\
    &\gtrsim \sum_{C_j \in B(S)} \cost(C_j,S) + \cost(C_j,A) \gtrsim \sum_{C_j \in B(S)} |C_j| \cost(a_j,S). 
\end{align*}
where the last inequality uses that for any set of centers $S$ and a cluster $C_j$,
\[
\cost(a_j, S) |C_j|\leq 2(\cost(C_j, S) + \cost(C_j,A)). 
\]
Combining these bounds we obtain, 
\begin{align*}
    |X^{S,\init}| 
    &\lesssim   \frac{\sum_{C_j \in B(S)} \left\lvert\sum_{{q_i} \in C_j \cap \Omega}g_i w_{q_i} \cost(a_j,S)\right\rvert}{\sum_{C_j \in B(S)} |C_j| \cost(a_j,S)} \\
    &\lesssim \max_{C_j \in B(S)}  \frac{\left\lvert\sum_{q_i \in C_j \cap \Omega}g_i w_{q_i} \right\rvert}{|C_j|} \tag{By averaging over clusters in $B(S)$}.
\end{align*}
This expression is independent of $S$ and is the maximum of $k_{B(S)}$ Gaussians, one for each cluster in $B(S)$. To bound $\E_g \sup_S |X^{S,\init}|$ it suffices to bound the maximum of these Gaussians. We bound their variance below.
\begin{lemma}\label{lem: cluster-variance-xinit}
    For any cluster $C_j$ we have:
   $\var_g [{\sum_{q_i \in \Omega \cap C_j}g_i w_{q_i}} ] \lesssim (k|C_j|^2)/m$.
\end{lemma}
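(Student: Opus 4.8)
\textbf{Proof plan for Lemma \ref{lem: cluster-variance-xinit}.}

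The plan is to use the fact (\Cref{fact: gaussian}) that a linear combination of independent standard Gaussians $\sum_i a_i g_i$ is a centered Gaussian with variance $\sum_i a_i^2$. Applying this with $a_i = w_{q_i}$ for the indices $i$ with $q_i \in \Omega \cap C_j$, we get
\[
\var_g\Bracket{\sum_{q_i \in \Omega \cap C_j} g_i w_{q_i}} = \sum_{q_i \in \Omega \cap C_j} w_{q_i}^2.
\]
So the task reduces to showing $\sum_{q_i \in \Omega \cap C_j} w_{q_i}^2 \lesssim k|C_j|^2/m$.

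The key ingredient is the weight bound from \Cref{lem: weight-bound}, which gives $w_q \leq 4k|C_j|/m$ for any sampled point $q \in C_j$. Thus each term satisfies $w_{q_i}^2 \leq (4k|C_j|/m) \cdot w_{q_i}$, and so
\[
\sum_{q_i \in \Omega \cap C_j} w_{q_i}^2 \leq \frac{4k|C_j|}{m} \sum_{q_i \in \Omega \cap C_j} w_{q_i}.
\]
It then remains to bound $\sum_{q_i \in \Omega \cap C_j} w_{q_i}$. Here I would invoke property $P_1$ of event $\mathcal{E}$ (\Cref{def: event-E}), which guarantees $\sum_{q \in \Omega \cap C_j} w_q \leq (1+\eps)|C_j| \lesssim |C_j|$, since throughout this part of the argument $\Omega$ is fixed to satisfy $\mathcal{E}$. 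Substituting gives $\sum_{q_i \in \Omega \cap C_j} w_{q_i}^2 \lesssim k|C_j|^2/m$, which is the desired bound.

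I expect there to be essentially no obstacle here: the lemma is a routine variance computation combining \Cref{fact: gaussian}, the per-point weight bound of \Cref{lem: weight-bound}, and the cluster-weight preservation from event $\mathcal{E}$. The only point of care is making sure we are in the regime where $\Omega$ satisfies $\mathcal{E}$ (so that $P_1$ applies); if one wants the bound unconditionally, one can instead use the weaker bound $\sum_{q \in \Omega \cap C_j} w_q \leq m \cdot \max_q w_q \leq 4k|C_j|$, which still yields $\sum w_{q_i}^2 \lesssim k^2|C_j|^2/m$ — but since this lemma feeds into \Cref{lem:Xinit} where $\mathcal{E}$ is assumed, the stated bound with the single factor of $k$ follows directly from $P_1$.
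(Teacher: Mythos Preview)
Your proposal is correct and matches the paper's proof essentially line for line: the paper also applies \Cref{fact: gaussian} to get $\sum_{q\in C_j\cap\Omega} w_q^2$, pulls out one factor of $w_q \lesssim k|C_j|/m$ via \Cref{lem: weight-bound}, and then bounds $\sum_{q\in C_j\cap\Omega} w_q \lesssim |C_j|$ using property $P_1$ of event $\mathcal{E}$. Your remark about the conditioning on $\mathcal{E}$ is also exactly right, since this lemma is only invoked inside the proof of \Cref{lem:Xinit}, which assumes $\mathcal{E}$.
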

\begin{proof}
    Using \Cref{fact: gaussian} for the sum of independent Gaussians, and \Cref{lem: weight-bound} which gives that $w_q \lesssim (k|C_j|)/m$ for $q \in C_j$, and property $P_1$ of $\mathcal{E}$ which gives that $\sum_{q \in C_j \cap \Omega} w_q \lesssim |C_j|$, we have
\begin{align*}
        \var_g [{\sum_{q_i \in \Omega \cap C_j}g_i w_{q_i}} ] = \sum_{q \in C_j \cap \Omega} w_q^2 \lesssim k|C_j|/m \sum_{q \in C_j \cap \Omega}{w_q} \lesssim  k|C_j|^2/m. \qedhere
    \end{align*}
\end{proof}
The variance bound due to \Cref{lem: cluster-variance-xinit}  together with \Cref{fact: max-of-gaussians} completes the proof of  \Cref{lem:Xinit}: 
\[\E_{g} \sup_S |X^{S,\init}| \lesssim \sqrt{k/m} \cdot \sqrt{\log k} \lesssim \sqrt{(k\log k)/m}.\]

\subsection{ \texorpdfstring{Bounding the supremum of the Gaussian process $X^{S,\fin}$}{Bounding the Supremum of X{S,Fin}}}
\label{sec:lemXfin}
In this section, we prove \Cref{lem:Xfin}. We wish to bound $\E_{g} \sup_{S\inSR} |X^{S, \fin}|$, where $X^{S, \fin}$ is defined in \eqref{eq: x-init-fin}.
Define the vector $z^S \in \R^m$ where $z_i^S = (w_{q_i} (\uq{S}{i} - \uq{S,\hmax}{i}))/\cost(P,S)$. This allows us to write $X^{S,\fin} = \langle g, z^S \rangle$. Since $\uvec{S, \hmax}$ is a ``fine'' (in fact an $\eps^2$) approximation of $\uvec{S}$ we can show that for any $S \in \mathcal S$, the vector $z^S$ has a small norm.  A bound on $\E_{\Omega, g} \sup_S |X^{S,\fin}|$ then follows from the  Cauchy-Schwarz inequality. 

\paragraph{Bounding the norm of $z^S$. }
To show $z^S$ has a small norm, we first bound the quantity below. For  $q_i \in \Omega \cap B(S)$ we have by the definition of $u_i^{S,\hmax}$ that 
\begin{align*}
\Gamma := (\uq{S}{i} - \uq{S, \hmax}{i})^2 
&\leq 2^{-2h}\err(q_i, S)\\
&
\lesssim 2^{-2\hmax}(\cost(q_i,S)\cdot(\cost(q_i,A)+  \Delta_{q_i} )  + \cost(q_i,A)^2 + \Delta_{q_i}^2).
\end{align*}

Since $q_i$ is a close point, $\cost(q_i,S) \leq 2(\cost(q_i,A) + \Delta_{q_i} \eps^{-2}) \leq 2\eps^{-2} (\cost(q_i,A) + \Delta_{q_i})$. Also, since $\hmax = \ceil{2 \log_2(\eps^{-1}))}$ we have $2^{-2\hmax} \leq \eps^{4}$. Plugging these into the right-hand side, 
expanding and keeping only higher order terms gives,
\begin{align*}
    \Gamma &\lesssim 2^{-2\hmax} \cdot \eps^{-2} \cdot  ( \Delta_{q_i}^2  +  \cost(q_i,A)^2 ) \lesssim \eps^{2} \cdot  ( \Delta_{q_i}^2  +  \cost(q_i,A)^2 ) .
\end{align*}

\noindent By this equation,  
for any $S \in \mathcal S$ we have the following bound:
    \begin{align*}
        \norm{z^S}^2 &= \sum_{i \in [m]}  \frac{w_{q_i}^2 (\uq{S}{i} - \uq{S,\hmax}{i})^2}{\cost(P,S)^2}
        \lesssim  \eps^{2}  \sum_{q_i \in B(S) \cap \Omega}\frac{ w_{q_i}^2 \cdot    ( \Delta_{q_i}^2  +  \cost({q_i},A)^2 )}{\cost(P,S)^2}\\
        &\overset{(i)}{\lesssim} \eps^{2}   \sum_{{q_i} \in B(S) \cap \Omega}\frac{ \cost(P,A)^2}{\csize^2 \cost(P,S)^2}  \overset{(ii)}{\lesssim} \eps^{2} \csize^{-1}.
    \end{align*}
    Step $(i)$ follows from the bound $w_q \leq 4\csize^{-1} \min(\tfrac{\cost(P,A)}{\cost(q,A)}, \tfrac{\cost(P, A)}{\Delta_q})$ given by \Cref{lem: weight-bound}, and step $(ii)$ from the fact that the solution $A$ is a constant-factor approximation.

    \noindent Finally, applying Cauchy-Schwarz inequality gives, 
    \begin{align*}
    \E_{g} \sup_{S\inSR}|X^{S,\fin}| =\E_{g} \sup_{S\inSR} |\langle g, z^S \rangle|  \leq \E_{g} \sup_{S\inSR} \norm{g} \, \norm{z^S} \lesssim  \eps m^{-1/2} 
 \cdot \E_{g}  \norm{g} \lesssim \eps
    \end{align*}
    where the last inequality uses that $\E_{g}  \norm{g} \lesssim m^{1/2}$ for a standard Gaussian vector $g \in \R^m$.

\section{\texorpdfstring{Sensitivity Sampling for $k$-Median Coresets}{Sensitivity Sampling for k-Median Coresets}}\label{sec: kmedian-coreset}
Consider \Cref{alg: sensitivity-sampling} with the only difference that the costs are now distances (and not squared distances as in the case of $k$-means). One can show that this algorithm yields a $k$-median coreset with the following guarantees.

\begin{theorem}[$k$-Median Coreset for Stable Inputs]\label{thm:median1}
Let $P \subset \R^d$ be a set of $n$ points which, for some $\beta > 0$, forms a $\beta$-stable instance for the $k$-median objective. Let $\Omega$ be a weighted set of $m = \tilde{\Theta}(k \eps^{-2} \max(1, \beta^{-1}))$ points obtained using Sensitivity Sampling; then $\Omega$ is an $\eps$-coreset of $P$ with probability at least $2/3$. 
\end{theorem}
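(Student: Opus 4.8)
The plan is to transport the entire analysis of \Cref{thm: main-coreset-thm} (\Cref{sec: analysis}) to the $k$-median setting, where now $\cost(p,q)=\norm{p-q}$. Since the sampling distribution $\mu$ of \Cref{alg: sensitivity-sampling} and the weight estimates of \Cref{lem: weight-bound} are phrased purely in terms of $\cost(\cdot,\cdot)$, they are unchanged. First I would re-run the reductions of \Cref{sec: analysis}: partition the clusters of the $O(1)$-approximation $A$ into far and close clusters, split the close clusters into low-cost and high-cost, then into bands and types $B_{b,t}(S)$, and finally group the candidate solutions into center classes $\mathcal{S}(r)$ by the interaction number $N(S)$. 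The far-cluster and low-cost-cluster bounds (the analogues of \Cref{lem: goal-far} and \Cref{lem: tiny-cluster}) only use event $\mathcal{E}$ and the triangle inequality, and go through with different constants. Symmetrization (\Cref{lem: symmetrization}) then reduces the problem to bounding the Gaussian process $\E_\Omega\E_g\sup_{S\in\mathcal{S}(r)}|X^{S,h}|$ at each scale $2^{-h}$, which I would control by chaining exactly as in \Cref{sec: chaining}, using the appropriately re-tuned error function $\err(\cdot,\cdot)$ and a $k$-median cost-vector-net bound of the same shape as \Cref{lem: net-lemma}.

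For the variance, the analogue of \Cref{lem: var-first} still holds: the telescoped difference $u^{S,h}_i-u^{S,h-1}_i$ is coordinate-wise $\lesssim 2^{-h}\err(q_i,S)$, and combining this with $w_q\cost(q,A),\,w_q\Delta_q\lesssim \cost(P,A)/m$ (\Cref{lem: weight-bound}) together with $\cost(P,A)\lesssim\cost(P,S)$ gives $\var[X^{S,h}]\lesssim 2^{-2h}\cost(P,A)/(m\,\cost(P,S))\lesssim 2^{-2h}/m$. The stability improvement is the heart of the matter and parallels \Cref{sec: variance-stable}: I would first re-establish the separation statement of \Cref{lem: approxstability} for $k$-median — removing a center $a_i$ from a $\gamma$-approximate solution with $\gamma\le 1+\beta/2$ raises the cost by at least $\tfrac{\beta}{2}\opt_k$, and since $\cost(C_i,a_j)-\cost(C_i,a_i)=\sum_{p\in C_i}(\norm{p-a_j}-\norm{p-a_i})\le|C_i|\cost(a_i,a_j)$ by the triangle inequality, one obtains $\cost(a_i,a_j)\gtrsim\beta\opt_k/\min(|C_i|,|C_j|)$. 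From here the argument chain of \Cref{lem: cost-increase}, \Cref{lem: cost-increase-2}, and \Cref{lem: stable-var} is reproduced with distances in place of squared distances: if $x\in S$ interacts with $|I(x)|$ clusters of $B(S)$ then all but one of them pay $\Omega(\beta\opt_k)$, so $\cost(P,S)\gtrsim(|I(x)|-1)\beta\opt_k$, hence $\cost(P,S)\gtrsim(N(S)/k-1)\beta\opt_k$, and therefore $\var[X^{S,h}]\lesssim 2^{-2h}k/(m\,N(S)\,\beta)$ whenever $N(S)\ge 2k$.

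With both variance bounds and the $k$-median version of the net bound \Cref{lem: net-lemma}, the trade-off is identical to \Cref{sec: trading-net-variance-stable}: for $2^r\ge 2k$ one plays $\var[X^{S,h}]\lesssim 2^{-r-2h}k/(m\beta)$ against $\log|M_{2^{-h}}|\lesssim 2^{r+2h}\log(k\eps^{-1})$ to obtain $\E_g\sup|X^{S,h}|\lesssim\sqrt{k\log(k\eps^{-1})/(m\beta)}$, and for $2^r<2k$ one plays $\var[X^{S,h}]\lesssim 2^{-2h}/m$ against $\log|M_{2^{-h}}|\lesssim k2^{2h}\log(k\eps^{-1})$ to obtain $\sqrt{k\log(k\eps^{-1})/m}$. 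Adding the analogues of the $X^{S,\init}$ and $X^{S,\fin}$ estimates (\Cref{lem:Xinit}, \Cref{lem:Xfin}), summing over the $\hmax=O(\log\eps^{-1})$ scales, and rescaling $\eps$ by the $O(\polylog(k\eps^{-1}))$ factors lost in the band/type/center-class decomposition, shows that $m=\tilde{\Theta}(k\eps^{-2}\max(1,\beta^{-1}))$ samples make the $k$-median analogue of \eqref{eq: goal} at most $\eps/3$; Markov's inequality then yields an $\eps$-coreset with probability at least $2/3$.

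The step I expect to be the main obstacle is the cost-vector-net construction. The $k$-means net proof relies crucially on the Pythagorean decomposition \eqref{eq: pythagoras} of $\cost(q,x)=\norm{q-x}^2$ into a low-dimensional part, an orthogonal part, and a cross term controlled by the ``good subspace'' lemma — and this decomposition has no counterpart for $\cost(q,x)=\norm{q-x}$. One must instead discretize distances directly in the style of \cite{Cohen-AddadSS21}, projecting $\Omega$ onto a subspace chosen so that the nearest-center distances are approximately preserved and then netting the resulting point set, which also forces a recalibration of $\err(\cdot,\cdot)$ and of the scales (powers of $\eps$ and $2^h$) to the degree-one homogeneity of the $k$-median cost; this is what produces the different worst-case exponent $\min(\sqrt[3]{k},\eps^{-1})$, though it is immaterial for the $\tilde{\Theta}(k\eps^{-2})$ stable bound. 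A secondary, purely technical, nuisance is that the exact centroid identity \Cref{lem: centroid-fact}, used throughout the $k$-means argument to relate $\cost(C_j,S)$, $|C_j|\cost(a_j,S)$ and $\cost(C_j,A)$, is unavailable; every such invocation must be replaced by the two-sided estimate $|\cost(C_j,S)-|C_j|\cdot\cost(a_j,S)|\le\cost(C_j,a_j)$, whose additive terms are absorbed using the band/type bookkeeping and event $\mathcal{E}$. The conceptual content — stability forces interacting clusters to be expensive, which drives the normalized variance down by $\beta\cdot N(S)/k$ — is unchanged.
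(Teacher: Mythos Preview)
Your proposal is correct and tracks the paper's own argument essentially step for step: the same decomposition into far/close, low/high cost, bands, types, center classes; the same symmetrization and chaining skeleton; the same stability-driven variance bound $\var[X^{S,h}]\lesssim 2^{-2h}k/(m\,N(S)\,\beta)$ obtained via the $k$-median analogue of \Cref{lem: approxstability} and \Cref{lem: cost-increase}; and the identical two-case trade-off $2^r\ge 2k$ versus $2^r<2k$ leading to $m=\tilde{\Theta}(k\eps^{-2}\max(1,\beta^{-1}))$.

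The one place where your plan diverges from the paper is the net construction, and here the paper's route is \emph{simpler} than you anticipate. You flag the loss of the Pythagorean decomposition \eqref{eq: pythagoras} as the main obstacle and propose to discretize distances directly. The paper instead reuses the $k$-means net construction \emph{as is}: it still applies the subspace/Pythagoras machinery to approximate \emph{squared} distances up to error $\alpha\sqrt{d^2(p,S)\,d^2(p,A)}$, and then observes that this automatically yields a distance approximation $|d(p,s)-d(p,s')|\le O(\alpha)\,d(p,A)$, which is exactly the net guarantee needed for the simpler $k$-median error function $\err(p)=\cost(p,A)+\Delta_p$. So no new discretization idea is required; the $k$-means nets are used black-box. (The resulting net bound picks up a $2^{2t}$ rather than $2^t$ in the second argument of the $\min$, but for the stable case only the first argument $2^r+k\alpha^{-2}$ is used, so the trade-off you wrote down is unchanged.) Your remark about replacing the centroid identity by the triangle-inequality estimate $|\cost(C_j,S)-|C_j|\cost(a_j,S)|\le\cost(C_j,a_j)$ is exactly how the paper handles it.
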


\begin{theorem}[Worst case bound on $k$-Median Coreset]\label{thm:median2}
    Let $P \subset \R^d$ be a set of $n$ points. Let $\Omega$ be a weighted set of $m = \tilde{\Theta}(\min(k^{4/3} \eps^{-2}, k \eps^{-3}))$ points obtained using Sensitivity Sampling; then $\Omega$ is an $\eps$-coreset of $P$ with probability at least $2/3$.
\end{theorem}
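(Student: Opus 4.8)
The plan is to re-run the chaining analysis of \Cref{sec: analysis} verbatim with $\cost(p,q)$ now denoting $\|p-q\|$ instead of $\|p-q\|^2$, and to re-derive only the two quantities that control the final size: the cost-vector-net size at scale $\alpha$ and the variance of the scale-$h$ Gaussian increment $X^{S,h}$. All of the scaffolding carries over with cosmetic changes: redefining $\mu$ in \eqref{defn:mu} with distances keeps \Cref{lem: weight-bound}, \Cref{lem:E}, and \Cref{lem: event-e-coreset-cost} intact (identical Bernstein computations); the reduction to close, high-cost clusters (\Cref{lem: goal-far}, \Cref{lem: tiny-cluster}, \Cref{lem: high-cluster}), the band/type/interaction-number decomposition of \Cref{sec: classifying}, the symmetrization \Cref{lem: sym-apply}, and the chaining skeleton \eqref{eq: chain}--\eqref{eq: xsh} together with the $X^{S,\init}$ and $X^{S,\fin}$ bounds (\Cref{lem:Xinit}, \Cref{lem:Xfin}) all go through unchanged. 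The one structural change is the far/close threshold: since the triangle inequality for distances carries no cross term, a cluster with $\cost(a_j,S)>\Delta_j\eps^{-1}$ is already ``far'' --- its cost to $S$ equals $|C_j|\cost(a_j,S)$ up to $\pm\,\eps\,\cost(C_j,S)$ --- so there are only $\tmax=O(\log\eps^{-1})$ types, i.e.\ $2^{t}\le\eps^{-1}$, and the error function simplifies to $\err(q,S)\asymp\cost(q,A)+\Delta_q$ (no $\sqrt{\cost(q,S)\cdot(\cdot)}$ terms). With this $\err$, the single-center net construction of \Cref{app: nets} applies with the same parameters, giving $\log|M_\alpha|=\tilde O\!\big(\min(2^r+k\alpha^{-2},\,2^{t}k\alpha^{-2})\big)$ as in \Cref{lem: net-lemma}.

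The gain over $k$-means is entirely in the variance. Imitating \Cref{lem: var-first}, $\var[X^{S,h}]\lesssim 2^{-2h}\cost(P,S)^{-2}\sum_{q\in B(S)\cap\Omega}w_q^2\err(q,S)^2$; since $\err(q,S)^2\lesssim\cost(q,A)^2+\Delta_q^2$ and $w_q\cost(q,A),\,w_q\Delta_q\lesssim k\cost(C_j,A)/m\lesssim k2^bT/m$ for $q$ in a band-$b$ high-cost cluster, the sum is $\lesssim (k2^bT/m)\sum_q w_q(\cost(q,A)+\Delta_q)\lesssim k\,2^{2b}T^2k_{B(S)}/m$. The structured-cost bound $\cost(P,S)\gtrsim k_{B(S)}2^{b+t}T$ holds for $k$-median exactly as in \Cref{lem: structured-cost} (a type-$t$ cluster costs $\approx 2^{t}\cost(C_j,A)$), and plugging it in yields
\[
\var[X^{S,h}]\;\lesssim\;\frac{2^{-2h-2t}\,k}{m\,k_{B(S)}}\;\le\;\frac{2^{-2h-2t}\,k^2}{m\,N(S)},
\]
the extra $2^{-t}$ relative to \Cref{lem: worst-case-var} arising precisely because $\err$ no longer carries a $\cost(q,S)$ term. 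Together with the trivial bound $\var[X^{S,h}]\lesssim 2^{-2h}/m$ (analog of \Cref{lem: easy-var}), this is all that is needed.

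The final trade-off is then as in \Cref{sec: completing-the-proof-kmeans}. For a center class with $2^r\ge 2k$, combining $\var[X^{S,h}]\lesssim\min(1,\,k^2 2^{-2t-r})\,2^{-2h}/m$ with $\log|M_{2^{-h}}|\lesssim\big(\min(2^r,k2^{t})+k\big)2^{2h}$ via \eqref{eq: combine-net-variance} gives
\[
\E_g\sup_{S\inSR}|X^{S,h}|\;\lesssim\;\sqrt{\frac{\min\!\big(k2^{t},\,k^2 2^{-2t}\big)\,\polylog(k\eps^{-1})}{m}},
\]
and $\min(k2^{t},k^22^{-2t})$ is maximized at $2^{t}=k^{1/3}$ with value $k^{4/3}$, while it is also always at most $k2^{t}\le k\eps^{-1}$. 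The class $2^r<2k$ needs only $m=\tilde\Omega(k\eps^{-2})$ by the same computation as in \Cref{sec: trading-net-variance-stable}. Hence $m=\tilde\Omega\!\big(\min(k^{4/3}\eps^{-2},\,k\eps^{-3})\big)$ makes every $\E_g\sup|X^{S,h}|$ at most $\eps/(6\hmax)$; summing over the $O(\log^3(k\eps^{-1}))$ triples $(b,t,r)$ and the $\hmax$ scales, rescaling $\eps$ by polylog factors, and applying Markov's inequality exactly as in \eqref{eq: goal} yields the $2/3$-probability $\eps$-coreset.

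The step I expect to be the real obstacle is re-verifying the net construction of \Cref{app: nets}: the Pythagorean identity \eqref{eq: pythagoras} fails for distances, so one must either net the squared distance $\|q-x\|^2$ multiplicatively and take square roots, or replace that decomposition by $\|q-x\|=\|q-\Pi x\|+(\|q-x\|-\|q-\Pi x\|)$ and re-derive the ``Term~4 is small'' estimates (\Cref{clm: term4-bound}, \Cref{clm:bounding-term4-2}) from the approximate-nearest-center property. I expect this to succeed, but it is where genuine care is needed, and it is also where one must confirm the projection subspace has dimension $O(\min(2^{t},N_1)\,\alpha^{-2})$ --- the exponent $2^t$ rather than $2^{2t}$ is exactly what produces the $k^{4/3}$ bound instead of the $k^{3/2}$ bound of the $k$-means case.
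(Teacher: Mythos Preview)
Your final trade-off $\min(k2^{t},k^{2}2^{-2t})\le k^{4/3}$ and the resulting size $\tilde\Theta(\min(k^{4/3}\eps^{-2},k\eps^{-3}))$ match the paper, but the two ingredients you feed into \eqref{eq: combine-net-variance} are \emph{both} off by a factor of $2^{t}$, in opposite directions, compared with what the paper actually proves. The paper's net bound for $k$-median (\Cref{lem: net-lemma-k-median}) is
\[
\log|M_\alpha|\;\lesssim\;\min\bigl(2^{r}+k\alpha^{-2},\;2^{2t}k\alpha^{-2}\bigr)\cdot\log(k\eps^{-1}\alpha^{-1}),
\]
i.e.\ $2^{2t}$, not the $2^{t}$ you claim; and the paper's ``easy'' variance (\Cref{lem: var-easy-kmedian}) is $\var[X^{S,h}]\lesssim 2^{-2h}/(m\,2^{t})$, not the $2^{-2h}/m$ you carry over from \Cref{lem: easy-var}. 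Your two deviations cancel in the product, which is why you land on the right answer, but neither of your individual claims is what the paper establishes.

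This matters precisely at the step you flag as the ``real obstacle.'' The paper does \emph{not} try to build a distance net directly and does not attempt to get a $2^{t}\alpha^{-2}$-dimensional subspace: it approximates \emph{squared} distances exactly as in \Cref{app: nets} (so the Pythagorean identity \eqref{eq: pythagoras} is available unchanged), then takes square roots, accepting the $2^{2t}$ blow-up that comes from the type being defined on distances rather than squared distances. The missing factor of $2^{-t}$ is recovered on the variance side: since $\err(q,S)\asymp\cost(q,A)+\Delta_q$ carries no $\cost(q,S)$ term, and since a type-$t$ cluster satisfies $\cost(P,S)\gtrsim 2^{t}\sum_{q\in B(S)\cap\Omega}w_q(\cost(q,A)+\Delta_q)$, one gets $\var[X^{S,h}]\lesssim 2^{-2h}/(m\,2^{t})$ directly. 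Combining $(\text{net }2^{2t},\text{ easy var }2^{-t})$ with your (correct) worst-case variance $k^{2}2^{-2t}/(mN(S))$ and the net bound $2^{r}$ yields the same $\min(k2^{t},k^{2}2^{-2t})$ you obtained.

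In short: your route requires the $O(2^{t}\alpha^{-2})$-dimensional net that you yourself identify as the sticking point, and the paper gives no evidence it is achievable. The paper's route sidesteps the Pythagorean-for-distances problem entirely; the price is a larger net, paid for by a sharper easy-variance bound that you overlooked. If you adopt the paper's two lemmas in place of your two claims, the rest of your argument goes through as written.
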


The worst case bounds match the lower bound by \cite{HLW23} and improve over the previously best known upper bound of $\tilde{O}(k\cdot \varepsilon^{-4})$ by \cite{huang2020coresets} for Sensitivity Sampling.

The analysis of Sensitivity Sampling for $k$-median is almost identical to the one described in complete detail for $k$-means. The high-level reason why we get better (worst-case) coreset bounds for $k$-median is that if a candidate center $s$ serves points of a cluster at distances $[x,x+1]$, there are $\frac{x+1}{x\cdot \varepsilon}$ different distance-based costs we need to consider. In contrast, squared distances induce up to $\left(\frac{x+1}{x\cdot \varepsilon}\right)^2$ different costs.

The arguments are analogous compared to those used for $k$-means, save that whenever we apply the inequality
$$|d^2(a,c)-d^2(b,c)|\leq \varepsilon\cdot d^2(a,c) + \left(1+\frac{1}{\varepsilon}\right)\cdot d^2(a,b),$$
we now apply a standard triangle inequality.
Both the triangle inequality and the approximate triangle inequality are tight whenever $a,b$, and $c$ are collinear. It turns out that following the $k$-means analysis while using the triangle inequality (in place of the approximate triangle inequality) leads to the claimed coreset bounds for $k$-median.

 Below, we sketch the changes that need to be made to the analysis presented in the paper's main section to obtain the coreset bounds claimed for $k$-median.

\noindent \paragraph{Estimator and Reduction to a Gaussian Process:} As before, we wish to show the concentration of the cost estimator 
 $$\frac{1}{|\Omega|}\sum_{q\in \Omega}w_q\cdot \cost(q,S).$$
Following normalization, a grouping of clusters, centers, and symmetrization, we wish to bound the quantity
$$\E_{\Omega} \E_g \sup_{S}X^S(\Omega, g) =  \E_{\Omega} \E_g \sup_{S}\sum_{i \in [m]} \frac{g_i w_{q_i} u_i^S(\Omega)}{\cost(P,S)}.$$
As before $u^S(\Omega)$ is the cost vector of coreset points from the group $B(S)$ (defined analogously). We show the following lemma.
\begin{lemma}
    For any fixed $\Omega$ satisfying the bounds \cref{thm:median1} for $\beta$-stable instances and \cref{thm:median2} for worst-case instances, we have 
    $ \E_{g} \sup_{S} \left\lvert X^S(\Omega,g)\right\rvert \lesssim \eps $.
\end{lemma}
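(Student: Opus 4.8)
The plan is to transplant the $k$-means analysis behind \Cref{lem: omega-event-e} essentially verbatim, with one structural substitution: wherever that proof invokes the approximate triangle inequality $|d^2(a,c)-d^2(b,c)|\le\eps\,d^2(a,c)+(1+\eps^{-1})d^2(a,b)$ (\Cref{fact: triangle}), the $k$-median proof uses the plain triangle inequality $|d(a,c)-d(b,c)|\le d(a,b)$. First I would carry out the same preliminary reductions on the fixed $\Omega$: classify the clusters of the constant-factor solution $A$ into far/close, low/high cost, bands and types, and classify $\mathcal S$ into center classes $\mathcal S(r)$ by interaction number, all with the distance-appropriate scalings (now $C_j$ is far from $S$ iff $d(a_j,S)>\Delta_j\eps^{-1}$, $\tmax=O(\log\eps^{-1})$, and $\hmax=\lceil\log_2\eps^{-1}\rceil$, since distances need only be resolved to additive precision $\eps$ rather than $\eps^2$). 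The good event $\mathcal E$ of \Cref{def: event-E} is a statement about $A$ alone, so \Cref{lem:E} and the deterministic treatments of far clusters (\Cref{lem: goal-far}) and low-cost clusters (\Cref{lem: tiny-cluster}) survive the substitution; the remaining work localizes to a fixed band/type pair $B(S)$ and a fixed center class $\mathcal S(r)$.

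The core is the chaining decomposition $u^S=u^{S,0}+\sum_{h=1}^{\hmax}(u^{S,h}-u^{S,h-1})+(u^S-u^{S,\hmax})$, inducing $X^S=X^{S,\init}+\sum_h X^{S,h}+X^{S,\fin}$, where $u^{S,h}$ lies in a $(2^{-h},\mathcal S(r))$ cost-vector net built from the $k$-median error function (the analogue of \Cref{def: clustering-nets}, now obtained from the plain triangle inequality). The key qualitative difference is that this error function is \emph{linear} --- it is bounded by $O(d(q,A)+\Delta_q)$ and, crucially, carries no term depending on $d(q,S)$ --- whereas the $k$-means error function carries cross terms $\sqrt{\cost(q,S)\,(\cost(q,A)+\Delta_q)}$. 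The $\init$ and $\fin$ pieces are dispatched by the same Cauchy--Schwarz and maximum-of-Gaussians estimates as \Cref{lem:Xinit} and \Cref{lem:Xfin}, and for the scale-$h$ piece one again uses $\E_g\sup_S|X^{S,h}|\lesssim\sigma_h\sqrt{\log|M_{2^{-h}}|}$, so everything reduces to bounding $\sigma_h^2\ge\var[X^{S,h}]$ and $\log|M_{2^{-h}}|$ and trading them off.

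For the net, the construction of \Cref{subsec: net-bound-1} (the warm-up \Cref{lem: warm-up} plus the dimension-reduction \Cref{lem: subspace}) goes through unchanged --- one keeps the Pythagorean split of the \emph{squared} distance, approximates the three squared terms and bounds the cross ``Term 4'' by the $\alpha$-good subspace exactly as in \Cref{clm: term4-bound} and \Cref{clm:bounding-term4-2} --- and yields the same bound $\log|M_{2^{-h}}|\lesssim k\min(k_{B(S)},2^t)\,2^{2h}\,\polylog(k\eps^{-1})$ as in the $k$-means case (the squared-distance cross-term argument is exactly where a center serving cluster points at distances $\sim[\rho,\rho+1]$ induces only $\Theta(\rho\eps^{-1})$ distinct costs rather than $\Theta(\rho^2\eps^{-2})$, but this affects only polylogarithmic factors in $\log|M|$). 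For the variance, however, running the proof of \Cref{lem: worst-case-var} with the linear error function --- so that $\err(q,S)^2\lesssim(d(q,A)+\Delta_q)^2$, without the factor $2^t$ that appears in the $k$-means bound --- while keeping the type-$t$ structured-cost lower bound $\cost(P,S)\gtrsim k_{B(S)}2^{b+t}T$ (\Cref{lem: structured-cost}, which does not use squaring) improves the variance by a factor $\sim 2^t$: $\var[X^{S,h}]\lesssim 2^{-2h}\min(1,\;k/(k_{B(S)}2^{2t}))/m$. Plugging this into $\sigma_h\sqrt{\log|M_{2^{-h}}|}$ and optimizing over $k_{B(S)}$ and $t$ (with $2^t\le 2^{\tmax}\lesssim\eps^{-1}$ now) gives a per-scale contribution $\lesssim\sqrt{k\cdot\min(k^{1/3},\eps^{-1})\cdot\polylog(k\eps^{-1})/m}$; summing over the $O(\hmax)$ scales, this is $\le\eps$ once $m=\tilde{\Theta}(\min(k^{4/3}\eps^{-2},\,k\eps^{-3}))$. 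For $\Omega(1)$-stable inputs the stability lemmas \Cref{lem: cost-increase} and \Cref{lem: cost-increase-2} (and the separation estimate \Cref{lem: approxstability}, with its $k$-median-appropriate argument) carry over, giving $\var[X^{S,h}]\lesssim 2^{-2h}k/(m\,N(S)\,\beta)$ for $N(S)\ge 2k$; the same split of scales as in the proof of \Cref{lem: stable-xsh} (scales $h\le(\log k)/2$ via $\var\lesssim 2^{-2h}k/m$ and a polynomial-size net, scales $h>(\log k)/2$ via the stability-improved variance) then yields $\lesssim\sqrt{k\cdot\polylog(k\eps^{-1})/(m\beta)}\le\eps$ once $m=\tilde{\Theta}(k\eps^{-2}\max(1,\beta^{-1}))$. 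Adding back the $\init$, $\fin$, far- and low-cost contributions gives $\E_g\sup_S|X^S(\Omega,g)|\lesssim\eps$.

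I expect the main obstacle to be the variance analysis rather than the net analysis: one must confirm that the plain triangle inequality really does produce an error function with no $d(q,S)$-dependence --- this is precisely what removes the $2^t$ inflation of the variance at type $t$ and, in the net/variance trade-off, turns the $k$-means exponent $\min(\sqrt k,2^t)$ into the $k$-median exponent $\min(k^{1/3},2^t)$ --- and that this sharper error function remains compatible with the net construction and with the handling of far and low-cost clusters. Beyond that, and the (routine but lengthy) re-verification of the stability machinery in the $k$-median metric, every step --- symmetrization, the good event, the $\init$/$\fin$ bounds, the chaining trade-off --- is a mechanical substitution of $d$ for $d^2$.
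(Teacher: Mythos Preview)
Your overall strategy matches the paper's: transplant the $k$-means chaining analysis, replacing the approximate triangle inequality by the plain one. The far/close split, the low/high-cost split, bands/types/center classes, symmetrization, the $\init$/$\fin$ pieces, and the stable-instance argument all carry over essentially as you describe.

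There is, however, a real gap in your worst-case net/variance trade-off. You assert that the net construction ``goes through unchanged'' and yields the \emph{same} bound $\log|M_{2^{-h}}|\lesssim k\min(k_{B(S)},2^t)\,2^{2h}$ as for $k$-means. This is not right. For $k$-median, type $t$ records a \emph{distance} ratio $d(a_j,S)\approx 2^t\Delta_j$, whereas the Pythagorean split and the Term-4 bound of \Cref{clm: term4-bound} operate on squared distances. To obtain distance error $\alpha(d(q,A)+\Delta_q)$ one needs squared-distance error $\alpha\, d(q,S)\,d(q,A)$; since $d(q,S)\lesssim 2^t(d(q,A)+\Delta_q)$ at type $t$, this forces an $(\alpha 2^{-t})$-good subspace (rather than $\alpha 2^{-t/2}$-good), hence dimension $O(2^{2t}\alpha^{-2})$ and net size $\exp(\tilde O(k\,2^{2t}\,2^{2h}))$ --- larger by a factor $2^t$ than you claim, exactly as in the paper's \Cref{lem: net-lemma-k-median}. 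With the correct net $k2^{2t}$ and your easy variance bound $\var\lesssim 2^{-2h}/m$, the product is $k2^{2t}\le k\eps^{-2}$, which only yields $m=\tilde O(k\eps^{-4})$.

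The repair is that the linear error function also sharpens the \emph{easy} variance bound, not just the structured one. Because $\err(q)^2\lesssim(d(q,A)+\Delta_q)^2$ has no $d(q,S)$-term, and because type-$t$ clusters satisfy $\cost(C_j,S)\gtrsim 2^t\cost(C_j,A)$, one gets $\sum_{q\in B(S)\cap\Omega}w_q\,(d(q,A)+\Delta_q)\lesssim\cost(P,S)/2^t$ and hence $\var[X^{S,h}]\lesssim 2^{-2h}/(m\,2^t)$ (this is the paper's \Cref{lem: var-easy-kmedian}). Pairing this with the correct net $k2^{2t}$ gives product $k2^t$; pairing the interaction-based net $2^r$ with your structured variance $k^2/(2^r 2^{2t})$ gives $k^2 2^{-2t}$; and $\min(k2^t,\,k^2 2^{-2t})\le k^{4/3}$ recovers the claimed bound. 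Your two errors happen to cancel in the final arithmetic, but the argument as written does not establish the net bound you use.
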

Proving this implies that $\Omega$ is a coreset.

The overall proof strategy is still the same. We will give additional details for these steps in the following. 
\begin{itemize}
    \item We analyze the contribution of points that are far and of points that are close separately. 
    \item For the close points, we characterize the contribution by analogous notions of bands, types, and center classes.
    \item We give refined variance bounds and net sizes for each regime.
\end{itemize}
With these notions in place, we complete the proof by integrating it into a chaining analysis.

\noindent \textbf{Far and Close Points.} 
Given a set of centers $S$, we say that cluster $C_j$ with center $a_j$ is far from $S$ if  $ \cost(a_j, S) \geq \Delta_j \eps^{-1}$; otherwise we say the cluster is close. All points in far clusters(resp. far) are called far (resp. close) points.  One can show by an analog of \Cref{lem: goal-far} that the coreset preserves the cost of far points. This holds, regardless of stability assumptions.

For the more involved case of bounding the contribution of close points, we follow the analysis in \Cref{sec: analysis} described as follows.

\noindent \textbf{High and Low Cost Clusters.} One can handle close clusters whose cost in $A$ is $T := O(\eps^2 k^{-1}\cdot \cost(P,A))$ easily as their total cost is at most $O(\eps \cost(P,A))$ and so they do not introduce significant error in the coreset's estimate of the cost.

The analysis of clusters with cost larger than $T$ (i.e., $\Omega(\eps^2 k^{-1} \cost(P,A))$) again requires more care. In particular, we again classify the clusters into bands and types depending on their costs to $A$ and to $S$. 

\noindent \textbf{Bands, Types, and Center Classes.} 
One now defines bands, types, and center classes as in \Cref{sec:further-classifying}. To recall, we say that cluster $C_j$ is in Band-$b$ if $\cost(C_j,A) \approx 2^b T$ and $\cost(a_j,S) \approx 2^t \Delta_j$ (which if $t$  is large essentially means $\cost(C_j,S) = \Omega(2^t \cost(C_j,A))$). Also, define $B_{b,t}(S)$ to be the set of all clusters that are from band $b$ and of type $t$.

The classification of centers into center classes is identical to that described in \Cref{sec:further-classifying}. We define center cluster interaction, signature and interaction numbers in exactly the same way. We then define the center class $\mathcal{S}(r)$ to be the set of centers with interaction number roughly $2^r$.

\noindent\textbf{Chaining and Cost Vector Nets.}

There are two components. First, we require bounds on the variance of the estimator used for the $k$-median objective. Second, we require bounds on the net sizes.
Adapting the variances bounds is straightforward, after adjusting for using distances, rather than squared Euclidean distances.

The nets for the cost vectors $u^S$ have a slightly different error guarantee defined in the following. However, their construction is essentially identical to that of the cost vector nets used in the analysis of $k$-means coresets.

\begin{definition}[Cost Vector Nets]  Consider a fixed set $\Omega$. For a real $\alpha \in (0,1/2]$, an $\alpha$-net  $M_\alpha(\Omega) \subset \mathbb R^m$ is a finite set of vectors with the following properties. For any set of centers $S \in \mathcal{S}(r)$ there exists some $v \in M_\alpha(\Omega)$ which $\alpha$-approximates the vector $u^S(\Omega)$ as follows. For each $i \in [m]$:
\begin{enumerate}
    \item If $q_i \in  B(S)$ then $|v_i - u^S_i(\Omega)|  \leq \alpha \cdot \err(q_i)$ where  for $p \in P$  we define
        $\err(p) :=  (\cost(p,A) + \Delta_{p})$.
    \item If $q_i \notin B(S)$ then $v_i = 0$ . Note that we also have $u^S_i(\Omega) = 0$ in this case by definition.
\end{enumerate}
\end{definition}

To highlight the main difference, note that the coordinates of  $u^{S}$ represent distances and that the net vector $v$ approximates the distance of $p$ to $S$ with error $\alpha (\dist(p,A) + \Delta_p)$. We claim that we can use the nets constructed for the $k$-means problem in almost a black-box fashion to obtain such nets for $k$-median. In particular, we get a net with the size given in the below lemma. We remark that the net size is essentially identical with the only difference that the second net size is larger by a factor of $2^t$ (than \Cref{lem: net-lemma}) because types are defined with respect to distances rather than squared distances.  \begin{lemma}\label{lem: net-lemma-k-median}
 For any $\alpha \in (0,1/2)$, there is an $(\alpha, \mathcal{S}(r))$-net $M_\alpha$ with cardinality \[|M_\alpha| = \exp(O(\min(2^r+ k\alpha^{-2}, 2^{2t} k \alpha^{-2})\cdot \log(k \alpha^{-1}{\eps^{-1}}))).\]
\end{lemma}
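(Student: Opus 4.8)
The plan is to reuse the $k$-means net construction of \Cref{app: nets} almost verbatim, with only two substantive modifications: the good-subset precision is sharpened from $\alpha 2^{-t/2}$ to $\alpha 2^{-t}$, and the final net vectors are obtained by taking coordinatewise square roots of approximate \emph{squared} distances. As in the $k$-means case, I would first reduce to single-center nets exactly as in \Cref{lem: net-lemma-signature}: fix a signature $(N_1,\dots,N_k)$, set $N=\sum_i N_i$, and build for each $j\in[k]$ an $(\alpha,\mathcal T,j)$ single-center net of size $\exp(O(\min(2^{2t}\alpha^{-2},\,N_j+\alpha^{-2})\log(k\eps^{-1}\alpha^{-1})))$; combining the $k$ single-center nets by coordinatewise minima and then guessing the subset $B(S)$ of clusters (as in the proof of \Cref{lem: net-lemma-signature}) yields the bound claimed in \Cref{lem: net-lemma-k-median}, since there are only $\exp(O(k\log k))$ signatures and $N(S)\in[2^r,2^{r+1})$ for $S\in\mathcal S(r)$.

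For each single-center net I would keep the Pythagorean decomposition \eqref{eq: pythagoras}, which writes $\dist(q,x_1)^2$ as the sum of Terms~1--3 minus twice Term~4, and net Terms~1--3 exactly as in \Cref{clm: term-1,clm: term-3} via the Euclidean-ball nets of \Cref{lem: eps-net} and the warm-up \Cref{lem: warm-up}, but now (a) with closeness radius $R_q^2=O(\dist(q,A)^2+\Delta_q^2\eps^{-2})$ matching the $k$-median closeness condition $\dist(a_j,S)\le\Delta_j\eps^{-1}$, and (b) at precision $\sim\alpha^2\eps^2$ rather than $\sim\alpha\eps^2$, which only costs an extra $\log(1/\alpha)$ factor inside the exponent. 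This makes each of Terms~1--3 estimated to additive error $O(\alpha^2\err(q)^2)$, where $\err(q)=\dist(q,A)+\Delta_q$ is the $k$-median error term. The genuinely new step is the passage from a square to a distance: using $|\sqrt a-\sqrt b|\le \sqrt{|a-b|}$ and $|\sqrt a-\sqrt b|\le |a-b|/\dist(q,S)$, an estimate $c_0$ of $\dist(q,x_1)^2$ with $|c_0-\dist(q,x_1)^2|\le O(\alpha^2\err(q)^2+\alpha\,\err(q)\dist(q,x_1))$ gives $|\sqrt{c_0}-\dist(q,x_1)|\le O(\alpha\,\err(q))$ whenever $\dist(q,x_1)\ge\alpha\,\err(q)$; when $\dist(q,S)<\alpha\,\err(q)$ one simply sets the coordinate to $0$, which already satisfies both conditions of \Cref{def: single-center-net}. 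After rescaling $\alpha$ this gives exactly the $k$-median single-center-net guarantee.

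It remains to control Term~4, and this is where the two bounds split and where I expect the only real care to be needed. For Bound~2 I would, as in \Cref{clm:bounding-term4-2}, put the centers $a_i$ of the $N_1$ clusters interacting with $x_1$ into the seed $U_0$ and augment by an $\alpha$-good set of size $O(\alpha^{-2})$ (subspace dimension $O(N_1+\alpha^{-2})$); then $\|q-\Pi q\|\le\dist(q,a_i)$ together with the approximate-nearest-center property of \Cref{def: interaction} gives Term~$4=O(\alpha\,\err(q)\dist(q,x_1))$ after the same case split on whether $C_i$ interacts with $x_1$. For Bound~1 I would use an $\alpha 2^{-t}$-good set with empty seed, so the subspace has dimension $O(2^{2t}\alpha^{-2})$; then $\|q-\Pi q\|\le 2\dist(q,x_1)$ gives Term~$4\lesssim \alpha 2^{-t}\dist(q,x_1)^2$, and since a close type-$t$ cluster has $\dist(q,x_1)\le\dist(q,S)\lesssim 2^t\err(q)$ whenever $x_1$ is an approximate nearest center, this is $\lesssim\alpha\,\err(q)\dist(q,x_1)$, as required. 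The point to get right — and the only essential difference from $k$-means — is precisely this chain: in the $k$-means error term $\err(q,S)$ the cross factors $\sqrt{\cost(q,S)\cost(q,A)}$ and $\sqrt{\cost(q,S)\Delta_q}$ absorb a factor $2^{t/2}$, so an $\alpha 2^{-t/2}$-good subset suffices there, whereas the $k$-median error term $\dist(q,A)+\Delta_q$ carries no factor of $\dist(q,S)$, so controlling Term~4 at the needed precision forces the good-subset parameter down to $\alpha 2^{-t}$ and the dimension up to $O(2^{2t}\alpha^{-2})$ — exactly the origin of the extra $2^t$ in the second branch of \Cref{lem: net-lemma-k-median}. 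The remaining bookkeeping — that Terms~1--3 are netted finely enough, that the Minkowski-sum/coordinate-min combination of \eqref{eq: mink-sum} preserves the guarantees, and that the product of cardinalities collapses to $\exp(O(\min(2^{2t}\alpha^{-2},N_1+\alpha^{-2})\log(k\eps^{-1}\alpha^{-1})))$ — is identical to the $k$-means case and would be imported with the square-root translation applied at the very end.
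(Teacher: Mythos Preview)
Your proposal is correct and follows essentially the same route as the paper's own proof: reuse the $k$-means net construction for squared distances and then pass to distances by taking square roots, with the $2^{2t}$ in place of $2^t$ arising because $k$-median types are defined in terms of distances rather than squared distances. The paper's proof is far terser than yours (it just records the squared-distance error bound \eqref{eq:kmediannet} and observes that dividing by $d(p,s)+d(p,s')$ converts it to the required distance error), whereas you spell out explicitly the good-subset precision change $\alpha 2^{-t/2}\to\alpha 2^{-t}$, the refined Term~1--3 precision, and the case split for small $\dist(q,S)$; these are exactly the details the paper is suppressing.
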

\begin{proof}
The proof is identical to that of \cref{lem: net-lemma-signature}. The only moral difference is that, having approximated squared distances between point $p$ to a candidate center $s$ with an approximating candidate center $s'$ up to an error 
\begin{equation}
\label{eq:kmediannet}
|d^2(p,s)-d^2(p,s')|\leq \varepsilon\cdot \sqrt{d^2(p,S)\cdot d^2(p,A)}    
\end{equation}
we then obtain for the distances
\[|d(p,s)-d(p,s')|\leq O(\varepsilon)\cdot d(p,s).\]
Up to obtaining the bound for \cref{eq:kmediannet}, the proof is identical.
\end{proof}

The final variance bounds are almost identical to their counterparts for $k$-means. The only difference is that, due to the finer net guarantee, the variance for worst-case inputs decreases inversely proportionate to the squared cost increase, rather than inversely proportionate to the cost increase, see \cref{lem: var-worst-case-kmedian}
below. This is due to the error bounds given by the $k$-median nets in \cref{eq:kmediannet} being smaller than that of $k$-means. 
Indeed, this difference is why the worst-case bounds for $k$-median are slightly better than those of $k$-means. The proofs are otherwise completely identical.

We have the following analog of \Cref{lem: easy-var}.
\begin{lemma}[Variance Bound in Terms of the Type] \label{lem: var-easy-kmedian}Let 
$\var[X^{S,h}] \lesssim 2^{-2h}/(m2^t).$
\end{lemma}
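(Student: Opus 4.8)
The goal is the $k$-median analog of \Cref{lem: easy-var}: for any set of centers $S$ and integer $h\geq 1$, $\var[X^{S,h}] \lesssim 2^{-2h}/(m2^t)$. I would follow the same two-step structure used in \Cref{lem: var-first} and \Cref{lem: easy-var}, but track the sharper error guarantee of the $k$-median cost-vector nets. First, since $X^{S,h}$ is a sum of independent centered Gaussians (\Cref{fact: gaussian}), it is centered Gaussian with
\[
\var[X^{S,h}] = \frac{1}{\cost(P,S)^2}\sum_{i\in[m]} w_{q_i}^2\,(u_i^{S,h}-u_i^{S,h-1})^2 .
\]
By the definition of the $k$-median net vectors, for $q_i\in B(S)\cap\Omega$ we have $|u_i^{S,h}-u_i^{S,h-1}|\le |u_i^{S,h}-u_i^S|+|u_i^{S,h-1}-u_i^S| \lesssim 2^{-h}\err(q_i)$ with $\err(q_i)=\dist(q_i,A)+\Delta_{q_i}$, and the summand vanishes for $q_i\notin B(S)$. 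Hence
\[
\var[X^{S,h}] \lesssim \frac{2^{-2h}}{\cost(P,S)^2}\sum_{q\in B(S)\cap\Omega} w_q^2\,(\dist(q,A)+\Delta_q)^2 .
\]

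The second step is to bound $\sum_{q\in B(S)\cap\Omega} w_q^2(\dist(q,A)+\Delta_q)^2$. Using the $k$-median weight bounds (the distance-analog of \Cref{lem: weight-bound}: $w_q\dist(q,A)\lesssim \cost(P,A)/m$ and $w_q\Delta_q\lesssim\cost(P,A)/m$), one factor of $(\dist(q,A)+\Delta_q)$ together with one $w_q$ contributes $\lesssim \cost(P,A)/m$, leaving
\[
\sum_{q\in B(S)\cap\Omega} w_q^2(\dist(q,A)+\Delta_q)^2 \;\lesssim\; \frac{\cost(P,A)}{m}\sum_{q\in\Omega} w_q(\dist(q,A)+\Delta_q) \;\lesssim\; \frac{\cost(P,A)}{m}\cdot\cost(P,A),
\]
where the last step uses \Cref{lem: event-e-coreset-cost} (that $\costom(P,A)\lesssim\cost(P,A)$, valid under event $\mathcal E$) and that $A$ is an $O(1)$-approximation for $k$-median, so $\cost(P,A)\lesssim \cost(P,S)$ for any $S$. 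Plugging back gives the crude bound $\var[X^{S,h}]\lesssim 2^{-2h}/m$, which is the $k$-median counterpart of \Cref{lem: easy-var} without the $2^t$ savings.

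To get the extra $2^{-t}$ factor, I would restrict to clusters of type $t$ and exploit that each $C_j\in B(S)$ is close and of type $t$, so $\dist(a_j,S)\approx 2^t\Delta_j$ and hence $\cost(C_j,S)\gtrsim 2^t\cost(C_j,A)$ (the $k$-median analog of \Cref{lem: structured-cost} / \Cref{lem: cost-from-center}, which needs $t$ larger than a constant; the small-$t$ case follows trivially from the crude bound as in \Cref{lem: worst-case-var}). The key difference from $k$-means is that the $k$-median net error is $\alpha\cdot(\dist(p,A)+\Delta_p)$ rather than $\alpha\cdot\err(p,S)$ with the square-root cross terms, so when we carry out the analog of the $\Gamma$-computation in \Cref{lem: worst-case-var} we should not rescale $\cost(q,S)$ by $2^t$ in the error term; instead the cost lower bound $\cost(P,S)^2\gtrsim (k_{B(S)}\cdot 2^{t}\,2^bT)^2$ enters with its full $2^{2t}$ in the denominator while the numerator $\Gamma\lesssim k\,2^{2b+t}T^2 k_{B(S)}/m$ picks up only one power of $2^t$, yielding $\var[X^{S,h}]\lesssim 2^{-2h-t}k/(m\,k_{B(S)})$; using $k_{B(S)}\leq k$ — or more directly, arguing per-cluster that $\Delta_q,\dist(q,A)\lesssim \cost(q,S)/2^t$ since the cluster is type $t$ — collapses this to $\var[X^{S,h}]\lesssim 2^{-2h}/(m2^t)$. \textbf{The main obstacle} is being careful about which $2^t$ factors appear: in the $k$-median net the error term is already "small" (no cross terms), so one must not double-count the type savings — the single $2^{-t}$ must come cleanly from the type-$t$ cost lower bound on the denominator, and checking that the per-cluster estimate $\err(q)^2=(\dist(q,A)+\Delta_q)^2 \lesssim \cost(q,S)^2/2^{2t}$ combined with the weight bounds gives exactly one surviving $2^{-t}$ after normalizing by $\cost(P,S)$.
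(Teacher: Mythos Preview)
Your overall setup is right and matches the paper: write $\var[X^{S,h}]\lesssim \cost(P,S)^{-2}\cdot 2^{-2h}\sum_{q\in B(S)\cap\Omega}w_q^2\,\err(q)^2$ with the $k$-median net error $\err(q)=\cost(q,A)+\Delta_q$, and pull out one factor of $w_q\,\err(q)\lesssim \cost(P,A)/m$ using the global weight bound. That is exactly how the paper's proof sketch proceeds.

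Where you go astray is in extracting the extra $2^{-t}$. Both of your routes have a gap:
\begin{itemize}
\item In the $\Gamma$-route, from $\var\lesssim 2^{-2h-t}k/(m\,k_{B(S)})$ you invoke ``$k_{B(S)}\le k$'' to collapse to $2^{-2h-t}/m$. But that inequality goes the wrong way: $k/k_{B(S)}\ge 1$, so it only \emph{weakens} the bound. (Incidentally, for $k$-median the correct $\Gamma$ picks up \emph{no} power of $2^t$---the net error has no $\cost(q,S)$ term---so you actually get $\var\lesssim 2^{-2h-2t}k/(m\,k_{B(S)})$, but you still cannot drop the $k/k_{B(S)}$ this way.)
\item In the ``direct'' route, the pointwise claim $\dist(q,A)\lesssim \cost(q,S)/2^t$ is false: a point $q$ far from its own center $a_j$ can be close to some center in $S$, so $\dist(q,A)$ need not be controlled by $\cost(q,S)/2^t$.
\end{itemize}

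The fix is exactly the step you already wrote in your crude bound, but without extending the sum from $B(S)\cap\Omega$ to all of $\Omega$. Keep
\[
\sum_{q\in B(S)\cap\Omega} w_q\,\cost(q,A)\;\lesssim\;\costom(B(S),A)\;\lesssim\;\cost(B(S),A)
\]
by property $P_3$ of event $\mathcal E$, and then use the aggregate (not pointwise) type-$t$ inequality $\cost(C_j,A)\lesssim 2^{-t}\cost(C_j,S)$ for each $C_j\in B(S)$ to get $\cost(B(S),A)\lesssim 2^{-t}\cost(P,S)$. Combined with the pulled-out factor $\cost(P,A)/m\lesssim\cost(P,S)/m$, this gives $\var\lesssim 2^{-2h}\cdot\cost(P,S)^{-2}\cdot(\cost(P,S)/m)\cdot 2^{-t}\cost(P,S)=2^{-2h}/(m\,2^t)$, which is precisely the paper's argument.
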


\begin{proof}[Proof sketch]
The proof of this bound is similar to \Cref{lem: easy-var} with the only difference that we need to account for the finer nets. We have,
\begin{align}\label{eq: var-1-kmedian}
    \var[X^{S,h}]&\lesssim (1/\cost(P,S))^2 2^{-2h}  \sum_{q \in B(S) \cap \Omega } w_q^2 \cost(q,A)^2.
\end{align}
Using the fact that $w_q \cost(q,A) \lesssim \cost(P,A)/m \lesssim \cost(P,S)/m$ (due to the bounds on the weight) together with $2^t \sum_{q \in B(S) \cap \Omega }  w_{q} \cost(q,A) \lesssim  \cost(P,S)$ in \Cref{eq: var-1-kmedian} leads to the claimed bound.\qedhere
\end{proof}

If the $k$-median instance $P$ is $\beta$-stable, then for any set of centers $S$, one can give a lower bound the cost of $P$ to $S$ in terms of the interaction number of $S$ and the stability parameter $\beta$ (analogous to \Cref{lem: cost-increase}). 
\begin{lemma}\label{lem: cost-increase-kmedian}
	Let $P$ be a $\beta$-stable instance for the $k$-median objective and $S$ be any set of centers; then we have  $$\cost(P,S) \gtrsim  \opt_k \cdot \max(1,  (\tfrac{\ccint(S)}{k} - 1) \beta).$$ 
\end{lemma}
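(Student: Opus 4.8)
The plan is to mirror the $k$-means proof of \Cref{lem: cost-increase} (and its corollary \Cref{lem: cost-increase-2}) essentially verbatim, adjusting every instance where a squared Euclidean distance is replaced by a plain distance, and every application of the approximate triangle inequality (\Cref{item: tri-2} of \Cref{fact: triangle}) is replaced by the ordinary triangle inequality. First I would recall the $k$-median analog of \Cref{lem: approxstability}: for a $\beta$-stable $k$-median instance and a sufficiently good approximate solution with centroids $a_1,\dots,a_k$, one has a separation bound of the form $\dist(a_i,a_j) \gtrsim \beta\,\opt_k/\min(|C_i|,|C_j|)$ (the same proof via the centroid fact and the $(k-1)$-versus-$k$ optimal cost gap goes through, replacing squared distances by distances throughout; note the centroid fact \Cref{lem: centroid-fact} is itself only stated for squared distances, so for $k$-median one uses the weaker but sufficient fact that deleting one center increases cost by at least $\dist(C_i,a_j)-\dist(C_i,a_i)$). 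The trivial bound $\cost(P,S)\geq \opt_k$ handles the first term of the max, exactly as in \Cref{lem: cost-increase-2}.

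For the second term, fix the center $x^\ast\in S$ with $|I(x^\ast)|$ maximal, so $|I(x^\ast)|\geq N(S)/k$; it suffices to show $\cost(P,S)\gtrsim (|I(x^\ast)|-1)\,\beta\,\opt_k$. Re-index the interacting clusters $C_1,\dots,C_r$ ($r=|I(x^\ast)|$, assume $r\geq 2$) in increasing order of $\dist(a_j,x^\ast)$. For $j\geq 2$, the separation bound plus the triangle inequality give, using that $x^\ast$ is an approximate nearest center to $a_j$ (property $P_2$ of \Cref{def: interaction}, i.e.\ $\dist(a_j,x^\ast)\lesssim \dist(a_j,S)$ in the $k$-median scaling),
\[
\frac{\beta\,\opt_k}{|C_j|}\lesssim \dist(a_1,a_j)\leq \dist(a_1,x^\ast)+\dist(a_j,x^\ast)\leq 2\dist(a_j,x^\ast)\lesssim \dist(a_j,S),
\]
so $|C_j|\,\dist(a_j,S)\gtrsim \beta\,\opt_k$. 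Then I would invoke the $k$-median analog of \Cref{lem: cost-from-center}: if $\dist(a_j,S)$ is a large enough multiple of $\Delta_j$ (which holds by property $P_1$ of \Cref{def: interaction}, $\dist(a_j,x^\ast)\geq c\,\Delta_j$, translated to distances), then at least half the points of $C_j$ are within $O(\Delta_j)\ll \dist(a_j,S)$ of $a_j$ and hence each pay $\gtrsim \dist(a_j,S)$, giving $\cost(C_j,S)\gtrsim |C_j|\,\dist(a_j,S)\gtrsim \beta\,\opt_k$. Summing over $j=2,\dots,r$ and using $\cost(P,S)\geq \sum_j \cost(C_j,S)$ yields the claim, and dividing by $k$ in the signature bound finishes \Cref{lem: cost-increase-kmedian}.

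The main obstacle is bookkeeping rather than conceptual: one must re-derive the two auxiliary facts (the separation lemma and the "cost dominated by center distance" claim) for plain distances, since the clean centroid identity \Cref{lem: centroid-fact} is special to squared distances and the approximate triangle inequality is replaced by the ordinary one; the constants in \Cref{def: interaction}'s properties $P_1,P_2$ also need to be restated in the $k$-median normalization where $\Delta_j=\cost(C_j,A)/|C_j|$ is now an average \emph{distance}. None of these steps is deep, but care is needed to make sure the chain of inequalities loses only constant factors so that the final bound is $\cost(P,S)\gtrsim \opt_k\cdot\max(1,(N(S)/k-1)\beta)$ with an absolute implied constant, matching what the subsequent chaining argument for $k$-median requires.
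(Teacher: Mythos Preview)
Your proposal is correct and is precisely the approach the paper intends: the paper does not give an independent proof of \Cref{lem: cost-increase-kmedian} but simply states it is ``analogous to \Cref{lem: cost-increase}'', and your plan of re-running the proof of \Cref{lem: cost-increase} and \Cref{lem: cost-increase-2} with plain distances, the ordinary triangle inequality, and the $k$-median versions of the separation lemma (\Cref{lem: approxstability}) and of \Cref{lem: cost-from-center} is exactly that analogy. Your observation that the centroid identity \Cref{lem: centroid-fact} must be replaced for $k$-median by the triangle-inequality bound $\cost(C_i,a_j)-\cost(C_i,a_i)\leq |C_i|\,\dist(a_i,a_j)$ is the one genuinely new (but easy) step, and it yields the required separation $\dist(a_i,a_j)\gtrsim \beta\,\opt_k/\min(|C_i|,|C_j|)$ just as you say.
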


The above cost lower bound then leads to the following bound on the variance of $X^{S,h}$.
\begin{lemma}[Variance Bounds for Stable Instances]\label{lem: var-stable-kmedian} Let $P$ be a $\beta$ stable instance for the $k$-median objective and $S$ be a set of centers with $N(S) \geq 2k$ then $\var[X^{S,h}] \lesssim (2^{-2h} k)/(m N(S) \beta)$.
\end{lemma}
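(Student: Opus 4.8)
The plan is to transcribe the $k$-means proof of \Cref{lem: stable-var} — which was obtained by combining \Cref{lem: var-first} (a variance bound in terms of $\cost(P,A)/\cost(P,S)$) with the cost lower bound \Cref{lem: cost-increase-2} — into the $k$-median setting, where the second ingredient \Cref{lem: cost-increase-kmedian} is already in hand. So I only need to produce the $k$-median analog of \Cref{lem: var-first} and then chain the two bounds together.

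First I would establish: for any $S$ and any integer $h\ge 1$, $\var[X^{S,h}] \lesssim (2^{-2h}\cost(P,A))/(m\,\cost(P,S))$. As in the $k$-means case, $X^{S,h}$ is a centered Gaussian (by \Cref{fact: gaussian}) with $\var[X^{S,h}] = \cost(P,S)^{-2}\sum_{i\in[m]} w_{q_i}^2\,(u^{S,h}_i - u^{S,h-1}_i)^2$, where for $q_i\in B(S)\cap\Omega$ the $k$-median cost-vector net guarantee gives $|u^{S,h}_i - u^{S,h-1}_i| \lesssim 2^{-h}\,\err(q_i)$ with the $k$-median error function $\err(q_i) = \cost(q_i,A) + \Delta_{q_i}$, and the coordinates with $q_i\notin B(S)$ vanish. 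Hence $\var[X^{S,h}] \lesssim 2^{-2h}\cost(P,S)^{-2}\sum_{q\in B(S)\cap\Omega} w_q^2\,(\cost(q,A)+\Delta_q)^2$. Using the weight bounds $w_q\cost(q,A)\lesssim \cost(P,A)/m$ and $w_q\Delta_q\lesssim \cost(P,A)/m$ from \Cref{lem: weight-bound}, each summand is $\lesssim (\cost(P,A)/m)\cdot w_q(\cost(q,A)+\Delta_q)$; summing over all of $\Omega$ and using property $P_3$ (resp. $P_1$) of event $\mathcal E$ to get $\sum_{q\in\Omega} w_q\cost(q,A)\lesssim\cost(P,A)$ and $\sum_{q\in\Omega}w_q\Delta_q\lesssim\cost(P,A)$, together with $\cost(P,A)\lesssim\cost(P,S)$ ($A$ is an $O(1)$-approximation), yields $\sum_{q\in B(S)\cap\Omega}w_q^2\err(q)^2 \lesssim \cost(P,A)\cost(P,S)/m$, and the displayed bound follows. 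This is essentially the argument of \Cref{lem: var-first}, only simpler since the $k$-median error function has no cross terms.

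Next I would invoke \Cref{lem: cost-increase-kmedian}: as $P$ is $\beta$-stable for $k$-median, $\cost(P,S)\gtrsim\opt_k\cdot\max(1,\,(N(S)/k - 1)\beta)$. The hypothesis $N(S)\ge 2k$ forces $N(S)/k - 1 \ge N(S)/(2k)$, so $\cost(P,S)\gtrsim\opt_k\,N(S)\beta/k$. Combining with $\cost(P,A)\lesssim\opt_k$ gives $\cost(P,A)/\cost(P,S)\lesssim k/(N(S)\beta)$, and substituting this into the variance estimate of the previous paragraph yields $\var[X^{S,h}]\lesssim (2^{-2h}k)/(m\,N(S)\beta)$, as required.

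I do not expect a genuine obstacle here: the whole argument is a direct transcription of the $k$-means case. The only point deserving care is to confirm that the cost-vector net produced for $k$-median (via \Cref{lem: net-lemma-k-median} and the accompanying discussion) really does approximate each coordinate up to $\alpha\cdot\err(q_i)$ with $\err(p)=\cost(p,A)+\Delta_p$, so that the weight bounds of \Cref{lem: weight-bound} can be applied coordinatewise exactly as in \Cref{lem: var-first}; granting that, the two-line combination above completes the proof.
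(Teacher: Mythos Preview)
Your proposal is correct and follows essentially the same approach as the paper: the paper states that the $k$-median stable variance bound is obtained exactly as in the $k$-means case (\Cref{lem: stable-var}), namely by combining the analog of \Cref{lem: var-first} with the cost lower bound \Cref{lem: cost-increase-kmedian}, and your argument carries this out faithfully with the simpler $k$-median error function $\err(p)=\cost(p,A)+\Delta_p$.
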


One can use the structure of $B(S)$ to derive the following variance bounds. Again, the finer nets yield the a bound on the variance of $X^{S,h}$ that is tighter than \Cref{cor: worst-case-var} by a factor of $2^t$.
\begin{lemma}[Variance Bounds for worst Case Instances]\label{lem: var-worst-case-kmedian}
Let $S$ be any set of centers; then  $\var[X^{S,h}] \lesssim (2^{-2h}k^2)/(m \cdot N(S) \cdot 2^{2t})$.
\end{lemma}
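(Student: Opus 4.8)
The plan is to transcribe the $k$-means argument for \Cref{lem: worst-case-var} and \Cref{cor: worst-case-var}, cashing in the finer net guarantee that $k$-median enjoys. The case $B(S)=\emptyset$ is trivial since then $X^{S,h}\equiv 0$, so assume $k_{B(S)}\ge 1$. Since $X^{S,h}$ is a sum of independent centered Gaussians, \Cref{fact: gaussian} gives
\[
\var[X^{S,h}] \;=\; \frac{1}{\cost(P,S)^2}\sum_{i\in[m]} w_{q_i}^2\,\bigl(u_i^{S,h}-u_i^{S,h-1}\bigr)^2 .
\]
By the $k$-median cost vector net guarantee (\Cref{lem: net-lemma-k-median}), for $q_i\in B(S)\cap\Omega$ one has $|u_i^{S,h}-u_i^{S,h-1}|\lesssim 2^{-h}\err(q_i)$ with $\err(q_i)=\cost(q_i,A)+\Delta_{q_i}$, and $u_i^{S,h}=u_i^{S,h-1}=0$ otherwise. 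This is exactly where $k$-median departs from $k$-means: the error term no longer carries the cross terms $\sqrt{\cost(q,S)\cost(q,A)}$ and $\sqrt{\cost(q,S)\Delta_q}$, so the intermediate step of replacing $\cost(q,S)$ by $2^t(\cost(q,A)+\Delta_q)$ (which, in the $k$-means proof, cost one factor of $2^t$) is simply absent — and this is what improves the final bound by a factor $2^t$. Hence $\var[X^{S,h}]\lesssim \cost(P,S)^{-2}\,2^{-2h}\sum_{q\in B(S)\cap\Omega} w_q^2\err(q)^2$.

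Next I would bound $\sum_{q\in B(S)\cap\Omega} w_q^2\err(q)^2$ exactly as in \Cref{lem: worst-case-var}. By the $k$-median analogue of \Cref{lem: weight-bound}, $w_q\cost(q,A)\lesssim k\cost(C_j,A)/m$ and $w_q\Delta_q\lesssim k|C_j|\Delta_j/m=k\cost(C_j,A)/m$; since $C_j\in B(S)$ is in band $b$, $\cost(C_j,A)\lesssim 2^bT$, so $w_q\err(q)\lesssim k2^bT/m$. Pulling out one factor $w_q\err(q)$ and then invoking properties $P_1$ and $P_3$ of event $\mathcal{E}$ (\Cref{def: event-E}) gives $\sum_q w_q\cost(q,A)\lesssim\sum_{C_j\in B(S)}\cost(C_j,A)\lesssim k_{B(S)}2^bT$ and $\sum_q w_q\Delta_q\lesssim\sum_{C_j\in B(S)}|C_j|\Delta_j\lesssim k_{B(S)}2^bT$, whence $\sum_q w_q^2\err(q)^2\lesssim k\,2^{2b}T^2\,k_{B(S)}/m$. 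I would then record the $k$-median versions of \Cref{lem: cost-from-center} and \Cref{lem: structured-cost}: when $\cost(a_j,S)$ exceeds a fixed constant multiple of $\Delta_j$ (which holds for every cluster of type $t$ with $t$ above that constant), a Markov argument shows at least half the points of $C_j$ lie within $2\Delta_j$ of $a_j$, and for those the ordinary triangle inequality gives $\cost(p,S)\gtrsim\cost(a_j,S)$, so $\cost(C_j,S)\gtrsim|C_j|\cost(a_j,S)\gtrsim 2^{b+t}T$; summing over $B(S)$ yields $\cost(P,S)\gtrsim k_{B(S)}2^{b+t}T$. Substituting both estimates into the variance bound collapses it to $\var[X^{S,h}]\lesssim 2^{-2h}k/(m\,k_{B(S)}\,2^{2t})$, the direct $k$-median counterpart of \Cref{lem: worst-case-var}.

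It remains to dispose of the small-$t$ case and to pass from $k_{B(S)}$ to $N(S)$. For $t$ below the fixed constant, \Cref{lem: var-easy-kmedian} already gives $\var[X^{S,h}]\lesssim 2^{-2h}/(m2^t)$, and this is dominated by the claimed bound $2^{-2h}k^2/(m\,N(S)\,2^{2t})$ because $2^{2t}=O(1)$ and $k^2/N(S)\ge 1$. Finally, the purely combinatorial inequality $N(S)\le k\cdot k_{B(S)}$, i.e.\ $k_{B(S)}\ge N(S)/k$, turns $2^{-2h}k/(m\,k_{B(S)}\,2^{2t})$ into $2^{-2h}k^2/(m\,N(S)\,2^{2t})$, completing the proof.

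I do not expect a genuine obstacle here: the whole argument is a faithful re-run of the $k$-means proof. The only thing demanding care is bookkeeping — treating $\cost(\cdot,\cdot)$ consistently as a distance rather than a squared distance, and verifying that each borrowed ingredient (the weight bounds, \Cref{lem: cost-from-center}, \Cref{lem: structured-cost}, and the use of event $\mathcal{E}$) survives this reinterpretation, which it does because every one of those arguments used only the (ordinary or approximate) triangle inequality together with the definition of the sampling weights $w_q$.
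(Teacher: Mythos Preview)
Your proposal is correct and follows essentially the same approach that the paper indicates: transcribe the $k$-means proof of \Cref{lem: worst-case-var} and \Cref{cor: worst-case-var}, replacing the $k$-means net error $\err(q,S)$ by the finer $k$-median error $\err(q)=\cost(q,A)+\Delta_q$, which eliminates the step that loses a factor $2^t$ and hence yields the extra $2^{-t}$ in the variance. Your identification of precisely where the gain arises, the handling of small $t$ via \Cref{lem: var-easy-kmedian}, and the passage from $k_{B(S)}$ to $N(S)$ via $N(S)\le k\cdot k_{B(S)}$ all match the paper's (sketched) argument.
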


\subsection{Completing the Proof}
Let $\Gamma := \E_g \sup_S |X^{S,h}|$; we bound $\Gamma$ using \Cref{eq: combine-net-variance},  the net and variance bounds given above.

\paragraph{Stable Inputs.} Note that the variance and net bounds for stable $k$-median instances (\Cref{lem: var-stable-kmedian} and \Cref{lem: net-lemma-k-median}) are identical to those for $k$-means (\Cref{lem: net-lemma} and \Cref{lem: stable-var}). Therefore performing similar calculations as in \Cref{sec: completing-the-proof-kmeans}, we get that picking $m = \Omega(k \eps^{-2} \max(1, \beta^{-1}))$ ensures that $\Gamma \lesssim \eps$.

\paragraph{Worst-Case Inputs.} As in the analysis for $k$-means, the most interesting case is when the interaction number is $\Omega(k)$. Suppose that this is the case, i.e., $2^r \gtrsim k$.  By \Cref{lem: var-worst-case-kmedian} and \Cref{lem: var-easy-kmedian} we have that for any $S \in \mathcal{S}(r)$,
\begin{align}\label{eq: combine-variance}
    \var[X^{S,h}] \lesssim  \min(2^{-t}, k^2 2^{-(2t+r)}) \cdot 2^{-2h} m^{-1}.
\end{align}
Using \Cref{lem: net-lemma-k-median} and the fact that $2^r + k2^{2h} \lesssim 2^{r+2h}$, we also have
\begin{align}\label{eq: combine-net}
    \log |M_{2^{-h}}| \lesssim  \min(k2^{2t}, 2^{r}) \cdot 2^{2h} \log(k\eps^{-1}).
\end{align}
Combining the corresponding net and variance bounds in \Cref{eq: combine-net}, \Cref{eq: combine-variance}, we get,
\begin{align}
    \Gamma  &\lesssim \sqrt{\min(k2^{t}, k^2 2^{-2t}) \cdot \log(k \eps^{-1})m^{-1}} \label{eq: interm-combine-kmedian}\\
    &\lesssim \sqrt{k^{4/3} \cdot \log(k \eps^{-1})m^{-1}}.
\end{align}
Therefore  if $m =  \tilde{\Omega}(k^{4/3} \eps^{-2})$ then we have that $\Gamma \lesssim \eps$. Moreover, since $2^t \leq 2^{\tmax}\lesssim \eps^{1}$, using only the first bound from \Cref{eq: interm-combine-kmedian}, we get that  $\Gamma \lesssim \sqrt{k\eps^{-1}\log(k \eps^{-1}) m^{-1}}$; therefore if  $m = \tilde{\Omega}(k\eps^{-3})$ then $\Gamma \lesssim  \eps$. This completes the proof of the worst-case bound.

\section{Analysis for Doubling Metrics}\label{app: doubling}

We also briefly show how to analyze Sensitivity Sampling for doubling metrics. 
Given the improved variance bounds of Sensitivity Sampling, the only thing left to do is bound the size of cost vector nets. Doing so is a straightforward adaptation of the earlier analysis in this paper and of \cite{CSS21}. Replacing the bounds on the size of clustering nets in \Cref{sec: analysis} then yields \Cref{thm:doubling}.
\begin{definition}
The \emph{doubling dimension} of a metric $(X,\dist)$ is the smallest integer $D$ such that any
ball of radius $2r$ can be covered by $2^D$ balls of radius~$r$. 

A $\gamma$-\emph{net} of $V\subset X$ is a set of points $U\subseteq V$ such that for all
$v \in V$ there is an $u \in U$ such that $\dist(v, u) \leq \gamma$.
\end{definition}
\begin{lemma}[from \cite{GuptaKL03}]\label{prop:doub:net}
Let $(V, \dist)$ be a metric space with doubling dimension $D$ and diameter 
$\Delta$, and let $X$ be a $\gamma$-net of $V$. Then $|X| \leq 2^{D \cdot 
\lceil \log_2 (\Delta/\gamma)\rceil}$.
\end{lemma}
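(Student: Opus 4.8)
The plan is to prove the bound by the classical doubling-iteration argument (this is precisely the statement of \cite{GuptaKL03}; I record the short proof for completeness and for later reference). The first move is to reduce to covering a single ball: since $V$ has diameter $\Delta$, fixing any $v_0\in V$ gives $V\subseteq B(v_0,\Delta)$, the ball of radius $\Delta$ about $v_0$. (If $\gamma\geq\Delta$ the net is a single point and the bound is trivial, so assume $\gamma<\Delta$.)

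Next I would iterate the doubling property. By induction on $i\geq 0$, a ball of radius $R$ can be covered by at most $2^{Di}$ balls of radius $R/2^{i}$: the case $i=0$ is immediate, and in the inductive step each of the $2^{D(i-1)}$ balls of radius $2\cdot(R/2^{i})$ produced at the previous level is, by the doubling property, covered by $2^{D}$ balls of radius $R/2^{i}$; the union of these families is the desired cover. Applying this to $B(v_0,\Delta)$ with $i:=\ceil{\log_2(\Delta/\gamma)}$, so that $\Delta/2^{i}\leq\gamma$, shows that $V$ is covered by some $N\leq 2^{D\ceil{\log_2(\Delta/\gamma)}}$ balls $B_1,\dots,B_N$, each of radius at most $\gamma$. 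Finally I would invoke the packing (separation) property of a net — distinct net points lie at distance greater than $\gamma$, which is the half of the net definition complementary to the covering half quoted in the excerpt — to argue that each $B_j$ meets $X$ in at most one point, whence $|X|\leq N$. (Matching the constant in the exponent exactly requires carrying the iteration one scale finer, down to radius $\gamma/2$, so that two points of $X$ lying in a common ball are forced within distance $\gamma$; this costs only a factor $2^{D}$ and is exactly the routine bookkeeping of \cite{GuptaKL03}.)

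There is no genuine obstacle here: the lemma is a classical fact, and the only mildly delicate point is the constant-tracking between the covering scale and the net separation, which is immaterial for our purposes — in \Cref{thm:doubling} only the $O(D)$ dependence appearing in the exponent of the net-size bound is used. With this lemma substituted for the Euclidean net bound \Cref{lem: eps-net} inside the single-center and cost-vector net constructions of \Cref{app: nets}, each factor of the form $(k/\eps)^{O(\text{dim})}$ is replaced by $2^{O(D)\cdot\polylog(k/\eps)}$, and the symmetrization/chaining analysis of \Cref{sec: analysis} then goes through unchanged, yielding the $\tilde O(kD/\eps^{2})$ coreset size.
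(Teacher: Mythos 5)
The paper does not prove this lemma at all --- it is imported verbatim from \cite{GuptaKL03} as a black box --- so there is no internal proof to compare against. Your argument is the standard one (iterate the doubling property to cover the diameter-$\Delta$ ball by $2^{Di}$ balls of radius $\Delta/2^{i}$, then use the packing half of the net definition to place at most one net point per ball), and it is sound. Two remarks. First, you are right to flag that the lemma is \emph{false} under the definition of $\gamma$-net as literally stated in the paper, which records only the covering property: taking $X=V$ satisfies that definition and can be arbitrarily large. The separation property ($\dist(u,u')>\gamma$ for distinct net points), which is part of the definition in \cite{GuptaKL03} and is what the constructions in the paper actually produce via greedy net extraction, is essential, and your proof correctly uses it. Second, your constant is $2^{D(\lceil\log_2(\Delta/\gamma)\rceil+1)}$ rather than $2^{D\lceil\log_2(\Delta/\gamma)\rceil}$, since a ball of radius $\gamma$ can contain two points at separation up to $2\gamma$ and one must descend one more scale; as you say, this extra $2^{D}$ is irrelevant to \Cref{thm:doubling}, where only the $O(D)$ in the exponent of the net-size bound enters the chaining computation.
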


\end{document}